\def\IZ{\mathbb {Z}}
\def\IR{\mathbb {R}}
\def\IC{\mathbb {C}}
\def\IS{\mathbb {S}}
\theoremstyle{plain}
  \newtheorem{prob}{Problem}[section]
  \newtheorem{conj}[prob]{Conjecture}
  \newtheorem{thm}[prob]{Theorem}
  \newtheorem{prop}[prob]{Proposition}
   \newtheorem{cor}[prob]{Corollary}
    \newtheorem{lemm}[prob]{Lemma}
    \newtheorem{defi}[prob]{Definition}
\theoremstyle{remark}
  \newtheorem{remark}[prob]{Remark}
\newtheorem{exam}[prob]{Example}
\begin{document}

\author[H. Fuji]{Hiroyuki Fuji}
\address{
Faculty of Education\\ 
Kagawa University\\
Takamatsu 760-8522\\
Japan\\ AND 
QGM\\
Department of Mathematics\\
Aarhus University\\
DK-8000 Aarhus C\\
Denmark}
\email{fuji{\char'100}ed.kagawa-u.ac.jp}

\author[K. Iwaki]{Kohei Iwaki}
\address{Graduate School of Mathematics\\ Nagoya University\\
Nagoya 464-8602\\ Japan}
\email{iwaki{\char'100}math.nagoya-u.ac.jp}

\author[M. Manabe]{Masahide Manabe}
\address{Faculty of Physics\\
University of Warsaw\\
ul. Pasteura 5, 02-093 Warsaw\\
Poland\\ AND 
Max-Planck-Institut f\"ur Mathematik\\ 
Vivatsgasse 7\\
53111 Bonn\\
Germany}
\email{masahidemanabe{\char'100}gmail.com}

\author[I. Satake]{Ikuo Satake}
\address{
Faculty of Education\\ 
Kagawa University\\
Takamatsu 760-8522\\
Japan}
\email{satakeikuo{\char'100}ed.kagawa-u.ac.jp}

\title [Reconstructing GKZ via topological recursion]{Reconstructing GKZ via topological recursion}

\thanks{Acknowledgments: 
The authors thank Prof.\, Hiroshi Iritani who suggests his idea on 
the equivariant version of the Dubrovin's conjecture.
KI also thanks Dr.\, Fumihiko Sanda for fruitful discussion. 
HF and MM thank Prof. Piotr Su{\l}kowski for stimulating discussions and useful comments.
The research of HF and IS is supported by the
Grant-in-Aid for Challenging Research (Exploratory) [\# 17K18781].
The research of HF is also supported by the
Grant-in-Aid for Scientific Research(C)  [\# 17K05239], and 
Grant-in-Aid for Scientific Research(B)  [\# 16H03927]  from the Japan Ministry of Education, Culture, Sports, Science and Technology, and Fund for Promotion of Academic Research from Department of Education in Kagawa University. 
The research of KI is supported by the
Grant-in-Aid for 
JSPS KAKENHI KIBAN(S) [\# 16H06337],
Young Scientists Grant-in-Aid for (B)  [\# 16K17613]
from the Japan Ministry of Education, Culture,
Sports, Science and Technology.
The work of MM is supported by the ERC Starting Grant no. 335739 ``Quantum fields and knot homologies'' funded by the European Research Council under the European Union's Seventh Framework Programme.
The work of MM is also supported by Max-Planck-Institut f\"ur Mathematik in Bonn.}

\begin{abstract}
In this article, a novel description of the hypergeometric differential equation found from Gel'fand-Kapranov-Zelevinsky's system (referred to as \textit{GKZ equation}) for Givental's $J$-function in the Gromov-Witten theory will be proposed.
The GKZ equation involves a parameter $\hbar$, and we will reconstruct it as 
a quantum curve from the classical limit $\hbar\to 0$ via the topological recursion.
In this analysis, the spectral curve (referred to as \textit{GKZ curve}) plays a central role, and it can be described by the critical point set of the mirror Landau-Ginzburg potential.
Our novel description is derived via the duality relations of the string theories, and various physical interpretations suggest that the GKZ equation is identified with the quantum curve for the brane partition function in the cohomological limit.
As an application of our novel picture for the GKZ equation, we will discuss the Stokes phenomenon for the equivariant $\mathbb{C}\textbf{P}^{1}$ model, and the wall-crossing formula for the total Stokes matrix will be examined. And as a byproduct of this analysis, we will study Dubrovin's conjecture for this equivariant model.
\end{abstract}

\maketitle
\tableofcontents

%\vskip 1.35cm

%\begin{center}
%{\Large \bf 
%Reconstructing Givental's J-functions and oscillatory integrals via the topological recursion
%}
%\end{center}

%%%%%%%%%%%%%%%
%Introduction
%%%%%%%%%%%%%%%
\section{Introduction}

%%%%%%%%%%%%%%%%%%%%%%%%%
%Backgrounds
%%%%%%%%%%%%%%%%%%%%%%%%%

\subsection{Background}

%%%%%%%%%%%%%%%%%%%%%%%%%
%GKZ and mirror symmetry
%%%%%%%%%%%%%%%%%%%%%%%%%

The {\em Gel'fand-Kapranov-Zelevinsky (GKZ) equations} (or the A-hypergeometric equations) are special class of linear differential equations which can be regarded as a generalization of classical hypergeometric differential equations \cite{GGZ, GKZ}. 
It is well-known that the GKZ equations appear in the context of {\em mirror symmetry} and play crucial roles. 
In this paper, we focus on the mirror symmetry between Fano manifolds and Landau-Ginzburg models. 
On the A-model side, the GKZ equations appear as the {\em quantum differential equations}, which is satisfied by the {\em Givental's $J$-function} \cite{Givental1,Givental_eqv}. The $J$-function is defined by the equivariant Gromov-Witten theory that captures the product structure of the quantum cohomology ring of the target Fano manifold (e.g., \cite{Iritani15,CCIT}). 
On the other hand, the GKZ equations also appear as the {\em Gauss-Manin systems} on the B-model side. Natural solutions of the GKZ equations on the B-model side are given by {\em oscillatory integrals} with a Landau-Ginzburg superpotential and an appropriately chosen volume form called the {\em primitive form} \cite{Saito83}. 
Roughly speaking, the Fano manifold and Landau-Ginzburg model are said to be mirror dual when the associated GKZ equations are identical\footnote{In general, we can identify these differential equations after some coordinate change through the {\em mirror map}. This implies that the $J$-functions are written in terms of the oscillatory integrals. To describe precise relation between these objects, we need the notion of the {\em Gamma class} introduced in \cite{Iritani09,KKP08}; see \cite{GGI14,GI15} for details.}. 
% 

%%%%%%%%%%%%%%%%%%%%%%%%%
%Topological recursion
%%%%%%%%%%%%%%%%%%%%%%%%%

On the other hand, the Eynard-Orantin's {\em topological recursion} was originally introduced as a recursive algorithm to compute the $1/N$-expansion of the correlation functions and the partition function of matrix models from its spectral curve, and it is generalized to any algebraic curve which may not arise from a matrix model \cite{Eynard:2007kz}. 
In the topological recursion, we need a spectral curve $(C,x,y)$ as an initial input; where $C$ is a compact Riemann surface, $x, y$ are meromorphic functions on $C$ satisfying some conditions\footnote{Since spectral curves discussed in this paper are of genus $0$, we do not include the choice of $\omega_2^{(0)}$ in the definition of spectral curve.}. 
We may alternatively view the spectral curve $(C,x,y)$ as a meromorphic parametrization of a plane curve $A(x,y) = 0$ defined by a polynomial equation. 
Then we can define the correlators $\omega^{(g)}_n$ which are meromorphic differential forms on $C^{n}$ determined by the topological recursion relation \cite{Eynard:2007kz,Bouchard:2012yg}. 
Topological recursion attracts both mathematicians and physicists since the correlators are expected to encode the information of various enumerative or quantum invariants in mathematical physics; see \cite{Bouchard:2007ys, Eynard:2012nj, Dunin-Barkowski:2013wca, Fang:2016svw} for examples.

%%%%%%%%%%%%%%%%%%%%%%%%%
%Result : sketch
%%%%%%%%%%%%%%%%%%%%%%%%%

The aim of this paper is to give a new construction of a class of GKZ equations arising from the equivariant Gromov-Witten theory of projective spaces via the topological recursion\footnote{
A relationship between Gromov-Witten theory (or Frobenius structures) and the topological recursion has already been discussed in \cite{DOSS, FLZ} etc. However, our viewpoint is different from that of these works. We will focus on the reconstruction of GKZ equation as quantum curves.}. 
For the purpose, we will use the idea of {\em quantum curves} which relates the topological recursion correlators to the {\em WKB (formal) solution} of a certain Schr\"odinger-type differential or difference equation; see \cite{Gukov:2011qp, Mulase:2012tm, Dumitrescu:2013tca, Bouchard:2016obz} for example. A particular claim is given as follows: If we denote by $\omega^{(g)}_n$ the topological recursion correlators defined from a spectral curve $(C,x,y)$, then a generating series $\psi = \psi(x,\hbar)$ (an explicit formula will be given below) satisfies a Schr\"odinger-type differential or difference equation whose classical limit recovers the original spectral curve; thus the generating series $\psi$ is called the {\em wave function}, and the resulting Schr\"odinger-type equation is called the quantum curve. 
We will see that the GKZ equations are reconstructed as quantum curves associated with certain spectral curves, which we call {\em GKZ curves}, that have geometric interpretations in both A-model and B-model. 

We will also discuss two other related topics. The first one is the analytic properties of the wave function,  which is known to be divergent. More precisely, we will investigate properties of {\em Stokes matrices} of the wave function arising from equivariant $\mathbb{C}\textbf{P}^1$ model, with the aid of the {\em exact WKB analysis} \cite{Voros83, KT05}. The other topic is the physical interpretation of our main result (Theorem \ref{thm:reconstruction} in the next subsection). We will see that various physical dualities (geometric engineering, remodeling conjecture, mirror symmetry) allow us to understand GKZ equations as quantum curves.

We will describe some details of our results in the following subsections.

%%%%%%%%%%%%%%%%%%%%%%%%%
% Reconstruction of GKZ
%%%%%%%%%%%%%%%%%%%%%%%%%

\subsection{Main result: GKZ equations as quantum curves}
\label{subsection:result-reconstruction}
Here we describe our main result. We will focus on the GKZ equations arising from 
the equivariant Gromov-Witten theory with the target space $X$ which is either
\begin{itemize}
\item the projective spaces $\mathbb{C}\textbf{P}^{N-1}_{\bm w}$, or 
\item smooth Fano complete intersections  $X_{\bm{w},\bm{\lambda}}$ of $n$ $(<N)$ degree 1 hypersufaces inside $\mathbb{C}\textbf{P}^{N-1}$. 
\end{itemize}
The subscripts in the above notations indicate the equivariant parameters ${\bm w} = (w_0,\dots,w_{N-1}) \in {\mathbb C}^{N}$ and $\bm{\lambda} = (\lambda_1, \dots, \lambda_n) \in {\mathbb C}^n$ with respect to the torus actions on the target space. 

For these cases, the explicit form of the $J$-functions and the associated GKZ equations (i.e., the quantum differential equations) are obtained in \cite{Givental1, Givental_eqv} (see also Proposition \ref{prop:J_function} and Proposition \ref{prop:GKZ-for-J-functions}). We may observe that the GKZ equations are of the form  
\begin{equation} \label{eq:GKZ-eq-introduction}
\widehat{A}_X(\widehat{x}, \widehat{y}) J_X(x) = 0,
\end{equation}
where $\widehat{x}$ and $\widehat{y}$ act on the $J$-function by 
\begin{equation}\label{eq:190525_1}
\widehat{x}J_X(x)=xJ_X(x),\qquad
\widehat{y}J_X(x)=\hbar x\frac{d}{dx}J_X(x). 
\end{equation}
Here $\hbar$ is a parameter which is contained in the $J$-function, and will play the role of Planck constant in our reconstruction of GKZ equation via the topological recursion. 
On the $B$-model side (i.e., the Landau-Ginzburg model side) of the mirror symmetry, the GKZ equation \eqref{eq:GKZ-eq-introduction} is also regarded as the equation satisfied by the oscillatory integral $\mathcal{I}_X(x)$ (whose definition will be given in Section \ref{subsection:oscillatory-integral-GKZ})
% \begin{equation}
% \mathcal{I}_X(x)=\int_{\Gamma}\mathrm{e}^{\frac{1}{\hbar}W_X(u_1,\dots,u_k;x)}
% \zeta(u_1,\dots,u_k), 
% \label{eq:equivariant-oscillaotry-integral-intro}
% \end{equation}
with the mirror superpotential $W_X(\cdot;x) : (\mathbb{C}^{\ast})^k \to \mathbb{C}$ 
(see \cite{Givental1, Givental_eqv}). 
Note that, in the equivariant case, the mirror Landau-Ginzburg potential $W_X$ contains logarithms, and hence, it is a multi-valued function (see Section \ref{subsection:oscillatory-integral-GKZ} for explicit expressions). We will call $\mathcal{I}_X(x)$ the ``\textit{equivariant oscillatory integral}''.

Since the GKZ equation \eqref{eq:GKZ-eq-introduction} is a Schr\"odinger-type equation, we can take the classical limit $A_X(x,y) \in {\mathbb C}[x,y]$ of the differential operator $\widehat{A}_X(\widehat{x}, \widehat{y})$. See Section \ref{subsection:GKZ-curves-critical-set} for the explicit expression of $A_X(x,y)$. 
We call the algebraic curve 
% \begin{equation} 
% \Sigma_X = \{(x,y) \in {\mathbb C}^2~|~ A_X(x,y) = 0 \}
% \label{eq:GKZ-curve-intro}
% \end{equation}
$A_X(x,y) = 0$ the {\em GKZ curve}, and denote it by $\Sigma_X$. 

Our first observation is that all the GKZ curves $\Sigma_X$ for $X = \mathbb{C}\textbf{P}^{N-1}_{\bm w}$ and $X_{\bm{w},\bm{\lambda}}$ are of genus 
$0$; therefore, we may find a pair of rational functions $x=x(z)$ and $y=y(z)$ which parametrize the GKZ curve. For example, the GKZ curve  
\begin{equation}
A_{X_{\bm{w},\bm{\lambda}}}(x,y) = \prod_{i=0}^{N-1}(y-w_i)-x\prod_{a=1}^n(y-\lambda_a) = 0,
\label{gkz_curve_into}
\end{equation}
for the Fano complete intersection $X = X_{\bm{w},\bm{\lambda}}$ is described by the pair 
\begin{equation}
x(z)=\frac{\prod_{i=0}^{N-1}(z-w_i)}{\prod_{a=1}^{n}(z-\lambda_a)},\qquad y(z)=z.
\label{gkz_curve_para}
\end{equation}
Thus, we may regard the GKZ curve as a spectral curve $(C={\mathbb C}{\bf P}^1, x(z), y(z))$ for the topological recursion. The correlators defined through the topological recursion \cite{Eynard:2007kz,Bouchard:2012yg} are denoted by $\omega_n^{(g)}$ (see Section \ref{subsec:T-Rec} for the definition).  

Moreover, we also verify that our GKZ curves satisfy the ``admissibility condition" proposed by Bouchard-Eynard in \cite{Bouchard:2016obz}. 
Therefore, the result of \cite{Bouchard:2016obz} allows us to find a Schr\"odinger-type differential equation, with the GKZ curve being its classical limit (thus the Schr\"odinger-type equation is a quantum curve for $\Sigma_X$), which annihilates the \textit{wave function} $\psi(D)$ defined by the following WKB-type formal series (see Definition \ref{def:rec_wkb}):
\begin{align}
&\psi(D)=\exp\Bigg[\frac{1}{\hbar}\int_D\widehat{\omega}_{1}^{(0)}+\frac{1}{2}\int_D\int_D\widehat{\omega}_{2}^{(0)}+\sum_{\substack{g\ge0,n\ge 1\\g,n\ne (0,1),(0,2)}}\frac{\hbar^{2g+n-2}}{n!}
\int_D\cdots\int_D\omega_n^{(g)}\Bigg].
\label{wave_function0}
\end{align}
Here $\widehat{\omega}_1^{(0)}$ and $\widehat{\omega}_2^{(0)}$ are some modification of correlators $\omega_1^{(0)}$ and $\omega_2^{(0)}$, respectively, and $D$ denotes an integration divisor on $C = {\mathbb C}{\bf P}^1$. The divisor $D$ contains the point $z = z(x)$ (which is a branch of the inverse function of $x=x(z)$ defined away from branch points) as its one of 
the end-points, and we regard $\psi(D)$ as an exponentiated formal series of $\hbar$ whose coefficients are functions of $x$. We denote by $\psi_X(D)$ the wave function for the GKZ curve $\Sigma_X$ and an integration divisor $D$. 

Then, our main result is formulated as follows. 
%%%%%%%%%%%%%%%%%%%%%%%%%%%%
\begin{thm}[Theorem \ref{theorem:reconstruction-of-GKZ-general}, \ref{theorem:reconstruction-of-GKZ-general_p}]\label{thm:reconstruction}
Let $\Sigma_X$ be the GKZ curve \eqref{gkz_curve_into} 
defined as the classical limit of the GKZ equation \eqref{eq:GKZ-eq-introduction} 
arising from the equivariant Gromov-Witten theory of the smooth Fano complete intersection $X = X_{\bm{w},\bm{\lambda}}$ of $n$ ($<N$) degree $1$ hypersurfaces in $\mathbb{C}\textbf{P}^{N-1}$. (We regard $X = \mathbb{C}\textbf{P}^{N-1}_{\bm w}$ as the case of $n=0$.) 
Let us also denote by $\psi_X(x) = \psi_{X}(D)$ the wave function defined as the generating series \eqref{wave_function0} of topological recursion correlators for the GKZ curve $\Sigma_X$ with the integration divisor $D=[z]-[\infty]$ for the parametrization \eqref{gkz_curve_para}.
Then, the wave function $\psi_X(x)$ satisfies the GKZ equation which is satisfied by the $J$-function $J_X$ and the mirror equivariant oscillatory integral ${\mathcal I}_X$.
\end{thm}
%%%%%%%%%%%%%%%%%%%%%%%%%%%%

This theorem claims that the GKZ equations \eqref{eq:GKZ-eq-introduction} are reconstructible from the GKZ curve $\Sigma_X$ through the topological recursion. 
This gives a new interpretation of the GKZ equations as quantum curves. 
Note that the result does not directly follow from the one by \cite{Bouchard:2016obz}:  
It is also pointed in \cite{Bouchard:2016obz} that the quantum curve is not unique because it may admit $\hbar$-correction terms depending on the choice of integration divisor $D$.  
Our computation shows that there exists an appropriate choice of the integration divisor $D$ which realizes the GKZ equation as a quantum curve. 

This theorem also implies that the WKB-type formal solution of the GKZ equation (when we regard $\hbar$ as the Planck constant, which is sufficiently small) is computed from the topological recursion. 
The WKB solution has an alternative meaning in the B-model description; it agrees with the saddle point expansion of the oscillatory integral $\mathcal{I}_X(x)$ up to some normalization factor. From this viewpoint, we may also identify the GKZ curve $\Sigma_X$ with the critical set of $W_{X}$ (see Proposition \ref{def:GKZ_curve}). 
For oscillatory integrals associated with critical points satisfying a condition given in Section \ref{subsection:relation-to-oscillatory-integral-section4}, we can specify the factor and obtain a full-order coincidence of the asymptotic expansion of $\mathcal{I}_X(x)$ and the wave function (see Corollary \ref{cor:relation-oscillatory-integral-and-wave-function}). This is an application of our result. 

Actually, the formal solution \eqref{wave_function0} has already been discussed by Dubrovin in his theory of {\em Frobenius structure} \cite{dubrovin} (also called the Saito's {\em flat structure} \cite{Saito83} in the context of singularity theory) that arises from {\em non-equivariant} Gromov-Witten theory\footnote{In view of the GKZ equation \eqref{eq:GKZ-eq-introduction}, this is regarded as the case where all equivariant parameters $w_i$ and $\lambda_a$ are equal to $0$.}. 
In non-equivariant case, it is known that the GKZ equation can be embedded into a holonomic system of partial differential equations in both variables $x$ and $\hbar$, called {\em Dubrovin's first structure connection} associated with the Frobenius structure. 
The differential equation with respect to $\hbar$ has an irregular singular point at $\hbar = 0$, and the above WKB solution coincides with the formal solution around the point up to normalization constants. This also explains the divergence of the WKB solution. 
The celebrated Dubrovin's conjecture \cite{dubrovin-conj} is a statement for the Stokes matrix around $\hbar=0$ (we will come back to this point below). 
Our second result presented in next subsection (and Section \ref{section:Stokes-eqP1} in detail) will be regarded as a generalization of the analysis when the GKZ equation arises from {\em equivariant} Gromov-Witten theory (where the differential equation with respect to $\hbar$ is not available in general). 

%%%%%%%%%%%%%%%%%%%%%%%%%
% Stokes matrix
%%%%%%%%%%%%%%%%%%%%%%%%%

\subsection{Exact WKB analysis and Dubrovin's conjecture}

We will also focus on an analytic property of the wave function 
\eqref{wave_function0} defined through the topological recursion. 
More precisely, we will study the Stokes structure (with respect to the variable $\hbar$)
of the GKZ equation for $X = \mathbb{C}\textbf{P}^1_{\bm w}$
\begin{equation} \label{eq:equivariant-P1-intro}
\left[ 
\left( \hbar x \frac{d}{dx} - w_0 \right)
\left( \hbar x \frac{d}{dx} - w_1 \right) - x
\right] \psi = 0
\end{equation}
from the viewpoint of {\em exact WKB analysis}. 

Since the WKB solution \eqref{wave_function0} is usually divergent, we employ the {\em Borel summation method} which allows us to construct an analytic solution $\Psi$ of the quantum curve whose asymptotic expansion when $\hbar \to 0$ in a certain sectorial domain recovers the WKB solution (see \cite{Cos08}). This framework is called the exact WKB analysis, which was developed in \cite{Voros83,DDP93,KT05} etc. 
In general, the Borel sum of a divergent series has an exponentially small ambiguity due to the {\em Stokes phenomenon}, and the ambiguity is described by the {\em Stokes matrices}. For non-equivariant oscillatory integrals, Stokes matrices are computed by the intersection numbers of vanishing cycles via the Picard-Lefschetz formula (e.g., \cite{DH}); in physics side, they are regarded as contributions of {\em 2d BPS solitons} in the Landau-Ginzburg model with the superpotential $W$ studied by Cecotti-Vafa (\cite{CV93}; see also \cite{GMN-SN}).  Since the exact WKB method does not require any differential equation with respect to $\hbar$, it is applicable to study the Stokes phenomenon of \eqref{eq:equivariant-P1-intro} for small $\hbar$. Also, it allows us to describe the properties of Borel sums by a simple combinatorial object, called {\em Stokes graph} (see \cite{KT05}; it has a close relation to the {\em spectral networks} \cite{GMN-SN}). These are advantages of the use of the exact WKB method.

Our first result in Section \ref{section:Stokes-eqP1} is the computation of the {\em ``total" Stokes matrix}, which captures all Stokes matrices of the WKB solution. We observed that the total Stokes matrices contain a contribution from a loop-type Stokes curve which never appears in the non-equivariant situation (i.e., when $w_0-w_1 = 0$). The loop-type Stokes curve was analyzed in \cite{AIT, IN14} with a relation to cluster algebras. Our computation also includes a derivation of an example of {\em 2d-4d wall-crossing formula} established by Gaiotto-Moore-Neitzke \cite{GMN} through the exact WKB analysis of \eqref{eq:equivariant-P1-intro}. It is a consequence of the fact that the total Stokes matrix is locally constant when $x$ varies; the fact is straightforward from properties of the Borel sum of WKB solutions, but the locally constant-ness of the total Stokes matrix involves a non-trivial identity of an infinite product of Stokes matrices.

The second result in Section \ref{section:Stokes-eqP1} is an {\em equivariant version of the Dubrovin's conjecture} (c.f., \cite{dubrovin-conj}) for the equivariant ${\mathbb C}{\bf P}^1$. The Dubrovin's conjecture was originally formulated in non-equivariant situation as the coincidence of the entries of total Stokes matrices with the Euler pairing of coherent sheaves on the target Fano manifold. The original version of the conjecture was proved by Guzzetti \cite{Guz} for projective spaces, Ueda \cite{Ueda,Ueda2} for Grassmannians and smooth cubic surfaces; see also \cite{Iwaki_Takahashi,GGI14,GI15,SS} for related works on the conjecture. Our explicit computation of the total Stokes matrices in equivariant case implies that an entry of the total Stokes matrix of \eqref{eq:equivariant-P1-intro} coincides with the equivariant Euler pairing of the coherent sheaves ${\mathcal O}$ and ${\mathcal O}(1)$ on $\mathbb{C}\textbf{P}^1_{\bm w}$ (regarded as equivariant sheaves in an appropriate manner). Our claim also includes the original Dubrovin's conjecture for non-equivariant ${\mathbb C}{\bf P}^1$ by taking the non-equivariant limit $w_0 - w_1 \to 0$.

Here we emphasize that the holomorphicity of the correlation functions $\omega^{(g)}_{n}$ for $2g-2+n \ge 0$ of the topological recursion are crucially important in the computations of the total Stokes matrices. Our computation seems to suggest that the divergent series defined through the topological recursion have a rich Stokes structure. It would be interesting to investigate other examples arising from topological recursion and quantum curves.

\subsection{Physical perspective and string dualities}

Our result in Theorem \ref{thm:reconstruction} is similar 
and closely related to the  Bouchard-Klemm-Marino-Pasquetti's 
{\em remodeling conjecture} in \cite{Bouchard:2008gu}. 
We explain the background of our research from a mathematical-physics perspective.

%%%%%%%%%%%%%%%%%%%%%%%%%%%%%%%%%%%%
%%%%%%%%%%%%%%%%%%%%%%%%%%%%%%%%%%%%
% String duality
%%%%%%%%%%%%%%%%%%%%%%%%%%%%%%%%%%%%
%%%%%%%%%%%%%%%%%%%%%%%%%%%%%%%%%%%%
From the point of view of the string theory, a quantum structure behind the Gromov-Witten theory has been considered in a different way \cite{Aganagic:2003qj,Iqbal:2003ds,Dijkgraaf:2008fh,Dijkgraaf:2007sw}. The string theoretical quantum structure emerges in the higher genus (open) string free energy in the topological A-model. 

For a special Lagrangian submanifold $L$ with $b_1(L)=1$ 
in a local toric Calabi-Yau 3-fold $Y$, it is argued that 
the \textit{brane partition function} is annihilated by a $q$-difference operator $\widehat{A}^K_Y$ \cite{Aganagic:2003qj,Aganagic:2011mi}:
\begin{align}
\widehat{A}^K_Y(\widehat{\mathsf{x}},\widehat{\mathsf{y}})\,
Z^Y_{\textrm{A-brane}}(\mathsf{x})=0,
\label{quantum_curve}
\end{align}
which is made of non-commutative operators
$\widehat{\mathsf{x}}$ and $\widehat{\mathsf{y}}$:
\begin{align}
\widehat{\mathsf{x}}\,Z^Y_{\textrm{A-brane}}(\mathsf{x})=\mathsf{x}\,Z^Y_{\textrm{A-brane}}(\mathsf{x}),\qquad \widehat{\mathsf{y}}\,Z^Y_{\textrm{A-brane}}(\mathsf{x})=Z^Y_{\textrm{A-brane}}(q\mathsf{x}),
\nonumber
\end{align}
with the $q$-Weyl relation 
$\widehat{\mathsf{y}}\, \widehat{\mathsf{x}}=q\, \widehat{\mathsf{x}}\, \widehat{\mathsf{y}}$, 
where $q=\mathrm{e}^{-g_s}$ and the parameter $g_s\in\mathbb{C}$ is called \textit{string coupling}. 
Here the brane partition function $Z^Y_{\mathrm{A}\textrm{-}\mathrm{brane}}(\mathsf{x})$ is defined 
in terms of the open string free energies as
\begin{align}
Z^Y_{\mathrm{A}\textrm{-}\mathrm{brane}}(\mathsf{x})=\exp\left[
\sum_{g\ge 0, n\ge 1}g_s^{2g+n-2}\, \frac{1}{n!}\, F_{n}^{(g)}(\mathsf{x},\ldots,\mathsf{x})
\right],
\label{a_brane_pf}
\end{align}
and the open string free energy \cite{Marino_book},
$$
F_{n}^{(g)}(\mathsf{x}_1,\ldots,\mathsf{x}_n)=\sum_{\beta\in H_2(Y,L)}\sum_{\bm{p}\in\mathbb{Z}^n}N^{(g)}_{\bm{p},\beta}\,
\mathrm{e}^{-\beta\cdot t}\,
\mathsf{x}_1^{p_1}\cdots \mathsf{x}_n^{p_n},
$$
gives the generating function of open Gromov-Witten 
invariants $N^{(g)}_{\bm{p},\beta}$ 
enumerating holomorphic maps in the topological class labeled by genus $g$, 
the class $\beta\in H_2(Y,L)$, and the winding numbers 
$\bm{p}=(p_1,\ldots,p_n) \in \mathbb{Z}^n$, 
where each integer $p_i$ specifies how many times 
the $i$-th boundary of a world-sheet Riemann surface $C_{g,n}$ wraps 
around the one-cycle in $L$. 
$t^i\in \mathbb{C}$ ($i=1,\ldots, \dim H_2(Y)$) denote the K\"ahler moduli parameters of $Y$. 
The $q$-difference equation (\ref{quantum_curve}) 
is interpreted as a quantum curve (see section \ref{rem:q_curve_brane} as an example) \cite{Dijkgraaf:2007sw,Dijkgraaf:2008fh}.
We then find that the quantum curve arises from a hidden quantum mechanical system behind %the integrable structure of
the open topological strings. 
%In fact, the operators $\widehat{x}$ and $\widehat{y}$ are realized by the differential operator
%\begin{align}
%\widehat{\mathsf{x}}=\mathsf{x},\quad \widehat{\mathsf{y}}=\mathrm{e}^{-\hbar \mathsf{x}\frac{d}{d\mathsf{x}}}.
%\end{align}

\begin{figure}[h]
\centering%
\includegraphics[width=25cm,height=9cm,keepaspectratio]{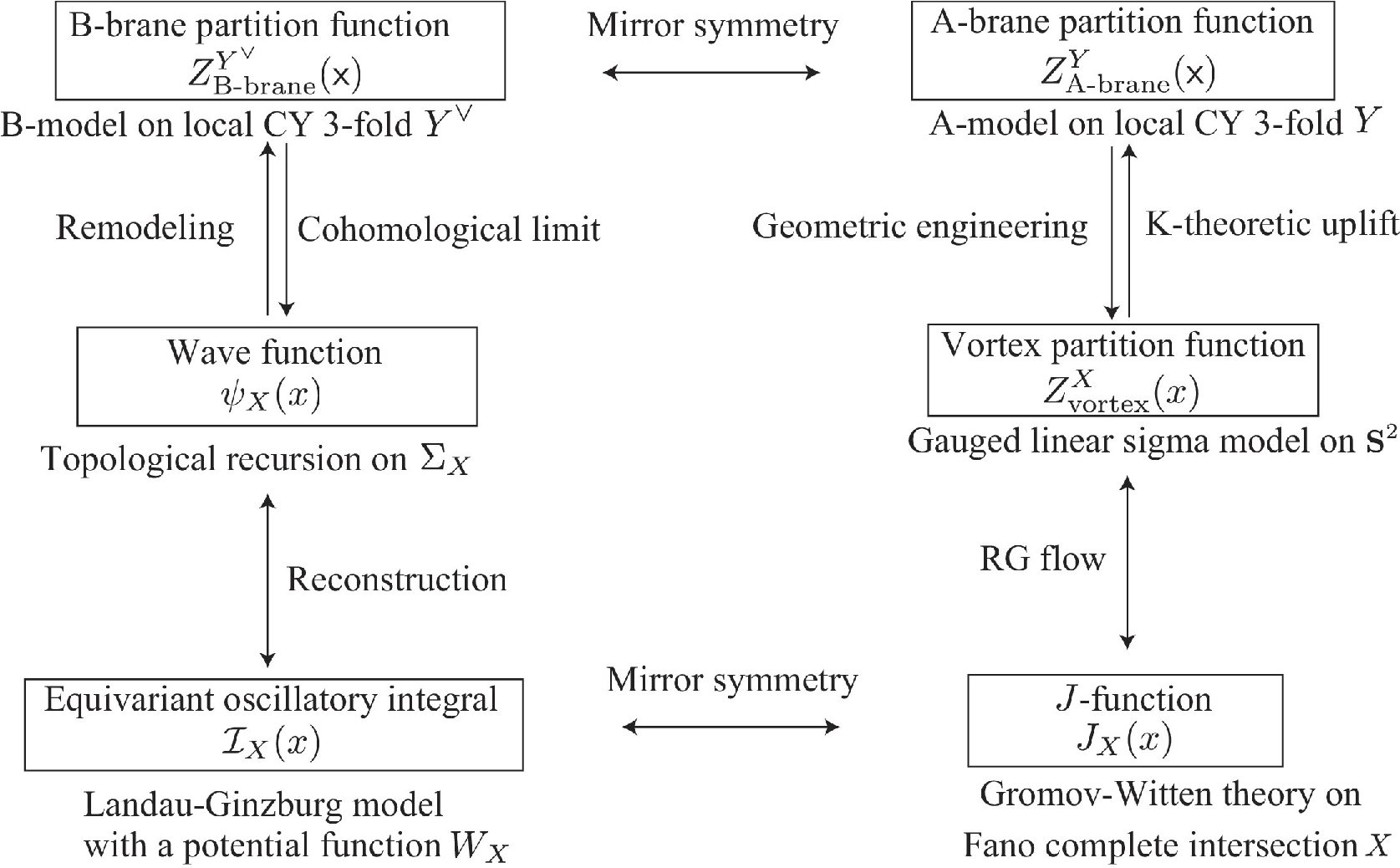}%
\caption{\label{fig:duality}String dualities relate the oscillatory integral with the open topological string partition function.}
\end{figure}

At this stage, we find two kinds of hidden quantum mechanical systems; 
one is behind the Givental's formulation of the Gromov-Witten theory of a compact smooth Fano manifold $X$, 
and the other is behind the open topological A-model on a local toric Calabi-Yau 3-fold $Y$. The quantum structure of these two theories are reflected in the GKZ equation for the $J$-function and the $q$-difference equation for the brane partition function, and the quantum deformation parameter appears as the \textit{Planck's constant} $\hbar$ and string coupling $g_s$, respectively.
Although we find some nice similarities\footnote{Such similarities were also pointed out in \cite{Nawata:2014nca}.} between these two theories, some crucial discrepancies also exist between them.
In the physical perspective, the Givental's formulation is considered essentially as the closed string theory on $X$. In this sense, the $J$-function $J_X(x)$ is the generating function of the genus $0$ correlation functions and $\hbar$ is associated with the degrees of freedom of the gravitational descendants.
On the other hand, the latter theory is considered as the open string theory on $Y$, and the brane partition function $Z^Y_{\textrm{A-brane}}{\color{black}(\mathsf{x})}$ is the 
generating function of {\color{black}all genus open string free energies} and $g_s$ is associated with the coupling to 2-dimensional gravity \cite{Bershadsky:1993cx}. Thus the $J$-function and the brane partition function are essentially different objects in a sense of the  string theory.

To overcome such discrepancies, we will employ the physical idea of \textit{string dualities} depicted in Figure \ref{fig:duality}. The punchline of the string dualities shown in the right hand side of this figure is found as follows. 
Considering the gauged linear sigma model \cite{Witten:1993yc} which describes the Gromov-Witten theory of $X$, we can reinterpret the $J$-function $J_X(x)$ as 
a vortex partition function $Z_{\textrm{vortex}}^{X}(x)$ on 
the 2-sphere $\mathbb{S}^2$ \cite{Dimofte:2010tz,Bonelli:2013mma}.
Furthermore, for the special case $X=\mathbb{C}\textbf{P}^{N-1}$ and Fano complete intersection of hypersurfaces with degrees $l_1=\ldots =l_n=1$, the vortex partition function of the gauged linear sigma model is realized as 
a brane partition function  $Z^Y_{\textrm{A-brane}}{\color{black}(\mathsf{x})}$ on a class of local toric Calabi-Yau 3-fold referred to as \textit{strip geometry} \cite{Iqbal:2004ne} (see Figure \ref{local_A_geom_m} for the toric diagram of a strip geometry) by the geometric engineering \cite{Dimofte:2010tz}. 
More precisely, the brane partition function  $Z^Y_{\textrm{A-brane}}{\color{black}(\mathsf{x})}$ realizes the K-theoretic version of the $J$-function, 
and we need to take the cohomological limit $\beta\to 0$ with reparametrization 
\begin{align}
g_s=\beta \hbar,\qquad
\mathsf{x}=\beta^{N-n}x,\qquad
\widetilde{Q}_{w,i}=\mathrm{e}^{-\beta (w_{0}-w_i)},\qquad
\widetilde{Q}_{\lambda,a}=\mathrm{e}^{-\beta (w_0-\lambda_{a})}.
\nonumber
\end{align}
Here $\widetilde{Q}_{w,i}$ ($i=0,\ldots,N-1$) and $\widetilde{Q}_{\lambda,a}$ $(a=1,\ldots,n)$ denote $\mathrm{e}^{t_{w,i}}$ and $\mathrm{e}^{t_{\lambda,a}}$ with the (reorganized) K\"ahler parameters $t_{w,i}$ and $t_{\lambda,a}$ of the strip geometry $Y$, respectively.
In fact for these examples, by direct computations, the equivariant $J$-function of $X$ and the brane partition function of $Y$ {\color{black}agree} in this limit,
and the $q$-difference equation (quantum curve) for the brane partition function reduces to the GKZ equation for the $J$-function.
Thus, as a result of string theoretical discussions, a novel interpretation of the GKZ equation as a quantum curve is uncovered.

Now we will further proceed with the string dualities by applying the mirror symmetry. Via the local mirror symmetry, the open topological A-model on 
the strip geometry $Y$ turns to  the open topological B-model on a local Calabi-Yau 3-fold
\begin{align}
Y^{\vee}=\big\{\;(\omega_+,\omega_-, \mathsf{x},\mathsf{y})\in{\IC}^2\times(\mathbb{C}^*)^2\;\big|\; \omega_+\omega_-=A_Y^K(\mathsf{x},\mathsf{y})\;\big\},
\nonumber
\end{align}
where $A_Y^K(\mathsf{x},\mathsf{y})\in \mathbb{C}[\mathsf{x},\mathsf{y}]$ is given by the classical limit of the $q$-difference operator $\widehat{A}_Y^K(\widehat{\mathsf{x}},\widehat{\mathsf{y}})$ such that
\begin{align}
q \to 1\ (g_s \to 0),\qquad 
(\widehat{\mathsf{x}},\widehat{\mathsf{y}})\to (\mathsf{x},\mathsf{y}),\qquad 
\widehat{A}_Y^K(\widehat{\mathsf{x}},\widehat{\mathsf{y}})\to A_Y^K(\mathsf{x},\mathsf{y})\in \mathbb{C}[\mathsf{x},\mathsf{y}].
\nonumber
\end{align}
For the topological B-model on this local geometry, the remodeling conjecture \cite{Bouchard:2008gu} proposed by
V.~Bouchard, A.~Klemm, M.~Mari\~no, and S.~Pasquetti
is applicable, and the open topological B-model is studied systematically on the basis of the topological recursion.
The key ingredient of this formalism is the spectral curve 
$\Sigma_{Y}^{K}=\big\{(\mathsf{x},\mathsf{y})\in ({\IC}^*)^2\big|A_{Y}^K(\mathsf{x},\mathsf{y})=0\big\}$, and higher genus open string free energies of the topological B-model are evaluated recursively. 
%And in \cite{Mulase:2012tm,Bouchard:2016obz} the reconstruction of the quantum curve from the spectral curve is studied by using the topological recursion. 
Then via the local mirror symmetry, one can (re)construct the brane partition function $Z^Y_{\textrm{A-brane}}(\mathsf{x})$ in  (\ref{a_brane_pf}). Therefore, by chasing the web of dualities in Figure \ref{fig:duality}, we perceive that the GKZ equation would be reconstructible as a quantum curve by the topological recursion.

%%%%%%%%%%%%%%%%%%%%%%%%%
%Organization
%%%%%%%%%%%%%%%%%%%%%%%%%
\subsection{Plan of the paper}

The organization of this paper is as follows.
In Section \ref{sec:osc_gkz} we review how the GKZ equations arise in the equivariant Gromov-Witten theory (and the mirror Landau-Ginzburg models) of $\mathbb{C}\textbf{P}^{N-1}$ and smooth Fano complete intersections of hypersurfaces in $\mathbb{C}\textbf{P}^{N-1}$. We also define the GKZ curve as the classical limit of the GKZ equations. 
In Section \ref{sec-q_curve_rec} we will summarize the necessary ingredients of the topological recursion and quantum curve in a nut-shell.
In Section \ref{sec-gkz_reconst} the WKB reconstruction of the GKZ equation will be discussed. 
If the GKZ equation is the second order differential equation,
the (local) topological recursion \cite{Eynard:2007kz} is applicable and the quantum curve is found manifestly by employing the method developed by the work of M.~Mulase and P.~Su{\l}kowski \cite{Mulase:2012tm}.
In more general cases, we need to use the ``\textit{global topological recursion}'' \cite{Bouchard:2012an,Bouchard:2012yg,Bouchard:2016obz}, and a quantum curve is reconstructible only for spectral curves which satisfy the admissibility condition considered in the work of V.~Bouchard and B.~Eynard \cite{Bouchard:2016obz}. 
Among the GKZ curves, we show that the GKZ equation is reconstructible for $\mathbb{C}\textbf{P}^{N-1}$ and the Fano complete intersection of hypersurfaces with degree $l_a=1$ ($a=1,\ldots,n<N$) in $\mathbb{C}\textbf{P}^{N-1}$ (Theorem \ref{theorem:reconstruction-of-GKZ-general}, \ref{theorem:reconstruction-of-GKZ-general_p}). 
In Section \ref{sec-glsm_J} the string dualities behind our proposal will be discussed.
In particular, we focus on 3 different vantage points of the string theories, and the $J$-function is regarded as the vortex partition function and brane partition function {\color{black}in} the open topological A-model and B-model. 
As a result of the string dualities, we find that the GKZ curves considered in Section \ref{sec-gkz_reconst} are obtained as mirror curves in the open topological B-model, and the GKZ equations are found as quantum curves for the brane partition functions. 
In Section \ref{section:Stokes-eqP1} we compute the total Stokes matrices for the quantum curve arising from equivariant ${\mathbb C}{\bf P}^{1}$ by using the exact WKB method. We also examine a wall-crossing formula and equivariant version of the Dubrovin's conjecture in this particular case.

In Appendix \ref{app:J_function} we summarize the $J$-functions for $\mathbb{C}\textbf{P}^{N-1}$ and smooth Fano complete intersections of hypersurfaces in $\mathbb{C}\textbf{P}^{N-1}$. And as a side remark of our proposal, we derive the GKZ curve from the saddle point approximation of the $J$-function in the similar spirit as the generalized volume conjecture 
{\color{black}in knot theory \cite{Gukov:2003na,Hikami:2006cv}}.
In Appendix \ref{app:IGKZ} the GKZ equations for the equivariant oscillatory integrals are given for the mirror Landau-Ginzburg models of 
the equivariant Gromov-Witten theory of $\mathbb{C}\textbf{P}^{N-1}$ 
and smooth Fano complete intersections of hypersurfaces in $\mathbb{C}\textbf{P}^{N-1}$, and we prove Proposition \ref{prop:IGKZ}. 
In Appendix \ref{appendix:asymptotic-behavior-of -coefficients}, 
Proposition \ref{prop:behavior-of-saddle-point-approximation} is proved.
In Appendix \ref{section:computational-results} we will show some explicit computational results on the asymptotic solutions of GKZ equations. In particular, for the equivariant $\mathbb{C}\textbf{P}^{1}$ model, we also compute the free energies via the topological recursion and directly check the agreement with the asymptotic solutions of GKZ equation.

%%%%%%%%%%%%%%%
% GKZ equations in quantum cohomology and oscillatory integrals
%%%%%%%%%%%%%%%
\section{GKZ equations in quantum cohomologies and oscillatory integrals}\label{sec:osc_gkz}

%%%%%%%%%%%%%%%%%%%%%%%%%%%%%%%%%%%%
%%%%%%%%%%%%%%%%%%%%%%%%%%%%%%%%%%%%
% J-functions
%%%%%%%%%%%%%%%%%%%%%%%%%%%%%%%%%%%%
%%%%%%%%%%%%%%%%%%%%%%%%%%%%%%%%%%%%

\subsection{$J$-function and GKZ equation}
Let $X$ be a smooth projective variety. 
The (small) quantum cohomology ring $QH^*(X)$ is a generalization of the ordinary cohomology ring $H^*(X)$
arising from a deformation of the cup product referred to as \textit{quantum-cup product}.
The quantum cup product is specified by the intersection indices of holomorphic curves in $X$
with cycles which are Poincar\'e dual to elements in $H^*(X)$, and such indices are known as \textit{Gromov-Witten invariants} of $X$.
In the physics language (see a pedagogical exposition in \cite{Hori_book}), the quantum cup product is realized by correlation functions for the cohomology elements of $X$.
\begin{defi}\label{def:correlation} %(Hori et.al's book section24.3 and eq.(26.7))
Let $\overline{\mathcal{M}}_{g,n}(X,\beta)$ denote 
the moduli space of stable maps from connected genus $g$ curves $C$ with $n$-marked points $p_1,\ldots,p_n$ to $X$ representing the class $\beta\in H_2(X)$. It carries a virtual fundamental class denoted by $\left[\overline{\mathcal{M}}_{g,n}(X,\beta)\right]^{\mathrm{vir}}$. Given classes $\gamma_1,\ldots,\gamma_n\in H^*(X)$, the correlation function $\langle\gamma_1,\cdots,\gamma_n\rangle_{g,\beta}$
is defined by
\begin{align}
\langle\gamma_1, \cdots, \gamma_n\rangle_{g,\beta}
=\int_{\left[\overline{\mathcal{M}}_{g,n}(X,\beta)\right]^{\mathrm{vir}}}\mathrm{ev}^*_1(\gamma_1)\cup\cdots\cup\mathrm{ev}^*_n(\gamma_n),
\end{align}
where $\mathrm{ev}_i :\overline{\mathcal{M}}_{g,n}(X,\beta) \to X$ ($i=1,\ldots,n$) denotes the evaluation map at the $i$-th marked point such that
$\mathrm{ev}_i(C,p_1,\ldots,p_n,\phi)=\phi(p_i)$. Let $\mathcal{L}_i$ ($i=1,\ldots,n$) be the corresponding tautological line bundles over  $\overline{\mathcal{M}}_{g,n}(X,\beta)$. The correlation function for the gravitational descendants $\tau_k(\gamma)$ ($k\ge 0$) is defined by
\begin{align}
\langle \tau_{k_1}\gamma_1,\cdots ,\tau_{k_n}\gamma_n\rangle_{g,\beta}=\int_{\left[\overline{\mathcal{M}}_{g,n}(X,\beta)\right]^{\mathrm{vir}}}\prod_{i=1}^nc_1(\mathcal{L}_i)^{k_i}\cup\mathrm{ev}_i^*(\gamma_i).
\end{align}
\end{defi}

In celebrated works \cite{Givental1,Givental_eqv,Coates_Givental} by Givental, an elegant framework to uncover profound aspects of the Gromov-Witten theory and mirror symmetry was proposed on the basis of the concept of ``quantization". 
In this framework, a generating function of the genus $g=0$ correlation functions with $n=1$ marked point for the gravitational descendants referred to as \textit{$J$-function}
plays an important role. 
For our purpose, we investigate the restriction of the $J$-function to $H^2(X) \subset H^{\ast}(X)$, called {\em small $J$-function}. Taking generators $\beta_1, \dots, \beta_r \in H_2(X,{\mathbb Z})$, we identify $H_2(X,{\mathbb Z}) \cong {\mathbb Z}^{r}$ and denote its elements by ${\bm d}=(d_1,\dots,d_r)$. 
We also take a basis $T_0=1,T_1, \dots, T_m \in H^*(X)$ such that $T_1,\dots,T_r$ give a basis of $H^2(X)$ 
and $T^0,\dots,T^m \in H^*(X)$ are the dual basis with respect to the Poincar\'e pairing.
\begin{defi}[\cite{Givental1,Givental_eqv,Coates_Givental}]
\label{def:J-function}
The (small) $J$-function of the smooth projective variety $X$ is $H^{\ast}(X) \otimes {\mathbb C}[[\hbar^{-1}]]$-valued formal series defined as the generating function of the correlation function for $g=0$ gravitational descendants:
%\[
%J_X(\bm{x})=\mathrm{e}^{{\bm t} \cdot {\bm P}}\left(1+\sum_{\bm{d}}\bm{x}^{\bm{d}}\sum_{k=0}^{\infty}\hbar^{-k-1}\left\langle
%\tau_k(1)\right\rangle_{0,\bm{d}\cdot\bm{P}}\right),
%\]
\begin{align}
J_X(\bm{x})&=\mathrm{e}^{(t_1 T_1 + \cdots t_r T_r)/\hbar}\left(1+\sum_{\bm{d}}\bm{x}^{\bm{d}}
\sum_{i=0}^{m} \sum_{k=0}^{\infty}\hbar^{-k-1}\left\langle \tau_k T_i \right\rangle_{0,\bm{d}} \cdot T^{i}
\right)
\nonumber \\
&=\mathrm{e}^{(t_1 T_1 + \cdots t_r T_r)/\hbar}\left(1+\sum_{\bm{d}}\bm{x}^{\bm{d}}
\mathrm{ev}_{1*}^{{\bm d}}\left(\frac{1}{\hbar-c_1(\mathcal{L}_1)}\right)\right).
\label{J_function}
\end{align}
Here $t_1,\dots,t_r$ denote the linear coordinates of $H^2(X)$ with respect to the basis $T_1, \dots, T_r$, ${\bm d} = (d_1, \dots, d_r)$ runs all $d_i \ge 0$, and $\bm{x^d} = x_1^{d_1} \cdots x_r^{d_r}$ where $x_i = \mathrm{e}^{t_i}$. 
% \footnote{In the case of $\dim H^2(X)>1$, $t$, $P$, and $x$ are vector valued.}.
Also $\mathrm{ev}_1^{{\bm d}}:\overline{\mathcal{M}}_{0,2}(X,\sum_{i=1}^{r}d_i\beta_i) \to X$ 
is the evaluation map at the 1st marked point. 
\end{defi}

In the following we consider the smooth Fano complete intersection $X=X_{\bm{l}}$ of hypersurfaces in the projective space $\mathbb{C}\textbf{P}^{N-1}$ given by $n$ equations of degrees $\bm{l}=(l_1, \ldots, l_n)$ with $l_1+\cdots +l_n<N$. According to \cite{Givental1,Givental_eqv}, 
the ($H^*(\mathbb{C}\textbf{P}^{N-1})$-valued) $J$-function 
is defined %represented more manifestly 
as
\begin{align}
&
J_{X_{\bm{l}}}(x)=\mathrm{e}^{t p/\hbar}\sum_{d=0}^{\infty}x^d\,\mathrm{ev}_{1*}\bigl(S_d(\hbar)\bigr), 
\qquad x=\mathrm{e}^{t}\in \mathbb{C}^*,
\label{J_function2}
\\
&
S_d(\hbar)=\frac{E_d}{\hbar-c_1(\mathcal{L}_1)}\in H^{*}(\overline{\mathcal{M}}_{0,2}(\mathbb{C}\textbf{P}^{N-1},d\beta)).
\nonumber
\end{align}
Here $p \in H^2(\mathbb{C}\textbf{P}^{N-1})$, $\beta \in H_2(\mathbb{C}\textbf{P}^{N-1})$ 
with $\langle \beta,p\rangle=1$, 
$E_d$ denotes the Euler class of the vector bundle over $\overline{\mathcal{M}}_{0,2}(\mathbb{C}\textbf{P}^{N-1},d\beta)$ with fiber $H^{0}(C,\phi^*H^{\otimes l_1}\oplus\cdots\oplus\phi^*H^{\otimes l_n})$, 
where $H^{\otimes l}$ is the $l$-th tensor power of the hyperplane line bundle over $\mathbb{C}\textbf{P}^{N-1}$. 
Also $\mathrm{ev}_1:\overline{\mathcal{M}}_{0,2}(\mathbb{C}\textbf{P}^{N-1},d\beta) \to 
\mathbb{C}\textbf{P}^{N-1}$ is the evaluation map at the 1st marked point. 
% Due to the factor $E_d$, $\mathrm{ev}_{1*}\big(S_d(\hbar)\big)$ represents the %push forward along  of the 
% class $\frac{1}{\hbar+c_1(\mathcal{L}_1)}\in H^{*}(\overline{\mathcal{M}}_{0,2}(X,d))$.

As a % straightforward 
generalization, the equivariant counterpart to the Gromov-Witten theory was also considered in \cite{Givental_eqv}.
For the $N$-dimensional torus $T$, we consider the natural $T$-action 
on $\mathbb{C}\textbf{P}^{N-1}$. Then we have the $T$-equivariant cohomology algebra 
%Consider the space $\mathbb{C}^N$ provided with the standard action of the $N$-dimensional torus $T$.
%The equivariant cohomology algebra $H^*_T(\mathbb{C}^N)$ is given by $H^*(BT^N)=\mathbb{C}[w_0,\ldots,w_{N-1}]$, and for the projective space, the $T$-equivariant cohomology algebra is
\begin{align}
H^*_T(\mathbb{C}\textbf{P}^{N-1})\cong \mathbb{C}[p,\bm{w}]/((p-w_0)\cdots(p-w_{N-1}))
\end{align}
over $H^*(BT)=\mathbb{C}[\bm{w}=(w_0,\ldots,w_{N-1})]$. 
In addition, the $n$-dimensional torus $T'$ action on the vector bundle $\oplus_{a=1}^nH^{\otimes l_a}$ with the equivariant parameters $\bm{\lambda}=(\lambda_1,\ldots,\lambda_n)$ provides the $T'$-equivariant Euler class $e_{T'}$ such that
\begin{align}
e_{T'}\left(\oplus_{a=1}^n H^{\otimes l_a}\right)=(l_1p-\lambda_1)\cdots(l_np-\lambda_n).
\label{equiv_euler}
\end{align}
%where $\mathbb{C}[\bm{\lambda}]=H^{*}(BT')$.
By replacing $p$, $E_d$, and $c_1(\mathcal{L}_1)$ in  (\ref{J_function2}) with their equivariant partners,
we find the $J$-function for the equivariant Gromov-Witten theory.

By means of the localization of $S_d(\hbar)$ to the fixed point set of the torus action on the moduli space $\overline{\mathcal{M}}_{0,2}(\mathbb{C}\textbf{P}^{N-1},d\beta)$, the $J$-functions of $\mathbb{C}\textbf{P}^{N-1}$ and the smooth complete intersection of hypersurfaces in $\mathbb{C}\textbf{P}^{N-1}$ are computed manifestly:
\begin{prop}[\cite{Givental_eqv}]\label{prop:J_function}
As the equivariant cohomology valued function with $p\in H^*_T(\mathbb{C}\textbf{P}^{N-1})$ (i.e. $\prod_{i=0}^{N-1}(p-w_i)= 0$), the $J$-function $J_{\mathbb{C}\textbf{P}_{\bm{w}}^{N-1}}(x)$ for the equivariant Gromov-Witten theory of $X=\mathbb{C}\textbf{P}^{N-1}_{\bm{w}}$ is given by
\begin{align}
J_{\mathbb{C}\textbf{P}_{\bm{w}}^{N-1}}(x)=\mathrm{e}^{tp/\hbar}\sum_{d=0}^{\infty}
\frac{x^d}{\prod_{m=1}^d(p-w_0+m\hbar)\cdots\prod_{m=1}^d(p-w_{N-1}+m\hbar)}.
\label{J_CPN}
\end{align}
And the $J$-function $J_{X_{\bm{l};\bm{w},\bm{\lambda}}}(x)$ for the equivariant Gromov-Witten theory of the smooth Fano complete intersection $X_{\bm{l};\bm{w},\bm{\lambda}}$ defined by $n$ equations of degrees $\bm{l}=(l_1, \ldots, l_n)$ with $l_1+\cdots +l_n<N$ in $\mathbb{C}\textbf{P}^{N-1}$ is given by
\begin{align}
J_{X_{\bm{l};\bm{w},\bm{\lambda}}}(x)=\mathrm{e}^{tp/\hbar}\sum_{d=0}^{\infty}x^d
\frac{\prod_{m=0}^{dl_1}(l_1p-\lambda_1+m\hbar)\cdots\prod_{m=0}^{dl_n}(l_np-\lambda_n+m\hbar)}{\prod_{m=1}^d(p-w_0+m\hbar)\cdots\prod_{m=1}^d(p-w_{N-1}+m\hbar)}.
\label{J_comp}
\end{align}
\end{prop}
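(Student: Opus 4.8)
The plan is to follow Givental's torus-localization computation \cite{Givental_eqv}, proving (\ref{J_CPN}) first and then deducing (\ref{J_comp}) by inserting the equivariant Euler class $E_d$. Since $H^*_T(\mathbb{C}\textbf{P}^{N-1})$ is free over $H^*(BT)$ and the $T$-fixed locus of $\mathbb{C}\textbf{P}^{N-1}$ is the set of isolated points $q_0,\dots,q_{N-1}$, an equivariant class is determined by its restrictions to the $q_i$; it therefore suffices to check each identity after restriction to every $q_i$, which on the right-hand sides merely substitutes $p|_{q_i}=w_i$. On the left-hand side, the projection formula reduces $\mathrm{ev}_{1*}\bigl(1/(\hbar-\psi_1)\bigr)\big|_{q_i}$, where $\psi_1=c_1(\mathcal{L}_1)$, to a sum over the connected components $F$ of the $T$-fixed locus inside $\mathrm{ev}_1^{-1}(q_i)\subset\overline{\mathcal{M}}_{0,2}(\mathbb{C}\textbf{P}^{N-1},d\beta)$ of the terms $\int_F 1\big/\bigl(e_T(N_F^{\mathrm{vir}})\,(\hbar-\psi_1|_F)\bigr)$. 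These fixed loci are indexed in the usual way by decorated trees --- vertices labelled by $\{0,\dots,N-1\}$, edges carrying cover degrees, the stable map a configuration of multiple covers of coordinate lines and of contracted components, with a mark recording the vertex on which the marked point sits --- and for each tree one writes $e_T(N_F^{\mathrm{vir}})$ and $\psi_1|_F$ (a cotangent weight along an edge component, or a $\psi$-class on a contracted component) explicitly, obtaining a rational function of $\hbar$ and the $w_j$.

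The crux is then to show that this a priori infinite sum over all decorated trees collapses to the single closed-form term of (\ref{J_CPN}). I would do this through the auxiliary \emph{graph space} $\overline{\mathcal{M}}_{0,0}(\mathbb{C}\textbf{P}^{N-1}\times\mathbb{C}\textbf{P}^1,(d,1))$: localizing with respect to the $\mathbb{C}^*$ that scales the $\mathbb{C}\textbf{P}^1$ factor, every fixed stratum splits the source curve into a horizontal component mapping isomorphically onto $\mathbb{C}\textbf{P}^1$ together with stable maps to $\mathbb{C}\textbf{P}^{N-1}$ bubbling off over $0$ and over $\infty$; pushing a suitable class forward along the natural map to $\mathbb{C}\textbf{P}^{N-1}$ and reading off the resulting localization identity produces a recursion in $d$ in which the normal direction along the horizontal component contributes exactly the factor $\hbar-\psi_1$. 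This recursion has a unique solution, and one checks directly that the hypergeometric series on the right of (\ref{J_CPN}) solves it --- the product $\prod_{m=1}^{d}\prod_{j=0}^{N-1}(w_i-w_j+m\hbar)$ in the denominator appearing because the $\mathbb{C}^*\times T$-weights governing the $m$-th stage of a degree-$d$ cover at $q_i$ are the $w_i-w_j+m\hbar$. (Alternatively one can bypass the graph space and derive the same recursion directly, by peeling off in the original graph sum the edge that carries the marked point.)

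For the complete intersection $X_{\bm{l}}$ I would rerun the same computation with $1$ replaced by $E_d|_F$ in the localization terms, where $E_d$ is the $T'$-equivariant Euler class of the sheaf with fibre $H^0\bigl(C,\phi^*(H^{\otimes l_1}\oplus\cdots\oplus H^{\otimes l_n})\bigr)$. Here the Fano hypothesis $l_1+\cdots+l_n<N$ is essential: it forces $H^1(C,\phi^*H^{\otimes l_a})=0$ on the genus-$0$ covers of lines that occur in the fixed loci, so this sheaf is an honest vector bundle and $E_d$ is just its top equivariant Chern class; on the curves entering the recursion, $H^0(C,\phi^*H^{\otimes l_a})$ carries $\mathbb{C}^*$-weights $l_a p-\lambda_a+m\hbar$ for $m=0,\dots,dl_a$, contributing the numerator $\prod_{m=0}^{dl_a}(l_a p-\lambda_a+m\hbar)$, and combining this with the denominator from the $\mathbb{C}\textbf{P}^{N-1}$ computation yields (\ref{J_comp}). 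The main obstacle is the collapsing step of the second paragraph: the graph sum genuinely ranges over arbitrarily large trees, and making the cancellation rigorous --- whether via the double localization on the graph space or via an explicit telescoping recursion, as in \cite{Givental1,Givental_eqv} --- is where essentially all of the content lies; the only other point not to overlook is the Fano condition, without which $E_d$ picks up an $H^1$-contribution and the numerator changes.
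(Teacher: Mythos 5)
The paper does not actually prove this proposition: it is quoted from \cite{Givental_eqv} (and \cite{Coates_Givental} for the twisted case), with only the preceding remark that it follows ``by means of the localization of $S_d(\hbar)$ to the fixed point set of the torus action'' on $\overline{\mathcal{M}}_{0,2}(\mathbb{C}\textbf{P}^{N-1},d\beta)$, so your sketch reconstructs exactly the argument the paper points to, and its overall architecture (restriction to the isolated fixed points $q_i$, the tree-indexed fixed-locus sum, the graph-space input, insertion of $E_d$) is the right one. Two imprecisions are worth fixing if you carry this out. First, the vanishing $H^1(C,\phi^*H^{\otimes l_a})=0$ on genus-zero stable maps holds because $H^{\otimes l_a}$ ($l_a\ge 0$) is generated by global sections (convexity), independently of the Fano hypothesis; the actual role of $l_1+\cdots+l_n<N$ is in the $\hbar\to\infty$ asymptotics that let the hypergeometric series in (\ref{J_comp}) be identified with the $J$-function itself rather than merely with an $I$-function related to it by a mirror transformation. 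Second, the statement that ``this recursion has a unique solution'' is too quick: the coefficient of $x^d$ in $J|_{p=w_i}$ is a rational function of $\hbar$ whose poles at $\hbar=(w_i-w_j)/m$ are prescribed by the recursion, but it also has an a priori uncontrolled pole at $\hbar=0$ (coming from the $\psi$-powers), so the recursion alone admits many solutions. Givental's uniqueness lemma requires, in addition, the polynomiality property of a certain double pairing extracted from the $\mathbb{C}^*$-localization on the graph space --- this is a second, independent use of the graph space, not just another derivation of the recursion --- and one must check that the hypergeometric series satisfies both the recursion and the polynomiality before concluding that it equals $J$.
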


In the context of quantum $D$-module, 
the action of the Heisenberg algebra 
$D:=\mathbb{C}[[\hbar]][[\widehat{x}]][\widehat{y}]
\ ([\widehat{y},\widehat{x}]=\hbar \widehat{x})$ 
on the $J$-function and its annihilator $I$ are studied, 
here operators $\widehat{x}$ and $\widehat{y}$ act on $J_X(x)$ as (\ref{eq:190525_1}). 
%\begin{equation}
%\widehat{x}J_X(x)=xJ_X(x),\qquad
%\widehat{y}J_X(x)=\hbar x\frac{d}{dx}J_X(x). 
%\end{equation}

\begin{prop}\label{prop:190519_quantum_differential_operator}
Let $X$ be a Fano manifold. Assume that 
the second cohomology group of $X$ has rank $1$ and 
generates the total cohomology ring. 
Then there exists uniquely the differential operator 
$\widehat{A}_X(\widehat{x},\widehat{y}) \in I$ 
such that 
(i) $\widehat{A}_X(\widehat{x},\widehat{y})$ generates $I$, 
(ii)  $\widehat{A}_X(\widehat{x},\widehat{y})$ 
is a monic polynomial with respect to $\widehat{y}$. 
We also have $\widehat{A}_X(\widehat{x},\widehat{y}) 
\in D^{alg}:=\mathbb{C}[\hbar][\widehat{x}][\widehat{y}]$. 
\end{prop}
\begin{proof}
By the assumptions on the Fano manifold $X$, 
we could take a generator $p_1 \in H^2(X,\mathbb{C})$ 
of the algebra $H^*(X,\mathbb{C})$ 
such that the algebra homomorphism 
$\mathbb{C}[y] \to H^*(X,\mathbb{C})$, $
P(y) \mapsto P(p_1)$ is surjective. 
The kernel $I_0$ of this homomorphism is a principal ideal 
generated by the monic polynomial $f$ of degree $m+1$, 
where $\mathrm{dim}H^*(X,\mathbb{C})=m+1$. 
By division with remainder, we have the decomposition 
$$
\mathbb{C}[y]=I_0 \bigoplus(\bigoplus_{j=0}^{m}\mathbb{C}y^j). 
$$

By this decomposition and using the discussion of the proof of Theorem 2.4 of \cite{Iritani_0410487}, we obtain the decomposition of the Heisenberg algebra $D$:
$$
D
=I \oplus R,\quad 
(R:=\bigoplus_{j=0}^{m}\mathbb{C}[[\hbar]][[\widehat{x}]]\widehat{y}^j). 
$$
By this decomposition, we have 
$
\widehat{y}^{m+1}=f_1+f_2\ 
(f_1 \in I,\ f_2 \in R)
$ 
and put $\widehat{A}_X(\widehat{x},\widehat{y}):=f_1$. 
Then $\widehat{A}_X(\widehat{x},\widehat{y})$ satisfies (ii). 
The left ideal $(\widehat{A}_X(\widehat{x},\widehat{y}))$ satisfies 
$(\widehat{A}_X(\widehat{x},\widehat{y})) \subset I$ and 
$D=(\widehat{A}_X(\widehat{x},\widehat{y}))\oplus R$. Then we have 
$(\widehat{A}_X(\widehat{x},\widehat{y}))=I$. 

By the procedure of the proof of Theorem 2.4 of \cite{Iritani_0410487} 
and the fact that $X$ is a Fano manifold, 
%Iritani discusses not in the $D$-module structure on $J$ but the one using $\nabla^{\hbar}$ 
%where only a polynomial of $\widehat{x}$ and $\hbar$ appears. 
we have the decomposition $\widehat{y}^{m+1}=f_1+f_2$ 
with $\widehat{A}_X(\widehat{x},\widehat{y})=f_1 \in D^{alg}$ and $f_2 \in D^{alg}$. 
\end{proof}
For the equivariant cases, we also define $\widehat{A}_X(\widehat{x},\widehat{y})$ in the same manner. 

We call this differential operator 
$\widehat{A}_X(\widehat{x},\widehat{y})$ the \textit{quantum differential operator} 
and we call the differential equation 
$\widehat{A}_X(\widehat{x},\widehat{y})J_{X}(x)=0$ 
\textit{Gel'fand-Kapranov-Zelevinsky (GKZ) equation}.

Givental also gave the explicit description of the GKZ equation for the $J$-functions in Proposition \ref{prop:J_function} (see Theorem 9.1, Corollary 9.2 and Theorem 9.5 of \cite{Givental_eqv}). 
\begin{prop}[\cite{Givental_eqv}] \label{prop:GKZ-for-J-functions}
The GKZ equation 
\begin{align}
\widehat{A}_{\mathbb{C}\textbf{P}^{N-1}_{\bm{w}}}(\widehat{x},\widehat{y})
J_{\mathbb{C}\textbf{P}^{N-1}_{\bm{w}}}(x)
=0,
\label{GKZ_CPN}
\end{align}
for the $J$-function 
$J_{\mathbb{C}\mathbf{P}^{N-1}_{\bm{w}}}(x)$ 
for the equivariant Gromov-Witten theory of 
$\mathbb{C}\textbf{P}^{N-1}_{\bm{w}}$ is explicitly given by 
\begin{equation} \label{eq:QDE-CPN}
\widehat{A}_{\mathbb{C}\textbf{P}^{N-1}_{\bm{w}}}(\widehat{x},\widehat{y})
=\prod_{i=0}^{N-1}(\widehat{y}-w_i)-\widehat{x}.
\end{equation}
And the GKZ equation 
\begin{align}
\widehat{A}_{X_{\bm{l};\bm{w},\bm{\lambda}}}(\widehat{x},\widehat{y})
J_{X_{\bm{l};\bm{w},\bm{\lambda}}}(x)=0,
\label{GKZ_comp}
\end{align}
for the $J$-function $J_{X_{\bm{l};\bm{w},\bm{\lambda}}}(x)$ 
for the equivariant Gromov-Witten theory of the smooth 
Fano complete intersection $X_{\bm{l};\bm{w},\bm{\lambda}}$ 
is explicitly given by 
\begin{equation}
\widehat{A}_{X_{\bm{l};\bm{w},\bm{\lambda}}}(\widehat{x},\widehat{y})
=\prod_{i=0}^{N-1}(\widehat{y}-w_i)
-\widehat{x}\prod_{m=1}^{l_1}(l_1\widehat{y}-\lambda_1+m\hbar)
\cdots
\prod_{m=1}^{l_n}(l_n \widehat{y}-\lambda_n+m\hbar).
\label{eq:QDE-COMP}
\end{equation}
\end{prop}

%%%%%%%%%%%%%%%%%%%%%%%%%%%%%%%%%%%%
%%%%%%%%%%%%%%%%%%%%%%%%%%%%%%%%%%%%
% Landau-Ginzburg model
%%%%%%%%%%%%%%%%%%%%%%%%%%%%%%%%%%%%
%%%%%%%%%%%%%%%%%%%%%%%%%%%%%%%%%%%%

\subsection{Oscillatory integral and GKZ equation}
\label{subsection:oscillatory-integral-GKZ}

In the mirror theorem (see  pedagogical expositions in \cite{Hori_book}), a correspondence between the Gromov-Witten theory of smooth Fano manifold $X$ with $\mathrm{dim}H^2(X,\mathbb{C})=1$ and the Landau-Ginzburg model with potential function
\begin{align}
W(\cdot,x): (\widetilde{{\mathbb C}^{\ast}})^k\longrightarrow \mathbb{C}, 
\quad (u_1,\ldots,u_k;x) \mapsto W(u_1,\ldots,u_k;x), 
\label{LG_mirror}
\end{align}
is considered. 
Here $\widetilde{{\mathbb C}^\ast}$ is the universal covering of ${\mathbb C}^\ast$, 
and $x$ is a deformation parameter of the potential function. 
For the equivariant Gromov-Witten theory of $X=\mathbb{C}\textbf{P}^{N-1}$ ($X=\mathbb{C}\textbf{P}^{N-1}_{\bm{w}}$ in short) and the smooth Fano complete intersection of degrees $l_a$ ($a=1,\ldots,n$) hypersurfaces in $\mathbb{C}\textbf{P}^{N-1}$ with $l_1+\cdots +l_n<N$ ($X=X_{\bm{l};\bm{w},\bm{\lambda}}$ in short), the mirror Landau-Ginzburg potential $W_X$ is given as follows.

\begin{defi}[\cite{Coates_Givental,Givental_eqv}]
For $X=\mathbb{C}\textbf{P}^{N-1}_{\bm{w}}$, the 
mirror Landau-Ginzburg potential $W_{\mathbb{C}\textbf{P}^{N-1}_{\bm{w}}}(\cdot;x):(\widetilde{{\mathbb C}^\ast})^{N-1}\to\mathbb{C}$ is defined by 
\begin{align}
W_{\mathbb{C}\textbf{P}^{N-1}_{\bm{w}}}(u_1,\ldots,u_{N-1};x)=
\sum_{i=1}^{N-1} (u_{i}+w_i\log u_i)  + \frac{x}{u_1 \cdots u_{N-1}}
+w_0\log\left(\frac{x}{u_1\cdots u_{N-1}}\right),
\label{LG_CPN}
\end{align}
where  $x=\mathrm{e}^{t}\in\mathbb{C}^*$.
For $X=X_{\bm{l};\bm{w},\bm{\lambda}}$, the mirror Landau-Ginzburg potential $W_{X_{\bm{l};\bm{w},\bm{\lambda}}}(\cdot;x):(\widetilde{{\mathbb C}^\ast})^{N-1}\times (\widetilde{{\mathbb C}^\ast})^n\to\mathbb{C}$ is defined by 
\begin{align}
&W_{X_{\bm{l};\bm{w},\bm{\lambda}}}(u_1,\ldots,u_{N-1},v_1,\ldots,v_n;x)
\nonumber \\
&=\sum_{i=1}^{N-1}(u_{i}+w_i\log u_i)-\sum_{a=1}^n(v_a+\lambda_a\log v_a)+\frac{v_1^{l_1}\cdots v_n^{l_n}}{u_1\cdots u_{N-1}}x
+w_{0}\log\left(\frac{v_1^{l_1}\cdots v_n^{l_n}}{u_1\cdots u_{N-1}}x\right).
\label{LG_comp}
\end{align}
\end{defi}

Here we introduce the notion of critical set.

\begin{defi}
Let ${\rm Crit} \subset (\widetilde{{\mathbb C}^\ast})^k \times \mathbb{C}^*$ 
be the {\em critical set} of $W$ defined by \\
$$
{\rm Crit}=\left\{(u_1^{({\rm c})},\dots,u_k^{({\rm c})};x) \in 
(\widetilde{{\mathbb C}^\ast})^k \times \mathbb{C}^*\;\left|\;
\frac{\partial}{\partial u_i}W(u_1^{\rm (c)},\dots,u_k^{({\rm c})};x)=0 \ 
(i=1,\cdots,k)\right.\right\}. 
$$
% The image ${\bm u}^{\rm (c)} = (u_1^{({\rm c})},\dots,u_k^{({\rm c})})$ of a point $(u_1^{({\rm c})},\dots,u_k^{({\rm c})};x) \in {\rm Crit}$ by the projection ${\rm Crit} \to (\widetilde{{\mathbb C}^\ast})^k$ is called a {\em critical point} of $W(\cdot;x)$.
% A point $(u_1^{({\rm c})},\dots,u_k^{({\rm c})};x) \in {\rm Crit}$ is called a {\em critical point} of $W(\cdot;x)$. 
For any fixed $x$, the image ${\bm u}^{\rm (c)} = (u_1^{({\rm c})},\dots,u_k^{({\rm c})}) \in (\widetilde{{\mathbb C}^\ast})^k$ of a point $({\bm u}^{\rm (c)};x) \in \mathrm{Crit} \cap ((\widetilde{{\mathbb C}^\ast})^k \times\{x\})$ by the isomorphism
$(\widetilde{{\mathbb C}^\ast})^k \times \{x\} \to (\widetilde{{\mathbb C}^\ast})^k$
is called a critical point of $W(\cdot;x)$.
The value $W({\bm u}^{\rm (c)};x)$ at a critical point ${\bm u}^{\rm (c)}$ is called a {\em critical value}.
\end{defi}
 
In what follows, for any fixed $x$, we assume that 
all critical points of $W$ are non-degenerate.
Then, for each $\hbar$, we associate the critical point ${\bm u}^{(c)}$ with the {\em Lefschetz thimble} $\Gamma$: It is a relative $k$-cycle in $(\widetilde{{\mathbb C}^\ast})^k$ which is 
the real 1-parameter family of $(k-1)$-cycles (called {\em vanishing cycles}) in the Milnor fiber of $W^{-1}(w)$, where $w$ lies on the half-line $\{W({\bm u}^{(c)};x) + r \, {\rm e}^{i(\pi + \arg \hbar)} ~|~ r \ge 0 \}$ emanating from the critical value $W({\bm u}^{(c)};x)$, and the $(k-1)$-cycle tends to a point when $w \to W({\bm u}^{(c)};x)$ along the half line. In this section we assume that $x$ and $\hbar$ are chosen so that the half line $\{W({\bm u}^{(c)};x) + r \, {\rm e}^{i(\pi + \arg \hbar)} ~|~ r \ge 0 \}$ associated with a critical point ${\bm u}^{\rm (c)}$ never hits critical values of $W(\cdot; x)$; then the (equivariant) oscillatory integral defined below has the so-called saddle point expansion of the form \eqref{eq:saddle-pt-expansion-IX} below.

For a Lefschetz thimble $\Gamma$ we consider the (equivariant) 
oscillatory integral $\mathcal{I}(x)$ 
of the type
\begin{align} 
\mathcal{I}(x)=\int_{\Gamma}
 \mathrm{e}^{\frac{1}{\hbar}W(u_1,\ldots,u_k;x)}\,
\zeta(u_1,\ldots,u_k;x), 
\label{eq:equiv-osci-int}
\end{align}
where $\zeta(u_1,\ldots,u_k;x)$ denotes a $k$-form on $({\mathbb C}^\ast)^k$.

\begin{defi}[\cite{Coates_Givental,Givental_eqv}]
The oscillatory integral $\mathcal{I}_{\mathbb{C}\textbf{P}^{N-1}_{\bm{w}}}(x)$ for the projective space $X=\mathbb{C}\textbf{P}^{N-1}_{\bm{w}}$ is defined by 
\begin{align}
\mathcal{I}_{\mathbb{C}\textbf{P}^{N-1}_{\bm{w}}}(x)
&=\int_{\Gamma}\prod_{i=1}^{N-1}\frac{du_i}{u_i}\;
%\exp\Biggl[\frac{1}{\hbar}\Biggl(\biggl(\sum_{i=1}^{N-1}u_{i}+w_i\log u_i\biggr)+\frac{x}{u_1\cdots u_{N-1}}+w_{0}\log\left(\frac{x}{u_1\cdots u_{N-1}}\right)\Biggr)\Biggr],
\mathrm{e}^{\frac{1}{\hbar}W_{\mathbb{C}\textbf{P}^{N-1}_{\bm{w}}}(u_1,\ldots,u_{N-1};x)}.
% \nonumber \\
% &=\int_{\gamma\subset\{u_0\cdots u_{N-1}=x\}}\prod_{a=0}^{N-1}d\log u_a\,
% \mathrm{e}^{\frac{1}{\hbar}\left(\sum_{a=0}^{N-1}u_{a}+w_a\log u_a\right)},
\label{J_CPN_osc}
\end{align}
%where $(u_0,\ldots,u_{N-1})\in(\mathbb{C}^*)^{N}$.
The oscillatory integral $\mathcal{I}_{X_{\bm{l};\bm{w},\bm{\lambda}}}(x)$ for the Fano complete intersection in the projective space $X=X_{\bm{l};\bm{w},\bm{\lambda}}$ 
is defined by the Laplace transform of the oscillatory integral for $X=\mathbb{C}\textbf{P}^{N-1}_{\bm{w}}$.
\begin{align}
\mathcal{I}_{X_{\bm{l};\bm{w},\bm{\lambda}}}(x)&=
\int_0^{\infty}dv_1\cdots\int_0^{\infty}dv_n\,  \mathrm{e}^{-\frac{\sum_{a=1}^n(v_a+\lambda_a\log v_a)}{\hbar}}\mathcal{I}_{\mathbb{C}\textbf{P}^{N-1}_{\bm{w}}}(v_1^{l_1}\cdots v_n^{l_n}x)
\nonumber \\
%&\int_{\Gamma\subset\{X_1\cdots X_N=x\}}\prod_{i=0}^{N-1}d\log X_i\,
%\mathrm{e}^{\frac{1}{\hbar}\left(\sum_{i=0}^{N-1}X_{i}+w_i\log X_i\right)}
%\nonumber \\
&=\int_{\Gamma\times(\mathbb{R}_{\ge 0})^n}\prod_{i=1}^{N-1}\frac{du_i}{u_i}\prod_{a=1}^ndv_a\;
\mathrm{e}^{\frac{1}{\hbar}W_{X_{\bm{l};\bm{w},\bm{\lambda}}}(u_1,\ldots,u_{N-1},v_1,\ldots,v_n;x)}.
%\nonumber 
%\\
%&\quad\quad
%\times\exp\Biggl[\frac{1}{\hbar}\Biggl(\sum_{i=1}^{N-1}u_{i}+w_i\log u_i+\frac{v_1^{l_1}\cdots v_n^{l_n}x}{u_1\cdots u_{N-1}}+w_{0}\log\left(\frac{v_1^{l_1}\cdots v_n^{l_n}x}{u_1\cdots u_{N-1}}\right)
%\nonumber \\
%&\quad\quad\quad\quad\quad
%-\sum_{a=1}^n\bigl(v_a-\lambda_a\log v_a\bigr)\Biggr)\Biggr].
\label{J_comp_osc}
\end{align}
\end{defi}

The mirror symmetry between the $J$-function $J_{X_{\bm{l};\bm{w},\bm{\lambda}}}(x)$ for the Gromov-Witten theory and the oscillatory integral $\mathcal{I}_{X_{\bm{l};\bm{w},\bm{\lambda}}}(x)$ for the mirror Landau-Ginzburg model is given by the proposition below:
\begin{prop}\label{prop:IGKZ}
The oscillatory integrals $\mathcal{I}_{X}(x)$ for $X=\mathbb{C}\textbf{P}^{N-1}_{\bm{w}}$ and $X=X_{\bm{l};\bm{w},\bm{\lambda}}$ obey the GKZ equations (\ref{GKZ_CPN}) and (\ref{GKZ_comp}), respectively:
\begin{align} \label{eq:all-GKZ-equation}
\widehat{A}_X(\widehat{x},\widehat{y}) \, \mathcal{I}_{X}(x)=0,
\end{align}
where operators $\widehat{x}$ and $\widehat{y}$ act on $\mathcal{I}_{X}(x)$ as (\ref{eq:190525_1}). 
%\begin{align}
%\widehat{x}\mathcal{I}_{X}(x)=x\mathcal{I}_{X}(x),\quad \widehat{y}\mathcal{I}_{X}(x)=\hbar x\frac{\partial}{\partial x}\mathcal{I}_{X}(x).
%\end{align}
\end{prop}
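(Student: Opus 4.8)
The plan is to obtain the two GKZ equations \eqref{GKZ_CPN} and \eqref{GKZ_comp} directly from the integral representations by repeated integration by parts, using that on a Lefschetz thimble $\Gamma$ the integral of a total derivative vanishes because $\mathrm{e}^{W_X/\hbar}$ decays at the boundary of $\Gamma$. I would treat $X=\mathbb{C}\textbf{P}^{N-1}_{\bm{w}}$ first, and reduce the complete intersection to it through the Laplace transform in \eqref{J_comp_osc}. For $X=\mathbb{C}\textbf{P}^{N-1}_{\bm{w}}$ the crucial device is the substitution $u_0:=x/(u_1\cdots u_{N-1})$, under which the potential \eqref{LG_CPN} takes the symmetric form $W=\sum_{i=1}^{N-1}(u_i+w_i\log u_i)+u_0+w_0\log u_0$ with the constraint $u_0u_1\cdots u_{N-1}=x$. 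Since $x\,\partial_x$ at fixed $u_1,\dots,u_{N-1}$ equals $u_0\,\partial_{u_0}$, one computes $\partial_x W=(u_0+w_0)/x$, so under the integral sign $\widehat{y}=\hbar x\,\partial_x$ sends an $x$-independent prefactor $G$ to $(u_0+w_0)\,G$; in particular $\widehat{y}\,\mathcal{I}_{\mathbb{C}\textbf{P}^{N-1}_{\bm{w}}}=\int_\Gamma(u_0+w_0)\,\mathrm{e}^{W/\hbar}\prod_j\frac{du_j}{u_j}$. Integrating by parts in the variable $\log u_i$ (using $du_i/u_i=d\log u_i$) for each $i=1,\dots,N-1$, and inserting $u_i\,\partial_{u_i}W=u_i+w_i-u_0-w_0$, then yields the ``equation of motion'' identity $\int_\Gamma G\,(u_0+w_0)\,\mathrm{e}^{W/\hbar}\prod_j\frac{du_j}{u_j}=\int_\Gamma G\,(u_i+w_i)\,\mathrm{e}^{W/\hbar}\prod_j\frac{du_j}{u_j}$, valid for any prefactor $G$ independent of $u_i$ and of $x$.

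Combining the action of $\widehat{y}$ with these identities, an induction on $k$ gives
\[
\prod_{i=k}^{N-1}(\widehat{y}-w_i)\,\mathcal{I}_{\mathbb{C}\textbf{P}^{N-1}_{\bm{w}}}
=\int_\Gamma u_k\,u_{k+1}\cdots u_{N-1}\,\mathrm{e}^{W/\hbar}\prod_j\frac{du_j}{u_j}\, :
\]
applying $\widehat{y}$ to the level-$(k{+}1)$ right-hand side produces a factor $(u_0+w_0)$, the integration-by-parts identity in $u_k$ trades it for $(u_k+w_k)$, and subtracting $w_k$ times the level-$(k{+}1)$ identity cancels the constant term. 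At $k=1$ this reads $\prod_{i=1}^{N-1}(\widehat{y}-w_i)\mathcal{I}_{\mathbb{C}\textbf{P}^{N-1}_{\bm{w}}}=\int_\Gamma u_1\cdots u_{N-1}\,\mathrm{e}^{W/\hbar}\prod_j\frac{du_j}{u_j}$, and one last application of $\widehat{y}-w_0$ --- now using $u_0u_1\cdots u_{N-1}=x$, so that $\widehat{y}$ produces $x+w_0\,u_1\cdots u_{N-1}$ --- gives $\prod_{i=0}^{N-1}(\widehat{y}-w_i)\mathcal{I}_{\mathbb{C}\textbf{P}^{N-1}_{\bm{w}}}=x\,\mathcal{I}_{\mathbb{C}\textbf{P}^{N-1}_{\bm{w}}}$, which is \eqref{GKZ_CPN}.

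For $X=X_{\bm{l};\bm{w},\bm{\lambda}}$ I would substitute this into the Laplace-transform form of \eqref{J_comp_osc}, with $x$ replaced by $\widetilde{x}:=v_1^{l_1}\cdots v_n^{l_n}x$ inside $\mathcal{I}_{\mathbb{C}\textbf{P}^{N-1}_{\bm{w}}}$. Since the $v_a$ are independent of $x$, the operator $\widehat{y}=\hbar x\,\partial_x$ acts under the $\bm{v}$-integral as $\hbar\widetilde{x}\,\partial_{\widetilde{x}}$ on $\mathcal{I}_{\mathbb{C}\textbf{P}^{N-1}_{\bm{w}}}(\widetilde{x})$, so the previous paragraph gives $\prod_{i=0}^{N-1}(\widehat{y}-w_i)\mathcal{I}_{X_{\bm{l};\bm{w},\bm{\lambda}}}(x)=x\int_0^\infty dv_1\cdots\int_0^\infty dv_n\,\mathrm{e}^{-\frac{1}{\hbar}\sum_a(v_a+\lambda_a\log v_a)}\,v_1^{l_1}\cdots v_n^{l_n}\,\mathcal{I}_{\mathbb{C}\textbf{P}^{N-1}_{\bm{w}}}(\widetilde{x})$. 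It then remains to convert the monomial $v_1^{l_1}\cdots v_n^{l_n}$ into a differential operator: fixing $a$ and integrating by parts in $v_a$ with test factor $v_a^{m}$, and using $v_a\,\partial_{v_a}\mathcal{I}_{\mathbb{C}\textbf{P}^{N-1}_{\bm{w}}}(\widetilde{x})=l_a\widetilde{x}\,\partial_{\widetilde{x}}\mathcal{I}_{\mathbb{C}\textbf{P}^{N-1}_{\bm{w}}}(\widetilde{x})$, produces the recursion $\int v_a^{m}(\cdots)=(l_a\widehat{y}-\lambda_a+m\hbar)\int v_a^{m-1}(\cdots)$ under the integral; iterating from $m=l_a$ down to $1$, and doing this for each $a$ in turn (the operators attached to distinct $a$ commute), replaces $v_1^{l_1}\cdots v_n^{l_n}$ by $\prod_{a=1}^{n}\prod_{m=1}^{l_a}(l_a\widehat{y}-\lambda_a+m\hbar)$ acting on $\mathcal{I}_{X_{\bm{l};\bm{w},\bm{\lambda}}}$, which is exactly \eqref{GKZ_comp}.

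The main obstacle I expect is not the algebra but the analytic justification that every boundary term in these integrations by parts vanishes, and that differentiation under the integral sign is licit: on $\Gamma$ one needs $\mathrm{e}^{W_X/\hbar}$, times polynomial prefactors, to decay at the ends of $\Gamma$ and to remain integrable as $u_i\to0$, while in the Laplace integral one needs $v_a^{m}\,\mathrm{e}^{-(v_a+\lambda_a\log v_a)/\hbar}$ to vanish both at $v_a=0$ and at $v_a=\infty$; both require restricting $(\hbar,\bm{w},\bm{\lambda})$ to a suitable domain, or choosing the contour accordingly, in line with the standing assumptions of this section, and making these estimates uniform in $x$ is the delicate point. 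If one prefers to avoid the contour analysis, an alternative is to expand the oscillatory integrals as Mellin--Barnes/power-series representations, match them termwise with the explicit $J$-functions of Proposition \ref{prop:J_function}, and invoke the GKZ equations already recorded there for $J_X$ --- at the cost of convergence and mirror-map bookkeeping.
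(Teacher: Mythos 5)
Your proposal is correct and follows essentially the same route as the paper's proof in Appendix B: for $\mathbb{C}\textbf{P}^{N-1}_{\bm w}$ the paper likewise derives the identity trading $\tfrac{x}{u_1\cdots u_{N-1}}+w_0$ for $u_i+w_i$ by integration by parts on the Lefschetz thimble and applies it factor by factor, and for the complete intersection it likewise pushes the $\mathbb{C}\textbf{P}^{N-1}$ equation through the Laplace transform and converts the monomial $v_1^{l_1}\cdots v_n^{l_n}$ into $\prod_{a}\prod_{m=1}^{l_a}(l_a\widehat{y}-\lambda_a+m\hbar)$ via the same $v_a$-integration-by-parts recursion. Your $u_0$ bookkeeping and explicit induction are cosmetic repackagings of the paper's ``repeat the manipulation'' step, and your remarks on boundary terms match the paper's (equally brief) justification via decay of $\mathrm{e}^{W_X/\hbar}$ along the thimble.
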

The claim of Proposition \ref{prop:IGKZ} is essentially found in \cite{Givental_eqv} for the equivariant $\mathbb{C}\textbf{P}^{N-1}$
and in \cite{Coates_Givental} for non-equivariant models, although the claim for $X=X_{\bm{l};\bm{w},\bm{\lambda}}$ is not mentioned manifestly. 
We will give a proof of this Proposition in Appendix \ref{app:IGKZ}.

%\subsection{GKZ curves from the classical limit}
\subsection{GKZ curves}
\label{subsection:GKZ-curves-critical-set}

Let $X$ be a Fano manifold with the assumptions 
of Proposition \ref{prop:190519_quantum_differential_operator}.
We first define the classical limit $A_X(x,y)$ 
of the quantum differential operator 
$\widehat{A}_X(\widehat{x},\widehat{y})$ by 
\begin{align}
A_X(x,y):=
\lim_{\hbar \mapsto 0, (\widehat{x},\widehat{y}) \mapsto (x,y)}
\widehat{A}_X(\widehat{x},\widehat{y}) 
\in \mathbb{C}[x,y]. 
\nonumber
\end{align} 
Then we define the {\em GKZ curve} $\Sigma_X$ for $X$ by 
\begin{equation} 
\Sigma_X = \{(x,y) \in {\mathbb C}^2~|~ A_X(x,y) = 0 \}.
\label{eq:GKZ-curve}
\end{equation}

As is mentioned in the introduction, our goal is to show that the GKZ equations (\ref{GKZ_CPN}) for $X=\mathbb{C}\textbf{P}^{N-1}_{\bm{w}}$ and (\ref{GKZ_comp}) for $X=X_{\bm{l};\bm{w},\bm{\lambda}}$ are reconstructed from the GKZ curve \eqref{eq:GKZ-curve} by the topological recursion.

As is proved in Corollary 2.5 of \cite{Iritani_0410487}, the classical limit $A_X(x,y)$ of the quantum differential operator gives the defining ideal of the quantum cohomology ring of $X$. 

A part of Mirror symmetry asserts that the quantum cohomology ring of $X$ 
is isomorphic to the function ring of the critical set ${\rm Crit}$ of the 
corresponding Landau-Ginzburg potential $W$. 
This leads to an alternative description of the GKZ curve:
\begin{prop} \label{def:GKZ_curve}
For the cases of the projective space $X=\mathbb{C}\textbf{P}^{N-1}_{\bm{w}}$ and Fano complete intersection 
$X_{\bm{l};\bm{w},\bm{\lambda}}$, the GKZ curve $\Sigma_X$ coincides with the image $\mathrm{Im}\,\iota$ of critical set, where 
$$\iota: {\rm Crit} \to \mathbb{C}^*\times \mathbb{C},\quad
(u_1^{(\mathrm{c})},\dots,u_k^{(\mathrm{c})};x ) \mapsto 
\left(x,x\frac{d}{d x}W_X(u_1^{(\mathrm{c})},\dots,u_k^{(\mathrm{c})};x)
% \Big|_{\bm{u}=\bm{u}^{(\mathrm{c})}}
\right).$$ 
% and defining polynomial is given by  $A_X(x,y) \in \mathbb{C}[x,y]$.
\end{prop}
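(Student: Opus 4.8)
The plan is to describe $\mathrm{Im}\,\iota$ explicitly by solving the critical-point equations $\partial_{u_i}W_X=0$ (and, in the complete-intersection case, $\partial_{v_a}W_X=0$) in closed form, and to match the resulting set with $\Sigma_X=\{A_X=0\}$ on the nose; as a byproduct this turns the heuristic identity \eqref{classical_A} into a rigorous one. A preliminary remark first disposes of a notational ambiguity in $\iota$: whether one differentiates $W_X({\bm u}^{\rm (c)};x)$ in $x$ with ${\bm u}^{\rm (c)}$ frozen, or along a local branch of non-degenerate critical points, the chain rule together with $\partial_{u_i}W_X({\bm u}^{\rm (c)};x)=0$ gives the same value
\[
x\frac{d}{dx}W_X({\bm u}^{\rm (c)};x)=x\,\frac{\partial W_X}{\partial x}({\bm u}^{\rm (c)};x).
\]
In particular the second component of $\iota$, like the critical-point equations themselves, is the restriction to ${\rm Crit}$ of a single-valued \emph{rational} function of the variables $u_i$ (and $v_a$) — the logarithms of $W_X$ contribute only $w_j/u_j$ or $\lambda_a/v_a$ to the relevant derivatives — so ${\rm Crit}$ and $\iota$ descend to the algebraic torus $(\mathbb{C}^\ast)^k\times\mathbb{C}^\ast$ and the universal cover plays no role; from now on I work with honest $\mathbb{C}^\ast$-valued coordinates.

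The main computation is short. For $X=\mathbb{C}\textbf{P}^{N-1}_{\bm w}$, write $u_0:=x/(u_1\cdots u_{N-1})$, so that \eqref{LG_CPN} becomes $W=\sum_{i=1}^{N-1}(u_i+w_i\log u_i)+u_0+w_0\log u_0$ with $\partial_{u_j}u_0=-u_0/u_j$; then $u_j\,\partial_{u_j}W=u_j+w_j-u_0-w_0$ for $j=1,\dots,N-1$, so at a critical point $u_j=y-w_j$ and $u_0=y-w_0$ with $y:=u_0+w_0$, and $x\,\partial_xW=u_0+w_0=y$ is exactly the second component of $\iota$. The only remaining relation is the definition $u_0u_1\cdots u_{N-1}=x$, which reads $\prod_{i=0}^{N-1}(y-w_i)=x$, i.e.\ $A_{\mathbb{C}\textbf{P}^{N-1}_{\bm w}}(x,y)=0$. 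The case $X=X_{\bm l;\bm w,\bm\lambda}$ is identical with $u_0:=x\,v_1^{l_1}\cdots v_n^{l_n}/(u_1\cdots u_{N-1})$: from \eqref{LG_comp}, the equations $\partial_{u_j}W=0$ again give $u_j=y-w_j$ and $u_0=y-w_0$ with $y:=u_0+w_0=x\,\partial_xW$, while $v_a\,\partial_{v_a}W=-v_a-\lambda_a+l_a(u_0+w_0)$ forces $v_a=l_ay-\lambda_a$; the defining relation $u_0u_1\cdots u_{N-1}=x\,v_1^{l_1}\cdots v_n^{l_n}$ then reads $\prod_{i=0}^{N-1}(y-w_i)=x\prod_{a=1}^n(l_ay-\lambda_a)^{l_a}$, which is $A_{X_{\bm l;\bm w,\bm\lambda}}(x,y)=0$ since $\hbar\to0$ sends $\prod_{m=1}^{l_a}(l_a\widehat y-\lambda_a+m\hbar)$ to $(l_ay-\lambda_a)^{l_a}$. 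Hence $\mathrm{Im}\,\iota\subseteq\Sigma_X$.

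For the reverse inclusion I would invert the recipe: given $(x,y)\in\Sigma_X$, set $u_j:=y-w_j$ ($j=1,\dots,N-1$) and, in the complete-intersection case, $v_a:=l_ay-\lambda_a$ ($a=1,\dots,n$); then $A_X(x,y)=0$ is precisely the condition making $x/(u_1\cdots u_{N-1})$ — resp.\ $x\,v_1^{l_1}\cdots v_n^{l_n}/(u_1\cdots u_{N-1})$ — equal to $y-w_0$, after which the formulas above show these coordinates satisfy every critical-point equation and have $\iota$-image $(x,y)$. The only delicate point, which I regard as the main (if minor) obstacle, is that the reconstructed coordinates must actually lie in $\mathbb{C}^\ast$: $u_j=0$ would force $y=w_j$ and $v_a=0$ would force $l_ay=\lambda_a$. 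But on $\Sigma_X$ with $x\in\mathbb{C}^\ast$ one has $\prod_{i=0}^{N-1}(y-w_i)=x\prod_{a=1}^n(l_ay-\lambda_a)^{l_a}$, so $y=w_j$ would force $\prod_a(l_ay-\lambda_a)=0$, i.e.\ a resonance $l_aw_j=\lambda_a$ for some $a$; under the standing hypothesis that $(w_i)$ and $(\lambda_a)$ lie in the locus where all critical points of $W_X$ are non-degenerate, no such resonance occurs, so $y\ne w_j$ and $l_ay\ne\lambda_a$ on all of $\Sigma_X$ and the reconstruction is legitimate. (In the setting of Theorem~\ref{thm:reconstruction} this non-vanishing can also be read off the rational parametrization $x(z)=\prod_i(z-w_i)/\prod_a(z-\lambda_a)$, $y(z)=z$, whose natural domain already excludes $z\in\{w_i\}\cup\{\lambda_a\}$.) Combining the two inclusions gives $\Sigma_X=\mathrm{Im}\,\iota$.
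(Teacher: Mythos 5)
Your proof is correct and follows essentially the same route as the paper: write down the critical-point equations together with $y=x\,\partial_xW_X$ and eliminate the torus variables via $u_j=y-w_j$, $v_a=l_ay-\lambda_a$ to land on $A_X(x,y)=0$. You go further than the paper in two useful respects — you carry out the elimination explicitly and you prove the reverse inclusion $\Sigma_X\subseteq\mathrm{Im}\,\iota$ (including the check that the reconstructed coordinates lie in $\mathbb{C}^\ast$, which hinges on the non-resonance $l_aw_j\ne\lambda_a$ guaranteed by the genericity/irreducibility assumptions) — whereas the paper only records the forward elimination.
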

%Let $\Sigma_X$ denote the spectral curve such that
%\begin{align}
%\Sigma_X=\{(x,y)\in\mathbb{C}^*\times\mathbb{C}\,|\,A_X(x,y)=0\}.
%\end{align}
%The defining polynomial $A_X(x,y)\in\mathbb{C}[x,y]$ is given by the elimination of $z_i^{(\mathrm{c})}$'s from the set of equations:
%\begin{align}
%y=x\frac{\partial}{\partial x}W_X(z_1,\ldots,z_k;x)\Bigg|_{\bm{z}=\bm{z}^{(\mathrm{c})}},\quad 0=\frac{\partial}{\partial z_i}W_X(z_1,\ldots,z_k;x)\Bigg|_{\bm{z}=\bm{z}^{(\mathrm{c})}},\quad (i=1,\ldots,k).
%\end{align}
%This spectral curve $\Sigma_X$ is named \textit{GKZ curve}.

\begin{proof}
We check them directly for 
the cases of the projective space 
$X=\mathbb{C}\textbf{P}^{N-1}_{\bm{w}}$ 
and Fano complete intersection 
$X_{\bm{l};\bm{w},\bm{\lambda}}$. 
%For the projective space $X=\mathbb{C}\textbf{P}^{N-1}_{\bm{w}}$ and 
%Fano complete intersection $X_{\bm{l};\bm{w},\bm{\%lambda}}$, 
%the defining equation $A_X(x,y)=0$ of the GKZ curve is found as follows.

\vspace{0.3em}\noindent
{(1)} Projective space $\mathbb{C}\textbf{P}^{N-1}_{\bm{w}}$:\\
The GKZ curve $A_{\mathbb{C}\textbf{P}^{N-1}_{\bm{w}}}(x,y)=0$ is found from the relations
\begin{align}
\begin{split}
&
y(x)=x\frac{\partial W_{\mathbb{C}\textbf{P}^{N-1}_{\bm{w}}}(u_1,\ldots,u_{N-1};x)}{\partial x}=\frac{x}{u_1\cdots u_{N-1}}+w_0,
\\
&
0=\frac{\partial W_{\mathbb{C}\textbf{P}^{N-1}_{\bm{w}}}(u_1,\ldots,u_{N-1};x)}{\partial u_i}=1+\frac{w_i}{u_i}-\frac{1}{u_i}\frac{x}{u_1\cdots u_{N-1}}-\frac{w_0}{u_i}.
\end{split}
\end{align}
By eliminating $u_i$ ($i=1,\ldots,N-1$) from the above relations, we obtain a polynomial constraint equation
\begin{align}
A_{\mathbb{C}\textbf{P}^{N-1}_{\bm{w}}}(x,y)=\prod_{i=0}^{N-1}(y-w_i)-x=0
\label{A_CPN}
\end{align}
which agrees with the classical limit of the differential operator \eqref{eq:QDE-CPN}.

\noindent{(2)} Fano complete intersection $X_{\bm{l};\bm{w},\bm{\lambda}}$:\\
The GKZ curve $A_{X_{\bm{l};\bm{w},\bm{\lambda}}}(x,y)=0$ is found from the relations
\begin{align}
\begin{split}
&
y(x)=x\frac{\partial W_{X_{\bm{l};\bm{w},\bm{\lambda}}}(u_1,\ldots,u_{N-1},v_1,\ldots,v_n;x)}{\partial x}=x\frac{v_1^{l_1}\cdots v_n^{l_n}}{u_1\cdots u_{N-1}}+w_0,
\\
&
0=\frac{\partial W_{X_{\bm{l};\bm{w},\bm{\lambda}}}(u_1,\ldots,u_{N-1},v_1,\ldots,v_n;x)}{\partial u_i}=1+\frac{w_i}{u_i}-\frac{1}{u_i}\frac{v_1^{l_1}\cdots v_n^{l_n}}{u_1\cdots u_{N-1}}x-\frac{w_0}{u_i},
\\
&
0=\frac{\partial W_{X_{\bm{l};\bm{w},\bm{\lambda}}}(u_1,\ldots,u_{N-1},v_1,\ldots,v_n;x)}{\partial v_a}=-1-\frac{\lambda_a}{v_a}+\frac{l_a}{v_a}\frac{v_1^{l_1}\cdots v_n^{l_n}}{u_1\cdots u_{N-1}}x+l_a\frac{w_0}{v_a}.
\end{split}
\end{align}
Eliminating $u_i$ ($i=1,\ldots,N-1$) and $v_a$ ($a=1,\ldots,n$) from the above relations, we obtain a polynomial constraint equation
\begin{align}
A_{X_{\bm{l};\bm{w},\bm{\lambda}}}(x,y)=
\prod_{i=0}^{N-1}(y-w_i)-x\prod_{a=1}^n(l_ay-\lambda_a)^{l_a}=0
\label{A_comp}
\end{align}
which agrees with the classical limit of the differential operator \eqref{eq:QDE-COMP}.
\end{proof}

The embedding $\iota$ could be explained from the following observation. 
From the view point of Landau-Ginzburg models, the GKZ curve can be obtained from the leading behavior of saddle point approximation of the equivariant oscillatory integral when $\hbar\to 0$:

\begin{equation}
{\mathcal I}_X(x) \sim \exp\left(\frac{1}{\hbar} W_{X}({\bm u}^{\rm (c)};x) \right) \frac{(-2\pi \hbar)^{k/2} \, g({\bm u}^{\rm (c)})}{\sqrt{{\rm Hess}({\bm u}^{\rm (c)})}} \left( 1 + \sum_{m=1}^{\infty} \hbar^m {\mathcal I}_m(x) \right),
\label{eq:saddle-pt-expansion-IX}
\end{equation}
where ${\bm u}^{\rm (c)} = (u_1^{(\mathrm{c})},\ldots,u_k^{(\mathrm{c})})$ is the critical point of $W_X(\cdot; x)$ for the Lefschetz thimble $\Gamma$, ${\rm Hess}({\bm u}^{\rm (c)}) = \det (\partial_{u_i} \partial_{u_j} W_X({\bm u};x) )|_{{\bm u} = {\bm u}^{\rm (c)}}$ is the Hessian of $W_X(\cdot;x)$ at ${\bm u}^{\rm (c)}$, and we wrote 
$\zeta = g(u_1,\dots,u_k) du_1 \cdots du_k$. Note that the right hand side of \eqref{eq:saddle-pt-expansion-IX} is usually divergent, so this is understood as an asymptotic expansion.  We can also arrange the right hand side of \eqref{eq:saddle-pt-expansion-IX} into a WKB form as
\begin{equation} \label{eq:saddle-pt-expansion-IX-exp}
{\mathcal I}_X(x) \sim \exp \left( \sum_{m=0}^{\infty} \hbar^{m-1} S_m(x) \right)
\end{equation} (where overall factor $(-2\pi \hbar)^{k/2}$ is omitted).
The leading term 
\begin{align}
S_0(x)=W_X({\bm u}^{\rm (c)};x).
\label{IX_c}
\end{align}
is the critical value of $W_X$.
Adopting (\ref{eq:saddle-pt-expansion-IX-exp}) to $\widehat{A}_X(x,\hbar x\partial_x)\mathcal{I}_{X}(x)=0$, then we find the semi-classical limit of the differential operator $\widehat{A}_X(\widehat{x},\widehat{y})$  \cite{Dunin-Barkowski:2013wca}:
\begin{align}
0 = \lim_{\hbar\to 0}
\left(
\mathrm{e}^{-\frac{1}{\hbar}S_0(x)}
\widehat{A}_X(\widehat{x},\widehat{y})
\mathrm{e}^{\frac{1}{\hbar}S_0(x)}
\mathrm{e}^{\sum_{m\ge 1}\hbar^{m-1}S_m(x)}
\right)
=A_X\left(x,x\frac{d}{dx}S_0(x)\right)\mathrm{e}^{S_1(x)}.
\label{classical_A}
\end{align}
For the critical value $S_0(x)=W_X({\bm u}^{\rm (c)};x)$,  we obtain an equation:
\begin{align}
A_X\left(x,y(x)\right)=0, \qquad
y(x)=x\frac{d}{dx} W_X({\bm u}^{\rm (c)};x).
\end{align} 

\subsection{Asymptotics of the coefficients}
In Section \ref{sec-gkz_reconst} we will compare the asymptotic expansion of oscillatory integral to a wave function constructed via topological recursion applied to the GKZ curve $A_X(x,y) = 0$, for $X = \mathbb{C}\textbf{P}^{N-1}_{\bm{w}}$ and $X_{\bm{l};\bm{w},\bm{\lambda}}$ with $l_1 = \cdots = l_n = 1$ ($X_{\bm{w},\bm{\lambda}}$ for short). For the purpose, we investigate the asymptotic behavior of the coefficients ${\mathcal I}_m(x)$ in the expansion \eqref{eq:saddle-pt-expansion-IX} when $x$ tends to $\infty$.

In this subsection we consider the case $X = X_{\bm{w},\bm{\lambda}}$:
\[
W_X = \sum_{i=1}^{N-1} (u_i + w_i \log u_i) - 
\sum_{a=1}^{n} (v_a + \lambda_a \log v_a) + 
\frac{v_1 \cdots v_n}{u_1 \cdots u_{N-1}} x 
+ w_0 \log \left( \frac{v_1 \cdots v_n}{u_1 \cdots u_{N-1}} x \right).
\]
($X = \mathbb{C}\textbf{P}^{N-1}_{\bm{w}}$ is included as the case of $n=0$.)
It is easy to check that, at a critical point $({\bm u}^{\rm (c)}, {\bm v}^{\rm (c)}) = (u^{\rm (c)}_1, \cdots, u^{\rm (c)}_{N-1}, v^{\rm (c)}_1, \dots, v^{\rm (c)}_n)$,
\begin{equation} \label{eq:critical-condition-1}
u^{\rm (c)}_i + w_i - w_0 = v^{\rm (c)}_a + \lambda_a - w_0 = 
\frac{v^{\rm (c)}_1 \cdots v^{\rm (c)}_n}
{u^{\rm (c)}_1 \cdots u^{\rm (c)}_{N-1}} x  
\end{equation}
holds for any $i$ and $a$. Since the right hand side of \eqref{eq:critical-condition-1} is independent of $i$ and $a$, we can write all $u^{\rm (c)}_i$ and $v^{\rm (c)}_a$ in terms of $u_1^{\rm (c)}$:
\[
u^{\rm (c)}_i = u^{\rm (c)}_1 + w_1 - w_i, \quad 
v^{\rm (c)}_a = u^{\rm (c)}_1 + w_1 - \lambda_a \quad
(i=1,\dots,{N-1},~ a = 1,\dots, n). 
\]
Therefore, $u^{\rm (c)}_1$ must be a solution of the algebraic equation 
\begin{equation}
\prod_{i=0}^{N-1}(u^{\rm (c)}_1 + w_1 - w_i) - x 
\prod_{a=1}^{n} (u^{\rm (c)}_1 + w_1 - \lambda_a) = 0.
\label{eq:critical-point-equation-u1}
\end{equation}
Hence, for generic $w_i$ and $\lambda_a$, there are exactly $N$ critical points of $W_X$. These critical points define $N$ Lefschetz thimbles, and hence we have $N$ independent solutions of the GKZ equation. 

\begin{lemm} \label{lemm:asymptotic-critical-pt}
The asymptotic behavior of a critical point of $W_X$ for large $x$ is given by one of the following:
\begin{itemize}
\item[(i)] For any fixed $p=1,\dots, N-n$, there exists a critical point $({\bm u}^{\rm (c)}, {\bm v}^{\rm (c)})$ of $W_X$ behaves as
\begin{equation} \label{eq:critical-bahavior-1}
u_i^{\rm (c)} = \zeta^p x^{\frac{1}{N-n}} (1 + O(x^{-\frac{1}{N-n}})), \quad  
v_a^{\rm (c)} = \zeta^p x^{\frac{1}{N-n}} (1 + O(x^{-\frac{1}{N-n}}))
\end{equation}
when $x \to \infty$. Here $\zeta = \exp(2\pi i/(N-n))$ is the primitive $(N-n)$-th root of unity.
\item[(ii)] For any fixed $b=1,\dots,n$, there exists a critical point $({\bm u}^{\rm (c)}, {\bm v}^{\rm (c)})$ of $W_X$ behaves as
\begin{equation} \label{eq:critical-bahavior-2}
u^{\rm (c)}_i = (\lambda_b - w_i) + c_b x^{-1} + O(x^{-2}),\quad 
v^{\rm (c)}_a = (\lambda_b - \lambda_a) + c_b x^{-1} + O(x^{-2})
\end{equation}
when $x \to \infty$. Here 
\[
c_b = \frac{\prod_{i=0}^{N-1}(\lambda_b - w_i)}
{\prod_{\substack{a=1 \\ a \ne b}}^n(\lambda_b - \lambda_a)}.
\]
\end{itemize}
\end{lemm}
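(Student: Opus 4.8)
The plan is to reduce the lemma to the behaviour, as $x\to\infty$, of the fibres of the rational map $x(z):=\prod_{i=0}^{N-1}(z-w_i)/\prod_{a=1}^{n}(z-\lambda_a)$ (the map parametrizing $\Sigma_X$ in Theorem~\ref{thm:reconstruction}). The discussion preceding the lemma has already reduced every critical point to a single unknown: setting $z:=u_1^{\rm(c)}+w_1$, relations \eqref{eq:critical-condition-1} give $u_i^{\rm(c)}=z-w_i$ and $v_a^{\rm(c)}=z-\lambda_a$, and \eqref{eq:critical-point-equation-u1} becomes precisely $x=x(z)$. Hence the $N$ critical points of $W_X$ for a fixed generic $x$ are exactly the $N$ points of the fibre $x^{-1}(x)$ of the degree-$N$ map $x(z)\colon\IP^1\to\IP^1$, and because this map is finite its fibres vary continuously, so as $x\to\infty$ each of the $N$ solutions $z$ must converge to a point of $x^{-1}(\infty)$.

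First I would locate $x^{-1}(\infty)$. Under the genericity hypothesis (the $w_i$ and $\lambda_a$ pairwise distinct and $\lambda_b\ne w_i$), $x(z)$ has $n$ simple poles at $z=\lambda_1,\dots,\lambda_n$ and a pole of order $N-n$ at $z=\infty$, since $x(z)=z^{N-n}\bigl(1+O(z^{-1})\bigr)$ there; thus $x^{-1}(\infty)$ consists of $n+(N-n)=N$ points counted with multiplicity, accounting for all $N$ critical points. Then I would expand near each of them. Near $z=\infty$ the equation $x(z)=x$ reads $z^{N-n}\bigl(1+O(z^{-1})\bigr)=x$, and Puiseux's theorem (Newton--Puiseux expansion) produces exactly $N-n$ branches $z=\zeta^{p}x^{1/(N-n)}\bigl(1+O(x^{-1/(N-n)})\bigr)$, $p=1,\dots,N-n$, with $\zeta=\exp(2\pi i/(N-n))$; substituting $u_i^{\rm(c)}=z-w_i$, $v_a^{\rm(c)}=z-\lambda_a$ and absorbing the bounded shifts into the error term yields \eqref{eq:critical-bahavior-1}. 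Near a simple pole $z=\lambda_b$ I would instead set $w=1/x$ and apply the implicit function theorem to $\widetilde{P}(z,w):=w\prod_{i=0}^{N-1}(z-w_i)-\prod_{a=1}^{n}(z-\lambda_a)=0$: one has $\widetilde{P}(\lambda_b,0)=0$ and $\partial_z\widetilde{P}(\lambda_b,0)=-\prod_{a\ne b}(\lambda_b-\lambda_a)\ne0$, so there is a unique branch $z(w)$ analytic near $w=0$ with $z(0)=\lambda_b$, and differentiating $\widetilde{P}(z(w),w)\equiv0$ at $w=0$ gives $z'(0)=\prod_{i=0}^{N-1}(\lambda_b-w_i)/\prod_{a\ne b}(\lambda_b-\lambda_a)=c_b$, whence $z=\lambda_b+c_b x^{-1}+O(x^{-2})$ and therefore \eqref{eq:critical-bahavior-2}. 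The $N-n$ branches of case~(i) and the $n$ branches of case~(ii) together exhaust the fibre $x^{-1}(\infty)$, and hence all critical points.

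The two local expansions are routine (Newton--Puiseux on one side, the implicit function theorem on the other, both in fact supplying convergent expansions, so the full asymptotic series and not merely its leading term); the step I expect to need care is the bookkeeping of the genericity hypotheses. One must verify that the $n$ poles $\lambda_b$ of $x(z)$ really are simple and distinct from the $w_i$---otherwise both the pole orders at $z=\infty$ and at the $\lambda_b$, and hence the branch count, would change---and that the $N$ branches so obtained are pairwise distinct, so that the critical points are genuinely distinct and non-degenerate. All of this holds on the locus where every critical point of $W_X$ is non-degenerate, which is exactly the domain on which the lemma is stated, so I would simply record that observation rather than belabour it.
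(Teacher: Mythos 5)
Your proposal is correct and follows exactly the route the paper intends: the paper reduces the lemma to equations \eqref{eq:critical-condition-1} and \eqref{eq:critical-point-equation-u1} and leaves the asymptotics of the roots of \eqref{eq:critical-point-equation-u1} as an easy exercise, which is precisely what you carry out (your reformulation in terms of the fibre $x^{-1}(\infty)$ of the rational map $x(z)$ is just a convenient repackaging of that algebraic equation). The Newton--Puiseux expansion at $z=\infty$, the implicit-function-theorem computation of $z'(0)=c_b$ at the simple poles $z=\lambda_b$, and the count $(N-n)+n=N$ are all accurate, and your remark that the genericity hypothesis (distinct $\lambda_a$, $\lambda_b\ne w_i$) is what guarantees the stated pole structure is the right thing to record.
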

The claim follows easily from the equations \eqref{eq:critical-condition-1} and \eqref{eq:critical-point-equation-u1} for the critical points. 

Let us look at the asymptotic expansion \eqref{eq:saddle-pt-expansion-IX} of ${\mathcal I}_X$ for our case:
\begin{equation} \label{eq:saddle-expansion-oscillatory-integral}
{\mathcal I}_X(x) \sim \exp\left( \frac{1}{\hbar} W_X({\bm u}^{\rm (c)},{\bm v}^{\rm (c)};x) \right) \, \frac{(-2\pi \hbar)^{\frac{N+n-1}{2}}}{u^{\rm (c)}_1 \cdots u^{\rm (c)}_{N-1} \, \sqrt{{\rm Hess}({\bm u}^{\rm (c)},{\bm v}^{\rm (c)})}} \, \left( 1 + \sum_{m = 1}^{\infty} \hbar^m {\mathcal I}_m(x) \right).
\end{equation}
Using Lemma \ref{lemm:asymptotic-critical-pt}, we can prove  

%
%\newpage %%%
%

\begin{prop} \label{prop:behavior-of-saddle-point-approximation} ~
\begin{itemize}
\item[(i)]
The factor $(u^{\rm (c)}_1 \cdots u^{\rm (c)}_{N-1})^{-1} ({\rm Hess}({\bm u}^{\rm (c)},{\bm v}^{\rm (c)}))^{-1/2}$ in \eqref{eq:saddle-expansion-oscillatory-integral} behaves when $x \to \infty$ as
\begin{eqnarray*}
\frac{1}{u^{\rm (c)}_1 \cdots u^{\rm (c)}_{N-1} \, \sqrt{{\rm Hess}({\bm u}^{\rm (c)},{\bm v}^{\rm (c)})}} = 
\begin{cases} 
O(x^{- \frac{N-n-1}{2(N-n)}}) & \text{in the case of \eqref{eq:critical-bahavior-1}}, \\[+.5em]
O(x^{-1}) & \text{in the case of \eqref{eq:critical-bahavior-2}}.
\end{cases}
\end{eqnarray*}
\item[(ii)] For $m \ge 1$, the coefficient ${\mathcal I}_m(x)$ in the asymptotic expansion \eqref{eq:saddle-expansion-oscillatory-integral} behaves as
\begin{eqnarray}
{\mathcal I}_m(x) = 
\begin{cases}
O(x^{-\frac{1}{2(N-n)}}) &  \text{in the case of \eqref{eq:critical-bahavior-1}} \\
O(1) & \text{in the case of \eqref{eq:critical-bahavior-2}}
\end{cases}
\end{eqnarray}
when $x \to \infty$.
\end{itemize}
\end{prop}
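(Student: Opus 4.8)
The plan is to obtain the prefactor from a closed form for $\mathrm{Hess}$, and to bound the coefficients $\mathcal I_m$ by power-counting in the standard graph expansion of the saddle-point series, in both cases feeding in the critical-point asymptotics of Lemma~\ref{lemm:asymptotic-critical-pt}. For part (i), I would differentiate $W_X$ twice at a critical point $({\bm u}^{\rm (c)},{\bm v}^{\rm (c)})$ and eliminate the vanishing first derivatives using \eqref{eq:critical-condition-1}; this shows that the Hessian has the rank-one-plus-diagonal shape $H=D+R\,{\bm s}\,{\bm s}^{\mathsf T}$, where $R=x\,v^{\rm (c)}_1\cdots v^{\rm (c)}_n/(u^{\rm (c)}_1\cdots u^{\rm (c)}_{N-1})$ is the common value occurring in \eqref{eq:critical-condition-1}, $D=\mathrm{diag}(1/u^{\rm (c)}_1,\dots,1/u^{\rm (c)}_{N-1},-1/v^{\rm (c)}_1,\dots,-1/v^{\rm (c)}_n)$, and ${\bm s}$ is the vector with the same entries. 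Because $D^{-1}{\bm s}=(1,\dots,1)^{\mathsf T}$, the matrix determinant lemma gives
\begin{equation*}
\mathrm{Hess}=\det D\,\bigl(1+R\,{\bm s}^{\mathsf T}D^{-1}{\bm s}\bigr)
=\frac{(-1)^{n}}{u^{\rm (c)}_1\cdots u^{\rm (c)}_{N-1}\,v^{\rm (c)}_1\cdots v^{\rm (c)}_n}\Bigl(1+R\sum_{i}\frac{1}{u^{\rm (c)}_i}-R\sum_{a}\frac{1}{v^{\rm (c)}_a}\Bigr),
\end{equation*}
so that the prefactor in \eqref{eq:saddle-expansion-oscillatory-integral} equals, up to an $x$-independent $4$-th root of unity, the product $\bigl(v^{\rm (c)}_1\cdots v^{\rm (c)}_n\bigr)^{1/2}\bigl(u^{\rm (c)}_1\cdots u^{\rm (c)}_{N-1}\bigr)^{-1/2}\bigl(1+R\sum_i 1/u^{\rm (c)}_i-R\sum_a 1/v^{\rm (c)}_a\bigr)^{-1/2}$. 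Substituting Lemma~\ref{lemm:asymptotic-critical-pt} then closes part (i): in the regime \eqref{eq:critical-bahavior-1} all of $u^{\rm (c)}_i,v^{\rm (c)}_a,R$ grow like $x^{1/(N-n)}$ and the parenthesis tends to $N-n\neq0$, which gives $O\bigl(x^{(n-N+1)/(2(N-n))}\bigr)=O\bigl(x^{-(N-n-1)/(2(N-n))}\bigr)$; in the regime \eqref{eq:critical-bahavior-2} the $u^{\rm (c)}_i$ and the $v^{\rm (c)}_a$ with $a\neq b$ stay finite and nonzero while $v^{\rm (c)}_b\sim c_bx^{-1}$, so the parenthesis is dominated by $-R/v^{\rm (c)}_b=O(x)$ and the whole prefactor is $O(x^{-1})$.

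For part (ii) I would use the usual description of the coefficients of the saddle-point expansion \eqref{eq:saddle-pt-expansion-IX}: $\mathcal I_m$ is a finite sum of graphs with vertices of valence $r\ge3$ weighted by the higher Taylor coefficients $W^{(r)}:=\partial^r W_X|_{\rm crit}$, a single marked vertex of valence $s\ge0$ weighted by $g^{(s)}/g_{\rm c}$ (where $g=1/(u_1\cdots u_{N-1})$), and edges weighted by entries of $H^{-1}$, subject to $\sum_v(\tfrac{r_v}{2}-1)+\tfrac s2=m$ and $\#\{\text{edges}\}=\tfrac12(\sum_v r_v+s)$; so it is enough to read off the $x$-orders of these building blocks from Lemma~\ref{lemm:asymptotic-critical-pt} and count. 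In the case \eqref{eq:critical-bahavior-1} all critical coordinates and $R$ are of order $x^{1/(N-n)}$, hence every entry of $H$ is of order $x^{-1/(N-n)}$, every entry of $H^{-1}$ of order $x^{1/(N-n)}$ (using that $\mathrm{Hess}$ is exactly of order $x^{-(N+n-1)/(N-n)}$, i.e.\ $H$ is uniformly scaled and nondegenerate, which is visible from the closed form above), $W^{(r)}=O\bigl(x^{-(r-1)/(N-n)}\bigr)$ (dominant contribution: $r$ derivatives of the monomial term $R$), and $g^{(s)}/g_{\rm c}=O\bigl(x^{-s/(N-n)}\bigr)$; inserting these into the graph sum and using the two combinatorial identities, all powers of $x^{1/(N-n)}$ combine to exactly $-m$, so $\mathcal I_m=O\bigl(x^{-m/(N-n)}\bigr)$, which for $m\ge1$ is in particular $O\bigl(x^{-1/(2(N-n))}\bigr)$.

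In the case \eqref{eq:critical-bahavior-2} only the single variable $v_b$ is small ($v^{\rm (c)}_b\sim c_bx^{-1}$), the other coordinates and $R$ being finite; writing $H$ in block form separating the $v_b$ row and column and inverting (using $\mathrm{Hess}=O(x^2)$ from part (i) together with nondegeneracy, so that the relevant Schur complement stays of size $O(1)$) gives $(H^{-1})_{v_bv_b}=O(x^{-2})$, $(H^{-1})_{v_b,\ast}=O(x^{-1})$, and all other entries $O(1)$. Moreover, since $l_b=1$ makes $W_X$ linear in $v_b$ in the monomial term, a tensor $W^{(r)}$ with exactly $\ell$ of its indices equal to $v_b$ vanishes unless $\ell\in\{0,1\}$ or $\ell=r$, and in each surviving case $W^{(r)}=O(x^{\ell})$ (the $\ell=r$ enhancement coming from the $\log v_b$ terms); the marked vertex $g^{(s)}$ carries no $v_b$-index, so it is $O(1)$. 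Counting $v_b$-indices at both endpoints of every edge, the $x$-powers from the vertices ($+\ell$ per $v_b$-index) cancel exactly against those from the edges ($-1$ per $v_b$-endpoint), so every graph is $O(1)$ and $\mathcal I_m=O(1)$. The essentially non-formal point in the whole argument is the bookkeeping here: one must pin down the $H^{-1}$-entry orders (this is where the block inversion and the genericity of $w_0,\dots,w_{N-1},\lambda_1,\dots,\lambda_n$ are used, to guarantee the relevant minors have the expected nonzero order), isolate which $v_b$-derivatives of $W_X$ survive or blow up (this is exactly where $l_1=\cdots=l_n=1$ enters, through linearity in each $v_a$), and then check that the blow-up of $W^{(r)}$ in the $v_b$-direction is compensated leg-for-leg by the suppression of $v_b$-propagators; case \eqref{eq:critical-bahavior-1} is comparatively routine since there the power-count is homogeneous once the uniform scaling $x^{1/(N-n)}$ is known. (Equivalently, part (ii) can be obtained by rescaling $u_i,v_a\mapsto x^{-1/(N-n)}u_i,x^{-1/(N-n)}v_a$ in case \eqref{eq:critical-bahavior-1}, and $v_b\mapsto x^{-1}v_b$ in case \eqref{eq:critical-bahavior-2}, which turns $\mathcal I_X(x)$ into an explicit power of $x$ times an oscillatory integral whose phase depends analytically on the small parameter and has a nondegenerate critical point at parameter $0$; the $x$-dependence of $\mathcal I_m$ then comes entirely from this explicit power.)
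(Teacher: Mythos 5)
Your proof is correct, and it reaches the two estimates by a route that differs from the paper's in a way worth recording. For part (i) the paper only writes down the orders of the individual second derivatives $W_{ij}$ at the critical point and then asserts the resulting order of the determinant; your observation that the critical-point relations \eqref{eq:critical-condition-1} turn the Hessian into $D+R\,{\bm s}{\bm s}^{\mathsf T}$ with $D^{-1}{\bm s}=(1,\dots,1)^{\mathsf T}$, so that the matrix determinant lemma yields the closed form
\begin{equation*}
\mathrm{Hess}=\frac{(-1)^{n}}{u^{\rm (c)}_1\cdots u^{\rm (c)}_{N-1}\,v^{\rm (c)}_1\cdots v^{\rm (c)}_n}\Bigl(1+R\sum_{i}\frac{1}{u^{\rm (c)}_i}-R\sum_{a}\frac{1}{v^{\rm (c)}_a}\Bigr),
\end{equation*}
is genuinely sharper: it exhibits the exact leading behavior (the parenthesis tends to $N-n\neq 0$ in case \eqref{eq:critical-bahavior-1} and is dominated by $-R/v^{\rm (c)}_b\sim -(\lambda_b-w_0)x/c_b$ in case \eqref{eq:critical-bahavior-2}), so the lower bound on $|\mathrm{Hess}|$ that one implicitly needs for $1/\sqrt{\mathrm{Hess}}$ comes for free rather than from an unexamined appeal to genericity of the determinant. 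For part (ii) the two arguments are essentially the same power count in different clothes: the paper first performs the substitution $\xi_i=(-\hbar)^{1/2}\sum_jC_i^j(x)s_j$ normalizing the quadratic form and then tracks the orders of $C_i^j(x)$ and of the Taylor coefficients $f_{i_1\dots i_m}$, whereas you keep the raw Hessian and encode the same information in the propagator $H^{-1}$ — your orders $O(x^{1/(N-n)})$, respectively $O(x^{-2})$, $O(x^{-1})$, $O(1)$, are exactly the ``squares'' of the paper's orders for $C_i^j$ — and both hinge on the same two facts, namely the homogeneous $x^{1/(N-n)}$ scaling in case \eqref{eq:critical-bahavior-1} and, in case \eqref{eq:critical-bahavior-2}, the linearity of the monomial term in $v_b$ (i.e.\ $l_b=1$), which makes every power of $x$ produced by a $v_b$-derivative of $W$ cancel leg-for-leg against a $v_b$-propagator. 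Your case-\eqref{eq:critical-bahavior-1} bound $\mathcal I_m=O(x^{-m/(N-n)})$ is in fact slightly stronger than the stated $O(x^{-1/(2(N-n))})$, which it of course implies.
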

We will give a proof of Proposition \ref{prop:behavior-of-saddle-point-approximation} in Appendix \ref{appendix:asymptotic-behavior-of -coefficients}.

Proposition \ref{prop:behavior-of-saddle-point-approximation} (i) implies that, if we arrange the asymptotic expansion \eqref{eq:saddle-expansion-oscillatory-integral} into WKB form as \eqref{eq:saddle-pt-expansion-IX-exp}, then 
\begin{equation} \label{eq:normalization-condition-for-S}
\lim_{x \to \infty} S_m(x) = 0 \quad \text{for $m \ge 2$}
\end{equation} 
if the corresponding critical point behaves as \eqref{eq:critical-bahavior-1} when $x \to \infty$. This property is used to compare the wave function and oscillatory integral.

%%%%%%%%%%%%%%
%Topological Recursion
%%%%%%%%%%%%%%
\section{Quantum curves and topological recursion}\label{sec-q_curve_rec}

Quantization of a spectral curve $\Sigma$ (or a quantum curve, for short) is formulated as an $\hbar$-deformed differential equation, whose semi-classical limit $\hbar\to 0$ yields $\Sigma$ \cite{Dumitrescu:2013tca,Dunin-Barkowski:2013wca}.
In this section, under a general setting, we will briefly summarize how to construct quantum curves by the topological recursion.

\subsection{Spectral curves and quantum curves}

\begin{defi}[Spectral curve \cite{Eynard:2007kz,Bouchard:2016obz}]
\label{def:spectral-curve-for-TR}
A {\em spectral curve} is a triple $\Sigma = (C,x,y)$, where 
$C$ is a Torelli marked compact Riemann surface and
$x, y : C \to \mathbb{C}\textbf{P}^1$ are meromorphic functions, 
such that the zeroes of $dx$ do not coincide with the zeroes of $dy$.
\end{defi}

The meromorphic functions $x$, $y$ must satisfy an absolutely 
irreducible equation of the form 
$A(x,y) = \sum_{i,j} A_{i,j} x^i y^j =0$. 
We will just simply denote a spectral curve by 
\begin{align}
\Sigma = \left\{\, (x,y) \in {\mathbb C}^2 \; |\; A(x,y)=0\, \right\},
\label{eq:spectral_curve}
\end{align}
after the parametrization $(x,y) = (x(z), y(z))$ of $\Sigma$ by 
$z \in C$ is fixed. We will show the meromorphic functions 
$x(z)$ and $y(z)$ which parametrize the GKZ curves 
in Section \ref{sec-gkz_reconst}.

\begin{remark}
From the viewpoint of the quantum curve or the WKB analysis, it is natural to regard that $(x,y)$ is an affine coordinate of the cotangent bundle of ${\mathbb C}$, where $x$ (resp. $y$) represents the coordinate of the base (resp. fiber) of the cotangent bundle (e.g. \cite{Dumitrescu:2013tca}). 
In particular, the spectral curve $\Sigma$ is equipped with the 1-form 
\begin{equation}
\omega(x) = y(x)\,dx, 
\label{eq:ydx}
\end{equation}
which is the restriction of the canonical 1-form in the cotangent bundle $T^\ast{\mathbb C} \cong {\mathbb C}^2$. 
\end{remark}

\begin{defi}[Quantum curve]\label{def:q_curve}
A quantum curve is a triple 
$(\widehat{A}(\widehat{x},\widehat{y}), A(x,y), \psi(x))$ 
where $\psi(x)$ is a function, 
$\widehat{A}(\widehat{x},\widehat{y})$ 
is a differential operator with 
\begin{align}
\widehat{x}\,\psi(x)=x\, \psi(x),\qquad
\widehat{y}\,\psi(x)=\hbar \frac{d}{dx}\,\psi(x),
\label{hatx_y}
\end{align}
for the operators $\widehat{x}$ and $\widehat{y}$,\footnote{
We remark that the operators $\widehat{x}$ and $\widehat{y}$ obey 
the commutation relation
$
\left[\widehat{y}, \widehat{x}\right]=\hbar. 
$
} and 
$A(x,y)$ is an irreducible polynomial of $x,y$ 
satisfying the following conditions:
\begin{itemize}
\item $\widehat{A}$ is a %Sch{\"o}dinger-type linear 
(possibly infinite-order)
differential %(or difference-differential) 
operator such that
\begin{align}
&\widehat{A}(\widehat{x},\widehat{y}) \, \psi(x)=0.
%\qquad
%\widehat{A}(\widehat{x},\widehat{y}) = \sum_{k=0}^r \widehat{a}_k(\widehat{x},\hbar)\widehat{y}^k,
%\qquad \widehat{a}_k(x,\hbar)\in\mathbb{C}[x,\hbar].
 \label{def2:q_curve} 
%\nonumber
\end{align}
\item 
$\psi(x)$ has the following expression (WKB solution): 
\begin{align}
\psi(x)=\exp \left(\frac{1}{\hbar}\int^x \omega(x)+O(\hbar^{0})\right). 
\label{wkb_lead}%(\ref{wkb_lead})
\end{align}
\item 
By taking the semi-classical limit $\hbar\to 0$, we have $A(x,y)$:% (\ref{eq:spectral_curve}):
\begin{align}
\hbar \to 0,\qquad 
(\widehat{x},\widehat{y})\to (x,y),\qquad 
\widehat{A}(\widehat{x},\widehat{y})\to A(x,y)\in \mathbb{C}[[x,y]].
\nonumber
\end{align}
\end{itemize}
We call the function $\psi(x)$ a wave function. 
We also call the equation (\ref{def2:q_curve}) a quantum curve for short. 
\end{defi}
%\color{black}

%Here we considered the WKB solution (\ref{wkb_lead}) of the quantum curve equation. 
In Section \ref{subsec-rec_q_curve}, following \cite{Bouchard:2016obz} we will 
define the wave function associated with a spectral curve $\Sigma$ (Definition \ref{def:rec_wkb}) defined by the topological recursion in Section \ref{subsec:T-Rec}, and give a construction of a quantum curve which annihilates it.

\begin{remark} \label{rem:def-of-omega-x}
For our purpose, we will regard $x$ as a coordinate of ${\mathbb C}^\ast$: 
\begin{align}
\Sigma = \left\{\, (x,y) \in {\mathbb C}^\ast \times {\mathbb C} ~|~ A(x,y)=0\, \right\},
\label{eq:spectral_curve-C-ast}
\end{align}
since we use $x=e^t \in {\mathbb C}^\ast$ as a coordinate when we regard GKZ curve $\Sigma_X$ as a spectral curve. The spectral curve is a subset of $T^\ast {\mathbb C}^\ast \cong {\mathbb C}^\ast \times {\mathbb C}$.
% $$
% \Sigma=\left\{\, (x,y)\in {\IC}^*\times {\IC}\; |\; A(x,y)=0\, \right\},\qquad
% \omega(x) = y(x)\, \frac{dx}{x}.
% $$
Then we use 
\begin{equation}
\omega(x) = y(x)\frac{dx}{x}
\label{eq:ydx/x}
\end{equation}
as the counterpart of \eqref{eq:ydx}, and the quantum curve is similarly defined
as Definition \ref{def:q_curve} by\footnote{The commutation relation becomes $\left[\widehat{y}, \widehat{x}\right]=x \hbar$ which is sightly modified from the previous one.}
\[
\widehat{x}\,\psi(x)=x\, \psi(x),\qquad
\widehat{y}\,\psi(x)=\hbar x\frac{d}{dx}\,\psi(x).
\] 
% The GKZ curves considered in this paper have this form defined by a polynomial $A(x,y)$ in $x$ and $y$ and $\omega(x) = y(x)\frac{dx}{x}$. 
\end{remark}

\begin{remark} \label{rem:def-of-omega-x-2}
On the one hand, in the local mirror symmetry discussed in Section \ref{sec-glsm_J}, spectral curves which are algebraic in exponentiated variables $\mathsf{x}, \mathsf{y} \in {\mathbb C}^\ast$ appear. Then the corresponding quantum curve is defined by\footnote{
Using the operators $\widehat{x}$ and $\widehat{y}$ in (\ref{hatx_y})
one can represent operators $\widehat{\mathsf{x}}$ and $\widehat{\mathsf{y}}$ as
 $\widehat{\mathsf{x}}=\mathrm{e}^{\widehat{x}}$and $\widehat{\mathsf{y}}=\mathrm{e}^{\widehat{y}}$.
}
$$
\widehat{\mathsf{x}}\,\psi(\mathsf{x})= \mathsf{x} \, \psi(\mathsf{x}),\qquad 
\widehat{\mathsf{y}}\,\psi(\mathsf{x})= \psi({\rm e}^{\hbar} \mathsf{x}),
$$  
with the commutation relation
$\widehat{\mathsf{y}} \, \widehat{\mathsf{x}} = \mathrm{e}^{\hbar} \, \widehat{\mathsf{x}} \, \widehat{\mathsf{y}}$. 
Here the counterpart of \eqref{eq:ydx} in this case is given by 
\begin{equation}
\omega(\mathsf{x}) = \log \mathsf{y}(\mathsf{x}) \frac{d\mathsf{x}}{\mathsf{x}}
\label{eq:logydx/x}.
\end{equation}
\end{remark}

\subsection{Topological recursion}
\label{subsec:T-Rec}

For a spectral curve $\Sigma = (C,x,y)$, one can (re)construct the wave function as the WKB expansion via the topological recursion. Before describing the reconstruction we will firstly review the (local) topological recursion defined for $\Sigma$ with only simple ramification points \cite{Eynard:2007kz}, and also review the (global) topological recursion defined for $\Sigma$ with arbitrary ramification points \cite{Bouchard:2012an,Bouchard:2012yg,Bouchard:2016obz}.  In the following we use 
\begin{itemize} 
\item 
$\omega (x(z)) = y(z) dx(z)$ if $(x,y)$ is a coordinate of ${\mathbb C}^2$.
\item 
$\omega (x(z)) = y(z) \frac{dx(z)}{x(z)}$ if $(x,y)$ is  a coordinate of ${\mathbb C}^\ast \times {\mathbb C}$ (see Remark \ref{rem:def-of-omega-x}).
\item 
$\omega (x(z)) = \log y(z) \frac{dx(z)}{x(z)}$ if $(x,y)$ is  a coordinate of $({\mathbb C}^\ast)^2$ (see Remark \ref{rem:def-of-omega-x-2}).
\end{itemize}

\subsubsection{Local topological recursion for simple ramified spectral curves}

Let $\Sigma$ be a spectral curve whose all branch points (zeros of $dx=0$) on the $x$-plane are simple. Let $R$ be the set of all ramification points in $C$.
Then near each ramification point $q\in R\subset C$ one can take a local coordinate $z \in C$ and find a unique conjugate point $\overline{z}=\sigma_q(z)\ne z$, where $\sigma_q$ is the local Galois conjugation of $\Sigma$ near $q$.

\begin{defi}[\cite{Eynard:2007kz}]\label{def:top_rec1}
For a simple ramified spectral curve $\Sigma$, the symmetric meromorphic differentials $\omega_n^{(g)}$ ($g\in\mathbb{Z}_{\ge 0}$) on $C^n$ for $(g,n)\neq(0,1)$, $(0,2)$ are recursively defined by the local topological recursion
\begin{align}
\begin{split}
&
\omega^{(g)}_{n+1}(z,\bm{z}_N)=
\sum_{q \in R}\mathop{\mathrm{Res}}_{w=q}\;
\frac{\frac{1}{2}\int^w_{\overline{w}}B(\cdot,z)}{\omega(x(w))-\omega(x(\overline{w}))}\bigg(\omega^{(g-1)}_{n+2}(w,\overline{w},\bm{z}_N)
\\
&
\hspace{12em}
+\sum_{\ell=0}^{g}\sum_{\emptyset=J\subseteq N}
\omega^{(g-\ell)}_{|J|+1}(w,\bm{z}_J)\omega^{(\ell)}_{|N|-|J|+1}(\overline{w},\bm{z}_{N \backslash J})\bigg),
\label{top_recursion}
\end{split}
\end{align}
with initial inputs
$$
\omega^{(0)}_{1}(z)=0,\qquad \omega^{(0)}_{2}(z_1,z_2)=B(z_1,z_2),
$$
in addition to the 1-form $\omega(x(z))$ on $C$, where $\bm{z}_N=\{z_1,\ldots,z_n\}$, $N=\{1,\ldots,n\}\supset J=\{i_1,\ldots,i_j\}$, $N\backslash J=\{i_{j+1},\ldots,i_n\}$. Here $B(z_1,z_2)$ is the Bergman kernel on $C^2$, which is a bi-differential and holomorphic except $z_1=z_2$, defined uniquely by
$$
\bullet\ \ B(z_1,z_2) = B(z_2, z_1), \qquad
\bullet\ \ B(z_1,z_2)\mathop{\sim}_{z_1 \to z_2} \frac{dz_1dz_2}{(z_1-z_2)^2}+\textrm{reg.}\qquad 
\bullet\ \ \oint_{A_i}B(z_1,z_2)=0,
$$
where $A_i$ ($i=1,\ldots, \textrm{genus of}\ C$) are canonical $A$-cycles (recall that $C$ is Torelli marked).
\end{defi}

\begin{exam}
For the case $C = \mathbb{C}\textbf{P}^1$, the Bergman kernel is given by
\begin{align}
B(z_1,z_2)=\frac{dz_1dz_2}{(z_1-z_2)^2}.
\end{align}
\end{exam}

\subsubsection{Global topological recursion for arbitrary ramified spectral curves}

The local topological recursion in Definition \ref{def:top_rec1} is applicable only for simple ramified spectral curves. In \cite{Bouchard:2012an,Bouchard:2012yg,Bouchard:2016obz} it was proposed the global topological recursion which is also applicable for arbitrary ramified spectral curves. Consider a spectral curve with degree $r$ of $x$ defined by
\begin{align}
A(x,y)=\sum_{k=0}^r a_{r-k}(x)y^k =0,\qquad
x,y\in {\IC}\textrm{ or }{\IC}^*,
\label{sp_curve_poly}%(\ref{sp_curve_poly})
\end{align}
where $a_k(x)$ are polynomials of $x$. Let $R$ be the set of all ramification points on the $x$-plane, and $\sigma_q$ be the local Galois conjugation of $\Sigma$ near $q\in R\subset C$. For a local coordinate $z \in C$, one finds a set $\sigma_q(z)$ of $r-1$ points near each ramification point $q$.

\begin{defi}[\cite{Bouchard:2012an,Bouchard:2012yg,Bouchard:2016obz}]\label{def:top_rec2}
For a multi-ramified spectral curve $\Sigma$ defined by (\ref{sp_curve_poly}), the symmetric meromorphic differentials $\omega_n^{(g)}$ ($g\in\mathbb{Z}_{\ge 0}$) on $C^n$ for $(g,n)\neq(0,1)$, $(0,2)$ are recursively defined by the global topological recursion
\begin{align}
\begin{split}
\omega^{(g)}_{n+1}(z,\bm{z}_N)&=\sum_{q \in R}\mathop{\mathrm{Res}}_{w=q}\Bigg(
\sum_{k=1}^{r-1}\sum_{\beta(w) \subseteq_{k} \sigma_q(w)}
\frac{(-1)^{k+1}\int_{w_*}^w B(\cdot,z)}{\prod_{b_w \in \beta(w)} \left(\omega(x(w))-\omega(x(b_w))\right)}
\\
&\hspace{13.5em}\times
\mathcal{R}^{(k+1)} \omega^{(g)}_{n+1}(w,\beta(w); \bm{z}_N)\Bigg),
\label{g_top_recursion}
\end{split}
\end{align}
with initial inputs
$$
\omega^{(0)}_{1}(z)=0,\qquad \omega^{(0)}_{2}(z_1,z_2)=B(z_1,z_2),
$$
in addition to the 1-form $\omega(x(z))$ on $C$, where
$$
\mathcal{R}^{(k)}\omega^{(g)}_{n+1}(\bm{w}_K; \bm{z}_N) = 
\sum_{{\mu} \in \mathcal{P}(K)} \sum_{ \uplus_{i=1}^{\ell(\mu)} J_i =  N} \sum_{\sum_{i=1}^{\ell(\mu)}g_i = g + \ell(\mu) - k } \left(\prod_{i=1}^{\ell(\mu)} \omega_{g_i, |\mu_i| + |J_i|} (\bm{w}_{\mu_i}, \bm{z}_{J_i}) \right).
$$
Here $\bm{z}_N=\{z_1,\ldots,z_n\}$ (resp. $\bm{w}_K=\{w_1,\ldots,w_k\}$) for $N=\{1,\ldots,n\}$ (resp. $K=\{1,\ldots,k\}$), $\beta(w) \subseteq_{k} \sigma_q(w)$ means $\beta(w) \subseteq \sigma_q(w)$ with $|\beta(w)|=k$,
$\mathcal{P}(K)$ is the set of partitions of $K$, $\ell(\mu)$ is the number of subsets in the set partition $\mu$, and the symbol $\uplus$ means the pairwise disjoint union. $w_*$ is a reference point on $\Sigma$ and we see that $\omega_n^{(g)}$'s do not depend on it.
\end{defi}

\begin{remark}
For simple ramified spectral curves, the global topological recursion in Definition \ref{def:top_rec2} is equivalent to the local topological recursion in Definition \ref{def:top_rec1}.
\end{remark}

\subsection{Reconstruction of quantum curves by topological recursion}\label{subsec-rec_q_curve}

By the symmetric meromorphic differentials $\omega_n^{(g)}$ on $C^n$ defined by the topological recursion (\ref{top_recursion}) or (\ref{g_top_recursion}), one can (re)construct the wave function as the WKB expansion. For the reconstruction we define the divisor $D$ for the integration contour of $\omega_{n}^{(g)}$'s as follows.
\begin{defi}
Let $D$ denote the degree $0$ divisor on $C$ with $q_k\in C$ such that
$$
D=\sum_{k}d_k[q_k],\quad \mathrm{deg}D=\sum_{k}d_k=0.
$$
For the degree $0$ divisor $D$, an integration of a meromorphic $1$-form $\alpha$ on $C$ is defined by
$$
\int_D\alpha=\sum_{k}d_k\int_b^{q_k}\alpha,
$$
where $b\in C$ is an arbitrary reference point, and the integration contours are assumed to not intersect with the homology 1-cycles of $C$. In fact, each integral $\int_D\alpha$ does not depend on the choice of the reference point $b$, because the integration divisor $D$ obeys $\mathrm{deg} D=0$.
\end{defi}

On the basis of this notation the wave function is reconstructed as:
\begin{defi}[Reconstructing WKB]\label{def:rec_wkb}
The wave function $\psi(D)$ associated with a spectral curve $\Sigma$ and degree $0$ divisor $D$ is defined by
\begin{align}
\begin{split}
\psi(D)&=\exp\bigg[
\frac{1}{\hbar}\int_{D}\widehat{\omega}_1^{(0)}(z_1)
+\frac{1}{2}\int_{D}\int_{D}\widehat{\omega}_2^{(0)}(z_1,z_2)
\\
&\hspace{3.2em}
+\sum_{(g,n)\ne (0,1), (0,2)}\frac{1}{n!}\hbar^{2g-2+n}
\int_{D}\cdots \int_{D}\omega_n^{(g)}(z_1,\ldots,z_n)\bigg],
\label{wave_function}
\end{split}
\end{align}
where $z \in C$ is away from ramification points.
% is a local coordinate which parametrizes the spectral curve:
% $$
% x=x(z),\quad
% y=y(z),\quad A(x(z),y(z))=0.
% $$
Here we have defined
\begin{align}
\widehat{\omega}_1^{(0)}(z_1)=\omega(x(z_1)),\qquad
\widehat{\omega}_2^{(0)}(z_1,z_2)=B(z_1,z_2)-\frac{dx(z_1)dx(z_2)}{(x(z_1)-x(z_2))^2}.
\nonumber
\end{align}
In this reconstruction the leading and subleading integrals $\int_{D}\widehat{\omega}_1^{(0)}(z_1)$ and $\int_{D}\int_{D}\widehat{\omega}_2^{(0)}(z_1,z_2)$ should be regularized so as to remove divergence by an overall normalization factor for $\psi(D)$. 
\end{defi}

In \cite{Bouchard:2016obz}, by the global topological recursion (\ref{g_top_recursion}) the quantum curve which annihilates the wave function in (\ref{wave_function}) was reconstructed systematically for a special class of the spectral curve. Here we will review their elegant results of the WKB reconstruction of quantum curves. 
In the following, we consider the spectral curve in ${\IC}^2$ as
\begin{align}
\Sigma=\Big\{\,
(x,y)\in\mathbb{C}^2\;\Big|\; P(x,y)=\sum_{k=0}^{r}p_{k}(x)y^{r-k}=0\,
\Big\},
\label{sp_curve_be}%(\ref{sp_curve_be})
\end{align}
where $p_{k}(x)$ are polynomials of $x$. The special class of the spectral curve $\Sigma$ discussed in \cite{Bouchard:2016obz} is referred to as \textit{admissible}.
The admissibility condition is defined by the Newton polygon $\Delta$ for the defining polynomial $P(x,y)$ of the spectral curve $\Sigma$:
$$
P(x,y)=\sum_{k=0}^{r}p_{k}(x)y^{r-k}=\sum_{(k,i)\in  \mathbb{Z}^2}p_{k,i}x^ky^i.
$$
A Newton polygon $\Delta$ for $P(x,y)$ is then a convex hull of the set $S_P$ such that
$$
S_P=\big\{\,(k,i)\in\mathbb{Z}^2\;\big|\;p_{k,i}\ne 0\,\big\}.
$$

\begin{figure}[t]
\centering%
\includegraphics[width=7cm,height=7cm,keepaspectratio]{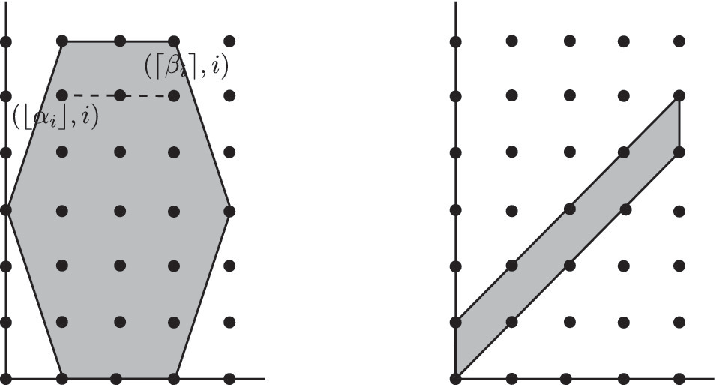}%
\caption{\label{fig:Newton1}The Newton polygon for 2-parameter polynomial $P(x,y)$. The right figure satisfies the admissibility condition.}
\end{figure}

\noindent For each level set labelled by $m\in\mathbb{Z}$ in a Newton polygon $\Delta$, we define
$$
\alpha_i=\mathrm{inf}\bigl\{k\,|\,(k,i)\in\Delta\bigr\},\qquad \beta_i=\mathrm{sup}\bigl\{k\,|\,(k,i)\in\Delta\bigr\}.
$$
The number  of interior integral points of a Newton polygon $\Delta$ is given by $\sum_{i\in\mathbb{Z}}(\lceil\beta_i\rceil-\lfloor\alpha_i\rfloor-1)$, and
the Newton polygon $\Delta$ has no interior point, if $S_P$ satisfies
$$
\lceil\beta_i\rceil-\lfloor\alpha_i\rfloor=1,\ \
\textrm{for all $i$.}
$$
On the basis of the above notions of Newton polygon, the admissibility condition of the spectral curve $\Sigma$ is given as follows. 
\begin{defi}[\cite{Bouchard:2016obz}]\label{def:admissibility}
The spectral curve $\Sigma=\{(x,y)\in\mathbb{C}^2|P(x,y)=0\}$ is admissible if the following two conditions are satisfied:
\begin{enumerate}
\item The Newton polygon $\Delta$ associated with $\Sigma$ has no interior point.
\item If $\Sigma$ contains the origin $(x,y)=(0,0)\in \mathbb{C}^2$, then the curve is smooth at this point.
\end{enumerate}
\end{defi}

The quantum curve which annihilates the wave function $\psi(D)$ with $D=[z]-[z_*]$ is reconstructed manifestly for the admissible spectral curve.
\begin{prop}[Lemma 5.14 in \cite{Bouchard:2016obz}]\label{prop:BE_reconstruction}
Let $D_k$ be the differential operators 
\begin{align}
D_k=\hbar\frac{x^{\lfloor\alpha_k\rfloor}}{x^{\lfloor\alpha_{k-1}\rfloor}}\frac{d}{dx},\quad k=1,\ldots,r.
\label{differential_op}
\end{align}
For the  degree $0$ divisor $D=[z]-[z_*]$ with a simple pole $z_*$ of $x$ as reference point, the wave function $\psi(D)$ satisfies the order $r$ ordinary differential equation
\begin{align}
&
\Biggl[
D_1D_2\cdots D_{r-1}\frac{p_0(x)}{x^{\lfloor\alpha_r\rfloor}}D_r+D_1D_2\cdots D_{r-2}\frac{p_1(x)}{x^{\lfloor\alpha_{r-1}\rfloor}}D_{r-1}+
\cdots +\frac{p_{r-1}(x)}{x^{\lfloor\alpha_{1}\rfloor}}D_{1}+\frac{p_{r}(x)}{x^{\lfloor\alpha_{0}\rfloor}}
\nonumber \\
&
-\hbar\, C_1D_1D_2\cdots D_{r-2}\frac{x^{\lfloor\alpha_{r-1}\rfloor}}{x^{\lfloor\alpha_{r-2}\rfloor}}
-\hbar\, C_2D_1D_2\cdots D_{r-3}\frac{x^{\lfloor\alpha_{r-2}\rfloor}}{x^{\lfloor\alpha_{r-3}\rfloor}}
-\cdots
-\hbar\, C_{r-1}\frac{x^{\lfloor\alpha_{1}\rfloor}}{x^{\lfloor\alpha_{0}\rfloor}}
\Biggr]\psi(D)=0.
\label{WKB_quantum_curve}
\end{align}
Here the coefficients $C_k$'s ($k=1,\ldots r-1$) are given by
\begin{align}
C_k=\lim_{z\to z_*}\frac{P_{k+1}(x,y(x))}{x^{\lfloor\alpha_{r-k}\rfloor+1}},
\qquad
P_{k+1}(x,y)=\sum_{i=1}^{k}p_{k-i}(x)y^i,
\label{Ck}
\end{align}
where $y(x)$ obeys $P(x,y(x))=0$.
\end{prop}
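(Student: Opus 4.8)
The plan is to follow the argument of \cite{Bouchard:2016obz}: translate the topological recursion \eqref{g_top_recursion} into the \emph{abstract loop equations} satisfied by the correlators $\omega^{(g)}_n$, and then turn those loop equations into the differential equation \eqref{WKB_quantum_curve} by differentiating the wave function $\psi(D)$ of \eqref{wave_function} with respect to the moving endpoint of $D=[z]-[z_*]$. \textbf{Step 1 (derivatives of $\psi$).} Write $\psi=\psi(z)$. Since $\log\psi$ is, by Definition \ref{def:rec_wkb}, a sum of iterated integrals of the $\omega^{(g)}_n$ with every argument integrated over $D$, differentiating with respect to the endpoint $z$ produces the \emph{open} one-point generating series $W_1(z)$ --- assembled from $\widehat\omega^{(0)}_1$, $\widehat\omega^{(0)}_2$ and the $\omega^{(g)}_{n+1}(z,\cdot)$ --- whose $\hbar^0$-term is $y(x)$; thus $\widehat y\,\psi=(W_1(z)/dx(z))\,\psi$ in the normalization of Remark \ref{rem:def-of-omega-x}. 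Iterating this and using that $\psi$ is an exponential, one expresses $\widehat y^{\,j}\psi/\psi$ as a universal polynomial --- a Bell-polynomial coming from the chain rule for exponentials --- in the multi-point open series $W_1,W_2,\dots$ evaluated at coincident points; this step is purely combinatorial and does not yet use the polynomial $P(x,y)$.

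\textbf{Step 2 (loop equations and admissibility).} The key input is the equivalence of \eqref{g_top_recursion} with the loop equations: for $k=1,\dots,r$ the $k$-th elementary-symmetric-type combination over the $r$ sheets of (the corrected) $W_1$, with the contributions of the lower $W_\ell$ and of $\widehat\omega^{(0)}_2$ included, is a \emph{rational} function of $x$ whose poles sit only at the poles of the $p_j(x)$, with pole orders at $x=0$ and $x=\infty$ dictated by the edges of the Newton polygon $\Delta$. Here the admissibility hypothesis of Definition \ref{def:admissibility} is exactly what one needs: the absence of interior lattice points forces these symmetric functions to be honest polynomials in $x$ up to a single $\hbar$-linear remainder, and smoothness at the origin kills the last ambiguity. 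The powers $x^{\lfloor\alpha_k\rfloor}$ appearing in $D_k$ (see \eqref{differential_op}) and the prefactors $x^{-\lfloor\alpha_j\rfloor}$ in \eqref{WKB_quantum_curve} are precisely the shifts that record these pole orders, with $\alpha_i=\inf\{k:(k,i)\in\Delta\}$.

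\textbf{Step 3 (assembling the ODE and computing $C_k$).} Substituting the loop equations of Step 2 into the expansion of Step 1, one finds that $P(x,y)$ with $y$ replaced by the ordered product of the $D_k$'s annihilates $\psi(D)$ modulo residual $\hbar$-linear terms; these are the ``quantum'' mismatch between $\widehat y$ and multiplication by a branch $y_i(x)$, together with the effect of the regularization of $\widehat\omega^{(0)}_1,\widehat\omega^{(0)}_2$ at $z_*$, and they collect into $\hbar C_1,\dots,\hbar C_{r-1}$. To identify $C_k$ I would let $z\to z_*$: since $z_*$ is a simple pole of $x$, every $\hbar$-positive piece of $W_1,W_2,\dots$ vanishes there, $W_1\to y(x)\,dx$ with $P(x,y(x))=0$, and the remaining boundary term is read off as $\lim_{z\to z_*}P_{k+1}(x,y(x))/x^{\lfloor\alpha_{r-k}\rfloor+1}$ with $P_{k+1}(x,y)=\sum_{i=1}^k p_{k-i}(x)y^i$, i.e.\ \eqref{Ck}; finiteness of this limit is once more guaranteed by admissibility.

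\textbf{Main obstacle.} The delicate step is the passage from Step 2 to Step 3: one must check that after the combinatorial expansion \emph{all} terms of order $\ge 2$ in $\hbar$ cancel, leaving only the single family $\hbar C_k$, and that the Newton-polygon pole bounds are sharp enough to make each coefficient of the operator a polynomial of exactly the form $p_j(x)/x^{\lfloor\alpha_{r-j}\rfloor}$. This requires handling the whole infinite family of loop equations (all $(g,n)$) uniformly in $\hbar$, and it is precisely why the construction in \cite{Bouchard:2016obz} is restricted to admissible spectral curves.
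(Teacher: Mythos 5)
The paper does not prove this statement at all: it is quoted verbatim as Lemma 5.14 of \cite{Bouchard:2016obz}, and Section \ref{subsec-rec_q_curve} explicitly says it is only ``reviewing'' that result. So there is no in-paper proof to compare against; what you have written is a reconstruction of the Bouchard--Eynard argument, and it should be judged on those terms.

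As such a reconstruction, your outline identifies the right ingredients --- the equivalence of the global recursion with loop equations, the role of admissibility (no interior lattice points, smoothness at the origin) in forcing the symmetric combinations to be polynomial with pole orders read off the Newton polygon, and the appearance of the $C_k$ as boundary data at the simple pole $z_*$ of $x$ where the $\hbar$-corrections of the correlators vanish and $W_1 \to y\,dx$. However, the mechanism in Step 1/Step 3 is not the one that actually produces the operator \eqref{WKB_quantum_curve}. Bouchard--Eynard do not iterate $\widehat{y}$ on $\psi$ and expand via Bell polynomials; they introduce auxiliary functions $\psi_k(x;D)$ built as sums over $k$-element subsets of the fiber $x^{-1}(x)$ of products of wave functions (with correction factors), show that the loop equations translate into a closed system of \emph{first-order} ODEs relating $\psi_k$ and $\psi_{k+1}$ with inhomogeneity exactly linear in $\hbar$, and then eliminate $\psi_2,\dots,\psi_r$. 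The nested form $D_1\cdots D_{r-1}\frac{p_0}{x^{\lfloor\alpha_r\rfloor}}D_r + \cdots$ is precisely the trace of that successive elimination, and the constants $C_k$ are the integration constants of each first-order equation evaluated at $z_*$. Your ``main obstacle'' --- showing that all $O(\hbar^{\ge 2})$ terms cancel after the combinatorial expansion --- is not a cancellation one has to verify at the end in their scheme: each equation of the system is $\hbar$-linear from the outset because the loop equations themselves are, so the issue you flag as delicate is dissolved rather than solved. In the form you propose (powers of $\widehat{y}$ acting on $\psi$, then regrouping), that cancellation and the matching to the specific nested normal form of \eqref{WKB_quantum_curve} are genuinely open steps, so the proposal as written is an accurate roadmap but not yet a proof.
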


\begin{remark}[Remark 5.12 in \cite{Bouchard:2016obz}]\label{rem:wkb_reconst}
Even when the reference point $z_*$ is a higher order pole of $x$, if the integrals in the WKB reconstruction (\ref{wave_function}) converge, Proposition \ref{prop:BE_reconstruction} is correct.
\end{remark}

%%%%%%%%%%%%%%
%Reconstructing GKZ
%%%%%%%%%%%%%%
\section{GKZ equations as quantum curves}\label{sec-gkz_reconst}

In this section, we will prove Theorem \ref{thm:reconstruction} 
(Theorem \ref{theorem:reconstruction-of-GKZ-general}, \ref{theorem:reconstruction-of-GKZ-general_p}) by 
reconstructing GKZ equations as quantum curves from GKZ curves for  
the equivariant Gromov-Witten theory of the projective space $\mathbb{C}\textbf{P}^{N-1}_{\bm{w}}$ and 
the Fano complete intersection $X_{\bm{w},\bm{\lambda}}=X_{\bm{l}=\bm{1};\bm{w},\bm{\lambda}}$ 
of degree $l_{a=1,\ldots,n}=1$. 
For this purpose we employ two methods developed in works by 
Mulase-Su{\l}kowski \cite{Mulase:2012tm} (in Section \ref{subsec:Mulase_Sulkowski}) 
and Bouchard-Eynard  \cite{Bouchard:2016obz} (in Section \ref{subsec:Bouchard_Eynard}).

The former method uses the local topological recursion \eqref{top_recursion} 
and the recursion relation for $S_m$'s in the WKB expansion \eqref{wkb_x1} 
obtained from the GKZ equation is manifestly reconstructed for two examples: 
(1) $\mathbb{C}\textbf{P}^{1}_{\bm{w}}$ and 
(2) the degree $1$ Fano hypersurface $X_{\bm{w},\lambda}$ in $\mathbb{C}\textbf{P}^{1}_{\bm{w}}$. 
Such manifest results are helpful to find the relation between the wave function and the WKB solution which will be studied 
 in Section \ref{section:Stokes-eqP1} (e.g. Lemma \ref{lem:limit-WKB-normalization} below).
 
The latter method uses the global topological recursion \eqref{g_top_recursion} 
and we can reconstruct more general class of $X$ such that
$X=\mathbb{C}\textbf{P}^{N-1}_{\bm{w}}$ and the Fano complete intersection 
$X=X_{\bm{w},\bm{\lambda}}$ of degree $l_{a=1,\ldots,n}=1$.
For these class of $X$ the admissibility condition in 
Definition \ref{def:admissibility} is satisfied, and then we can show that 
the GKZ equation is reconstructed as a quantum curve 
for the wave function $\psi_X(D)$ associated with 
the GKZ curve $\Sigma_X$ by appropriately choosing 
an integration divisor $D$ in \eqref{wave_function}. 
Here, by Proposition \ref{prop:BE_reconstruction} the WKB reconstruction 
of quantum curves depends on the choice of integration divisor $D$. 
We show that there actually exists the integration 
divisor $D=D^*$ which realizes the GKZ equation.
In Section \ref{subsection:relation-to-oscillatory-integral-section4} we 
also show an explicit relation between the wave function $\psi_X(D^*)$ and 
the oscillatory integral ${\mathcal I}_X(x)$ in \eqref{eq:equiv-osci-int}.

%%%%%%%%%%%%%%%%
% Mulase-Sulkowski
%%%%%%%%%%%%%%%%
\subsection{The GKZ equation from the local topological recursion {\`a} la  Mulase-Su{\l}kowski}\label{subsec:Mulase_Sulkowski}

In \cite{Mulase:2012tm} the second order ordinary differential equation for the wave function $\psi(x)$ was derived via the local topological recursion (\ref{top_recursion}) in Definition \ref{def:top_rec1}. 
In the following, we will reconstruct the GKZ equation from the data of the GKZ curve.
Among GKZ equations discussed in this article, this derivation is applicable to the following two models.

\vspace{0.2cm}
\noindent{(1) %$X_1:=\mathbb{C}\textbf{P}^1$
Equivariant Gromov-Witten theory of the projective space 
$\mathbb{C}\textbf{P}^1_{\bm{w}}$:
}
%The GKZ equation is
\begin{align}
\left(\hbar x\frac{d}{dx}-w_0\right)\left(\hbar x\frac{d}{dx}-w_1\right) \psi(x)
% \mathcal{I}_{\mathbb{C}\textbf{P}^1_{\bm{w}}}(x)
= x \psi(x).
% \mathcal{I}_{\mathbb{C}\textbf{P}^1_{\bm{w}}}(x).
\label{GKZ_eqv_CP1}
\end{align}
\noindent{(2) Equivariant Gromov-Witten theory of the degree $1$ hypersurface  
$X_{\bm{w},\lambda}$
%$X_{l_1=1;\bm{w},\lambda_1}$
in $\mathbb{C}\textbf{P}^1_{\bm{w}}$:
}
\begin{align}
\left(\hbar x\frac{d}{dx}-w_0\right)\left(\hbar x\frac{d}{dx}-w_1\right) \psi(x)
= x\left(\hbar x\frac{d}{dx}-\lambda+\hbar\right) \psi(x).
\label{GKZ_eqv_deg1_hyp}
\end{align}
\vspace{0.2cm}
The defining equation of the GKZ curve $\Sigma_X$ is directly found by replacements $x\to x$, $\hbar xd/dx\to y$, and $\hbar\to 0$ in the GKZ equation, 
and indeed, one finds the GKZ curves for these models
\begin{align}
&
\Sigma_{\mathbb{C}\textbf{P}^1_{\bm{w}}}=\left\{\,
(x,y)\in {\IC}^*\times {\IC}\;\Big|\; y^2-(w_0+w_1)y+w_0w_1-x=0\,\right\},
\label{GKZ1}
\\
&
\Sigma_{X_{\bm{w},\lambda}}
%\Sigma_{X_{l_1=1;\bm{w},\bm{\lambda}}}
=\left\{\,
(x,y)\in {\IC}^*\times {\IC}\;\Big|\; y^2-(w_0+w_1+x)y+w_0w_1+
\lambda x=0\,\right\}.
\label{GKZ2}
\end{align}

To apply the topological recursion to the above GKZ curves, we need to use appropriate local coordinates\footnote{ 
In \cite{Mulase:2012tm}, such local coordinates are specified by the Lagrangian singularity of $\Sigma_X$.} 
to pick up residues in the topological recursion (\ref{top_recursion}) systematically. In the following for the case (1) we will introduce a local coordinate $z$ as
\begin{align}
x(z)=z^2-\Lambda,\qquad
y(z)=z+\frac{1}{2}(w_0+w_1),\qquad
\Lambda=\frac{1}{4}(w_0-w_1)^2,
\label{local_coord_CP1}
\end{align}
and for the case (2) we will introduce a local coordinate $z$ via the Zhukovsky coordinate $u$ as
\begin{align}
\begin{split}
&
x(z)=\frac{\alpha+\beta}{2}+\frac{\beta-\alpha}{4}\left(u(z)+\frac{1}{u(z)}\right),\\
&
y(z)=\frac{w_0+w_1+x(z)}{2}+\frac{\beta-\alpha}{8}\left(u(z)-\frac{1}{u(z)}\right), \\
&\alpha+\beta=-2(w_0+w_1-2\lambda),\quad \beta-\alpha=4\sqrt{(\lambda-w_0)(\lambda-w_1)},\quad
u(z)=\frac{z+1}{z-1}.
\label{local_coord_hyp}
\end{split}
\end{align}
We assume that the parameters $w_i$ and $\lambda_a$ are generic so that the $dx$ and $dy$ do not have common zero. 
Then we can identify the GKZ curve $\Sigma_X$ with the spectral curve 
$(C = {\mathbb C}{\bf P}^1, x(z), y(z))$ in the sense of Definition \ref{def:spectral-curve-for-TR}.
For both of these local coordinates, we see that the local Galois conjugation $\sigma$ near the branch point $z=0\in \Sigma_X$ acts as $\sigma(z)=-z$.

The basic building blocks of the topological recursion (\ref{top_recursion}) are given in the above local coordinates as follows:
\begin{align}
&
\omega(x(z))=y(z)\frac{dx(z)}{x(z)}=\frac{y(z)}{x(z)}\frac{dx(z)}{dz}dz, 
\qquad
B(z_1,z_2)=\frac{dz_1dz_2}{(z_1-z_2)^2}.
\nonumber
\end{align}
And we obtain
\begin{align}
\begin{split}
&
\frac{\frac{1}{2}\int^z_{-z}B(\cdot,z_1)}{\omega(x(z))-\omega(x(-z))}
=p(z)\left(\frac{1}{z+z_1}+\frac{1}{z-z_1}\right)\frac{dz_1}{dz},
\\
&
p(z)=-\frac{x(z)}{2\big(y(z)-y(-z)\big)\frac{dx(z)}{dz}},
\nonumber
\end{split}
\end{align}
where for the above two models the function $p(z)$ is given by
\begin{align}
&
p(z)=-\frac{z^2-\Lambda}{8z^2}, \qquad \textrm{for } X=\mathbb{C}\textbf{P}^1_{\bm{w}},
\nonumber\\
&
p(z)=\frac{(z^2-1)^2\left((w_0+w_1-2\lambda)(1-z^2)+2(1+z^2)\sqrt{(\lambda-w_0)(\lambda-w_1)}\right)}{64z^2(\lambda-w_0)(\lambda-w_1)},\quad \textrm{for } X=X_{\bm{w},\lambda}.
%X_{l_1=1;\bm{w},\bm{\lambda}}.
\nonumber
\end{align}
Both of these functions do not have poles except for $z=0, \infty$.

Adopting the ingredients one finds that the topological recursion (\ref{top_recursion}) is rewritten as follows:
\begin{align}
&
\omega_n^{(g)}(z_1,\ldots,z_n)
\nonumber \\
&=\frac{1}{2\pi\mathrm{i}}\int_{\gamma}p(z)\left(\frac{1}{z+z_1}+\frac{1}{z-z_1}\right)\frac{dz_1}{dz}
\nonumber \\
&\quad\quad\times\biggl[
\sum_{j=2}^{n}\Bigl(
B(z,z_j)\omega_{n-1}^{(g)}(-z,z_2,\ldots,\widehat{z}_j,\ldots,z_n)
+B(-z,z_j)\omega_{n-1}^{(g)}(z,z_2,\ldots,\widehat{z}_j,\ldots,z_n)
\Bigr)
\nonumber \\
&\quad\quad\quad+\omega_{n+1}^{(g-1)}(z,-z,z_2,\ldots,z_n)
+\sum_{\substack{g_1+g_2=g\\ I\sqcup J=\{2,3,\ldots,n\}}}^{\prime}
\omega_{|I|+1}^{(g_1)}(z,\bm{z}_I)\omega_{|J|+1}^{(g_2)}(-z,\bm{z}_J)
\biggr],
\label{top_rec2}
\end{align}
where the contour $\gamma$ encloses an annulus bounded by two concentric circles centered at the origin encircles $z=\pm z_i$ ($i=1,\ldots,n$) as depicted in Figure \ref{fig:contour0}. The prime in the last summation means that $\omega_{1}^{(0)}$ and $\omega_{2}^{(0)}$ are excluded from the summation. 
Proceeding along the same line of the proof for Theorem 4.1 in \cite{Mulase:2012tm}, we can show the following lemma.
\begin{figure}[t]
\centering
\includegraphics[width=5cm,height=5cm,keepaspectratio]{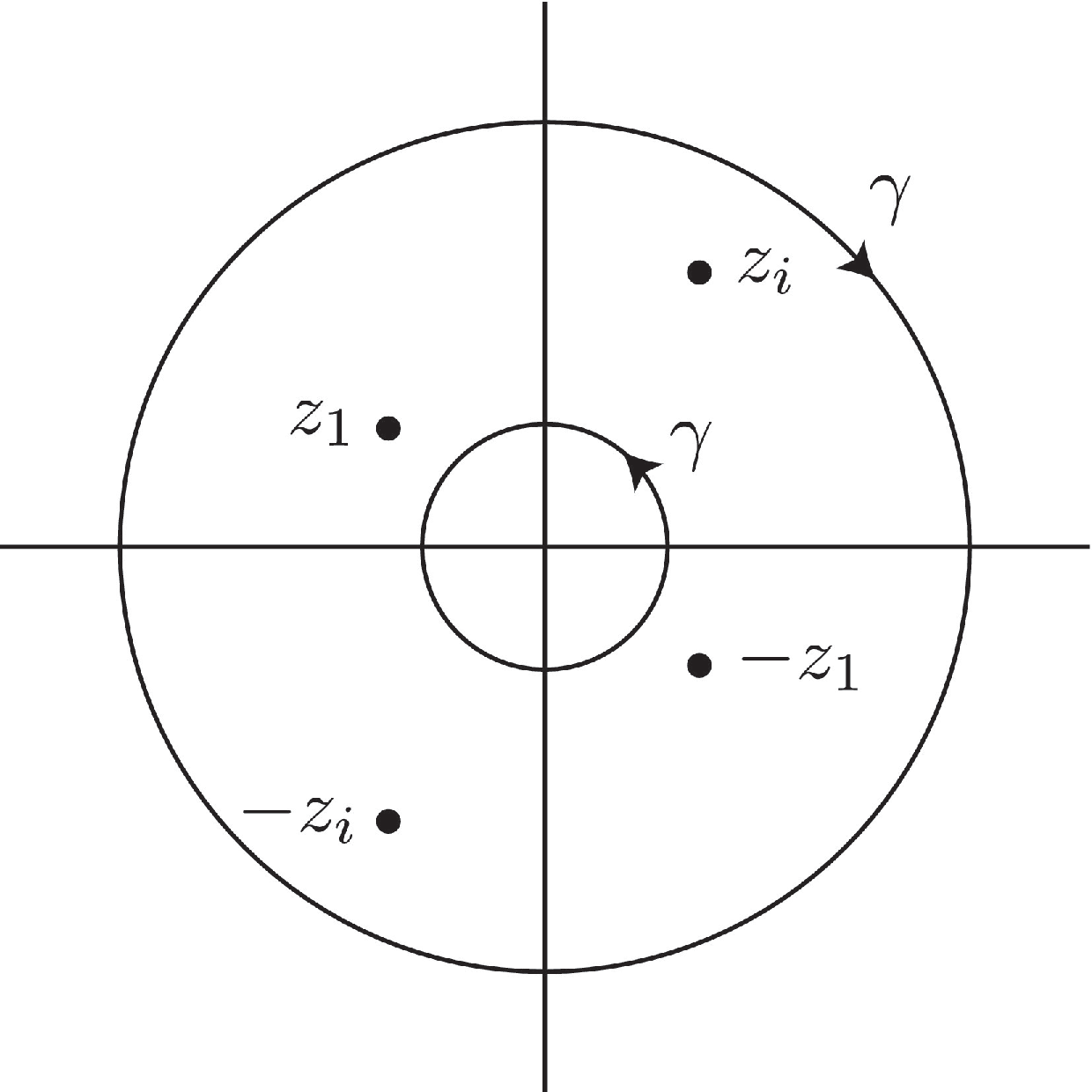}%
\caption{\label{fig:contour0}The contour $\gamma$ in $z$-plane consists of two concentric circles centered at the origin. The  inner circle which encircles the origin has an infinitesimally small radius with the positive orientation.  The outer circle which encircles $z=\pm z_i$ ($i=1,\ldots,n$) is oriented negatively.}
\end{figure}
\begin{lemm}
Let $F_{n}^{(g)}(z_1,\ldots,z_n;z_*)$ ($2g-2+n>0$) denote a function such that
\begin{align}
F_{n}^{(g)}(z_1,\ldots,z_n;z_*)=\int_{z_*}^{z_1}\cdots\int_{z_*}^{z_n}\omega_{n}^{(g)}(z_1',\ldots,z_n'),
\label{free_en_ms}%(\ref{free_en_ms})
\end{align}
where $z_*$ denotes a reference point of the integration.
If the meromorphic function $p(z)$ does not have any poles except for $z=0, \infty$, the topological recursion (\ref{top_rec2}) leads to the following recursion relation:
\begin{align}
\frac{\partial}{\partial z_1}F_n^{(g)}(\bm{z}_{[n]};z_*)
&=\sum_{j=2}^{n}\frac{z_j}{z_1^2-z_j^2}\left(4p(z_1)\frac{\partial}{\partial z_1}F_{n-1}^{(g)}(\bm{z}_{[\hat{j}]};z_*)-4p(z_j)\frac{\partial}{\partial z_j}F_{n-1}^{(g)}(\bm{z}_{[\hat{1}]};z_*)\right)
\nonumber \\
&\quad-\sum_{j=2}^{n}\frac{z_*}{z_1^2-z_*^2}4p(z_1)\frac{\partial}{\partial z_1}F_{n-1}^{(g)}(\bm{z}_{[\hat{j}]};z_*)
\nonumber \\
&\quad+2p(z_1)\left[
\frac{\partial^2}{\partial u_1\partial u_2}F_{n+1}^{(g)}(u_1,u_2,\bm{z}_{[\hat{1}]};z_*)
\right]\Bigg|_{u_1=u_2=z_1}
\nonumber \\
&\quad+2p(z_1)\sum_{\substack{g_1+g_2=g\\ I\sqcup J=\{2,3,\ldots,n\}}}^{\prime}
F_{|I|+1}^{(g_1)}(z_1,\bm{z}_I;z_*)F_{|J|+1}^{(g_2)}(z_1,\bm{z}_J;z_*),
\label{rec_free_energy}
\end{align}
where $[n]=\{1,2,\ldots,n\}$ and $[\hat{j}]=\{1,2,\ldots,\hat{j},\ldots,n\}$.
\end{lemm}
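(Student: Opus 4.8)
The plan is to follow the argument of \cite[Theorem~4.1]{Mulase:2012tm}: integrate the topological recursion \eqref{top_rec2} over $z_2,\dots,z_n$ with base point $z_*$, and evaluate the resulting contour integral $\tfrac{1}{2\pi\mathrm{i}}\int_\gamma(\cdots)\,dz$ by residues. First I would strip off the differential $dz_1$ (which is not being integrated) from both sides of \eqref{top_rec2}, so that it becomes an identity between the coefficient functions of $dz_1dz_2\cdots dz_n$. Integrating $z_2,\dots,z_n$ from $z_*$ then sends the left-hand side to $\partial_{z_1}F^{(g)}_n(\bm{z}_{[n]};z_*)$ and each building block on the right to an object of lower complexity: an $\omega^{(g)}_{n-1}$ carrying $\pm z$ in its first slot becomes $\bigl[\partial_w F^{(g)}_{n-1}(w,\dots;z_*)\bigr]_{w=\mp z}$ (the remaining arguments being $z_2,\dots,z_n$ with $z_j$ deleted), the term $\omega^{(g-1)}_{n+1}(z,-z,z_2,\dots,z_n)$ becomes a mixed second derivative of $F_{n+1}$ evaluated at $(u_1,u_2)=(z,-z)$, and each product $\omega^{(g_1)}_{|I|+1}(z,\bm{z}_I)\,\omega^{(g_2)}_{|J|+1}(-z,\bm{z}_J)$ becomes the product of the corresponding $F$'s with $z$ and $-z$ inserted. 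The Bergman kernels I would treat with the elementary primitive $\int_{z_*}^{z_j}B(z,z_j')=\bigl(\tfrac{1}{z-z_j}-\tfrac{1}{z-z_*}\bigr)\,dz$, so that $B(z,z_j)$ contributes simple poles at $z=z_j$ and $z=z_*$, and $B(-z,z_j)$ contributes simple poles at $z=-z_j$ and $z=-z_*$.

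The crucial simplification is a parity statement. In \eqref{top_rec2} the entire dependence on the variable $z_1$ of the right-hand side sits in the factor $\tfrac{1}{z+z_1}+\tfrac{1}{z-z_1}=\tfrac{2z}{z^2-z_1^2}$, which is even in $z_1$; since the $\omega^{(g)}_n$ are symmetric in all their variables, their coefficient functions are therefore even in each variable. Moreover $p$ is even, as is manifest from the explicit formulas for $p(z)$ recorded above. Consequently $\partial_w F^{(g)}_m(w,\cdot\,;z_*)$ is even in $w$, and, using also the symmetry of $F^{(g)}_m$, every occurrence of a variable $\mp z_k$ produced in the previous step may be replaced by $z_k$; in particular the mixed second derivative of $F_{n+1}$ may be evaluated at $u_1=u_2=z_1$ once the residue is taken.

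It then remains to evaluate $-\tfrac{1}{2\pi\mathrm{i}}\int_\gamma p(z)\,\tfrac{2z}{z^2-z_1^2}\,[\cdots]\,dz$, where $[\cdots]$ is a rational function of $z$ (with $z_1,\dots,z_n,z_*$ as parameters) and both $p(z)$ and $[\cdots]$ are even in $z$. Here the hypothesis that $p(z)$ has no poles other than at $z=0,\infty$ is exactly what is needed: inside the annulus bounded by $\gamma$ the integrand then has poles only at $z=\pm z_1$ (from the recursion kernel) and at $z=\pm z_j$, $j\ge2$ (from the Bergman primitives), since the poles of $p$ and of the integrated $\omega$'s lie at $z=0$, hence inside the infinitesimal inner circle, and the pole at $z=z_*$ lies outside the annulus (in the applications $z_*=\infty$, a pole of $x$; cf.\ Remark~\ref{rem:wkb_reconst}). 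As $\tfrac{1}{2\pi\mathrm{i}}\int_\gamma=-\sum(\text{residues in the annulus})$ and $p,[\cdots]$ are even while $\tfrac{2z}{z^2-z_1^2}$ is odd, we get $\mathop{\mathrm{Res}}_{z=-z_1}=\mathop{\mathrm{Res}}_{z=z_1}$ and $\mathop{\mathrm{Res}}_{z=-z_j}=\mathop{\mathrm{Res}}_{z=z_j}$. The pair at $z=\pm z_1$ yields $2p(z_1)$ times $[\cdots]|_{z=z_1}$, which unpacks into the $2p(z_1)$-term with the second derivative of $F_{n+1}$, the quadratic $2p(z_1)\sum'(\cdots)$ term, and---from the Bergman-primitive parts of $[\cdots]$---the terms $4p(z_1)\tfrac{z_j}{z_1^2-z_j^2}\partial_{z_1}F^{(g)}_{n-1}(\bm{z}_{[\hat{j}]};z_*)$ and $-4p(z_1)\tfrac{z_*}{z_1^2-z_*^2}\partial_{z_1}F^{(g)}_{n-1}(\bm{z}_{[\hat{j}]};z_*)$; the pair at $z=\pm z_j$ yields $-4p(z_j)\tfrac{z_j}{z_1^2-z_j^2}\partial_{z_j}F^{(g)}_{n-1}(\bm{z}_{[\hat{1}]};z_*)$. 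Adding these contributions and reinstating the factor $dz_1$ reproduces \eqref{rec_free_energy}.

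The argument is entirely formal once the parity statement is in place; the genuinely delicate part is the residue bookkeeping of the last step---keeping careful track of which of $z_k$ or $-z_k$ each lower $F$ acquires as an argument, and verifying that the residues at $z_j$ and $-z_j$ (and at $z_1$ and $-z_1$) actually add rather than cancel, which requires the evenness of $p$, the evenness of the correlators, and the symmetry of the $F^{(g)}_m$ all at once---together with checking that the poles at $z=0$ and $z=z_*$ contribute nothing. Apart from this, the proof mirrors \cite[Theorem~4.1]{Mulase:2012tm} line by line.
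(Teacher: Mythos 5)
Your proposal is correct and takes the same route as the paper, whose entire proof is the remark ``proceeding along the same line of the proof for Theorem 4.1 in \cite{Mulase:2012tm}''; your reconstruction supplies precisely the ingredients that argument rests on --- integrating the recursion against the base point $z_*$, the evenness of $p(z)$ and of the coefficient functions of the stable $\omega_n^{(g)}$ (so that residues at $\pm z_1$ and $\pm z_j$ add), and the residue count in the annulus, with the hypothesis on the poles of $p$ entering exactly where you place it. One small remark: the term you produce from $\omega_{n+1}^{(g-1)}(z,-z,\ldots)$ is a mixed second derivative of $F_{n+1}^{(g-1)}$, not of $F_{n+1}^{(g)}$ as written in the statement --- this is a typo in the paper, as the index bookkeeping in the subsequent recursion for the $F_m$'s confirms.
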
 

In addition to (\ref{free_en_ms}) we also define
\begin{align}
\begin{split}
&
F_{1}^{(0)}(z_1;z_{*})=\int_{z_{*}}^{z_1}\omega(x(z_1'))
=\int_{z_{*}}^{z_1}y(z_1')\frac{dx(z_1')}{x(z_1')},
\\
&
F_{2}^{(0)}(z_1,z_2;z_{*})=\int_{z_{*}}^{z_1}\int_{z_{*}}^{z_2}
\left(B(z_1',z_2')-\frac{dx(z_1')dx(z_2')}{(x(z_1')-x(z_2'))^2}\right)
=\int_{z_{*}}^{z_1}\int_{z_{*}}^{z_2}\frac{dz_1'dz_2'}{(z_1'+z_2')^2},
\label{free_en_ms_us}%(\ref{free_en_ms_us})
\end{split}
\end{align}
where in the second equality of $F_{2}^{(0)}(z_1,z_2;z_{*})$ we have used the local coordinates (\ref{local_coord_CP1}) and (\ref{local_coord_hyp}). Note that these integrals should be regularized so that the integrals converge by adding certain constants which depend on the reference point $z_{*}$. Following the WKB reconstruction (\ref{wave_function}), we define the wave function $\psi_X(x)$ such that
\begin{align}
\psi_X(x)=\exp\left(
\sum_{m=0}^{\infty}\hbar^{m-1}F_m(x)\right),\qquad
F_m(x(z)) = \sum_{\substack{g\ge 0,\,n\ge 1 \\2g-1+n=m}}\frac{1}{n!}F_{n}^{(g)}(z,\ldots,z;z_*).
\label{phase_ms}%(\ref{phase_ms})
\end{align}
Adopting technical identities developed in Lemma A.1 of \cite{Mulase:2012tm} to the recursion relation (\ref{rec_free_energy}) we arrive at the following lemma.
\begin{lemm}
The functions $F_m$ ($m\ge 2$) obey the following recursion relation:
\begin{align}
\frac{d}{dz}F_{m+1}=2p(z)\Bigg(
\frac{d^2}{dz^2}F_m+\sum_{\substack{a+b=m+1\\a,b\ge 2}}
\frac{dF_a}{dz}\frac{dF_b}{dz}
\Bigg)+\left(
2\frac{dp(z)}{dz}-4\frac{z_*}{z^2-z_*^2}p(z)
\right)\frac{d}{dz}F_m.
\label{top_rec_F_m}
\end{align}
\end{lemm}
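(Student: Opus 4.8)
The identity \eqref{top_rec_F_m} will be obtained from the recursion \eqref{rec_free_energy} by \emph{principal specialization}: set $z_1=\cdots=z_n=z$, sum the resulting identities over all pairs $(g,n)$ with $2g-1+n=m+1$ and $(g,n)\neq(0,1),(0,2)$ weighted by $1/(n-1)!$, and then resum term by term using the combinatorial identities of \cite[Lemma~A.1]{Mulase:2012tm}. First I would record that, by the symmetry of $F_n^{(g)}$ in its arguments, $\frac{d}{dz}[F_n^{(g)}(z,\dots,z;z_*)]=n\,\partial_{z_1}F_n^{(g)}(z,\dots,z;z_*)$; hence, by the definition \eqref{phase_ms}, the weighted sum of the left-hand sides of \eqref{rec_free_energy} over the stated range of $(g,n)$ is exactly $\frac{d}{dz}F_{m+1}$. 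It then remains to perform the same weighted summation on the four groups of terms on the right-hand side of \eqref{rec_free_energy}.

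Two of the four groups cause no trouble. For the quadratic group $2p(z_1)\sum^{\prime}F_{|I|+1}^{(g_1)}(z_1,\bm{z}_I)F_{|J|+1}^{(g_2)}(z_1,\bm{z}_J)$, a count of set partitions shows that after specialization each summand is labelled by $a:=2g_1-1+(|I|+1)$ and $b:=2g_2-1+(|J|+1)$ with $a+b=m+1$, while the prime (which discards $\omega_1^{(0)}$ and $\omega_2^{(0)}$, i.e.\ the index values $0$ and $1$) forces $a,b\ge 2$; the weighted sum therefore reproduces $2p(z)\sum_{a+b=m+1,\,a,b\ge 2}\frac{dF_a}{dz}\frac{dF_b}{dz}$. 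For the group $-\sum_{j=2}^n\frac{z_*}{z_1^2-z_*^2}4p(z_1)\partial_{z_1}F_{n-1}^{(g)}(\bm{z}_{[\hat{j}]};z_*)$ there is no diagonal singularity since $z_*\neq z$: the $j$-sum contributes a factor $n-1$, and since $2g-1+(n-1)=m$ the weighted sum collapses to $-4\frac{z_*}{z^2-z_*^2}p(z)\frac{d}{dz}F_m$.

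\textbf{Main obstacle.} The remaining two groups must be treated jointly. The ``loop'' term $2p(z_1)[\partial_{u_1}\partial_{u_2}F^{(g-1)}_{n+1}(u_1,u_2,\bm{z}_{[\hat{1}]};z_*)]_{u_1=u_2=z_1}$ (the $\partial_{u_1}\partial_{u_2}$-term of \eqref{rec_free_energy}), upon specialization, requires re-expressing the mixed second derivative $\partial_{z_1}\partial_{z_2}F^{(g-1)}_{n+1}$ on the diagonal in terms of the full diagonal second derivative $\frac{d^2}{dz^2}[F^{(g-1)}_{n+1}(z,\dots,z)]$ and a ``self-interaction'' piece $\partial_{z_1}^2 F^{(g-1)}_{n+1}$; the first of these, after the weighted sum, contributes $2p(z)\frac{d^2}{dz^2}F_m$, while the self-interaction piece is not closed on its own. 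The group $\sum_{j=2}^n\frac{z_j}{z_1^2-z_j^2}(4p(z_1)\partial_{z_1}F_{n-1}^{(g)}(\bm{z}_{[\hat{j}]})-4p(z_j)\partial_{z_j}F_{n-1}^{(g)}(\bm{z}_{[\hat{1}]}))$ is singular as $z_j\to z_1$, so its specialization is only meaningful as an iterated limit (first equate $z_2,\dots,z_n$ to a common value $\zeta$, then let $\zeta\to z_1$). The crux --- and the part of \cite[Lemma~A.1]{Mulase:2012tm} that must be adapted to the kernel $p(z)$ of \eqref{top_rec2} --- is that the spurious poles at $z_1=z_j$ cancel among themselves, that the self-interaction pieces split off from the loop term precisely absorb the finite remainder of the singular group, and that what survives reassembles into $2p(z)\frac{d^2}{dz^2}F_m+2\frac{dp}{dz}\frac{d}{dz}F_m=2\frac{d}{dz}(p(z)\frac{d}{dz}F_m)$. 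The input needed here is that $p(z)$ is holomorphic away from $z=0,\infty$ (as verified for both models just before this lemma), which ensures that the diagonal limits and the contour manipulations leave no boundary or residue contributions behind.

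Collecting the four contributions gives \eqref{top_rec_F_m}. Everything except the joint treatment of the loop and singular groups is routine bookkeeping; the delicate step is organizing that joint limit so that the poles cancel and the finite part reassembles into the total derivative $2\frac{d}{dz}(p\,\frac{d}{dz}F_m)$ with exactly the coefficients stated in the lemma --- which is precisely what the technical identities of \cite{Mulase:2012tm} are designed to deliver.
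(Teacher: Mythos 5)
Your proposal is correct and follows essentially the same route as the paper, which likewise obtains \eqref{top_rec_F_m} from \eqref{rec_free_energy} by restricting to the diagonal $z_1=\cdots=z_n=z$, resumming over $(g,n)$ with $2g-1+n=m+1$ with weight $1/(n-1)!$, and invoking the diagonal-limit identities of Lemma A.1 of \cite{Mulase:2012tm} to handle the joint cancellation between the loop term and the singular $z_j/(z_1^2-z_j^2)$ group; your accounting of which group produces which piece of \eqref{top_rec_F_m} is accurate. (The only blemish is inherited from the paper itself: the quadratic group in \eqref{rec_free_energy} should be read as a product of first derivatives $\partial_{z_1}F^{(g_1)}_{|I|+1}\,\partial_{z_1}F^{(g_2)}_{|J|+1}$, consistent with the underlying recursion \eqref{top_rec2}, which is the reading you implicitly and correctly adopt in deriving the $\frac{dF_a}{dz}\frac{dF_b}{dz}$ term.)
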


Now we will show that the above recursion relation (\ref{top_rec_F_m}) agrees with the recursion relation for $S_m$'s found from the GKZ equation for two models. 
\begin{prop} \label{prop:quantum-curve-equivariant-CP1}
For $X = \mathbb{C}\textbf{P}^1_{\bm{w}}$, the wave function $\psi_X(x)$ defined by (\ref{phase_ms}) with the integration divisor $D=[z]-[\infty]$ in the local coordinate (\ref{local_coord_CP1}) satisfies the GKZ equation (\ref{GKZ_eqv_CP1}) for the equivariant Gromov-Witten theory of $\mathbb{C}\textbf{P}^1_{\bm{w}}$. 
Here we choose the reference point $z_* = \infty$ so that $x(z_*)=\infty$ holds.
\end{prop}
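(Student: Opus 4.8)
The plan is to establish the claimed identity by converting the GKZ equation \eqref{GKZ_eqv_CP1} into a recursion for the WKB coefficients $F_m$ and matching it, term by term, with the topological-recursion recursion \eqref{top_rec_F_m}. First I would substitute the WKB ansatz $\psi_X(x) = \exp\bigl(\sum_{m\ge 0}\hbar^{m-1}F_m(x)\bigr)$ into the operator $(\hbar x\, d/dx - w_0)(\hbar x\, d/dx - w_1) - x$ and expand in powers of $\hbar$. Writing $\delta = x\, d/dx$, one has $(\hbar\delta)\psi_X/\psi_X = \sum_m \hbar^m \delta F_m$ and $(\hbar\delta)^2\psi_X/\psi_X = \bigl(\sum_m\hbar^m\delta F_m\bigr)^2 + \hbar\,\delta\bigl(\sum_m\hbar^m\delta F_m\bigr)$. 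Collecting the coefficient of $\hbar^{m+1}$ in $e^{-\sum\hbar^{m-1}F_m}\widehat{A}_X\, e^{\sum\hbar^{m-1}F_m} = 0$ yields, after using the order $\hbar^0$ relation $(\delta F_0 - w_0)(\delta F_0 - w_1) = x$ (equivalently $y(z)^2 - (w_0+w_1)y(z) + w_0w_1 = x(z)$, which holds by the parametrization \eqref{local_coord_CP1}) and the order $\hbar^1$ relation fixing $\delta F_1$, a recursion of the schematic form
\[
\bigl(2\,\delta F_0 - (w_0+w_1)\bigr)\,\delta F_{m+1}
= -\,\delta^2 F_m - \!\!\sum_{\substack{a+b=m+1\\ a,b\ge 2}}\!\! \delta F_a\, \delta F_b
+ (\text{terms linear in }\delta F_m\text{ with }F_0,F_1\text{ coefficients}).
\]
The key computational identity is that $2\,\delta F_0 - (w_0+w_1) = 2y(z) - (w_0+w_1) = 2z$, and that $\delta = x\, d/dx$ acting on a function of $z$ equals $\frac{x(z)}{x'(z)}\frac{d}{dz} = \frac{z^2-\Lambda}{2z}\frac{d}{dz}$. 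Hence $\delta F_{m+1} = \frac{z^2-\Lambda}{2z}\frac{dF_{m+1}}{dz}$, and dividing the $\hbar^{m+1}$-relation by $2z$ one should recover exactly \eqref{top_rec_F_m} with $p(z) = -\frac{z^2-\Lambda}{8z^2}$, once the first-derivative term is organized using $\frac{dp}{dz}$ and the $z_* = \infty$ specialization of the term $-4\frac{z_*}{z^2-z_*^2}p(z)$, which tends to $0$ as $z_*\to\infty$.

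The main obstacle I expect is the bookkeeping of the $\delta F_0$ and $\delta F_1$ contributions: in the GKZ-derived recursion these appear as the coefficient of $\delta F_m$, and one must verify that they reorganize precisely into $2\frac{dp}{dz} - 4\frac{z_*}{z^2-z_*^2}p(z)$ (with $z_* = \infty$, just $2\,dp/dz$). This requires computing $\delta F_1$ explicitly from the order-$\hbar$ relation, which gives $2z\,\delta F_1 = -\delta^2 F_0 + (\text{linear in }\delta F_0)$, i.e. $\delta F_1$ is a rational function of $z$; one then checks the algebraic identity relating $\frac{1}{2z}\delta^2(\,\cdot\,)$, $\frac{\delta F_1}{z}\delta(\,\cdot\,)$ and $2\frac{dp}{dz}\frac{d}{dz}(\,\cdot\,)$. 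This is the same ``Lemma A.1 of \cite{Mulase:2012tm}'' type manipulation already invoked in the derivation of \eqref{top_rec_F_m}, so in fact the two recursions are built from the same algebraic skeleton; the content of the proposition is that the coefficient functions ($p(z)$ and the first-order term) coincide, which is a finite check using the parametrization \eqref{local_coord_CP1}.

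Finally, I would address the initial conditions and the choice of divisor. The recursion \eqref{top_rec_F_m} (resp. its GKZ counterpart) is a first-order ODE in $z$ for $F_{m+1}$ given $F_2,\dots,F_m$, so the solution is determined up to an additive constant; this constant is fixed by the integration divisor $D = [z] - [\infty]$, i.e. by the normalization $F_m(x(z)) = \sum_{2g-1+n=m}\frac{1}{n!}F_n^{(g)}(z,\dots,z;\infty)$ together with Proposition \ref{prop:behavior-of-saddle-point-approximation}(ii) / \eqref{eq:normalization-condition-for-S}, which forces $F_m(x) \to 0$ as $x\to\infty$ for $m\ge 2$ on the relevant branch; the same boundary condition holds for the WKB solution of \eqref{GKZ_eqv_CP1} normalized at $x=\infty$, so the two solutions agree. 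For $m = 0, 1$ one checks directly that $F_0 = \int^z y\,dx/x$ and $F_1$ (from $\widehat{\omega}_2^{(0)}$ and $\omega_1^{(1)}$, $\omega_3^{(0)}$) solve the order $\hbar^0$ and $\hbar^1$ parts of the GKZ equation, using $\omega(x(z)) = y(z)\,dx/x$ and the fact that $\widehat\omega_2^{(0)}(z_1,z_2) = dz_1\,dz_2/(z_1+z_2)^2$ from \eqref{free_en_ms_us}. Combining the base cases with the matched recursion proves $\widehat A_{\mathbb{C}\textbf{P}^1_{\bm w}}(\widehat x,\widehat y)\,\psi_X(x) = 0$.
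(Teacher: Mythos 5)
Your proposal is correct and follows essentially the same route as the paper's proof: substitute the WKB ansatz into (\ref{GKZ_eqv_CP1}), use the $\hbar^{0}$ and $\hbar^{1}$ orders to identify $S_0$ with $F_1^{(0)}$ and $dS_1/dx$ with $dF_1/dx$ (the latter holding precisely for $z_*=\infty$), then rewrite the remaining hierarchy in the $z$-coordinate and verify that the coefficients equal $2p(z)$ and $2\,dp(z)/dz$, matching (\ref{top_rec_F_m}). The only slip is the parenthetical attribution of $F_1$ to $\omega_1^{(1)}$ and $\omega_3^{(0)}$ --- those contribute to $F_2$ (since $2g-1+n=2$ there), while $F_1=\tfrac{1}{2}F_2^{(0)}$ comes solely from $\widehat{\omega}_2^{(0)}$ --- but this does not affect the argument.
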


\begin{proof}
Adopt the WKB expansion
\begin{align}
\psi(x) = \exp\left(\sum_{m=0}^{\infty}\hbar^{m-1}S_m(x)\right)
\label{wkb_x1} 
\end{align}
into the GKZ equation (\ref{GKZ_eqv_CP1}), then one finds a hierarchy of differential equations for $S_m$'s:
\begin{align}
x\left(
\frac{d^2}{dx^2}S_m+\sum_{a+b=m+1}\frac{dS_a}{dx}\frac{dS_b}{dx}
\right)-(w_0+w_1)\frac{d}{dx}S_{m+1}+\frac{d}{dx}S_m=0.
\label{hierarchy21}
\end{align}
The remaining $\hbar^{0}$-terms are treated separately as follows:
\begin{align}
\left(x\frac{dS_0}{dx}\right)^2-(w_0+w_1)x\frac{dS_0}{dx}+w_0w_1-x=0.
\label{hierarchy21_sep}
\end{align}
By definitions (\ref{free_en_ms_us}) and (\ref{phase_ms}) we see that $F_0(x(z))=F_1^{(0)}(z;z_*)$ obeys the differential equation (\ref{hierarchy21_sep}) by $S_0=F_0$, and one obtains the GKZ curve (\ref{GKZ1}) by a replacement $xdS_0/dx =y$. 

Next we will consider the relation between $S_1$ and $F_1$. The subleading term $S_1$ of the WKB expansion (\ref{wkb_x1}) is computed from the recursion relation (\ref{hierarchy21}) for $m=0$.
\begin{align}
\frac{dS_1}{dx}=-\frac{1}{4x+(w_0-w_1)^2}=-\frac{1}{4z^2},
\nonumber
\end{align}
where in the second equality we have used the local coordinate (\ref{local_coord_CP1}). 
On the other hand, by definitions (\ref{free_en_ms_us}) and (\ref{phase_ms}) one finds
\begin{align}
\begin{split}
&
F_1(x(z))=\frac{1}{2}F_2^{(0)}(z,z;z_*)=\frac{1}{2}\log\frac{(z+z_*)^2}{4zz_*},
\\
&
\frac{dF_1}{dx}\Bigg|_{x=x(z)}=\frac{1}{2}\left(\frac{dx}{dz}\right)^{-1}\left(\frac{2}{z+z_*}-\frac{1}{z}\right).
\label{S1_1}
\end{split}
\end{align}
By comparison of these results,  we see that for the specialization $z_*=\infty$ with $x(z_*)=\infty$, $dF_1/dx$ in  (\ref{S1_1}) agrees with $dS_1/dx$.

Now rewriting the second term in  (\ref{hierarchy21}):
\begin{align}
\sum_{a+b=m+1}\frac{dS_a}{dx}\frac{dS_b}{dx}=2\frac{dS_0}{dx}\frac{dS_{m+1}}{dx}+2\frac{dS_1}{dx}\frac{dS_m}{dx}+\sum_{\substack{a+b=m+1\\a,b\ge 2}}\frac{dS_a}{dx}\frac{dS_b}{dx},
\label{eq:recursion-for-S-equive-CP1}
\end{align}
one obtains
\begin{align}
-\left(2x\frac{dS_0}{dx}-(w_0+w_1)\right)\frac{d}{dx}S_{m+1}=x\frac{d^2}{dx^2}S_m+\left(2x\frac{dS_1}{dx}+1\right)\frac{d}{dx}S_m
+x\sum_{\substack{a+b=m+1\\a,b\ge 2}}\frac{dS_a}{dx}\frac{dS_b}{dx}.
\label{hierarchy31}
\end{align}
To switch $x$-derivatives in the above recursion to $z$-derivatives, 
one can use
\begin{align}
\frac{d}{dx}=\left(\frac{dx}{dz}\right)^{-1}\frac{d}{dz},\qquad 
\frac{d^2}{dx^2}=\left(\frac{dx}{dz}\right)^{-2}\frac{d^2}{dz^2}-\left(\frac{dx}{dz}\right)^{-3}\frac{d^2x}{dz^2}\frac{d}{dz}.
\label{ddd}
\end{align}
Plugging  $xdS_0/dx =y$ and (\ref{ddd}) into  (\ref{hierarchy31}), one gets the recursion relation for $S_m$'s
\begin{align}
\frac{d}{dz}S_{m+1}=c_1(z)\Bigg(
\frac{d^2}{dz^2}S_m+\sum_{\substack{a+b=m+1\\a,b\ge 2}}
\frac{dS_a}{dz}\frac{dS_b}{dz}
\Bigg)+c_2(z)\frac{d}{dz}S_m, 
\label{rec_x1_result}%(\ref{rec_x1_result})
\end{align}
where
\begin{align}
&
c_1(z)=\frac{-x(z)}{(2y(z)-(w_0+w_1))\frac{dx}{dz}}, 
\nonumber\\
&
c_2(z)=\frac{-1}{2y(z)-(w_0+w_1)}\left(-x(z)\left(\frac{dx}{dz}\right)^{-2}\frac{d^2x}{dz^2}+2x(z)\frac{dS_1}{dx}\bigg|_{x=x(z)}+1\right).
\nonumber
\end{align}
Using the local coordinate (\ref{local_coord_CP1}), after some short computations, one obtains
\begin{align}
c_1(z)=-\frac{z^2-\Lambda}{4z^2}=2p(z), \qquad c_2(z)=-\frac{\Lambda}{2z^3}=2\frac{dp(z)}{dz}.
\nonumber
\end{align}
As a consequence, it is found that the recursion relation (\ref{rec_x1_result}) agrees with the recursion relation (\ref{top_rec_F_m}) for $F_m$'s of $X = \mathbb{C}\textbf{P}^1_{\bm{w}}$ under the choice $z_*=\infty$.
\end{proof}

\begin{prop} \label{prop:quantum-curve-comp1-in-CP1}
For the degree $1$ hypersurface $X = X_{\bm{w},\lambda}$
%$X = X_{l_1=1;\bm{w},\bm{\lambda}}$ 
in $\mathbb{C}\textbf{P}^1_{\bm{w}}$, the wave function $\psi_X(x)$ defined by (\ref{phase_ms}) with the integration divisor $D=[z]-[-1]$ in the local coordinate (\ref{local_coord_hyp}) satisfies the GKZ equation \eqref{GKZ_eqv_deg1_hyp} for the equivariant Gromov-Witten theory of 
$X_{\bm{w},\lambda}$.
%$X_{l_1=1;\bm{w},\bm{\lambda}}$. 
Here we choose the reference point $z_* = -1$ so that $x(z_*)=\infty$ holds.
\end{prop}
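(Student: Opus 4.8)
The plan is to mirror the proof of Proposition~\ref{prop:quantum-curve-equivariant-CP1}. First I would substitute the WKB ansatz $\psi(x)=\exp\bigl(\sum_{m\ge 0}\hbar^{m-1}S_m(x)\bigr)$ into the GKZ equation \eqref{GKZ_eqv_deg1_hyp} and expand in powers of $\hbar$. Writing $\theta=x\,d/dx$ and $P=\hbar\theta S=\sum_{m\ge 0}\hbar^{m}xS_m'$ (so $e^{-S}(\hbar\theta)e^{S}=P$, $e^{-S}(\hbar\theta)^2e^{S}=\hbar\theta P+P^2$, and $\hbar\theta P=\sum_{m\ge 0}\hbar^{m+1}x(xS_m')'$), equation \eqref{GKZ_eqv_deg1_hyp} becomes $\hbar\theta P+P^2-(w_0+w_1)P+w_0w_1-x(P-\lambda_1+\hbar)=0$. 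At order $\hbar^0$ this is $(xS_0')^2-(w_0+w_1+x)(xS_0')+w_0w_1+\lambda_1x=0$, which is precisely the GKZ curve \eqref{GKZ2} after $y=xS_0'$; hence $S_0=F_1^{(0)}(z;z_*)$ up to an $x$-independent constant, with $F_1^{(0)}$ as in \eqref{free_en_ms_us}. The new feature compared with the $\mathbb{C}\textbf{P}^1_{\bm{w}}$ equation \eqref{GKZ_eqv_CP1} is the operator $x(\hbar\theta+\hbar)$ on the right: it adds an $x^2\,d^2/dx^2$-type term to the hierarchy and, at order $\hbar^1$ only, an inhomogeneous term $x$.

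Next I would extract the coefficients. At order $\hbar^1$ one solves for $S_1$: differentiating \eqref{GKZ2} gives $y'=(y-\lambda_1)/Q$ with $Q:=2y-(w_0+w_1+x)$, and using $(xS_0')'=y'$ one gets $dS_1/dx=(1-y')/Q=\bigl(y+\lambda_1-(w_0+w_1)-x\bigr)/Q^2$. For $k\ge 2$ the inhomogeneous term drops out and the order $\hbar^k$ equation reads $xS_{k-1}'+x^2S_{k-1}''+x^2\sum_{a+b=k}S_a'S_b'-(w_0+w_1)xS_k'-x^2S_k'=0$. Splitting $\sum_{a+b=k}S_a'S_b'=2S_0'S_k'+2S_1'S_{k-1}'+\sum_{a+b=k,\,a,b\ge2}S_a'S_b'$ (the $S_0'S_k'$-term combines with the $-(w_0+w_1)x$ and $-x^2$ terms into the factor $Q$), solving for $S_k'$, and passing from $x$- to $z$-derivatives via \eqref{ddd} yields
\begin{align*}
\frac{d}{dz}S_{k}=c_1(z)\Bigl(\frac{d^2}{dz^2}S_{k-1}+\!\!\sum_{\substack{a+b=k\\a,b\ge2}}\!\!\frac{dS_a}{dz}\frac{dS_b}{dz}\Bigr)+c_2(z)\frac{d}{dz}S_{k-1},
\end{align*}
with $c_1(z)=-x/(Q\,\dot x)$ and $c_2(z)=x\,\ddot x/(Q\,\dot x^2)-\bigl(1+2x\,dS_1/dx\bigr)/Q$, where $\dot x=dx/dz$ and $\ddot x=d^2x/dz^2$.

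It remains to match this recursion with \eqref{top_rec_F_m}. The identity $c_1(z)=2p(z)$ is immediate once one notes $Q(z)=y(z)-y(-z)$: since the parametrization \eqref{local_coord_hyp} satisfies $u(-z)=u(z)^{-1}$, hence $x(-z)=x(z)$, the point $(x(z),y(-z))$ also lies on \eqref{GKZ2}, so $y(z)$ and $y(-z)$ are its two roots and $Q=2y-(w_0+w_1+x)=y(z)-y(-z)$; comparison with $p(z)=-x/\bigl(2(y(z)-y(-z))\dot x\bigr)$ then gives $c_1=2p$. For the other coefficient, substituting $dS_1/dx=(1-y')/Q$ and $\dot Q=2\dot y-\dot x$ produces the clean relation $c_2(z)-2\dot p(z)=-x/Q^2$, so that the required equality $c_2(z)=2\dot p(z)-4\tfrac{z_*}{z^2-z_*^2}p(z)$ reduces to the single identity $\dot x(z)=-2z_*\,Q(z)/(z^2-z_*^2)$; using $\dot x(z)=-8\gamma z/(z^2-1)^2$ and $Q(z)=4\gamma z/(z^2-1)$, $\gamma=\sqrt{(\lambda_1-w_0)(\lambda_1-w_1)}$, this holds exactly when $z_*$ is the appropriate one of the two simple poles $z=\pm1$ of $x$ (so that $x(z_*)=\infty$), and it forces the choice of reference point recorded in the statement. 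With $c_1=2p$ and $c_2$ as above the $S$-recursion coincides term by term with \eqref{top_rec_F_m}; together with the base cases $S_0=F_0$ and $S_1=F_1$---the latter obtained by checking that $\dot S_1=(\dot x-\dot y)/Q$ equals the $z$-derivative $\frac1{z+z_*}-\frac1{2z}$ of $F_1=\tfrac12F_2^{(0)}(z,z;z_*)=\tfrac12\log\frac{(z+z_*)^2}{4zz_*}$ (from \eqref{free_en_ms_us})---an induction gives $S_m=F_m$ for all $m$, and hence $\psi_X(x)$ satisfies \eqref{GKZ_eqv_deg1_hyp}.

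I expect the main obstacle to be the $c_2$-matching, equivalently the explicit check that $\dot S_1$ agrees with $\dot F_1$ and that the inhomogeneous term $x$ (coming from the $+\hbar$ in the GKZ operator) selects the correct simple pole of $x$ as the reference point $z_*$: unlike the quadratic coordinate $x=z^2-\Lambda$ available for $\mathbb{C}\textbf{P}^1_{\bm{w}}$, here $x(z)$, $y(z)$ and $Q(z)$ are rational in $u=(z+1)/(z-1)$, and one must also keep track of the branch/additive constant relating $S_1$ and $F_1$ (a constant that is immaterial for the wave function). A minor point worth recording is that $p(z)$ has poles only at $z=0,\infty$, which is the standing hypothesis of the Mulase--Su{\l}kowski reduction leading to \eqref{top_rec_F_m} and is verified in the excerpt for the $p(z)$ associated with $X_{l_1=1;\bm{w},\bm{\lambda}}$.
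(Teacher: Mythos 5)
Your strategy is the same as the paper's: substitute the WKB ansatz into \eqref{GKZ_eqv_deg1_hyp}, peel off the orders $\hbar^{0}$ and $\hbar^{1}$ to identify $S_0$ with $F_1^{(0)}$ and $S_1$ with $F_1$, recast the higher-order hierarchy in the $z$-coordinate, and match the coefficients $c_1,c_2$ against \eqref{top_rec_F_m}. Your intermediate identities are correct and in fact tidier than the paper's ``after some short computations'': $Q(z)=y(z)-y(-z)$ (valid because $x(-z)=x(z)$) gives $c_1=2p$ at once, and the clean relation $c_2(z)-2\dot p(z)=-x/Q^2$ together with the criterion $\dot x=-2z_*\,Q/(z^2-z_*^2)$ isolates exactly what the reference point must be.

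The one step you leave unverified --- ``the appropriate one of the two simple poles $z=\pm1$ \dots\ forces the choice recorded in the statement'' --- deserves to be pinned down, because carried out literally it does \emph{not} return $z_*=-1$. With $u=(z+1)/(z-1)$ as written in \eqref{local_coord_hyp} one has $Q(z)=4\gamma z/(z^2-1)$ and $\dot x=-8\gamma z/(z^2-1)^2$, $\gamma=\sqrt{(\lambda_1-w_0)(\lambda_1-w_1)}$, so your criterion forces $z_*=+1$, the pole of $x$ at $u=\infty$ lying on the branch $y\sim x\to\infty$; consistently, $dS_1/dz=(\dot x-\dot y)/Q=\tfrac{1}{z+1}-\tfrac{1}{2z}$ matches $dF_1/dz=\tfrac{1}{z+z_*}-\tfrac{1}{2z}$ only for $z_*=+1$. (This also agrees with Theorem \ref{theorem:reconstruction-of-GKZ-general}, where the reference point $z_*=\infty$ in the coordinate \eqref{gkz_hld1_l} sits on the branch $y\to\infty$.) The paper's own displayed formula $dS_1/dx=-(z+1)^3(z-1)/(16\gamma z^2)$ is the image under $z\mapsto-z$ of what the recursion actually yields with \eqref{local_coord_hyp}, so this is a branch-labelling inconsistency internal to the paper rather than a flaw in your method; still, a complete argument must fix the branch once and for all and record which pole of $x$ is selected, instead of reading it off the statement. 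One further small point, shared with the paper: the splitting $\sum_{a+b=k}S_a'S_b'=2S_0'S_k'+2S_1'S_{k-1}'+\sum_{a,b\ge2}S_a'S_b'$ is valid only for $k\ge3$ (at $k=2$ it double-counts $S_1'^2$), so the base case $S_2=F_2$ requires a separate one-line check before the induction starts.
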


\begin{proof}
Adopt the WKB expansion (\ref{wkb_x1}) into the GKZ equation (\ref{GKZ_eqv_deg1_hyp}), then one finds a hierarchy of differential equations for $S_m$'s:
\begin{align}
-\left(2x\frac{dS_0}{dx}-(w_0+w_1+x)\right)\frac{d}{dx}S_{m+1}=x\frac{d^2}{dx^2}S_m+\left(2x\frac{dS_1}{dx}+1\right)\frac{d}{dx}S_m
+x\sum_{\substack{a+b=m+1\\a,b\ge 2}}\frac{dS_a}{dx}\frac{dS_b}{dx}.
\label{hierarchy32}
\end{align}
In particular, % the leading term
 $S_0$ obeys
\begin{align}
\left(x\frac{dS_0}{dx}\right)^2-(w_0+w_1+x)x\frac{dS_0}{dx}+w_0w_1+\lambda x=0,
\end{align}
and this differential equation gives the GKZ curve (\ref{GKZ2}) by $y=xdS_0/dx$.

From the recursion relation (\ref{hierarchy32}) for $m=0$, it is found that
\begin{align}
%x\left(\frac{d^2F_0}{dx^2}+2\frac{dF_0}{dx}\frac{dF_1}{dx}\right)+\frac{dF_0}{dx}-(w_0+w_1+x)\frac{dF_1}{dx}-1=0.
\frac{dS_1}{dx}&=-\frac{w_0+w_1+x-2\lambda+\sqrt{x^2+2(w_0+w_1-2\lambda)x+(w_0-w_1)^2}}{2(x^2+2(w_0+w_1-2\lambda)x+(w_0-w_1)^2)}
\nonumber\\
&=-\frac{(z+1)^3(z-1)}{16z^2\sqrt{(\lambda-w_0)(\lambda-w_1)}},
\nonumber
\end{align}
where in the second equality the local coordinate (\ref{local_coord_hyp}) is adopted. On the other hand, by definitions (\ref{free_en_ms_us}) and (\ref{phase_ms}) one obtains  (\ref{S1_1}). We see that for the specialization $z_*=-1$ which corresponds to $x(-1)=\infty$, $dF_1/dx$ in  (\ref{S1_1}) agrees with $dS_1/dx$.

Switching from $x$-coordinate to the local $z$-coordinate, one finds that the recursion relation (\ref{hierarchy32}) is rewritten as
\begin{align}
&\frac{d}{dz}S_{m+1}=c_1(z)\Bigg(
\frac{d^2}{dz^2}S_m+\sum_{\substack{a+b=m+1\\a,b\ge 2}}
\frac{dS_a}{dz}\frac{dS_b}{dz}
\Bigg)+c_2(z)\frac{d}{dz}S_m,
\label{rec_x2_result}%(\ref{rec_x2_result})
\end{align}
where
\begin{align}
&
c_1(z)=\frac{-x(z)}{(2y(z)-(w_0+w_1+x(z)))\frac{dx}{dz}}, 
\nonumber\\
&
c_2(z)=\frac{-1}{2y(z)-(w_0+w_1+x(z))}\left(-x(z)\left(\frac{dx}{dz}\right)^{-2}\frac{d^2x}{dz^2}+2x(z)\frac{dS_1}{dx}\bigg|_{x=x(z)}+1\right).
\nonumber
\end{align}
Using the local coordinate (\ref{local_coord_hyp}), after some short computations we obtain
\begin{align}
c_1(z)=2p(z), \qquad c_2(z)=2\frac{dp(z)}{dz}+\frac{4}{z^2-1}p(z).
\nonumber
\end{align}
Thus if one chooses $z_*=-1$ s.t. $x(-1)=\infty$, the recursion relation (\ref{rec_x2_result}) agrees with the recursion relation (\ref{top_rec_F_m}) for $F_m$'s for $X = X_{\bm{w},\lambda}$
%$X = X_{l_1=1;\bm{w},\bm{\lambda}}$ 
under the choice $z_*=-1$.
\end{proof}

Propositions \ref{prop:quantum-curve-equivariant-CP1} and \ref{prop:quantum-curve-comp1-in-CP1} show that the GKZ equations \eqref{GKZ_eqv_CP1} and \eqref{GKZ_eqv_deg1_hyp} are reconstructible as 
quantum curves associated with the GKZ curves \eqref{GKZ1} and \eqref{GKZ2}, respectively.

%%%%%%%%%%%%%%%%
% Bouchard-Eynard
%%%%%%%%%%%%%%%%
\subsection{The GKZ equation from the global topological recursion  {\`a} la Bouchard-Eynard}\label{subsec:Bouchard_Eynard}

In Section \ref{subsec:Mulase_Sulkowski}, 
it was proven for the two models (\ref{GKZ_eqv_CP1}) and (\ref{GKZ_eqv_deg1_hyp}) 
that
their GKZ equations are reconstructible by the local topological recursion (\ref{top_recursion}), if endpoints of integrals of $\omega_n^{(g)}(z_1,\ldots,z_n)$ in the WKB reconstruction (\ref{wave_function}) are chosen to be at $x(z_*)=\infty$ in the global coordinate $x=x(z)$. To generalize these results, we will apply the consequences of the WKB reconstruction of quantum curves in \cite{Bouchard:2016obz} (summarized shortly in Section \ref{subsec-rec_q_curve}) 
to the GKZ curve (\ref{A_comp}) for the equivariant Gromov-Witten theory of $\mathbb{C}\textbf{P}^{N-1}_{\bm{w}}$ and the complete intersection  $X_{\bm{w},\bm{\lambda}}=X_{\bm{l}=\bm{1};\bm{w},\bm{\lambda}}$ 
of the degree $l_a=1$ ($a=1,\ldots,n$) hypersurfaces in $\mathbb{C}\textbf{P}^{N-1}_{\bm{w}}$. 
The defining polynomial $A_{X_{\bm{w},\bm{\lambda}}}$
%$A_{X_{\bm{l}=\bm{1};\bm{w},\bm{\lambda}}}$ 
of the GKZ curve for this model is
\begin{align}
A_{X_{\bm{w},\bm{\lambda}}}(x,y)=
%A_{X_{\bm{l}=\bm{1};\bm{w},\bm{\lambda}}}(x,y)=
\prod_{i=0}^{N-1}(y-w_i)-x\prod_{a=1}^n(y-\lambda_a),\qquad
x\in {\IC}^*,\ \ y\in {\IC},
\label{gkz_hld1}%(\ref{gkz_hld1})
\end{align}
and the GKZ equation is reconstructed subsequently.

In order to apply the consequences of the WKB reconstruction directly to the GKZ curve, 
we will change the presentation of the GKZ curve (\ref{gkz_hld1}):
$$
\Sigma_{X_{\bm{w},\bm{\lambda}}}=
%\Sigma_{X_{\bm{l}=\bm{1};\bm{w},\bm{\lambda}}}=
\Big\{\, (x,y)\in {\IC}^*\times {\IC}\; \Big|\; 
A_{X_{\bm{w},\bm{\lambda}}}(x,y)=0\, \Big\},\qquad
%A_{X_{\bm{l}=\bm{1};\bm{w},\bm{\lambda}}}(x,y)=0\, \Big\},\qquad
\omega(x) = y(x)\, \frac{dx}{x},
$$
since the spectral curve $\Sigma$ considered in \cite{Bouchard:2016obz} is defined as a Lagrangian (\ref{sp_curve_be}) in $\mathbb{C}^2$:
$$
\Sigma=\Big\{\,
(x,y)\in\mathbb{C}^2\;\Big|\; P(x,y)=\sum_{k=0}^{r}p_{k}(x)y^{r-k}=0\,
\Big\},\qquad
\omega(x) = y(x)dx.
$$
The WKB reconstruction of quantum curves in \cite{Bouchard:2016obz} is based only on the global topological recursion (\ref{g_top_recursion}), and what we need are the 1-form $\omega(x)$ % \iota^*
and the Bergman kernel on $\Sigma$ as the inputs. Therefore, changing the presentation of the GKZ curve $\Sigma_{X_{\bm{w},\bm{\lambda}}}$
%$\Sigma_{X_{\bm{l}=\bm{1};\bm{w},\bm{\lambda}}}$ 
with
\begin{align}
y=xY,\quad 
A_{X_{\bm{w},\bm{\lambda}}}(x,xY)=
P_{X_{\bm{w},\bm{\lambda}}}(x,Y),
%A_{X_{\bm{l}=\bm{1};\bm{w},\bm{\lambda}}}(x,xY)=
%P_{X_{\bm{l}=\bm{1};\bm{w},\bm{\lambda}}}(x,Y),
\nonumber
\end{align}
we can utilize the remarkable results in \cite{Bouchard:2016obz}. In the following we will consider a local coordinate $z$ defined by
\begin{align}
x(z)=\frac{\prod_{i=0}^{N-1}(z-w_i)}{\prod_{a=1}^{n}(z-\lambda_a)},\qquad
Y(x(z))=\frac{\prod_{a=1}^{n}(z-\lambda_a)}{\prod_{i=0}^{N-1}(z-w_i)}\,z.
\label{gkz_hld1_l}
\end{align}
Now we will prove Theorem \ref{thm:reconstruction} by applying this presentation of the GKZ curve to Proposition \ref{prop:BE_reconstruction}. At first we will show it for  the equivariant Gromov-Witten theory of the Fano complete intersection of degree $l_a=1$ ($a=1,\ldots,n<N$) hypersurfaces in $\mathbb{C}\textbf{P}^{N-1}_{\bm{w}}$.

\begin{thm} \label{theorem:reconstruction-of-GKZ-general}
The GKZ equation (\ref{GKZ_comp}) for the equivariant Gromov-Witten theory of the Fano complete intersection of degree $l_a=1$ ($a=1,\ldots,n<N$) hypersurfaces in $\mathbb{C}\textbf{P}^{N-1}_{\bm{w}}$ is reconstructible 
as a quantum curve for the GKZ curve $\Sigma_{X_{\bm{w},\bm{\lambda}}}$ 
%$\Sigma_{X_{\bm{l}=\bm{1};\bm{w},\bm{\lambda}}}$ 
by specifying the integration divisor $D$ to be
\begin{align}
D=[z]-[\infty],
\nonumber
\end{align}
in the local coordinate (\ref{gkz_hld1_l}). Here by $n<N$ the reference point $z_*=\infty$ corresponds to $x(z_*)=\infty$ in the global coordinate $x=x(z)$.\footnote{
The convergence in Remark \ref{rem:wkb_reconst} is confirmed by looking at the $z\to\infty$ behavior of the building blocks of the topological recursion (\ref{g_top_recursion}) such as $Y(x(z))$ and $\int_{w_*}^{w} B(\cdot,z)$.
} 
\end{thm}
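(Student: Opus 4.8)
The plan is to apply the Bouchard--Eynard reconstruction of Proposition~\ref{prop:BE_reconstruction} directly to the presentation of the GKZ curve in the variables $(x,Y)$ with $y=xY$, i.e.\ to
\[
P_X(x,Y)=A_{X_{\bm l=\bm 1;\bm w,\bm\lambda}}(x,xY)=\prod_{i=0}^{N-1}(xY-w_i)-x\prod_{a=1}^{n}(xY-\lambda_a)=\sum_{k=0}^{N}p_k(x)\,Y^{N-k},
\]
equipped with the one-form $\omega=Y\,dx$ and the rational parametrization \eqref{gkz_hld1_l} by $z\in C=\mathbb{C}\textbf{P}^1$ (note $\int^z Y\,dx=\int^z y\,\tfrac{dx}{x}$, so the associated wave function is exactly $\psi_X(D)$). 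First I would expand the product, obtaining
\[
\frac{p_k(x)}{x^{N-k}}=q_k-r_k x,\qquad q_k=(-1)^k e_k(\bm w),\qquad r_k=\varepsilon_k\,(-1)^{k-(N-n)}e_{k-(N-n)}(\bm\lambda),
\]
where $e_j$ is the $j$-th elementary symmetric polynomial and $\varepsilon_k=1$ for $k\ge N-n$, $\varepsilon_k=0$ otherwise. Then I would read off the Newton polygon of $P_X$: it is the quadrilateral with vertices $(0,0)$, $(1,0)$, $(n+1,n)$, $(N,N)$, whose left boundary lies on the line $k=i$; a row-by-row check shows it has no interior lattice point, and since $P_X(0,0)=(-1)^N\prod_i w_i\neq 0$ for generic $\bm w$ the curve avoids the origin, so $P_X$ is admissible in the sense of Definition~\ref{def:admissibility} (the hypothesis that $dx,dy$ have no common zero being already assumed for generic parameters). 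The decisive consequence is that the left-edge exponents satisfy $\lfloor\alpha_k\rfloor=k$ for all $k=0,\dots,N$, so every operator $D_k$ of \eqref{differential_op} equals $\widehat y:=\hbar x\,\tfrac{d}{dx}$.

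Next I would substitute this data into the reconstructed operator \eqref{WKB_quantum_curve}. Because $D_1=\dots=D_N=\widehat y$ and $x^{\lfloor\alpha_{N-k}\rfloor}=x^{N-k}$, its ``main'' part becomes $\sum_{j=0}^{N-1}\widehat y^{\,N-1-j}(q_j-r_j x)\widehat y+(q_N-r_N x)$ and its ``correction'' part becomes $-\hbar\sum_{k=1}^{N-1}C_k\,\widehat y^{\,N-1-k}\,x$, with $C_k$ given by \eqref{Ck}. The $q_j$-terms, being constant, collapse immediately to $\sum_{j=0}^{N}q_j\,\widehat y^{\,N-j}=\prod_{i=0}^{N-1}(\widehat y-w_i)$. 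For the remaining terms I would move each factor of $x$ to the left through the powers of $\widehat y$ using $\widehat y^{\,m}x=x(\widehat y+\hbar)^m$, and also write $\widehat y=(\widehat y+\hbar)-\hbar$; using the identity $\widehat x\prod_a(\widehat y-\lambda_a+\hbar)=\prod_a(\widehat y-\lambda_a)\widehat x$ to recognize the GKZ normal form, one finds that the $r_j$-terms assemble into $-\widehat x\prod_{a=1}^{n}(\widehat y-\lambda_a+\hbar)$ together with one superfluous summand $+\hbar x\sum_{a=0}^{n-1}(-1)^a e_a(\bm\lambda)(\widehat y+\hbar)^{n-1-a}$.

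The identity that makes the correction terms exactly cancel this superfluous summand is $C_k=r_k$, i.e.\ $C_k=0$ for $k<N-n$ and $C_k=(-1)^{k-(N-n)}e_{k-(N-n)}(\bm\lambda)$ for $N-n\le k\le N-1$. I would prove it from the defining relation $P_X(x(z),Y(x(z)))=0$, which rewrites $P_{k+1}(x,Y)=-\sum_{j=k}^{N}p_j(x)\,Y^{k-j}$; then, using $Y=z/x(z)$ and $p_j(x)/x^{N-j+1}=q_j/x-r_j$,
\[
\frac{P_{k+1}\bigl(x(z),Y(x(z))\bigr)}{x(z)^{\lfloor\alpha_{N-k}\rfloor+1}}=-\sum_{j=k}^{N}\Bigl(\frac{q_j}{x(z)}-r_j\Bigr)\frac{1}{z^{\,j-k}},
\]
and letting $z\to\infty$, where $x(z)\sim z^{N-n}\to\infty$ since $n<N$ (so $z_*=\infty$ is a pole of $x$ with $x(z_*)=\infty$), every term of this finite sum tends to $0$ except the constant $r_k$. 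Substituting $C_k=r_k$ turns the correction part into precisely $-\hbar x\sum_{a=0}^{n-1}(-1)^a e_a(\bm\lambda)(\widehat y+\hbar)^{n-1-a}$, cancelling the superfluous summand, so the reconstructed quantum curve reads $\bigl[\prod_{i=0}^{N-1}(\widehat y-w_i)-\widehat x\prod_{a=1}^{n}(\widehat y-\lambda_a+\hbar)\bigr]\psi_X(x)=0$, which is exactly the GKZ equation \eqref{GKZ_comp} specialized to $l_1=\dots=l_n=1$. The projective space $\mathbb{C}\textbf{P}^{N-1}_{\bm w}$ is the degenerate case $n=0$, where all $r_k$ vanish and the argument is identical. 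Finally, since $z_*=\infty$ is in general a pole of $x$ of order $N-n\ge 1$ rather than a simple pole, I would invoke Remark~\ref{rem:wkb_reconst} and check convergence of the integrals in \eqref{wave_function} from the $z\to\infty$ asymptotics of $Y(x(z))$ and $\int_{w_*}^{w}B(\cdot,z)$, as indicated in the footnote to the statement.

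I expect the main obstacle to be the operator-ordering bookkeeping of the second and third steps: one must track exactly where the factors of $x$ sit between the powers of $\widehat y$ in \eqref{WKB_quantum_curve}, and it is only after commuting them to the left (which produces the shifts $\widehat y\mapsto\widehat y+\hbar$ responsible for the ``$+\hbar$'' inside each factor $\widehat y-\lambda_a+\hbar$) and after inserting the value $C_k=r_k$ that the reconstructed operator collapses to the normal-ordered GKZ operator. Establishing $C_k=r_k$ — which hinges on the cancellation of the $w$- and $\lambda$-contributions forced by $P_X(x(z),Y(x(z)))=0$ — is the technical heart on which that collapse rests.
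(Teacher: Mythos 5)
Your proposal is correct and follows essentially the same route as the paper's proof: the same $(x,Y)$ presentation with $\lfloor\alpha_k\rfloor=k$ and $D_k=\hbar x\,d/dx$, the same evaluation of $C_k$ by subtracting $P_X/(x^{N-k+1}Y^{N-k})$ on the curve and letting $z\to\infty$, and the same operator bookkeeping --- your commutation $\widehat{y}^{\,m}x=x(\widehat{y}+\hbar)^m$ is exactly the content of the paper's key identity \eqref{key_identity}. The only cosmetic difference is that you make the cancellation between the correction terms and the ``superfluous summand'' explicit, where the paper states the collapsed sum directly.
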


\begin{proof}
\begin{figure}[t]
\centering%
\includegraphics[width=7cm,height=7cm,keepaspectratio]{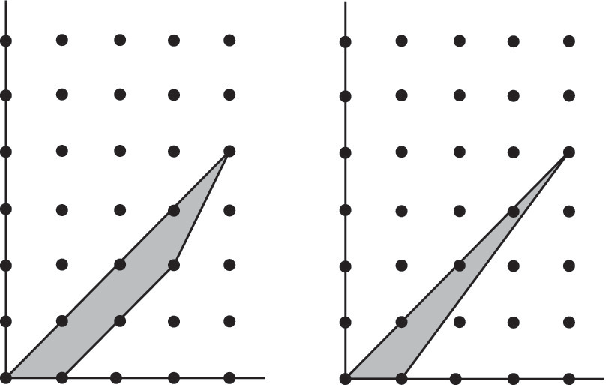}%
\caption{\label{fig:Newton2}The Newton polygons for the defining polynomial $P_{X}(x,Y)$. 
Left: $X$ is the complete intersection of two degree $1$ hypersurfaces in $\mathbb{C}\textbf{P}^{3}_{\bm{w}}$.
Right: $X=\mathbb{C}\textbf{P}^{3}_{\bm{w}}$}
\end{figure}
Consider 
the defining polynomial 
$P_{X_{\bm{w},\bm{\lambda}}}(x,Y)=
A_{X_{\bm{w},\bm{\lambda}}}(x,xY)$
%$P_{X_{\bm{l}=\bm{1};\bm{w},\bm{\lambda}}}(x,Y)=
%A_{X_{\bm{l}=\bm{1};\bm{w},\bm{\lambda}}}(x,xY)$ 
of the GKZ curve for the Fano complete intersection 
$X_{\bm{w},\bm{\lambda}}$
%$X_{\bm{l}=\bm{1};\bm{w},\bm{\lambda}}$ 
of degree $l_a=1$ ($a=1,\ldots,n< N$) hypersurfaces in $\mathbb{C}\textbf{P}^{N-1}_{\bm{w}}$:
\begin{align}
&
P_{X_{\bm{w},\bm{\lambda}}}(x,Y)=
%P_{X_{\bm{l}=\bm{1};\bm{w},\bm{\lambda}}}(x,Y)=
\prod_{i=0}^{N-1}(xY-w_i)-x\prod_{a=1}^n(xY-\lambda_a)=\sum_{k=0}^{N}p_{k}(x)Y^{N-k},
\label{GKZ0_CP}
\end{align}
where
\begin{align}
p_k(x)=
\left\{
\begin{array}{ll}
(-1)^k e_{k}(\bm{w})x^{N-k}, & \textrm{for } 0\le k\le N-n-1, \\
(-1)^k e_{k}(\bm{w})x^{N-k}-(-1)^{k-N+n}e_{k-N+n}(\bm{\lambda})x^{N-k+1}, & \textrm{for } N-n\le k\le N.
\end{array}
\right.
\nonumber
\end{align}
Here $e_{k}(\bm{w})$ and $e_{k}(\bm{\lambda})$ denote the elementary symmetric polynomials of degree $k$ in $\bm{w}=\{w_0,\ldots,w_{N-1}\}$ and $\bm{\lambda}=\{\lambda_1,\ldots,\lambda_{n}\}$, respectively. 
Clearly this defining polynomial satisfies the admissibility condition in Definition \ref{def:admissibility} (see Figure \ref{fig:Newton2}), and one finds 
\begin{align}
\lfloor\alpha_{r}\rfloor=r,
\qquad r=0,\ldots,N.
\nonumber
\end{align}
For the spectral curve $\Sigma_{X_{\bm{w},\bm{\lambda}}}$
%$\Sigma_{X_{\bm{l}=\bm{1};\bm{w},\bm{\lambda}}}$ 
with the defining polynomial (\ref{GKZ0_CP}), the differential operators $D_k$ ($k=1,\ldots,N$) in  (\ref{differential_op}) become
\begin{align}
D_k=\hbar x\frac{d}{dx},
\label{Dk}
\end{align}
and the coefficients $p_{N-k}(x)/x^{\lfloor\alpha_{k}\rfloor}$ in the quantum curve (\ref{WKB_quantum_curve}) are given by
\begin{align}
& 
\frac{p_{N-k}(x)}{x^{\lfloor\alpha_{k}\rfloor}}=
\left\{
\begin{array}{ll}
(-1)^{N-k}e_{N-k}(\bm{w}), & \textrm{for } n+1\le k\le N, \\
(-1)^{N-k}e_{N-k}(\bm{w})-(-1)^{n-k}e_{n-k}(\bm{\lambda})x,
& \textrm{for } 0\le k\le n.
\end{array}
\right.
\label{ak2}
\end{align}
The coefficients $C_k$ in  (\ref{Ck}) for the integration divisor $D=[z]-[\infty]$ are evaluated as follows:
\begin{align}
C_k&=\lim_{z \to\infty}\frac{P_{k+1}(x,Y(x))}{x^{\lfloor\alpha_{N-k}\rfloor+1}}
\nonumber \\
&=\lim_{z \to\infty}\left(
\frac{P_{k+1}(x,Y)}{x^{N-k+1}}-
\frac{P_{X_{\bm{w},\bm{\lambda}}}(x,Y)}{x^{N-k+1}Y^{N-k}}\right)\Bigg|_{Y=Y(x)}
%\frac{P_{X_{\bm{l}=\bm{1};\bm{w},\bm{\lambda}}}(x,Y)}{x^{N-k+1}Y^{N-k}}\right)\Bigg|_{Y=Y(x)}
\nonumber \\
&=-\lim_{z \to\infty}\Biggl(
\frac{p_{k}(x)}{x^{N-k+1}}+\frac{p_{k+1}(x)}{x^{N-k+1}Y}+\cdots +\frac{p_N(x)}{x^{N-k+1}Y^{N-k}}
\Biggr)\Bigg|_{Y=Y(x)}
\nonumber\\
&=\left\{
\begin{array}{ll}
0, & \textrm{for } 1\le k\le N-n-1, \\
(-1)^{k-N+n}e_{k-N+n}(\bm{\lambda}),
& \textrm{for } N-n\le k\le N-1,
\end{array}
\right.
\label{Ck3}
\end{align}
where $Y(x)$ obeys $P_{X_{\bm{w},\bm{\lambda}}}(x,Y(x))=0$.
%$P_{X_{\bm{l}=\bm{1};\bm{w},\bm{\lambda}}}(x,Y(x))=0$.
To rewrite the differential equation further, use a key identity for $\ell\ge 1$:
\begin{align}
x\left(\hbar x\frac{d}{dx}+\hbar\right)^{\ell}f(x)=\left(\hbar x\frac{d}{dx}\right)^{\ell-1}\left(\hbar x^2\frac{d}{dx}f(x)\right)+\hbar\left(\hbar x\frac{d}{dx}\right)^{\ell-1}\bigl(xf(x)\bigr).
\label{key_identity}
\end{align}
This identity is proven by induction with respect to $\ell$.
Adopt the equations (\ref{Dk}) -- (\ref{key_identity}) to the equation (\ref{WKB_quantum_curve}), then one finds
\begin{align}
0&=\sum_{k=0}^{N}(-1)^{N-k}e_{N-k}(\bm{w})\left(\hbar x\frac{d}{dx}\right)^k
\psi_{X_{\bm{w},\bm{\lambda}}}(x)
%\psi_{X_{\bm{l}=\bm{1};\bm{w},\bm{\lambda}}}(x)
-\sum_{\ell=0}^n(-1)^{n-\ell}e_{n-\ell}(\bm{\lambda})x\left(\hbar x\frac{d}{dx}+\hbar\right)^{\ell}
\psi_{X_{\bm{w},\bm{\lambda}}}(x)
%\psi_{X_{\bm{l}=\bm{1};\bm{w},\bm{\lambda}}}(x)
\nonumber \\
&=\prod_{i=0}^{N-1}\left(\hbar x\frac{d}{dx}-w_i\right)
\psi_{X_{\bm{w},\bm{\lambda}}}(x)
%\psi_{X_{\bm{l}=\bm{1};\bm{w},\bm{\lambda}}}(x)
-x\prod_{a=1}^{n}\left(\hbar x\frac{d}{dx}-\lambda_a+\hbar\right)
\psi_{X_{\bm{w},\bm{\lambda}}}(x).
%\psi_{X_{\bm{l}=\bm{1};\bm{w},\bm{\lambda}}}(x).
\nonumber
\end{align}
Thus the GKZ equation (\ref{GKZ_comp}) for the equivariant Gromov-Witten theory of the Fano complete intersection of degree $l_a=1$ ($a=1,\ldots,n$) hypersurfaces in $\mathbb{C}\textbf{P}^{N-1}$ is correctly reconstructed from the GKZ curve (\ref{GKZ0_CP}). 
%Similarly for $X=\mathbb{C}\textbf{P}^{N-1}$ ($n=0$ case) we also see that the GKZ equation (\ref{Ahat_CPN}) is reconstructed from the GKZ curve
%$
%P_{\mathbb{C}\textbf{P}^{N-1}_{\bm{w}}}(x,Y)=\prod_{i=0}^{N-1}(xY-w_i)-x.
%$
\end{proof}

\begin{remark}
Theorem \ref{theorem:reconstruction-of-GKZ-general} ensures the existence of 
the integration divisor $D$ for the WKB reconstruction of 
the GKZ equation (\ref{GKZ_comp}) with $l_a=1$ ($a=1,\ldots,n$) 
as a quantum curve.
\end{remark}
 
Specialized the number $n$ of hypersurfaces to be zero in the above proof,
we immediately find that the GKZ equation for the equivariant $\mathbb{C}\textbf{P}^{N-1}_{\bm{w}}$ theory is also reconstructible.

\begin{thm} \label{theorem:reconstruction-of-GKZ-general_p}
The GKZ equation (\ref{GKZ_CPN}) for the equivariant Gromov-Witten theory of the projective space $\mathbb{C}\textbf{P}^{N-1}_{\bm{w}}$ is reconstructible as 
a quantum curve for the GKZ curve (\ref{A_CPN}) by specifying the integration divisor $D$ to be $D=[z]-[\infty]$
in the local coordinate (\ref{gkz_hld1_l}). Here the reference point $z_*=\infty$ corresponds to $x(z_*)=\infty$ in the global coordinate $x=x(z)$.
\end{thm}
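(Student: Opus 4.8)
The plan is to derive this statement as the $n=0$ specialization of Theorem~\ref{theorem:reconstruction-of-GKZ-general}, tracking how every ingredient of that proof degenerates. First I would set $n=0$ throughout: with the conventions $\prod_{a=1}^{0}(\,\cdot\,)=1$ and $e_0(\bm{\lambda})=1$, the GKZ curve $\Sigma_{\mathbb{C}\textbf{P}^{N-1}_{\bm{w}}}$ is cut out by $A_{\mathbb{C}\textbf{P}^{N-1}_{\bm{w}}}(x,y)=\prod_{i=0}^{N-1}(y-w_i)-x$, which is exactly \eqref{A_CPN}, and the local coordinate \eqref{gkz_hld1_l} becomes $x(z)=\prod_{i=0}^{N-1}(z-w_i)$, $Y(x(z))=z\big/\prod_{i=0}^{N-1}(z-w_i)$ (equivalently $y(z)=z$). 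Passing to the presentation $y=xY$ as in the general proof, the defining polynomial is $P_{\mathbb{C}\textbf{P}^{N-1}_{\bm{w}}}(x,Y)=\prod_{i=0}^{N-1}(xY-w_i)-x=\sum_{k=0}^{N}p_k(x)Y^{N-k}$ with $p_k(x)=(-1)^k e_k(\bm{w})x^{N-k}$ for $0\le k\le N-1$ and $p_N(x)=(-1)^N e_N(\bm{w})-x$; this is precisely the $n=0$ case of the coefficient formula in the proof of Theorem~\ref{theorem:reconstruction-of-GKZ-general}.

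Next I would verify admissibility: the Newton polygon of $P_{\mathbb{C}\textbf{P}^{N-1}_{\bm{w}}}(x,Y)$ is the right-hand polygon of Figure~\ref{fig:Newton2}, spanned by the monomials $Y^N$, $x^N$ and $x$; it has no interior lattice point and the curve does not pass through the origin, so Definition~\ref{def:admissibility} holds, and $\lfloor\alpha_r\rfloor=r$ for $r=0,\dots,N$. Consequently the operators \eqref{differential_op} all collapse to $D_k=\hbar x\frac{d}{dx}$, the normalized coefficients become $p_{N-k}(x)/x^{\lfloor\alpha_k\rfloor}=(-1)^{N-k}e_{N-k}(\bm{w})$ for $1\le k\le N$ together with $p_N(x)/x^{0}=(-1)^N e_N(\bm{w})-x$, and the constants $C_k$ of \eqref{Ck} all vanish for $1\le k\le N-1$, since the range $N-n\le k\le N-1$ that produced the $e(\bm{\lambda})$-terms in \eqref{Ck3} is empty when $n=0$. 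Substituting these data into the reconstructed equation \eqref{WKB_quantum_curve} yields $\sum_{k=0}^{N}(-1)^{N-k}e_{N-k}(\bm{w})\bigl(\hbar x\frac{d}{dx}\bigr)^{k}\psi_{\mathbb{C}\textbf{P}^{N-1}_{\bm{w}}}(x)-x\,\psi_{\mathbb{C}\textbf{P}^{N-1}_{\bm{w}}}(x)=0$, i.e. $\prod_{i=0}^{N-1}\bigl(\hbar x\frac{d}{dx}-w_i\bigr)\psi-x\psi=0$, which is the GKZ equation \eqref{GKZ_CPN}.

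The only genuinely delicate point---exactly as in Theorem~\ref{theorem:reconstruction-of-GKZ-general}---is that for $N\ge 2$ the reference point $z_*=\infty$ is not a simple pole of $x$ (it is a pole of order $N$), so Proposition~\ref{prop:BE_reconstruction} does not apply verbatim and one must instead invoke Remark~\ref{rem:wkb_reconst}, which requires checking that the integrals $\int_D\widehat{\omega}^{(0)}_1$, $\int_D\int_D\widehat{\omega}^{(0)}_2$ and $\int_D\cdots\int_D\omega^{(g)}_n$ in the WKB reconstruction \eqref{wave_function} converge for $D=[z]-[\infty]$. I would settle this by inspecting the $z\to\infty$ asymptotics of the building blocks of the global topological recursion \eqref{g_top_recursion}---namely $Y(x(z))=O(z^{1-N})$ and $\int_{w_*}^{w}B(\cdot,z)$---in the same way as in the footnote to Theorem~\ref{theorem:reconstruction-of-GKZ-general}; this convergence check is the step that needs the most care, but it is entirely parallel to the $n\neq 0$ case and introduces no new idea, after which the computation above goes through unchanged and the theorem follows.
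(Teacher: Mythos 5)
Your proof is correct and is exactly the paper's argument: the paper establishes this theorem with the one-line remark that it is the $n=0$ specialization of Theorem \ref{theorem:reconstruction-of-GKZ-general}, and you have simply carried that specialization out in detail (including the convergence check at the order-$N$ pole $z_*=\infty$ via Remark \ref{rem:wkb_reconst}, exactly as in the paper's footnote). One small slip worth fixing: the Newton polygon of $P_{\mathbb{C}\textbf{P}^{N-1}_{\bm{w}}}(x,Y)=\prod_{i=0}^{N-1}(xY-w_i)-x$ is the triangle with vertices $(0,0)$, $(1,0)$ and $(N,N)$ (coming from the constant term, the monomial $-x$, and $x^NY^N$), not one spanned by pure powers $Y^N$ and $x^N$; your subsequent values $\lfloor\alpha_r\rfloor=r$ are the ones for the correct polygon, so nothing downstream is affected.
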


\subsection{Relation to the oscillatory integrals} \label{subsection:relation-to-oscillatory-integral-section4}

As a corollary of results proved in this section, we can find an explicit relation between the oscillatory integral ${\mathcal I}_X$ and the wave function $\psi_X$ for a complete intersection 
$X = X_{\bm{w},\bm{\lambda}}= X_{\bm{l} = {\bm 1};\bm{w},\bm{\lambda}}$ 
of degree $l_a=1$ ($a=1,\ldots,n$) hypersurfaces with $n<N$ in $\mathbb{C}\textbf{P}^{N-1}$ (we regard $X = \mathbb{C}\textbf{P}^{N-1}$ for the case $n=0$). 

Recall that, for generic $w_i$ and $\lambda_a$, there are $N$ critical points $({\bm u}^{\rm (c)}_{1},{\bm v}^{\rm (c)}_{1}), \dots, ({\bm u}^{\rm (c)}_{N}, {\bm v}^{\rm (c)}_{N})$ of $W_X$ (after taking the projection $(\widetilde{{\mathbb C}^\ast})^{N+n-1} \to {({\mathbb C}^\ast)}^{N+n-1}$) which give $N$ solutions ${\mathcal I}_{1}, \dots, {\mathcal I}_{N}$ of 
the GKZ equation \eqref{GKZ_comp} as oscillatory integrals over the associated Lefschetz thimbles. On the other hand, Theorem \ref{theorem:reconstruction-of-GKZ-general} shows that the topological recursion constructs $N$ formal solutions $\psi_{1}, \dots, \psi_N$ of \eqref{GKZ_comp} as the wave function with the integration divisors $D_1, \dots, D_N$ specified as follows. For a fixed $x \in {\mathbb C}^\ast$ away from branch points, we can find $z_1(x), \dots, z_N(x) \in {\mathbb C}{\bf P}^1$ (or ${\Sigma_X}$) satisfying $x(z_i(x)) = x$ ($i=1,\dots,N$), where $x(z)$ is given by \eqref{gkz_hld1_l}. Then, we define $D_i = [z_i(x)] - [\infty]$ ($i=1,\dots,N$).

The correspondence between the critical points and points on the GKZ curve $\Sigma_X$ is given in Proposition \ref{def:GKZ_curve}.  
More explicitly, we can choose a label so that 
\begin{equation} \label{eq:relation-crit-and-spec}
x \frac{d}{dx} W_X({\bm u}^{\rm (c)}_{i}, {\bm v}_i^{\rm (c)};x) = z_i(x), \quad \text{$i = 1, \dots, N$}
\end{equation}
holds.  In view of \eqref{gkz_hld1_l}, for sufficiently large $x$, we can arrange the label\footnote{
Precisely speaking, these labels are not well-defined since the labels exchange if $x$ move around a ramification point ($\infty$ is a ramification point). Here we consider a situation that $x$ moves along a straight path to infinity.} so that 
\begin{eqnarray} \label{eq:behavior-of-zi-x}
z_i(x) \sim 
\begin{cases}
\zeta^i x^{\frac{1}{N-n}}  &  \text{for $i=1,\dots,N-n$,} \\
\lambda_{a_i}  &  \text{for $i=N-n+1, \dots, N$},
\end{cases}
\end{eqnarray}
when $x \to \infty$.
Here $\zeta = \exp(2 \pi {\rm i} / (N-n))$ and $\{a_{N-n+1}, \dots, a_{N} \} = \{1, \dots, n \}$. 
This is consistent with Lemma \ref{lemm:asymptotic-critical-pt}, where $z_1(x), \dots, z_{N-n}(x)$ correspond to the critical points satisfying \eqref{eq:critical-bahavior-1} while $z_{N-n+1}(x), \dots, z_{N}(x)$ correspond to the rest $n$ critical points satisfying \eqref{eq:critical-bahavior-2}.

The following claim shows that the oscillatory integrals associated with the critical points satisfying \eqref{eq:critical-bahavior-1} coincide with 
the wave functions (up to some numerical factor) after taking the asymptotic expansion:

\begin{cor} \label{cor:relation-oscillatory-integral-and-wave-function}
Let $\psi_{i}(x)$ be the wave function \eqref{wave_function} for 
$X = X_{\bm{w},\bm{\lambda}}$
%$X = X_{\bm{l} = {\bm 1};\bm{w},\bm{\lambda}}$ 
with the integration divisor $D_i=[z_i(x)]-[\infty]$ where the point $z_i(x)$ is specified as \eqref{eq:behavior-of-zi-x}.  
Also, let ${\mathcal I}_{i}$ be the oscillatory integral \eqref{eq:equiv-osci-int} for the mirror Landau Ginzburg potential $W_{X}$ defined over the Lefschetz thimble associated with a critical point $({\bm u}^{\rm (c)}_{i}, {\bm v}_i^{\rm (c)})$ specified by \eqref{eq:relation-crit-and-spec}.
Then, for $i=1,\dots,N-n$, these (formal) solutions of the GKZ equation (\ref{GKZ_CPN}) are related through the asymptotic expansion for $\hbar \to 0$:
\begin{align}
\mathcal{I}_{i}(x) \sim C_i \, (-2\pi \hbar)^{\frac{N+n-1}{2}} \, \psi_{i}(x).
\label{eq:relation-between-oscillatory-and-TR-wave-function} 
\end{align}
Here the constant $C_i$ is determined by 
\begin{equation}
C_i = \lim_{x \to \infty} \frac{u^{\rm (c)}_{i,1} \cdots u^{\rm (c)}_{i,N-1} \, \sqrt{{\rm Hess}({\bm u}^{\rm (c)}_{i}, {\bm v}_i^{\rm (c)})}}{\exp(F_1)},
\end{equation}
where we write $({\bm u}^{\rm (c)}_{i}, {\bm v}_i^{\rm (c)})=(u_{i,1}^{\rm (c)}, \dots, u_{i,N-1}^{\rm (c)}, v_{i,1}^{\rm (c)}, \dots, v_{i,n}^{\rm (c)} )$.
\end{cor}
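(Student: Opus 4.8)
The plan is to observe that $\mathcal{I}_i$ and $(-2\pi\hbar)^{(N+n-1)/2}\,\psi_i$ are both formal WKB solutions of the \emph{same} linear ordinary differential equation — the GKZ equation \eqref{GKZ_comp} — attached to the \emph{same} branch of the GKZ curve $\Sigma_X$, so that they can only differ by a multiplicative factor independent of $x$; this factor is then identified by taking $x\to\infty$. First I would record that both objects solve \eqref{GKZ_comp}: for $\psi_i$ this is Theorem \ref{theorem:reconstruction-of-GKZ-general}, and for $\mathcal{I}_i$ it is Proposition \ref{prop:IGKZ}. Writing both in the WKB form $\exp\!\left(\sum_{m\ge0}\hbar^{m-1}(\,\cdot\,)_m(x)\right)$ — for $\psi_i$ via \eqref{phase_ms}/\eqref{wave_function}, for $\mathcal{I}_i$ via \eqref{eq:saddle-expansion-oscillatory-integral} after dropping the overall factor $(-2\pi\hbar)^{(N+n-1)/2}$ — I would check that their leading exponents agree up to an $x$-independent constant. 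Indeed the leading exponent of $\psi_i$ is $F_0(x)=\int_{D_i}\widehat{\omega}^{(0)}_1=\int^{z_i(x)}_\infty y\,\tfrac{dx}{x}$, so $x\tfrac{d}{dx}F_0=y=z_i(x)$, while the leading exponent of $\mathcal{I}_i$ is $S_0(x)=W_X({\bm u}^{\rm (c)}_i,{\bm v}^{\rm (c)}_i;x)$, and by Proposition \ref{def:GKZ_curve} together with the labelling \eqref{eq:relation-crit-and-spec}, \eqref{eq:behavior-of-zi-x} one gets $x\tfrac{d}{dx}S_0=z_i(x)$ as well; hence $S_0-F_0$ is constant in $x$.

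Next I would establish uniqueness of a formal WKB solution of \eqref{GKZ_comp} along a fixed branch. Substituting the WKB ansatz into $\widehat{A}_X$ and collecting powers of $\hbar$ produces, for each $m\ge0$, a relation of the shape \eqref{hierarchy31}--\eqref{hierarchy32} which determines $\tfrac{d}{dx}S_{m+1}$ from $S_0,\dots,S_m$, since the coefficient there — essentially $\partial_y A_X\!\left(x,x\tfrac{d}{dx}S_0\right)$ — is nonvanishing away from the ramification points of $\Sigma_X$. Consequently two formal WKB solutions on the same branch differ, order by order, only by additive constants, i.e. by an overall factor $C_i(\hbar)=\exp\!\left(\sum_{m\ge0}c_m\hbar^{m-1}\right)$ with the $c_m$ independent of $x$. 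Applying this to $\mathcal{I}_i$ and $(-2\pi\hbar)^{(N+n-1)/2}\psi_i$ gives \eqref{eq:relation-between-oscillatory-and-TR-wave-function} with $C_i=C_i(\hbar)$ a priori depending on $\hbar$.

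Finally I would pin down $C_i$ by letting $x\to\infty$ order by order in $\hbar$. Since $i\in\{1,\dots,N-n\}$, the critical point behaves as in \eqref{eq:critical-bahavior-1}, so Proposition \ref{prop:behavior-of-saddle-point-approximation}, via \eqref{eq:normalization-condition-for-S}, gives $S_m(x)\to0$ for $m\ge2$; the analogous vanishing $F_m(x)\to0$ ($m\ge2$) holds for the topological-recursion wave function because it uses the reference point $z_*=\infty$ and the correlators $\omega^{(g)}_n$ with $2g-2+n\ge0$ are holomorphic there. Hence $c_m=0$ for $m\ge2$; checking $c_0=0$ reduces to showing that $S_0$ and $F_0$ have the same divergent and finite parts at $x=\infty$, which follows from $x\tfrac{d}{dx}(S_0-F_0)=0$ together with the regularization prescriptions fixed in \eqref{eq:saddle-expansion-oscillatory-integral} and Definition \ref{def:rec_wkb}. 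Then $C_i$ is a genuine constant, equal to the limit as $x\to\infty$ of the ratio of the $O(\hbar^0)$ prefactors of the two expansions, which is the stated expression for $C_i$. The main obstacle is exactly this last bookkeeping of the regularizing constants at $x=\infty$: both leading and subleading WKB terms of $\mathcal{I}_i$ and of $\psi_i$ are individually divergent there, and the delicate point is that the regularizations implicit in \eqref{eq:saddle-expansion-oscillatory-integral} and in Definition \ref{def:rec_wkb} are mutually compatible, so that $C_i$ comes out $\hbar$-independent and equal to the displayed limit; the branch-matching and WKB-uniqueness steps are, by contrast, essentially formal.
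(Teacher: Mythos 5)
Your proposal is correct and follows essentially the same route as the paper's (very terse) proof: both rest on the order-by-order uniqueness of the WKB coefficients up to additive constants (cf.\ \eqref{eq:recursion-for-S-equive-CP1}) together with the normalization \eqref{eq:normalization-condition-for-S} at $x\to\infty$, which holds precisely for the critical points of type \eqref{eq:critical-bahavior-1}, plus the matching of the leading exponents via Proposition \ref{def:GKZ_curve} and the vanishing of $F_m$ ($m\ge 2$) at the reference point $z_\ast=\infty$. The only point to double-check in the final bookkeeping is the orientation of the ratio of $O(\hbar^0)$ prefactors (the factor $(u^{\rm (c)}_{i,1}\cdots u^{\rm (c)}_{i,N-1}\sqrt{{\rm Hess}})^{-1}$ versus $\exp(F_1)$) against the displayed formula for $C_i$, but this does not affect the logic of your argument.
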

This follows from the fact that the coefficients $S_m(x)$ of WKB expansion are uniquely determined up to an additive constant (cf. \eqref{eq:recursion-for-S-equive-CP1}) and the behavior of \eqref{eq:normalization-condition-for-S} when $x \to \infty$ which is valid if we choose a critical point satisfying \eqref{eq:critical-bahavior-1}. In particular, for $X =  \mathbb{C}\textbf{P}^{N-1}_{\bm w}$,  the relation \eqref{eq:relation-between-oscillatory-and-TR-wave-function} is valid for all $N$ solutions because $n=0$ in this case.

%%%%%%%%%%%%%%
%String duality
%%%%%%%%%%%%%%
\section{Several different vantage points of the $J$-function}\label{sec-glsm_J}

In this section we will give a physical derivation of Theorem \ref{thm:reconstruction} (referred to as \textit{reconstruction theorem}) by reinterpreting the equivariant $J$-functions as the brane partition functions in topological strings on local Calabi-Yau 3-folds. In the subsequent sections we will discuss the following vantage points of the $J$-function:
\begin{align}
\begin{split}
&
1.\quad
\textrm{$J$-function as the vortex partition function},
\\
&
2.\quad
\textrm{$J$-function as the brane partition function in the local A-model},
\\
&
3.\quad
\textrm{$J$-function as the brane partition function in the local B-model}.
\nonumber\
\end{split}
\end{align}
At first we will summarize the physical interpretation of the equivariant $J$-function as the vortex partition function in the $\mathcal{N}=(2,2)$ gauged linear sigma model (GLSM) on ${\IS}^2$ \cite{Dimofte:2010tz,Bonelli:2013mma} (Section \ref{subsec-vortex_J}). Next we will reconsider it via the geometric engineering as a particular type of vortex partition function obtained from a brane partition function in the topological A-model on a local toric Calabi-Yau 3-fold $Y$ \cite{Dimofte:2010tz} (Section \ref{subsec-a_J}).
And then, we will move to the local B-model picture via the local mirror symmetry,
and give yet another description of the brane partition function as the wave function via the topological recursion on a mirror curve residing in the mirror local Calabi-Yau 3-fold $Y^{\vee}$ 
on the basis of remodeling conjecture \cite{Bouchard:2007ys} (Section \ref{subsec-b_J}).
As a consequence of physical discussions, we will find the reconstruction theorem.
\begin{remark}
Via the string dualities, we find a novel picture of the $J$-function and GKZ equation. 
%In this description, an interesting connections between open and closed string theories are found as follows.
One of the most curious but interesting aspects of this picture is the following point.
Originally the $J$-function is defined in regard to the genus $0$ closed string theory on Fano manifold $X$, and the variable $x=\mathrm{e}^{t}$ denotes the \textit{closed string modulus} which measures the area of the closed string worldsheet around the 2-cycle in $X$.
On the other hand, the brane partition function is defined for all genus open string theory on local toric Calabi-Yau 3-fold $Y$ involving a special Lagrangian submanifold $L$, and the variable $x=\mathrm{e}^{u}$ denotes the \textit{open string modulus} which measures the area of the open string worldsheet ending on $L\subset Y$. 
\end{remark}

\subsection{Vantage point 1: $J$-function as the vortex partition function}\label{subsec-vortex_J}

In \cite{Dimofte:2010tz,Bonelli:2013mma} it was argued that the equivariant $J$-function is reinterpreted as the vortex partition function in the $\mathcal{N}=(2,2)$ GLSM on ${\IS}^2$ \cite{Witten:1993yc}. The GLSM consists of gauge multiplet $\mathcal{V}$ with a gauge group $G$ and matter chiral multiplets $\Phi_i$'s with some representations of $G$. The vacuum moduli space of the GLSM is defined by D- and F-term equations. More precisely, the D-term contains Fayet-Iliopoulos (FI) parameters $\bm{\xi}$ (and theta-angles $\bm{\theta}$) associated with the generators of the center of the gauge group $G$, and the F-term is described by a gauge invariant function $W(\bm{\Phi})$ of matter multiplets $\Phi_i$'s called \textit{superpotential}.

Via the renormalization flow, the GLSM flows into %the IR (infrared) to 
the geometric regime $\xi\gg 0$, and one finds the non-linear sigma model with a target space $X$ defined by D- and F-term equations, if $X$ is the Fano or Calabi-Yau variety. Indeed the FI parameters in the D-term are associated with the K\"ahler moduli of $X$, and the solutions of the F-term equation are associated with the complex structure moduli of $X$. 
In addition, the $U(1)$ equivariant parameter $\hbar$ on ${\IS}^2$ is introduced, if we consider the A-twisted $\mathcal{N}=(2,2)$ GLSM on the $\Omega$-deformed sphere 
which has the generator of ${\IS}^1$ acting on ${\IS}^2$ \cite{Closset:2015rna} (see also \cite{Benini:2015noa}). The $\Omega$-deformation parameter is given by $\hbar$ and the ${\IS}^1$ action has two fixed points at the north and south poles on ${\IS}^2$.

\begin{table}[htb]
\begin{center}
\begin{tabular}{|c|c|c|c|}
\hline
Field & $U(1)$ & Twisted mass & $U(1)_V$ \\ \hline
$\Phi_i$ & +1 & $-w_i$ & $0$ \\
$P_a$ & $-l_a$ & $\lambda_a$ & $2$ \\
\hline
\end{tabular}
   \vspace{0.3cm}
\caption{Matter content for the complete intersection $X_{\bm{l}}\subset \mathbb{C}\textbf{P}^{N-1}$. Here $i=0, \ldots, N-1$ and $a=1, \ldots, n$.}
\label{glsm_mat}
\end{center}
\end{table}
In the following we will  focus mainly on the smooth complete intersection $X=X_{\bm{l}}$, defined by homogeneous degree $l_{a=1,\ldots,n}$ polynomial equations $F_a(\bm{\phi})=0$ in $\mathbb{C}\textbf{P}^{N-1}\ni (\phi_0:\ldots:\phi_{N-1})$ with $l_1+\cdots +l_n\le N$. The equivariant Gromov-Witten theory of the complete intersection $X_{\bm{l}}$ corresponds to the $G=U(1)$ GLSM with the matter contents listed in Table \ref{glsm_mat} and a superpotential $W(P_{a}, \bm{\Phi})=\sum_{a=1}^{n}P_{a}F_a(\bm{\Phi})$. The D-term equation has the FI parameter $\xi$ and the theta-angle $\theta$, and it realizes $\mathbb{C}\textbf{P}^{N-1}$ as the moduli space. 
On the other hand, twisted masses $-w_i$ (resp. $\lambda_a$)
for the matter multiplets $\Phi_i$ (resp. $P_a$)
are identified with the equivariant parameters of the Gromov-Witten theory. 

The GLSM also has the vector $U(1)_V$ R-symmetry and the superpotential needs to have R-charge $2$. The R-charge of matter multiplets $\Phi_i$ (resp. $P_a$) are assigned to be  $0$ (resp. $2$). In \cite{Closset:2015rna} (see also \cite{Benini:2015noa}) it is found that the A-twisted correlator for a function (operator) $\mathcal{O}^{(\textrm{N})}(\sigma)$ (resp. $\mathcal{O}^{(\textrm{S})}(\sigma)$) of the complex scalar field $\sigma$ in the $U(1)$ gauge multiplet inserted at the north (resp. south) pole of ${\IS}^2$, is given exactly by 
\begin{align}
&\left<\mathcal{O}^{(\textrm{N})}(\sigma)\mathcal{O}^{(\textrm{S})}(\sigma)\right>
\nonumber \\
&=\sum_{d=0}^{\infty}x^d
\oint_{\gamma}\frac{d\sigma}{2\pi i}\;
\frac{\prod_{a=1}^n\prod_{m=0}^{l_a d}\left(l_a\sigma-\lambda_a-\frac{l_ad}{2}\hbar+m\hbar\right)}
{\prod_{i=0}^{N-1}\prod_{m=0}^{d}\left(\sigma-w_i-\frac{d}{2}\hbar+m\hbar\right)}
\;\mathcal{O}^{(\textrm{N})}\big(\sigma-\frac{d}{2}\hbar\big)\mathcal{O}^{(\textrm{S})}\big(\sigma+\frac{d}{2}\hbar\big),
\label{correl_glsm}%(\ref{correl_glsm})
\end{align}
where the contour $\gamma\subset\mathbb{C}$ encloses the poles $\sigma=w_i-\frac{d}{2}\hbar+p\hbar$ ($i=0,\ldots,N-1$, $p=0,\ldots,d$) of the integrand. Here
$$
x=\mathrm{e}^{-2\pi \xi+\mathrm{i} \theta},
$$
and 
$x$ must be replaced with $\mu^{N-(l_1+\cdots +l_n)}x$
which is modified by the RG invariant energy scale $\mu$
for the Fano ($l_1+\cdots +l_n< N$) case, 
because the FI parameter $\xi$ runs under the renormalization group (RG) flow. 
In the following we will use the same symbol $x$ for the modified one.
Actually it is found that the correlator (\ref{correl_glsm}) is factorized as \cite{Ueda:2016wfa}:
\begin{align}
\left<\mathcal{O}^{(\textrm{N})}(\sigma)\mathcal{O}^{(\textrm{S})}(\sigma)\right>
=\sum_{i=0}^{N-1}
\oint_{\sigma=w_i}\frac{d\sigma}{2\pi i}\;
Z_{\textrm{1-loop}}(\sigma;\bm{w},\bm{\lambda})
Z_{\textrm{vortex}}^{(\textrm{N})}(\sigma;x,\bm{w},\bm{\lambda},\hbar)
Z_{\textrm{vortex}}^{(\textrm{S})}(\sigma;x,\bm{w},\bm{\lambda},-\hbar),
\nonumber
\end{align}
where
\begin{align}
Z_{\textrm{1-loop}}(\sigma;\bm{w},\bm{\lambda})&=
\frac{\prod_{a=1}^n\left(l_a\sigma-\lambda_a\right)}
{\prod_{i=0}^{N-1}\left(\sigma-w_i\right)},
\nonumber\\
Z_{\textrm{vortex}}^{(\textrm{N},\textrm{S})}(\sigma;x,\bm{w},\bm{\lambda},\hbar)&=\sum_{d=0}^{\infty}x^d
\frac{\prod_{a=1}^n\prod_{m=1}^{l_a d}\left(l_a\sigma-\lambda_a+m\hbar\right)}
{\prod_{i=0}^{N-1}\prod_{m=1}^{d}\left(\sigma-w_i+m\hbar\right)}\;
\mathcal{O}^{(\textrm{N},\textrm{S})}(\sigma-d\hbar).
\nonumber
\end{align}
In this factorization the factor $Z_{\textrm{vortex}}^{(\textrm{N})}$ (resp. $Z_{\textrm{vortex}}^{(\textrm{S})}$) is interpreted as ``off-shell'' vortex partition function with the operator $\mathcal{O}^{(\textrm{N})}(\sigma)$ (resp. $\mathcal{O}^{(\textrm{S})}(\sigma)$) at the north (resp. south) pole of ${\IS}^2$. 
Excluded the operators $\mathcal{O}^{(\textrm{N},\textrm{S})}(\sigma)$, this vortex partition function agrees with the $I$-function for the complete intersection $X_{\bm{l}}$. In particular for the Fano ($l_1+\cdots +l_n< N$) case this agrees with the $J$-function.
%\footnote{
%In mathematical literature \cite{Iritani09}, the cohomology $H_T^*(\mathbb{C}\textbf{P}^{N-1})$-valued $J$-function is turned into the $%\mathbb{C}$-valued function by taking integral over $X$ (i.e. taking Poincar\'e pairing with a point).
%The resulting $\mathbb{C}$-valued function is referred to as \textit{quantum period},
%and it corresponds directly to the oscillatory integral via the mirror symmetry.}
By taking the residue at $\sigma=w_0$ ($\sigma=w_i$ in general) in the above correlator, 
we obtain the (``on-shell'') vortex partition function \cite{Shadchin:2006yz,Dimofte:2010tz,Yoshida:2011au,Bonelli:2011fq}:
\begin{align}
Z_{\textrm{vortex}}^{X_{\bm{l};\bm{w},\bm{\lambda}}}(x)=\sum_{d=0}^{\infty}
\frac{x^d}{d!\hbar^d}
\frac{\prod_{a=1}^n\prod_{m=1}^{l_a d}\left(l_aw_0-\lambda_a+m\hbar\right)}
{\prod_{i=1}^{N-1}\prod_{m=1}^{d}\left(w_0-w_i+m\hbar\right)}.
\label{vortex_glsm}%(\ref{vortex_glsm})
\end{align}
This (``on-shell'') vortex partition function agrees with the (``on-shell'') equivariant $J$-function introduced in Appendix \ref{app:J_function}, and obeys the GKZ equation (see Lemma \ref{lemm:GKZ_calJ}):
\begin{align}
\widehat{A}_{X_{\bm{l};\bm{w},\bm{\lambda}}}(\widehat{x},\widehat{y})\bigl(x^{w_0/\hbar}Z_{\textrm{vortex}}^{X_{\bm{l};\bm{w},\bm{\lambda}}}(x)\bigr)=0.
\label{A_vortex}
\end{align}

\begin{remark}
In \cite{Benini:2012ui,Doroud:2012xw} the $\mathcal{N}=(2,2)$ GLSM partition function on ${\IS}^2$ is computed exactly, and the factorization into the vortex partition functions is shown. In \cite{Bonelli:2013mma} under the identification of the inverse radius $r^{-1}$ of ${\IS}^2$ with the equivariant parameter $\hbar$, this vortex partition function is reinterpreted as the $I$($J$)-function. In \cite{Hori:2013ika} it is shown that the GLSM partition on ${\IS}^2$ is also factorized into two hemisphere partition functions
and one annulus partition function, where the hemisphere partition function is shown to give the D-brane central charge (see also \cite{Sugishita:2013jca,Honda:2013uca}).
\end{remark}

\begin{remark}
In \cite{Benini:2015noa} the twisted partition function of the 3d $\mathcal{N}=2$ gauge theory on ${\IS}^2\times {\IS}^1$ with the $\Omega$-deformation is exactly computed, and shown to be factorized into the K-theoretic vortex partition functions (see also \cite{Benini:2013yva,Fujitsuka:2013fga}). In  \cite{Benini:2015noa} it is also discussed the factorization into the elliptic vortex partition functions of the twisted partition function of the 4d $\mathcal{N}=1$ gauge theory on ${\IS}^2\times T^2$ with the $\Omega$-deformation (see also \cite{Yoshida:2014qwa,Peelaers:2014ima}).
\end{remark}

\subsection{Vantage point 2: $J$-function as the brane partition function in the local A-model}\label{subsec-a_J}

Here we will consider the 4d $\mathcal{N}=2$ gauge theory on $\Omega$-deformed ${\IR}^4\cong {\IC}^2 \ni (z_1,z_2)$ with the gauge group $G=SU(N)$.
In this gauge theory we can put a half-BPS surface operator \cite{Gukov:2006jk} as the codimension 2 defect along the $z_1$-plane $D$ at $z_2=0$.
For the equation of motion of the gauge theory with a surface operator,
the solitonic solutions which are the composite of 4d instantons and 2d vortices can be found, 
and they are called \textit{ramified instantons} \cite{KM1,KM2}.
The moduli space for the ramified instantons is characterized by the flag manifold $G/\mathbb{L}$ 
where  $\mathbb{L}=S\left[U(n_1)\times U(n_2)\times \cdots \times U(n_M)\right]$ with $N=n_1+n_2+\cdots+n_M$ denotes the Levi subgroup of $G=SU(N)$ \cite{Braverman:2010ef,Kanno:2011fw} (see also \cite{Braverman:2004vv,Braverman:2004cr,Alday:2010vg,Kozcaz:2010yp}). 

The generating function of the number of ramified instantons is called the \textit{ramified instanton partition function}.
If we take a (decoupling) limit for the instanton counting parameter in the ramified instanton partition function and suppress the counting of the 4d instantons  (i.e. focus only on the ramified instantons with the instanton number zero), the generating function reduces to the vortex partition function in an $\mathcal{N}=(2,2)$ GLSM described by the map:
\begin{align}
\overline{D}=\mathbb{C}\textbf{P}^1\ \overset{\bm{d}}{\longrightarrow}\ G/\mathbb{L},\qquad
\bm{d} \in H_2(G/\mathbb{L},{\IZ}),
\label{surf_vortex}%(\ref{surf_vortex})
\end{align}
where $\overline{D}$ denotes a one-point compactification of the $z_1$-plane $D$.

In particular, the surface operator is referred to as \textit{simple type}, if it has the Levi subgroup $\mathbb{L}=U(1)\times SU(N-1)$.
Here we will consider the surface operator of the simple type and $D=\mathbb{R}^2$ (i.e. $\overline{D}=\mathbb{S}^2$).
In this case we find $G/\mathbb{L}\cong \mathbb{C}\textbf{P}^{N-1}$.

If the 4d gauge theory does not involve any matter fields (i.e. pure Yang-Mills theory),
the resulting GLSM on $\overline{D}=\mathbb{S}^2$ consists of a $U(1)$ vector multiplet and the matter multiplets $\Phi_i$'s 
listed in Table \ref{glsm_mat}, and the superpotential is absent in this GLSM.
By taking the decoupling limit of the ramified instanton partition function, we obtain the  (``on-shell'') vortex partition function $Z_{\textrm{vortex}}^{\mathbb{C}\textbf{P}^{N-1}_{\bm{w}}}$ for this GLSM, and 
it is given by the specialization $n=0$ of (\ref{vortex_glsm}) because multiplets $P_a$'s are absent in this case:
\begin{align}
Z_{\textrm{vortex}}^{\mathbb{C}\textbf{P}^{N-1}_{\bm{w}}}(x)=\sum_{d=0}^{\infty}
\frac{x^d}{d!\hbar^d}
\frac{1}
{\prod_{i=1}^{N-1}\prod_{m=1}^{d}\left(w_0-w_i+m\hbar\right)}.
\label{vortex_glsm_cpN}%(\ref{vortex_glsm})
\end{align}

If the 4d $U(N)$ gauge theory involves $n$  ($n\le N$) matter hypermultiplets in the fundamental representation, 
we will find the same GLSM that we have considered in the vantage point 1.
The decoupling limit of the ramified instanton partition function agrees with the  (``on-shell'') vortex partition function $Z_{\textrm{vortex}}^{X_{\bm{l};\bm{w},\bm{\lambda}}}(x)$ in (\ref{vortex_glsm}).
%the resulting GLSM on $\overline{D}=\mathbb{S}^2$ consists of a $U(1)$ vector multiplet and the matter multiplets $\Phi_i$'s and $P_a$'s listed in Table \ref{glsm_mat}, and the superpotential arises in this GLSM. (Namely this is) 

Subsequently we will survey on the punchline of the geometric engineering which realizes the (``on-shell'') vortex partition function
as the brane partition function in the open topological A-model. And then we will see how the GKZ equation appears in the open topological A-model on the strip geometry.

\subsubsection{Geometric engineering of the (``on-shell'') equivariant $J$-function for $\mathbb{C}\textbf{P}^{N-1}_{\bm{w}}$}

Our starting point is the open topological A-model on the local toric Calabi-Yau 3-fold $Y_{\bm{l}}$ which is specified by charge vectors $\bm{l}_i\in \mathbb{Z}^{m+3}$ ($i=1,\ldots,m$).
The local toric Calabi-Yau 3-fold $Y_{\bm{l}}$ is the quotient such that
\begin{align}
Y_{\bm{l}}=
\Big\{\;(X_1,\ldots,X_{m+3})\in {\IC}^{m+3}\;
\Big|\;\sum_{\alpha=1}^{m+3}l_{i,\alpha}|X_\alpha|^2=\mathrm{Re}\left(\log Q_i\right)\;
\Big\}/U(1)^m,
\label{local_a_geom}%(\ref{local_a_geom})
\end{align}
where $U(1)$ charge vectors $\bm{l}_i=(l_{i,1},\ldots,l_{i,m+3})$, $(i=1,2,\ldots,m)$ obey the Calabi-Yau condition $\sum_{\alpha=1}^{m+3}l_{i,\alpha}=0$.
Here $Q_i$'s denote $Q_i=\mathrm{e}^{t_i}$ with the K\"ahler parameters $t_i$, and $U(1)^m$ acts on $X_{\alpha}$ as $X_{\alpha} \to \mathrm{e}^{\mathrm{i}\sum_{i=1}^m \epsilon_i l_{i,\alpha}}X_{\alpha}$.

If $Y_{\bm{l}}$ is  chosen to be the $A_{N-1}$-fibration over $\mathbb{C}\textbf{P}^1$, the physical spectra (i.e. vector multiplets, hypermultiplets, etc.) of the 4d $U(N)$ gauge theory are realized from the topological A-model in the string theoretical way \cite{Katz:1996fh,Katz:1997eq}.
Such a realization of the 4d gauge theory is known as the \textit{geometric engineering}.

In the framework of the geometric engineering,
the surface operator of the simple type in the 4d gauge theory is realized from the topological A-model on $Y_{\bm{l}}$ 
by introducing the 3d object of the topological A-model referred to as \textit{toric brane}, which wraps around the special Lagrangian submanifold $L \in Y_{\bm{l}}$ \cite{Dimofte:2010tz}.
In a local atlas of $Y_{\bm{l}}$ which covers $X_{\alpha}=X_{\beta}=X_{\gamma}=0$, the special Lagrangian submanifold $L \cong {\IC}\times {\IS}^1$ is found as the following locus  (see Theorem 3.1 in \cite{Harvey:1982}):
\begin{align}
|X_{\alpha}|^2-|X_{\gamma}|^2=\mathrm{Re}\left(\log \mathsf{x}\right),\ \
|X_{\beta}|^2-|X_{\gamma}|^2=0,\ \ \mathrm{Im}(X_\alpha X_{\beta}X_{\gamma})=0,\ \ 
\mathrm{Re}(X_{\alpha}X_{\beta}X_{\gamma})\ge 0,
\label{lag_a_geom}%(\ref{lag_a_geom})
\end{align}
where $\mathsf{x}=\mathrm{e}^u$ denotes a open string modulus of the toric brane. 
The toric brane is represented by 
a ray attached on one of the lines in the web diagram of the toric variety \cite{Aganagic:2000gs}.
In Figure \ref{local_A_geom}  a toric brane insertion at the lowest leg in the web diagram is depicted.

On the basis of the string theoretical discussions\footnote{There are no mathematically rigorous definition for the brane partition function
 as the generating function of the open Gromov-Witten invariants in general. But switched to the type IIA superstring picture, we can find it as the enumeration of degeneracies of the open BPS states \cite{Ooguri:1999bv} which arise from D0-D2-D4 brane bound states
 for the case of local toric Calabi-Yau 3-fold.  In this sense we have only the string theoretical definition of the brane partition function.},
the brane partition $Z_{\textrm{A-brane}}^{Y_{\bm{l}}}(\mathsf{x})$ is defined as the generating function for the number of the
holomorphic embedding maps of the open Riemann surface (referred to as the world-sheet) which ends on the toric brane $L\in Y_{\bm{l}}$ \cite{Ooguri:1999bv}. 
For the case that $Y_{\bm{l}}$ is the $A_{N-1}$-fibration over $\mathbb{C}\textbf{P}^1$,
the brane partition function is computed by various physical techniques such as the topological vertex \cite{Aganagic:2003db}, the open BPS state counting \cite{Ooguri:1999bv} and the open BPS wall-crossing \cite{Aganagic:2009cg}.
From various observations (see e.g. \cite{Awata:2010bz}),
it is proposed that the brane partition function $Z_{\textrm{A-brane}}^{Y_{\bm{l}}}(\mathsf{x})$ agrees with the K-theoretic generalization of the ramified instanton partition function.

Here we will consider the decoupling limit of the 4d instantons at the level of the toric geometry.
Using the dictionary of the geometric engineering, we find that the decoupling limit of the 4d instantons
corresponds to the large volume limit of the base $\mathbb{C}\textbf{P}^1$ in $A_{N-1}$-fibration over $\mathbb{C}\textbf{P}^1$. 
After taking this large volume limit,  $Y_{\bm{l}}$ reduces to a local toric Calabi-Yau 3-fold $Y_{N-1}$ which consists of the $N-1$ copies of the local Calabi-Yau 3-fold: $\mathcal{O}(-2)\oplus \mathcal{O}(0) \to \mathbb{C}\textbf{P}^1$ (i.e. $(-2,0)$ curve).
More precisely $Y_{N-1}$ is defined by $N-1$ charge vectors $\bm{l}_{i=1,\ldots,N-1}$:
\begin{align}
\begin{split}
\bm{l}_1&=(0,1,-2,1,0,0,0,\ldots,0,0,0),\\
\bm{l}_2&=(0,0,1,-2,1,0,0,\ldots,0,0,0),\\
\bm{l}_3&=(0,0,0,1,-2,1,0,\ldots,0,0,0),\\
\vdots\\
\bm{l}_{N-1}&=(0,0,0,0,0,0,0,\ldots,1,-2,1).
\label{02_chain}%(\ref{02_chain})
\end{split}
\end{align}
Such a local toric Calabi-Yau 3-fold $Y_{N-1}$ is known as the \textit{strip geometry} \cite{Iqbal:2004ne} of the $(-2,0)$ curves, and 
the web diagram of $Y_3$  is depicted in Figure \ref{local_A_geom}.

\begin{figure}[t]
 \centering
  \includegraphics[width=40mm]{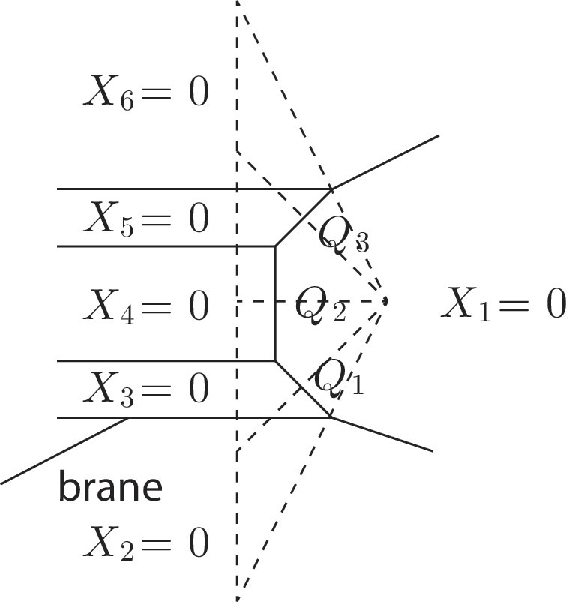}
 \caption{Strip geometry $Y_3$ of three $(-2,0)$ curves. The dashed line and solid line describe the toric diagram and the dual web diagram, respectively. In this diagram a toric brane is inserted at $X_2=0$, and this gives a Lagrangian submanifold (\ref{lag_a_geom}) with $\alpha=1, \beta=3, \gamma=2$.}
\label{local_A_geom}%Figure \ref{local_A_geom}
\end{figure}

%Now we will introduce the \textit{toric brane} to the above local toric variety.
%The toric brane is the  the string theoretical object which wraps around the special Lagrangian submanifold $L\subset Y_{\bm{l}}$.
%The toric brane which warps around the Lagrangian submanifold  and it describes the toric brane wrapping around the Lagrangian submanifold (\ref{lag_a_geom}) with $\alpha=1, \beta=3, \gamma=2$.  

Now we will see the brane partition function $Z_{\textrm{A-brane}}^{Y_{N-1}}(\mathsf{x})$ for the strip geometry $Y_{N-1}$ of the $(-2,0)$ curves.
 As a result of topological vertex computations \cite{Iqbal:2004ne,Dimofte:2010tz}, we obtain the manifest form of the brane partition function which is normalized %by the closed string partition function on $Y_{N-1}$ 
s.t. $Z_{\textrm{A-brane}}^{Y_{N-1}}(\mathsf{x}=0)=1$:
\begin{align}
Z_{\textrm{A-brane}}^{Y_{N-1}}(\mathsf{x})=\sum_{d=0}^{\infty}\frac{1}{\prod_{i=0}^{N-1}\prod_{m=1}^{d}(1-\widetilde{Q}_iq^m)}\;
(q^{1/2}\mathsf{x})^{d},
\label{Zopen}%(\ref{Zopen})
\end{align}
up to the framing ambiguity (see Remark \ref{rem:framing}). Here
$$
\widetilde{Q}_0=1,\qquad
\widetilde{Q}_i=q^{-1}\prod_{1\le j \le i} Q_j,\qquad
q=\mathrm{e}^{-g_s},
$$
and $g_s$ is the topological string coupling constant.

%the K-theoretic generating function of the number of the ramified instantons (so called \textit{ramified instanton partition function}) for the $G=SU(N)$ gauge theory is
%given by the brane partition function $Z_{\textrm{A-brane}}^{Y}(\mathsf{x})$ where $Y$ is chosen to be the $A_{N-1}$-fibration over $\mathbb{C}\textbf{P}^1$.
%On the basis of this string theoretical realization of the ramified instanton partition function, we will see the punchline of the discussions on the relation
%between $Z_{\textrm{A-brane}}^{Y_{N-1}}(\mathsf{x})$ and 
% the vortex partition function of the GLSM as follows.

%By the physical interpretation of the geometric engineering it is proposed that $Z_{\textrm{A-brane}}^{Y_{N-1}}(\mathsf{x})$
%gives the generating function of the ramified instantons (so called \textit{ramified instanton partition function}).

%Since we have taken the large volume limit for the $A_{N-1}$-fibration over $\mathbb{C}\textbf{P}^1$ corresponding to 
%the decoupling limit of the 4d instantons,
By construction,
the brane partition function $Z_{\textrm{A-brane}}^{Y_{N-1}}(\mathsf{x})$ should agree with the K-theoretic version of the 
vortex partition function.
To obtain the vortex partition function, we need to take a cohomological limit.
For this purpose, we will reparametrize parameters in  $Z_{\textrm{A-brane}}^{Y_{N-1}}(\mathsf{x})$:
$$
g_s=\beta \hbar,\qquad
\mathsf{x}=\beta^{N}x,\qquad
\widetilde{Q}_i=\mathrm{e}^{-\beta (w_{0}-w_{i})}.
$$
After taking the cohomological limit $\beta \to 0$ we find that the brane partition function (\ref{Zopen}) reduces to 
(\ref{vortex_glsm_cpN}) \cite{Dimofte:2010tz,Bonelli:2011fq}:
$$
Z_{\textrm{A-brane}}^{Y_{N-1}}(\mathsf{x})\quad
\mathop{\longrightarrow}^{\beta\to 0}\quad
Z_{\textrm{vortex}}^{\mathbb{C}\textbf{P}^{N-1}_{\bm{w}}}(x).
$$
Thus we see that the vortex partition function $Z_{\textrm{vortex}}^{\mathbb{C}\textbf{P}^{N-1}_{\bm{w}}}(x)$ is found from the open topological A-model on the strip geometry. 
Combing with the consequences in the vantage point 1, we find yet another realization of the (``on-shell'') equivariant  $J$-function for $\mathbb{C}\textbf{P}^{N-1}_{\bm{w}}$ as the brane partition function in the topological A-model on the local toric Calabi-Yau 3-fold $Y_{N-1}$ defined by the charge vectors (\ref{02_chain}) \cite{Dimofte:2010tz}.

\begin{remark}\label{rem:framing}
In the computation (\ref{Zopen}), there is a framing ambiguity $f\in {\IZ}$ of the brane at infinity as
$$
\mathsf{x}^d\ \ \longrightarrow\ \
(-1)^{fd}q^{fd(d-1)/2}\,\mathsf{x}^d.
$$
But this ambiguity becomes irrelevant under the cohomological limit $\beta \to 0$.
\end{remark}

\subsubsection{Geometric engineering of the (``on-shell'') equivariant  $J$-functions for degree $1$ complete intersections in $\mathbb{C}\textbf{P}^{N-1}_{\bm{w}}$}

Next we will consider the geometric engineering of the 4d $SU(N)$ gauge theory with $n$ ($n\le N$) matter hypermultiplets in the fundamental representation. 
For this purpose we choose the local toric Calabi-Yau 3-fold $Y_{\bm{l}}$ to be
the $A_{N-1}$-fibration over $\mathbb{C}\textbf{P}^1$ with blow-ups at $n$ points \cite{Katz:1996fh,Katz:1997eq}.

For a particular case $n=N$, 
after taking the large volume limit of the base $\mathbb{C}\textbf{P}^1$ (which corresponds to the decoupling limit of the 4d instantons),
$Y_{\bm{l}}$ reduces to the strip geometry $Y_{N-1,N}$ which consists of $2N-1$ copies of the local Calabi-Yau $\mathcal{O}(-1)\oplus \mathcal{O}(-1)\to \mathbb{C}\textbf{P}^1$ ($(-1,-1)$ curves). 
Such a strip geometry $Y_{N-1,N}$ is defined by $2N-1$ charge vectors $\bm{l}_{\lambda,i=1,\ldots,N}$ and $\bm{l}_{w,i=1,\ldots,N-1}$ \cite{Iqbal:2004ne}:
\begin{align}
\begin{split}
\bm{l}_{\lambda,1}&=(1,-1,-1,1,0,0,0,\ldots,0,0,0,0),\\
\bm{l}_{w,1}&=(0,1,-1,-1,1,0,0,\ldots,0,0,0,0),\\
\bm{l}_{\lambda,2}&=(0,0,1,-1,-1,1,0,\ldots,0,0,0,0),\\
\vdots\\
\bm{l}_{w,N-1}&=(0,0,0,0,0,0,0,\ldots,-1,-1,1,0),\\
\bm{l}_{\lambda,N}&=(0,0,0,0,0,0,0,\ldots,1,-1,-1,1).
\label{11_chain}%(\ref{11_chain})
\end{split}
\end{align}
According to  (\ref{local_a_geom}),  the K\"ahler moduli parameters $Q_{\lambda,i}=\mathrm{e}^{t_{\lambda,i}}$ and $Q_{w,i}=\mathrm{e}^{t_{w,i}}$ are associated with the charge vectors $\bm{l}_{\lambda,i}$ and $\bm{l}_{w,i}$, respectively. 

\begin{figure}[t]
 \centering
  \includegraphics[width=45mm]{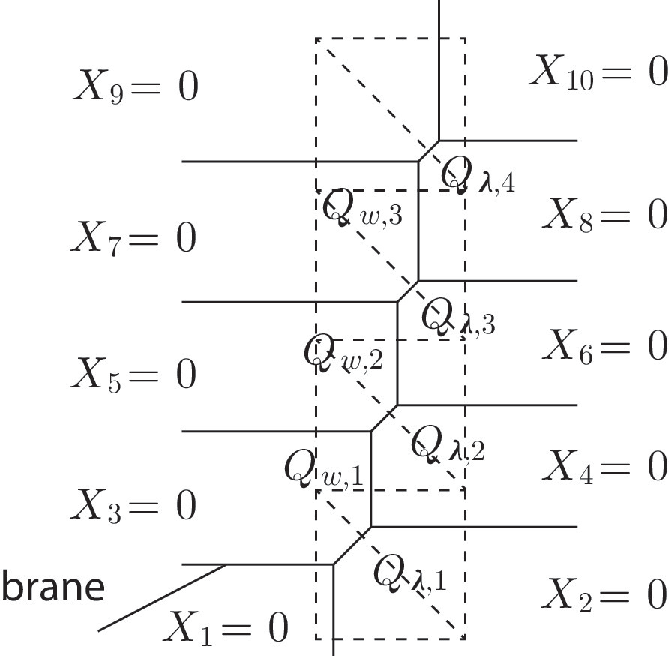}
 \caption{Strip geometry $Y_{3,4}$ consisting of seven $(-1,-1)$ curves ($N=4$). The dashed line and solid line describe the toric diagram and the dual web diagram, respectively. In the diagram a toric brane is inserted at $X_1=0$, and this gives a Lagrangian submanifold (\ref{lag_a_geom}) with $\alpha=2, \beta=3, \gamma=1$.}
\label{local_A_geom_m}%Figure \ref{local_A_geom_m}
\end{figure}

Now we will introduce a toric brane which warps around the Lagrangian submanifold (\ref{lag_a_geom}) in $Y_{N-1,N}$.
In Figure \ref{local_A_geom_m}, a toric brane wrapping around the Lagrangian submanifold with $\alpha=2, \beta=3, \gamma=1$ is depicted as an insertion in the lowest leg in the web diagram.
For this geometric set-up, the brane partition function $Z_{\textrm{A-brane}}^{Y_{N-1,N}}(\mathsf{x})$  is computed in \cite{Iqbal:2004ne,Dimofte:2010tz}:
\begin{align}
Z_{\textrm{A-brane}}^{Y_{N-1,N}}(\mathsf{x})=\sum_{d=0}^{\infty}\frac{\prod_{i=1}^{N}\prod_{m=1}^{d}(1-\widetilde{Q}_{\lambda,i}q^{m-1})}{\prod_{i=0}^{N-1}\prod_{m=1}^{d}(1-\widetilde{Q}_{w,i}q^m)}\;
(q^{1/2}\mathsf{x})^{d},
\label{Zopen_m}%(\ref{Zopen_m})
\end{align}
up to the framing ambiguity, where
$$
\widetilde{Q}_{w,0}=1,\qquad
\widetilde{Q}_{w,i}=q^{-1}\prod_{1\le j \le i} Q_{\lambda,j}Q_{w,j},\qquad
\widetilde{Q}_{\lambda,i}=Q_{\lambda,1}\prod_{1\le j \le i-1} Q_{w,j}Q_{\lambda,j+1},\qquad
q=\mathrm{e}^{-g_s}.
$$
Since the brane partition function on the strip geometry realizes the K-theoretic version of the vortex partition function, 
we adopt the following reparametrizations:
$$
g_s=\beta \hbar,\qquad
\mathsf{x}=x,\qquad
\widetilde{Q}_{w,i}=\mathrm{e}^{-\beta (w_{0}-w_{i})},\qquad
\widetilde{Q}_{\lambda,i}=\mathrm{e}^{-\beta (w_{0}-\lambda_{i}+\hbar)},
$$
and take the cohomological limit. In $\beta\to 0$ we find that the brane partition function (\ref{Zopen_m}) reduces to the vortex partition function (\ref{vortex_glsm})
for the GLSM on $\mathbb{S}^2$ with $n=N$:
\begin{align}
Z_{\textrm{A-brane}}^{Y_{N-1,N}}(\mathsf{x})\quad
\mathop{\longrightarrow}^{\beta\to 0}\quad
Z_{\textrm{vortex}}^{X_{\bm{l}=\bm{1};\bm{w},\{\lambda_1,\ldots,\lambda_N\}}}(x).
\nonumber
\end{align}
Thus we also find the realization of the vortex partition function $Z_{\textrm{vortex}}^{X_{\bm{l}=\bm{1};\bm{w},\{\lambda_1,\ldots,\lambda_N\}}}(x)$ from the open topological A-model on the strip geometry $Y_{N-1,N}$

\begin{figure}[t]
 \centering
  \includegraphics[width=90mm]{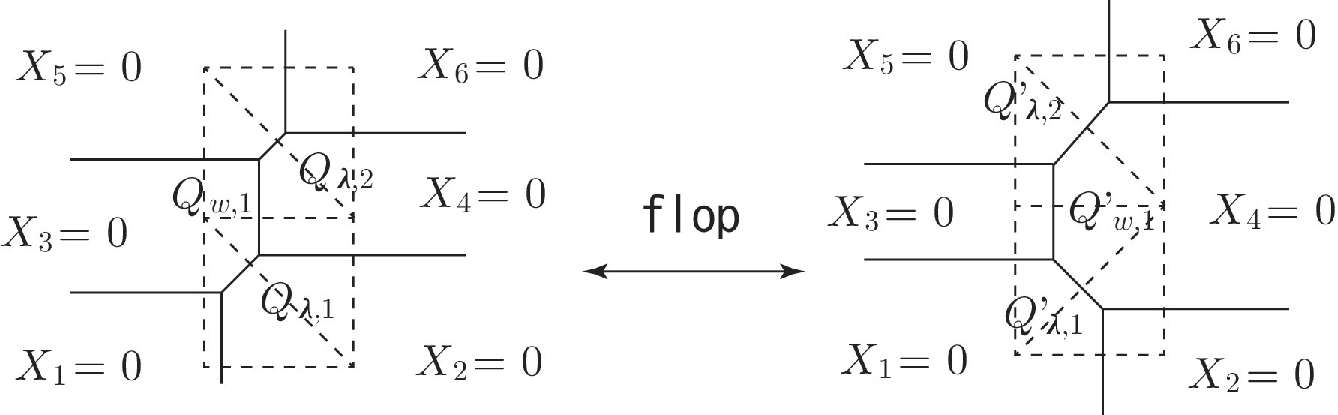}
 \caption{A flop of strip consisting of three $(-1,-1)$ curves gives a strip made of one $(-2,0)$ curve and two $(-1,-1)$ curves.}
\label{local_A_geom_flop}%Figure \ref{local_A_geom_flop}
\end{figure}
For the case $n<N$ we obtain the  the strip geometry $Y_{N-1,n}$ by taking 
the large volume limit of the base $\mathbb{C}\textbf{P}^1$ in $Y_{\bm{l}}$.
Such a strip geometry $Y_{N-1,n}$ is made of $N-1$ copies of $(-2,0)$ and $n$ copies of $(-1,-1)$ curves, 
and such local toric Calabi-Yau 3-fold is found by acting ``flops'' and ``decouplings'' repeatedly to the strip geometry $Y_{N-1,N}$.
%which is purely made of $(-1,-1)$ curves.
%defined by the charge vectors (\ref{11_chain}).
To see the actions of  ``flops'' and ``decouplings'' manifestly,
we will focus on the $N=2$ case:
\begin{align}
\begin{split}
\bm{l}_{\lambda,1}&=(1,-1,-1,1,0,0),\\
\bm{l}_{w,1}&=(0,1,-1,-1,1,0),\\
\bm{l}_{\lambda,2}&=(0,0,1,-1,-1,1).
\label{flop_ex1}%(\ref{flop_ex1})
\end{split}
\end{align}
The K\"ahler moduli parameters $Q_{\lambda,1}$, $Q_{w,1}$, and $Q_{\lambda,2}$ are associated with the charge vectors $\bm{l}_{\lambda,1}$, $\bm{l}_{w,1}$, and $\bm{l}_{\lambda,2}$, respectively. By the flop transitions described in Figure \ref{local_A_geom_flop}, we obtain a strip geometry made of one $(-2,0)$ curve and two $(-1,-1)$ curves, and it is given by the charge vectors
\begin{align}
\begin{split}
\bm{l}'_{\lambda,1}&=(-1,1,1,-1,0,0)=-\bm{l}_{\lambda,1},\\
\bm{l}'_{w,1}&=(1,0,-2,0,1,0)=\bm{l}_{\lambda,1}+\bm{l}_{w,1},\\
\bm{l}'_{\lambda,2}&=(0,0,1,-1,-1,1)=\bm{l}_{\lambda,2}.
\label{flop_ex2}%(\ref{flop_ex2})
\end{split}
\end{align}
The K\"ahler moduli parameters $Q'_{\lambda,1}$, $Q'_{w,1}$, and $Q'_{\lambda,2}$ are associated with the charge vectors $\bm{l}'_{\lambda,1}$, $\bm{l}'_{w,1}$, and $\bm{l}'_{\lambda,2}$, respectively, and
they are related with $Q_{\lambda,1}$, $Q_{w,1}$, and $Q_{\lambda,2}$ by
%Under the flop 
%the K\"ahler moduli parameters are related as
\begin{align}
(Q'_{\lambda,1}, Q'_{w,1}, Q'_{\lambda,2})=(Q_{\lambda,1}^{-1}, Q_{\lambda,1}Q_{w,1}, Q_{\lambda,2}).
\label{flop_kahler}%(\ref{flop_kahler})
\end{align}
After taking the decoupling limit $Q'_{\lambda,1}\to 0$ or $Q'_{\lambda,2}\to 0$, we obtain the strip geometry made of one $(-2,0)$ curve and one $(-1,-1)$ curve. 

Similarly for general $N$ one can find a strip geometry $Y_{N-1,n}$ with $n<N$. 
%We see that, by taking into account a normalization which does not depend on the open string modulus, and up to the framing ambiguity
As a consequence, the brane partition function for a toric brane in $Y_{N-1,n}$ takes the form \cite{Iqbal:2004ne,Dimofte:2010tz}:
\begin{align}
Z_{\textrm{A-brane}}^{Y_{N-1,n}}(\mathsf{x})=\sum_{d=0}^{\infty}\frac{\prod_{a=1}^{n}\prod_{m=1}^{d}(1-\widetilde{Q}_{\lambda,a}q^{m-1})}{\prod_{i=0}^{N-1}\prod_{m=1}^{d}(1-\widetilde{Q}_{w,i}q^m)}\;
(q^{1/2}\mathsf{x})^{d},\qquad
\widetilde{Q}_{w,0}=1.
\label{Zopen_m_g}%(\ref{Zopen_m_g})
\end{align}
up to the framing ambiguity and the normalization which is independent of the open string moduli.
After taking the cohomological limit $\beta \to 0$ under the reparametrizations:
$$
g_s=\beta \hbar,\qquad
\mathsf{x}=\beta^{N-n}x,\qquad
\widetilde{Q}_{w,i}=\mathrm{e}^{-\beta (w_{0}-w_{i})},\qquad
\widetilde{Q}_{\lambda,a}=\mathrm{e}^{-\beta (w_{0}-\lambda_{a}+\hbar)},
$$
the brane partition function (\ref{Zopen_m_g}) reduces to the vortex partition function $Z_{\textrm{vortex}}^{X_{\bm{l}=\bm{1};\bm{w},\bm{\lambda}}}$ 
in  (\ref{vortex_glsm}) \cite{Dimofte:2010tz,Bonelli:2011fq}:
$$
Z_{\textrm{A-brane}}^{Y_{N-1,n}}(\mathsf{x})\quad
\mathop{\longrightarrow}^{\beta\to 0}\quad
Z_{\textrm{vortex}}^{X_{\bm{l}=\bm{1};\bm{w},\bm{\lambda}}}(x).
$$
Combining this result with the vantage point 1 again for $n<N$, this shows a realization of the (``on-shell'') equivariant $J$-function for the Fano complete intersection of degree $l_{a=1,\ldots,n}=1$ hypersurfaces in $\mathbb{C}\textbf{P}^{N-1}$ as the brane partition function in the topological A-model on the strip geometry \cite{Dimofte:2010tz}.

\subsubsection{$q$-difference equation for the brane partition function}\label{rem:q_curve_brane}

%\begin{remark}[\textbf{$q$-difference equation for the brane partition function}]\label{rem:q_curve_brane}
The brane partition function %has a physical interpretation as the wave function \cite{Aganagic:2003qj}, and it 
obeys a $q$-difference equation known as the Schr\"odinger equation or 
the quantum curve\footnote{
This $q$-difference equation is defined simply as the annihilating equation of the brane partition function.
Via mirror symmetry (see discussions in Section \ref{subsec-b_J}), the brane partition function is regarded as the wave function 
\cite{Aganagic:2003qj}, and in this sense, we can identify this $q$-difference equation as the quantum curve. (See Remark \ref{rem:def-of-omega-x-2}.)
%But in general, we do not know whether such $q$-difference equation for the brane partition function
%on the general local toric Calabi-Yau 3-fold satisfies the properties of the quantum curve given in Definition \ref{def:q_curve}.
} \cite{Dijkgraaf:2007sw,Dijkgraaf:2008fh}. For the brane partition function $Z_{\textrm{A-brane}}^{Y_{N-1,n}}(\mathsf{x})$ in  (\ref{Zopen_m_g}), one finds the $q$-difference equation (c.f. Remark \ref{rem:def-of-omega-x-2})
\begin{align}
\widehat{A}_{Y}^K(\widehat{\mathsf{x}},\widehat{\mathsf{y}})Z_{\textrm{A-brane}}^{Y_{N-1,n}}(\mathsf{x})
=\left[\prod_{i=0}^{N-1}\left(1-\widetilde{Q}_{w,i}\,\widehat{\mathsf{y}}\right)
-\widehat{\mathsf{x}}\prod_{a=1}^{n}\left(1-\widetilde{Q}_{\lambda,a}\,\widehat{\mathsf{y}}\right)\right]
Z_{\textrm{A-brane}}^{Y_{N-1,n}}(\mathsf{x})=0,
\label{top_k_q_curve}%(\ref{top_k_q_curve})
\end{align}
where the operators $\widehat{\mathsf{x}}$ and $\widehat{\mathsf{y}}$ obey the relation $\widehat{\mathsf{y}}\,\widehat{\mathsf{x}}=q\,\widehat{\mathsf{x}}\,\widehat{\mathsf{y}}$, and act on the brane partition function as
$$
\widehat{\mathsf{x}}\,Z_{\textrm{A-brane}}^{Y_{N-1,n}}(\mathsf{x})=\mathsf{x}\,Z_{\textrm{A-brane}}^{Y_{N-1,n}}(\mathsf{x}),\qquad
\widehat{\mathsf{y}}\,Z_{\textrm{A-brane}}^{Y_{N-1,n}}(\mathsf{x})=Z_{\textrm{A-brane}}^{Y_{N-1,n}}(q\mathsf{x}).
$$
%Physically speaking, this $q$-difference equation is regarded as the Schr\"odinger equation, and
%$\widehat{A}_Y^K(\widehat{\mathsf{x}},\widehat{\mathsf{y}})$ plays a role of the (quantum) Hamiltonian. 

Let us consider the cohomological limit of this $q$-difference equation.
For this purpose we will replace the operators $\widehat{\mathsf{x}}$ and $\widehat{\mathsf{y}}$ 
by $\widehat{\mathsf{x}}=\beta^{N-n}x$ and $\widehat{\mathsf{y}}=\mathrm{e}^{-\beta\hbar x\frac{d}{dx}}$ respectively,
and reparametrize $\widetilde{Q}_{w,i}$ and $\widetilde{Q}_{\lambda,a}$ as $\widetilde{Q}_{w,i}=\mathrm{e}^{\beta \widetilde{w}_{i}}$ and $\widetilde{Q}_{\lambda,a}=\mathrm{e}^{\beta (\widetilde{\lambda}_{a}-\hbar)}$.
Then taking the cohomological limit $\beta\to 0$ for the quantum curve (\ref{top_k_q_curve}),
we obtain a differential equation:
\begin{align}
\left[
\left(\hbar x\frac{d}{dx}\right)\prod_{i=1}^{N-1}\left(\hbar x\frac{d}{dx}-\widetilde{w}_i\right)-
x\prod_{a=1}^{n}\left(\hbar x\frac{d}{dx}-\widetilde{\lambda}_a+\hbar\right)
\right]Z_{\textrm{A-brane}}^{\mathrm{coh}}(x)=0.
\label{GKZ_brane}
\end{align}
%In this way we find that the cohomological limit of the $q$-difference equation (\ref{top_k_q_curve}) reduces to the GKZ equation. 
Using this differential equation, we find that 
$\mathrm{e}^{w_0/\hbar}Z_{\textrm{A-brane}}^{\mathrm{coh}}(x)$
obeys the GKZ equation (\ref{GKZ_comp}) for the equivariant $J$-function.

The classical limit of this $q$-difference equation is found from the WKB expansion of the brane partition function:
$$
Z_{\textrm{A-brane}}^{Y_{N-1,n}}(\mathsf{x}) \sim 
\exp \left(\sum_{m=0}^{\infty}g_s^{m-1}F_m^{K}(\mathsf{x})\right).
$$
Denoting $\log\mathsf{y}=\mathsf{x}\,\partial_{\mathsf{x}}F_0^{K}(\mathsf{x})$, we find that the $q$-difference equation (\ref{top_k_q_curve}) for the brane partition function
reduces to the defining equation of a classical curve  $\Sigma_{Y_{N-1,n}}^{K}$:
\begin{align}
\Sigma_{Y_{N-1,n}}^{K}=
\Big\{\; (\mathsf{x},\mathsf{y})\in {\IC}^*\times {\IC}^*\; \Big|\; 
A_{Y}^K(\mathsf{x},\mathsf{y})=
\prod_{i=0}^{N-1}\left(1-\widetilde{Q}_{w,i}\,\mathsf{y}\right)
-\mathsf{x}\prod_{a=1}^{n}\left(1-\widetilde{Q}_{\lambda,a}\,\mathsf{y}\right)=0
\; \Big\},
\label{top_k_c_curve}%(\ref{top_k_c_curve})
\end{align}
under the classical limit $g_s \to 0$.
%Combined the vantage points 1 and 2, we conclude that the GKZ equation is interpreted physically as the quantum curve.
In the next subsection we will see that this classical curve agrees with the mirror curve in the mirror B-model picture. 
%\end{remark}

\subsection{Vantage point 3: $J$-function as the brane partition function in the local B-model}\label{subsec-b_J}

Via the local mirror symmetry, we will study the brane partition function $Z_{\textrm{A-brane}}^{Y}(x)$ in the topological A-model 
in terms of the topological B-model on the mirror local Calabi-Yau 3-fold $Y_{\bm{l}}^{\vee}$.
The mirror local Calabi-Yau 3-fold $Y_{\bm{l}}^{\vee}$ corresponding to the local toric Calabi-Yau 3-fold $Y_{\bm{l}}$ of  (\ref{local_a_geom}) 
is defined with a substitution by $|x_{\alpha}|=\mathrm{e}^{|X_{\alpha}|^2}$  \cite{Hori:2000kt,Hori:2000ck}:
\begin{align}
Y_{\bm{l}}^{\vee}=
\Big\{\;(\omega_+,\omega_-,x_1,\ldots,x_{m+3})\in {\IC}^2\times ({\IC}^*)^{m+3}\;
\Big|\;\omega_+\omega_-=\sum_{\alpha=1}^{m+3}x_{\alpha},\ \prod_{\alpha=1}^{m+3}x_{\alpha}^{l_{i,\alpha}}=z_i\;
\Big\},
\label{local_b_geom}%(\ref{local_b_geom})
\end{align}
where $z_i$'s parametrize the complex moduli space. 
%We will rewrite the defining equation of (\ref{local_b_geom}) on the local atlas which covers $X_{\alpha}=X_{\beta}=X_{\gamma}=0$.
We eliminate local coordinates $X_{\delta}$ ($\delta=1,\ldots,m+3$) in the defining equation of (\ref{local_b_geom}) except for $\delta=\alpha,\beta,\gamma$ and fix $X_{\gamma}=0$.
By choosing local coordinates in such a way, the Lagrangian submanifold (\ref{lag_a_geom}) defined on the local atlas which covers $X_{\alpha}=X_{\beta}=X_{\gamma}=0$ can be described well.
%By taking local coordinate $X_{\gamma}=0$ and 
%assuming the Calabi-Yau condition, 
% the Lagrangian submanifold (\ref{lag_a_geom}), by taking local coordinate $X_{\gamma}=0$ (assuming the Calabi-Yau condition) the mirror Calabi-Yau 3-fold $Y_{\bm{l}}^{\vee}$ 
As a consequence, the defining equation (\ref{local_b_geom}) of the mirror Calabi-Yau 3-fold $Y_{\bm{l}}^{\vee}$
is rewritten as the hypersurface in ${\IC}^2\times ({\IC}^*)^2$:
\begin{align}
Y_{\bm{l}}^{\vee}=
\Big\{\;(\omega_+,\omega_-,x,y)\in {\IC}^2\times ({\IC}^*)^2\;
\Big|\;\omega_+\omega_-=A_{Y_{\bm{l}}^{\vee}}^K(x,y)\;
\Big\},
\nonumber
\end{align}
where $x=x_{\alpha}$ and $y=x_{\beta}$, and the open string modulus $\mathsf{x}$ of the toric brane is mapped to $x$. 
\textit{Mirror curve} $\Sigma_{Y_{\bm{l}}^{\vee}}^{K}$ 
is defined as the complex 1 dimensional submanifold 
which resides in $Y_{\bm{l}}^{\vee}$:%, which describes the open string moduli space:
\begin{align}
\Sigma_{Y_{\bm{l}}^{\vee}}^{K}=
\Big\{\;(x,y)\in ({\IC}^*)^2\;
\Big|\;A_{Y_{\bm{l}}^{\vee}}^K(x,y)=0\;
\Big\} \subset Y_{\bm{l}}^{\vee}.
\label{local_b_curve}%(\ref{local_b_curve})
\end{align}
%Essentially $\Sigma_{Y_{\bm{l}}^{\vee}}^{K}$ gives the open string moduli space in the B-model viewpoint.

\begin{remark}
For the mirror curve $\Sigma_{Y_{\bm{l}}^{\vee}}^{K}$ one can consider degrees of freedom for the framing $f\in{\IZ}$ of the brane mentioned in Remark \ref{rem:framing}
by an $SL(2,{\IZ})$ transformation $x \to x y^f$ and $y \to y$, which preserves the symplectic form $d\omega=d \log x \wedge d \log y$ on $\Sigma_{Y_{\bm{l}}^{\vee}}^{K}$ \cite{Bouchard:2007ys}.
\end{remark}

\begin{defi}[Mirror map  \cite{Chiang:1999tz,Lerche:2001cw}]
The mirror map between $Q_i$'s in  (\ref{local_a_geom}) and $z_i$'s in  (\ref{local_b_geom}) is given by the logarithmic solutions to the Picard-Fuchs equations $\mathcal{D}_if(\bm{z})=0$ for periods of the holomorphic 3-form on $Y_{\bm{l}}^{\vee}$. 
Here
\begin{align}
\mathcal{D}_i=\prod_{l_{i,\alpha}>0}\left(\frac{\partial}{\partial x_{\alpha}}\right)^{l_{i,\alpha}} 
-\prod_{l_{i,\alpha}<0}\left(\frac{\partial}{\partial x_{\alpha}}\right)^{-l_{i,\alpha}},
\nonumber
\end{align}
and local coordinates $x_{\alpha}$'s are related with $z_i$'s by $\prod_{\alpha=1}^{m+3}x_{\alpha}^{l_{i,\alpha}}=z_i$ in  (\ref{local_b_geom}). Explicitly  the inverse mirror map between $Q_i$'s and $z_i$'s (i.e. the logarithmic solution of the Picard-Fuchs equations)
is given by
\begin{align}
\log Q_i = \log z_i - 
\sum_{\begin{subarray}{c}\bm{n} \in {\IZ}^m_{\ge 0} \\
\bm{n}\ne (0,\ldots,0)\end{subarray}}
\sum_{\gamma, (\mathfrak{m}_{\gamma}>0)}l_{i,\gamma}
(-1)^{\mathfrak{m}_{\gamma}}\;
\frac{(\mathfrak{m}_{\gamma}-1)!}{\prod_{\alpha\ne \gamma}\big(\sum_{j} l_{j,\alpha}n_j\big)!}\;
z_1^{n_1}\cdots z_m^{n_m},
\label{cl_mirror}%(\ref{cl_mirror})
\end{align}
where $\mathfrak{m}_{\gamma} = -\sum_j l_{j,\gamma}n_j$. 

The mirror map between open string moduli $\mathsf{x}$ in  (\ref{lag_a_geom}) and $x$ in  (\ref{local_b_curve}) is also obtained by extending the charge vectors $\bm{l}_i$ to $(\bm{l}_i;0,0)$ and adding one more charge $\bm{l}_0=(\underbrace{\ldots};1,-1)$, where the underbrace means $l_{0,\alpha}=1$, $l_{0,\gamma}=-1$, and $l_{0,\beta}=0$ for $\beta\ne \alpha, \gamma$. Explicitly the inverse mirror map between $\mathsf{x}$ and $x$ is given by
\begin{align}
\log \mathsf{x} = \log x - 
\sum_{\begin{subarray}{c}\bm{n} \in {\IZ}^m_{\ge 0} \\
\bm{n}\ne (0,\ldots,0)\end{subarray}}
\sum_{\gamma, (\mathfrak{m}_{\gamma}>0)}l_{0,\gamma}
(-1)^{\mathfrak{m}_{\gamma}}\;
\frac{(\mathfrak{m}_{\gamma}-1)!}{\prod_{\alpha\ne \gamma}\big(\sum_{j} l_{j,\alpha}n_j\big)!}\;
z_1^{n_1}\cdots z_m^{n_m}.
\label{op_mirror}%(\ref{op_mirror})
\end{align}
\end{defi}

\subsubsection{Geometric engineering of the (``on-shell'') equivariant $J$-function for $\mathbb{C}\textbf{P}^{N-1}_{\bm{w}}$}

Consider the local toric Calabi-Yau 3-fold defined by charge vectors (\ref{02_chain}), and a Lagrangian submanifold (\ref{lag_a_geom}) with $\alpha=1, \beta=3, \gamma=2$ as depicted in Figure \ref{local_A_geom} for example. From the vantage point 2 this brane partition function gives the (``on-shell'') equivariant $J$-function for $\mathbb{C}\textbf{P}^{N-1}_{\bm{w}}$. For the mirror Calabi-Yau 3-fold (\ref{local_b_geom}), by taking local coordinate $x_1=x$, $x_2=1$, $x_3=y$ respecting the Lagrangian submanifold $L$ in the A-model, 
we find a defining equation of the mirror curve (\ref{local_b_curve}):
%which defines this brane
\begin{align}
A_{Y_{\bm{l}}^{\vee}}^K(x,y)=\sum_{i=1}^{N-1}\widetilde{z}_{i}y^{i+1}+y+x+1=0,\qquad
\widetilde{z}_{i}= \prod_{1\le j \le i}z_j^{i-j+1}.
\label{m_curve_ex_pure}%(\ref{m_curve_ex_pure})
\end{align}
Adopting the mirror maps  (\ref{cl_mirror}) and  (\ref{op_mirror}) to this geometry, we see that for the local coordinate $x_1=x$, $x_2=1$, $x_3=y$ the quantum corrections are absent for $x$, i.e. $x=\mathsf{x}$. On the other hand there are quantum corrections for $z_i$'s such that
\begin{align}
Q_i=\mathrm{e}^{-g_{i-1}+2g_{i}-g_{i+1}}z_i.
\nonumber
\end{align}
Here $g_{0}=g_{N}= 0$, and $g_i= g_i(\bm{z})$ for $i=1,2,\ldots,N-1$ are defined by
\begin{align}
g_i(\bm{z})=
\sum_{\begin{subarray}{c}\bm{n} \in {\IZ}^{N-1}_{\ge 0} \\
\bm{n}\ne (0,\ldots,0)\end{subarray}}
(-1)^{n_{i-1}+n_{i+1}}\;
\frac{(-n_{i-1}+2n_i-n_{i+1}-1)!(n_{i-1}-2n_i+n_{i+1})!}{\prod_{j=0}^{N}(n_{j-1}-2n_j+n_{j+1})!}\;
z_1^{n_1}\cdots z_{N-1}^{n_{N-1}},
\nonumber
\end{align}
where $n_{-1}=n_0=n_N=n_{N+1}= 0$. For functions $f_i(\bm{Q})$ of $Q_i$ such that
\begin{align}
\log f_i(\bm{Q})=g_i\big(\bm{z}(\bm{Q})\big),\qquad
f_{0}(\bm{Q})=f_{N}(\bm{Q})= 1,
\nonumber
\end{align}
we obtain the inverse of the mirror map (\ref{cl_mirror}):
\begin{align}
z_i=\frac{f_{i-1}(\bm{Q})f_{i+1}(\bm{Q})}{f_{i}(\bm{Q})^2}\;Q_i.
\label{inv_m_map_p}%(\ref{inv_m_map_p})
\end{align}
We find that such functions $f_a(\bm{Q})$ are given by\footnote{We have directly checked this up to some orders.}
\begin{align}
f_{i}(\bm{Q})=\frac{1}{\prod_{j=1}^{i-1}\widetilde{Q}_j}
\sum_{0 \le t_1< \ldots < t_i \le N-1}\widetilde{Q}_{t_1}\cdots \widetilde{Q}_{t_i},\qquad
\widetilde{Q}_0 = 1,\quad
\widetilde{Q}_i = \prod_{1\le j \le i} Q_j.
\nonumber
\end{align}

\begin{exam}%[Inverse mirror map]
In the case of $N=4$ in Figure \ref{local_A_geom} we obtain
\begin{align}
\begin{split}
z_1&=\frac{Q_1(1+Q_2+Q_1Q_2+Q_2Q_3+Q_1Q_2Q_3+Q_1Q_2^2Q_3)}{(1+Q_1+Q_1Q_2+Q_1Q_2Q_3)^2},\\
z_2&=\frac{Q_2(1+Q_1+Q_1Q_2+Q_1Q_2Q_3)(1+Q_3+Q_2Q_3+Q_1Q_2Q_3)}{(1+Q_2+Q_1Q_2+Q_2Q_3+Q_1Q_2Q_3+Q_1Q_2^2Q_3)^2},\\
z_3&=\frac{Q_3(1+Q_2+Q_1Q_2+Q_2Q_3+Q_1Q_2Q_3+Q_1Q_2^2Q_3)}{(1+Q_3+Q_2Q_3+Q_1Q_2Q_3)^2}.
\nonumber
\end{split}
\end{align}
\end{exam}

By the mirror map (\ref{inv_m_map_p}) the defining equation of the mirror curve (\ref{m_curve_ex_pure}) yields
\begin{align}
A_{Y_{\bm{l}}^{\vee}}^K(x,y)=
\prod_{i=0}^{N-1}\left(1+f_1(\bm{Q})^{-1}\widetilde{Q}_i y\right)+x=0.
\label{m_curve_ex_pm}%(\ref{m_curve_ex_pm})
\end{align}
After a change of variables:
\begin{align}
y\ \to \ -f_1(\bm{Q})y,\qquad
x\ \to \ -x,
\nonumber
\end{align}
we find that the mirror curve (\ref{m_curve_ex_pm}) agrees with the classical curve (\ref{top_k_c_curve}) for $Z_{\textrm{A-brane}}^{Y_{N-1}}(\mathsf{x})$ in the A-model.

\subsubsection{Geometric engineering of the (``on-shell'') equivariant  $J$-functions for degree $1$ complete intersections in $\mathbb{C}\textbf{P}^{N-1}_{\bm{w}}$}

Next we will consider the local toric Calabi-Yau 3-fold defined by charge vectors (\ref{11_chain}), and a Lagrangian submanifold (\ref{lag_a_geom}) with $\alpha=2, \beta=3, \gamma=1$ as depicted in Figure \ref{local_A_geom_m}. For the mirror Calabi-Yau 3-fold (\ref{local_b_geom}), by taking local coordinate $x_1=1$, $x_2=x$, $x_3=y$ we find a mirror curve (\ref{local_b_curve}) 
that describes this brane
\begin{align}
A_{Y_{\bm{l}}^{\vee}}^K(x,y)=\sum_{i=1}^{N-1}\widetilde{z}_{\lambda,i}\widetilde{z}_{w,i}y^{i+1}+y+x+1
+\sum_{i=1}^{N}\widetilde{z}_{\lambda,i}\widetilde{z}_{w,i-1}xy^{i}=0,
\label{m_curve_ex_bu}%(\ref{m_curve_ex_bu})
\end{align}
where
$$
\widetilde{z}_{\lambda,i=1,\ldots,N}= \prod_{1\le j \le i}z_{\lambda,j}^{i-j+1},\qquad
\widetilde{z}_{w,0}= 0, \qquad
\widetilde{z}_{w,i=1,\ldots,N-1}= \prod_{1\le j \le i}z_{w,j}^{i-j+1}.
$$
For this mirror curve the open string modulus $x$ receives no quantum corrections, namely $x=\mathsf{x}$, and the mirror map (\ref{cl_mirror}) is given by
\begin{align}
Q_{\lambda,i}=\mathrm{e}^{-g_{w,i-1}+g_{\lambda,i-1}+g_{w,i}-g_{\lambda,i}}z_{\lambda,i},\qquad
Q_{w,i}=\mathrm{e}^{-g_{\lambda,i-1}+g_{w,i}+g_{\lambda,i}-g_{w,i+1}}z_{w,i}.
\nonumber
\end{align}
Here $g_{w,0}=g_{\lambda,0}=g_{w,N}=g_{\lambda,N}= 0$, and $g_{w,i}= g_{w,i}(\bm{z})$, $g_{\lambda,i}= g_{\lambda,i}(\bm{z})$ for $i=1,2,\ldots,N-1$ are defined by
\begin{align}
\begin{split}
g_{w,i}(\bm{z})&=
\sum_{\begin{subarray}{c}\bm{n} \in {\IZ}^{2N-1}_{\ge 0} \\
\bm{n}\ne (0,\ldots,0)\end{subarray}}
(-1)^{n_{w,i-1}+n_{w,i}+n_{\lambda,i}+n_{\lambda,i+1}}\;
z_{\lambda,1}^{n_{\lambda,1}}z_{w,1}^{n_{w,1}}\cdots z_{w,N-1}^{n_{w,N-1}}z_{\lambda,N}^{n_{\lambda,N}}
\nonumber\\
&\hspace{4em}\times
\frac{(-n_{w,i-1}+n_{\lambda,i}+n_{w,i}-n_{\lambda,i+1}-1)!(n_{w,i-1}-n_{\lambda,i}-n_{w,i}+n_{\lambda,i+1})!}{\prod_{j=0}^{N}(n_{w,j-1}-n_{\lambda,j}-n_{w,j}+n_{\lambda,j+1})!(n_{\lambda,j}-n_{w,j}-n_{\lambda,j+1}+n_{w,j+1})!},
\\
g_{\lambda,i}(\bm{z})&=
\sum_{\begin{subarray}{c}\bm{n} \in {\IZ}^{2N-1}_{\ge 0} \\
\bm{n}\ne (0,\ldots,0)\end{subarray}}
(-1)^{n_{\lambda,i}+n_{\lambda,i+1}+n_{w,i}+n_{w,i+1}}\;
z_{\lambda,1}^{n_{\lambda,1}}z_{w,1}^{n_{w,1}}\cdots z_{w,N-1}^{n_{w,N-1}}z_{\lambda,N}^{n_{\lambda,N}}
\nonumber\\
&\hspace{4em}\times
\frac{(-n_{\lambda,i}+n_{w,i}+n_{\lambda,i+1}-n_{w,i+1}-1)!(n_{\lambda,i}-n_{w,i}-n_{\lambda,i+1}+n_{w,i+1})!}{\prod_{j=0}^{N}(n_{w,j-1}-n_{\lambda,j}-n_{w,j}+n_{\lambda,j+1})!(n_{\lambda,j}-n_{w,j}-n_{\lambda,j+1}+n_{w,j+1})!},
\nonumber
\end{split}
\end{align}
where $n_{w,-1}=n_{\lambda,0}=n_{w,0}=n_{w,N}=n_{\lambda,N+1}=n_{w,N+1}= 0$.
For functions $f_{w,i}(\bm{Q})$ and $f_{\lambda,i}(\bm{Q})$ of $Q_{\lambda,i}$ and $Q_{w,i}$ such that
\begin{align}
\begin{split}
&
\log f_{w,i}(\bm{Q})=g_{w,i}\big(\bm{z}(\bm{Q})\big),\qquad
\log f_{\lambda,i}(\bm{Q})=g_{\lambda,i}\big(\bm{z}(\bm{Q})\big),\\
&
f_{w,0}(\bm{Q})=f_{\lambda,0}(\bm{Q})=f_{w,N}(\bm{Q})=f_{\lambda,N}(\bm{Q})= 1,
\nonumber
\end{split}
\end{align}
we obtain the inverse of the mirror map (\ref{cl_mirror}):
\begin{align}
z_{\lambda,i}=\frac{f_{w,i-1}(\bm{Q})f_{\lambda,i}(\bm{Q})}{f_{\lambda,i-1}(\bm{Q})f_{w,i}(\bm{Q})}\;Q_{\lambda,i},\qquad
z_{w,i}=\frac{f_{\lambda,i-1}(\bm{Q})f_{w,i+1}(\bm{Q})}{f_{w,i}(\bm{Q})f_{\lambda,i}(\bm{Q})}\;Q_{w,i}.
\label{inv_m_map_up}%(\ref{inv_m_map_up})
\end{align}
We find that such functions $f_{w,i}(\bm{Q})$ and $f_{\lambda,i}(\bm{Q})$ are given by\footnote{We have directly checked this up to some orders.}
\begin{align}
\begin{split}
&
f_{w,i}(\bm{Q})=\frac{1}{\prod_{j=1}^{i-1}\widetilde{Q}_{w,j}}
\sum_{0 \le t_1< \ldots < t_i \le N-1}\widetilde{Q}_{w,t_1}\cdots \widetilde{Q}_{w,t_i},\qquad
\widetilde{Q}_{w,0} = 1,\ \
\widetilde{Q}_{w,i} = \prod_{1\le j \le i} Q_{\lambda,j}Q_{w,j},
\\
&
f_{\lambda,i}(\bm{Q})=\frac{1}{\prod_{j=1}^{i}\widetilde{Q}_{\lambda,j}}
\sum_{1 \le t_1< \ldots < t_i \le N}\widetilde{Q}_{\lambda,t_1}\cdots \widetilde{Q}_{\lambda,t_i},\qquad
\widetilde{Q}_{\lambda,i} = Q_{\lambda,1}\prod_{1\le j \le i-1} Q_{w,j}Q_{\lambda,j+1}.
\nonumber
\end{split}
\end{align}
By the mirror map (\ref{inv_m_map_up}) the mirror curve (\ref{m_curve_ex_bu}) yields
\begin{align}
A_{Y_{\bm{l}}^{\vee}}^K(x,y)=
\prod_{i=0}^{N-1}\left(1+f_{w,1}(\bm{Q})^{-1}\widetilde{Q}_{w,i} y\right)
+x\prod_{i=1}^{N}\left(1+f_{w,1}(\bm{Q})^{-1}\widetilde{Q}_{\lambda,i} y\right)=0.
\label{m_curve_ex_bum}%(\ref{m_curve_ex_bum})
\end{align}
After a change of variables:
\begin{align}
y\ \to \ -f_{w,1}(\bm{Q})y,\qquad
x\ \to \ -x,
\nonumber
\end{align}
we find that the mirror curve (\ref{m_curve_ex_bum}) agrees with the classical curve (\ref{top_k_c_curve}) for $Z_{\textrm{A-brane}}^{Y_{N-1,N}}(\mathsf{x})$ in the A-model.

To obtain the defining equation of the mirror curve for $Y_{N-1,n}$ with $n<N$ we will act the flop transitions which change the toric charges from (\ref{flop_ex1}) to  (\ref{flop_ex2}).
% described in the previous section. 
 In this case, by the local mirror symmetry, the complex structure moduli parameters $z_{\lambda,1}$, $z_{w,1}$, and $z_{\lambda,2}$ (resp. $z'_{\lambda,1}$, $z'_{w,1}$, and $z'_{\lambda,2}$) are associated with the charge vectors $\bm{l}_{\lambda,1}$, $\bm{l}_{w,1}$, and $\bm{l}_{\lambda,2}$ (resp. $\bm{l}'_{\lambda,1}$, $\bm{l}'_{w,1}$, and $\bm{l}'_{\lambda,2}$), respectively. 
Under the flop transitions,
the same relation as  (\ref{flop_kahler}) for the K\"ahler moduli parameters
holds for the complex structure moduli parameters:
\begin{align}
(z'_{\lambda,1}, z'_{w,1}, z'_{\lambda,2})=(z_{\lambda,1}^{-1}, z_{\lambda,1}z_{w,1}, z_{\lambda,2}).
\label{flop_cpx}%(\ref{flop_cpx})
\end{align}
Combining this relation (\ref{flop_cpx}) with the relation (\ref{flop_kahler}) together, the inverse mirror map can be considered after the flop transitions. 
As a consequence of ``flops'' and ``decouplings'' for some of complex structure moduli parameters in the mirror curve (\ref{m_curve_ex_bum}),
%and the inverse mirror maps,  
we obtain the classical curve $\Sigma_{Y_{N-1,n}}^{K}$ in  (\ref{top_k_c_curve}).

\subsubsection{Wave function for the mirror curve and the remodeling conjecture}

Regarding the mirror curve (\ref{local_b_curve}) as the spectral curve,
one can find the wave function $\psi_{Y_{\bm{l}}^{\vee}}^{K}(x)$ for this curve via the topological recursion.
The parameters $z_i$'s and $x$ are mapped to $Q_i$'s and $\mathsf{x}$ by the inverse mirror maps (\ref{cl_mirror}) and (\ref{op_mirror}).
As a pullback of the wave function $\psi_{Y_{\bm{l}}^{\vee}}^{K}(x)$ by the inverse mirror map, we define the wave function $\psi_{Y_{\bm{l}}}^{K}(\mathsf{x})$. 
On the other hand the brane partition function $Z_{\textrm{B-brane}}^{Y_{\bm{l}}^{\vee}}(x)$
in the local B-model is defined as the pullback of $Z_{\textrm{A-brane}}^{Y_{\bm{l}}}(\mathsf{x})$ by the mirror map.
The relation between these brane partition functions $Z_{\textrm{A-brane}}^{Y_{\bm{l}}}(\mathsf{x})$ and $Z_{\textrm{B-brane}}^{Y^{\vee}_{\bm{l}}}(x)$ and wave functions $\psi_{Y_{\bm{l}}^{\vee}}^{K}(x)$ and $\psi_{Y_{\bm{l}}}^{K}(\mathsf{x})$ is found from
the remodeling conjecture proposed by V.~Bouchard, A.~Klemm, M.~Mari\~no, and S.~Pasquetti \cite{Marino:2006hs,Bouchard:2007ys,Bouchard:2008gu} (see \cite{Zhou:2009gh,Eynard:2012nj,Fang:2013dna,Fang:2016svw} for proofs and generalizations).

\begin{conj}[Remodeling conjecture]
Consider the topological A-model on a local toric Calabi-Yau 3-fold $Y_{\bm{l}}$ in (\ref{local_a_geom}) with a special Lagrangian submanifold $L$ in  (\ref{lag_a_geom}). 
Let $F_n^{(g)}(\mathsf{x}_1,\ldots,\mathsf{x}_n)$ be the generating function  of the open Gromov-Witten invariants that enumerate the world sheet instantons for the map from the genus $g$ Riemann surface with $n$ boundaries $\Sigma_{g,n}$ (resp. the boundaries $\partial\Sigma_{g,n}$ of $\Sigma_{g,n}$) to $Y_{\bm{l}}$ (resp. $L$).
Via the mirror maps (\ref{cl_mirror}) and (\ref{op_mirror}),
the generating function $F_n^{(g)}(\mathsf{x}_1,\ldots,\mathsf{x}_n)$ is given by
\begin{align}
\begin{split}
&
F_1^{(0)}(\mathsf{x}_1)=\int_{z_1^*}^{z_1}\omega (x(z_1')),
\qquad 
\omega (x(z)) = \log y(x(z))\, \frac{dx(z)}{x(z)},
\\
&
F_2^{(0)}(\mathsf{x}_1, \mathsf{x}_2)=\int_{z_1^*}^{z_1}\int_{z_2^*}^{z_2}
\left(B(z_1',z_2')-\frac{dx(z_1')dx(z_2')}{(x(z_1')-x(z_2'))^2}\right),
\\
&
F_n^{(g)}(\mathsf{x}_1,\ldots,\mathsf{x}_n)=\int_{z_1^*}^{z_1}\cdots\int_{z_n^*}^{z_n}\omega_n^{(g)}(z_1',\ldots,z_n'),\ \ \textrm{for $(g,n)\ne (0,1), (0,2)$},
\label{Fgn0}
\end{split}
\end{align}
up to the framing ambiguity.
Here $B(z_1,z_2)$ is the Bergman kernel and $\omega_n^{(g)}$ ($(g,n)\ne (0,1), (0,2)$) are the multilinear meromorphic differentials recursively defined by the topological recursion (\ref{top_recursion}) or (\ref{g_top_recursion}) on a mirror curve $\Sigma_{Y_{\bm{l}}^{\vee}}^{K}$.
% in the local B-model which describes the open string moduli space for the Lagrangian submanifold $L$. 
$z_i$'s denote points on the mirror curve $\Sigma_{Y_{\bm{l}}^{\vee}}^{K}$ in a local coordinate, and $z_i^*$'s denote reference points in $\Sigma_{Y_{\bm{l}}^{\vee}}^{K}$ so that the integrals converge to $0$ at these points.
\end{conj}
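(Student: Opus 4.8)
This is a form of the Bouchard--Klemm--Mari\~no--Pasquetti remodeling conjecture (proved in various settings by Eynard--Orantin and by Fang--Liu--Zong), and the route I would take is to reduce it to the trivalent building block and then propagate along the toric web. First (base case): I would take $Y_{\bm l} = \mathbb{C}^3$, a single vertex, whose mirror curve is the framed line $x+y+1=0$ together with an $SL(2,\mathbb{Z})$ framing transformation $x \mapsto x y^{f}$ as in Remark \ref{rem:framing}. The correlators $\omega^{(g)}_n$ and the integrals $F^{(g)}_n$ of \eqref{Fgn0} can then be computed in closed form, while on the A-side the open/closed Gromov--Witten generating functions of $\mathbb{C}^3$ are, by the Aganagic--Klemm--Mari\~no--Vafa / topological-vertex formulas, expressible through linear Hodge integrals on $\overline{\mathcal{M}}_{g,n}$; matching the two sides is the content of the Mari\~no--Vafa formula together with the standard $\lambda_g$-type evaluations. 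Equivalently, one identifies both sides with Givental's higher-genus formula for the semisimple equivariant quantum cohomology of $\mathbb{C}^3$, whose $R$-matrix is pinned down by the saddle-point expansion of the mirror oscillatory integral in the style of \eqref{eq:saddle-pt-expansion-IX}.

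Second (gluing): on the A-side the topological vertex provides a combinatorial gluing rule for $Z^{Y_{\bm l}}_{\textrm{A-brane}}$ along internal edges of the web, with propagators governed by the K\"ahler parameters $Q_i$ (this is exactly what produces the strip amplitudes \eqref{Zopen}--\eqref{Zopen_m_g}). On the B-side one uses the behaviour of the topological recursion under the corresponding factorised (nodal) degeneration of $\Sigma^K_{Y^\vee_{\bm l}}$: the $\omega^{(g)}_n$ of the glued curve become a sum over decorated graphs whose vertices carry the $\mathbb{C}^3$-correlators and whose edges carry the Bergman-kernel propagator $B$ near the nodes, while the modified kernel $\widehat{\omega}^{(0)}_2(z_1,z_2) = B(z_1,z_2) - \frac{dx(z_1)\,dx(z_2)}{(x(z_1)-x(z_2))^2}$ reproduces the genus-$0$ annulus. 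The A-side and B-side graph sums then agree vertex by vertex and edge by edge once the base case and the propagator matching are in hand.

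Third (mirror map and normalisation): the closed mirror map \eqref{cl_mirror} is the logarithmic period of $Y^\vee_{\bm l}$ and the open mirror map \eqref{op_mirror} is its relative counterpart; together they convert the $z_i$- and $x$-series produced by the recursion into the $Q_i$- and $\mathsf{x}$-series of open Gromov--Witten invariants, with the disk potential $F^{(0)}_1 = \int \omega(x(z))$, $\omega(x(z)) = \log y(x(z))\,\frac{dx(z)}{x(z)}$, reproducing the superpotential. Finally one fixes the additive and overall ambiguities by imposing the stated condition that the $F^{(g)}_n$ vanish at the reference points $z_i^\ast$, which makes the comparison with $Z^{Y_{\bm l}}_{\textrm{A-brane}}$ (normalised to $1$ at $\mathsf{x}=0$) unambiguous.

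The main obstacle I expect is the open sector of the gluing step: showing that the Bergman-kernel propagator on $\Sigma^K_{Y^\vee_{\bm l}}$ degenerates, under the mirror map, exactly to the topological-vertex propagator, while simultaneously tracking the $SL(2,\mathbb{Z})$ framing transformations $x \mapsto x y^f$, the signs and half-integer shifts $q^{1/2}\mathsf{x}$ appearing in \eqref{Zopen}--\eqref{Zopen_m_g}, and the disk instanton corrections --- all of which enter at once. Everything else is essentially bookkeeping organised around the $\mathbb{C}^3$ base case.
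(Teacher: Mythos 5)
The statement you are addressing is presented in the paper as a \emph{conjecture}, not a theorem: the authors state the remodeling conjecture of Bouchard--Klemm--Mari\~no--Pasquetti and explicitly defer its proof to the literature (\cite{Zhou:2009gh,Eynard:2012nj,Fang:2013dna,Fang:2016svw}), so there is no in-paper argument to compare yours against. Your outline does track the strategy of the published proofs --- reduction to the $\mathbb{C}^3$ vertex, a graph-sum/degeneration formula for the Eynard--Orantin correlators under nodal degeneration of the mirror curve, and a final matching through the closed and open mirror maps \eqref{cl_mirror} and \eqref{op_mirror} --- which is essentially the Eynard--Orantin and Fang--Liu--Zong route.

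However, as a proof it has genuine gaps rather than mere bookkeeping. The base case is not settled by the Mari\~no--Vafa formula alone: one needs the full framed multi-legged vertex identity (the BKMP statement for $\mathbb{C}^3$ with arbitrary framing and branes on up to three legs), which goes well beyond the $\lambda_g$-type Hodge-integral evaluations you invoke and already requires the Givental/$R$-matrix machinery or an equivalent. The gluing step is the heart of the matter: the claim that the $\omega^{(g)}_n$ of the glued mirror curve decompose into a graph sum with $\mathbb{C}^3$ correlators at vertices and Bergman-kernel propagators along edges is precisely the technical content of \cite{Eynard:2012nj} (and, for general toric Calabi--Yau 3-orbifolds, \cite{Fang:2016svw}); you assert it but do not establish it, and you yourself flag the open-sector propagator matching, the $SL(2,{\mathbb Z})$ framing transformations, and the $q^{1/2}$ shifts in \eqref{Zopen}--\eqref{Zopen_m_g} as unresolved. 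Since those are exactly the points where the known proofs required new ideas, your proposal is a correct roadmap of the existing literature rather than a proof.
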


Following the WKB reconstruction (\ref{wave_function}),  we can define the wave function by (\ref{Fgn0}) such that
\begin{align}
\psi_{Y_{\bm{l}}}^{K}(\mathsf{x})=\exp\left(
\sum_{g=0,n=1}^{\infty}\frac{1}{n!}\hbar^{2g-2+n}F_n^{(g)}(\mathsf{x},\ldots,\mathsf{x})
\right).
\nonumber
\end{align}
From the remodeling conjecture we see that the WKB reconstruction of this wave function $\psi_{Y_{\bm{l}}}^{K}(\mathsf{x})$ agrees with the WKB expansion of brane partition function $Z_{\textrm{A-brane}}^{Y_{\bm{l}}}(\mathsf{x})$ in the local A-model. 
%We find that $Z_{\textrm{B-brane}}^{Y_{\bm{l}}^{\vee}}$ is mapped to $Z_{\textrm{A-brane}}^{Y_{\bm{l}}}$ in the local A-model by applying the inverse mirror map. 
In particular for the spectral curve $\Sigma_{Y_{N-1,n}}^{K}$ in  (\ref{top_k_c_curve}) the wave function $\psi_{Y_{N-1,n}}^{K}(\mathsf{x})$ is defined in this way, and it gives the WKB reconstruction of the brane partition function $Z_{\textrm{A-brane}}^{Y_{N-1,n}}(\mathsf{x})$ in  (\ref{Zopen_m_g}):
\begin{align}
Z_{\textrm{A-brane}}^{Y_{N-1,n}}(\mathsf{x}) \sim
\psi_{Y_{N-1,n}}^{K}(\mathsf{x}).
\label{brane_wave}
\end{align}
From the vantage points 1 and 2, for $n<N$, the cohomological limit $\beta \to 0$ of $\psi_{Y}^{K}(\mathsf{x})$ gives a WKB reconstruction of the (``on-shell'')  equivariant $J$-function for the Fano complete intersection of the degree $l_{a=1,\ldots,n}=1$ hypersurfaces in $\mathbb{C}\textbf{P}^{N-1}$ via the topological recursion. 

\bigskip

In summary, from these 3 vantage points, we have found results as follows.
\begin{itemize}
\item $Z_{\textrm{A-brane}}^{Y_{N-1,n}}(\mathsf{x})$ obeys the differential equation (\ref{GKZ_brane})
which comes from the $q$-difference equation (\ref{top_k_q_curve}) in the cohomological limit.
This differential equation agrees with the GKZ equation (\ref{GKZ_comp})
compensated by a factor $x^{w_0/\hbar}$. (See Lemma \ref{lemm:GKZ_calJ}.)
\item The classical limit of the $q$-difference equation (\ref{top_k_q_curve}) defines the classical curve (\ref{top_k_c_curve}), and it agrees with the mirror curve $\Sigma_{Y_{N-1,n}}^K$ for $Y_{N-1,n}$.
\item The wave function $\psi_{Y_{N-1,n}}^{K}(\mathsf{x})$ found from the topological recursion for the mirror curve $\Sigma_{Y_{N-1,n}}^K$ is regarded as the 
WKB expansion of the brane partition function $Z_{\textrm{A-brane}}^{Y_{N-1,n}}(\mathsf{x})$.
\end{itemize}
%Since $Z_{\textrm{A-brane}}^{Y_{N-1,n}}(\mathsf{x})$ obeys the differential equation (\ref{GKZ_brane})
%which comes from the $q$-difference equation in the cohomological limit,
%the consequence (\ref{brane_wave}) of the remodeling conjecture 
At the level of the cohomological limit, the above results suggest that 
the GKZ equation (\ref{GKZ_comp}) is regarded as a quantum curve for the GKZ curve (\ref{A_comp}),
because the differential equation (\ref{GKZ_brane}) for the cohomological limit of the brane partition function 
satisfies the properties in Definition \ref{def:q_curve}.
Although some physical (but mathematical obscure) definitions and conjectures are used to obtain the above results, 
we find the physical derivation of the reconstruction theorem at last.

%%%%%%%%%%%%%%
%Stokes matrix
%%%%%%%%%%%%%%
\section{Stokes matrix for ${\mathbb C}{\bf P}^1_{\bm w}$}
\label{section:Stokes-eqP1}

In this section we consider the quantum curve 
\begin{equation} \label{eq:equiv-P1}
\left[ \left( \hbar x \dfrac{d}{dx} - w_0 \right) 
\left( \hbar x \dfrac{d}{dx} - w_1 \right) - x \right] \psi = 0
\end{equation}
arising from the GKZ curve 
\begin{equation}
\Sigma_{{\mathbb C}{\bf P}^1_{\bm w}} = 
\{ (x,y) \in {\mathbb C}^{\ast}\times{\mathbb C} ~|~ 
A_{{\mathbb C}{\bf P}^1_{\bm w}}(x,y) = 0 \}, \quad 
A_{{\mathbb C}{\bf P}^1_{\bm w}}(x,y) = (y-w_0)(y-w_1) - x
\end{equation}
as is discussed in Section \ref{subsec:Mulase_Sulkowski}. 
This equation is also known as the quantum differential equation
(Dubrovin's first structure connection) for the equivariant 
Gromov-Witten theory of ${{\mathbb C}{\bf P}^1}$.
The goal of this section is to compute the Stokes matrix 
for the WKB solution of the equation \eqref{eq:equiv-P1} 
using the exact WKB method 
(see \cite{KT05,IN14} for the foundation of the exact WKB method).
As we will see below, integrals over the GKZ curve play
a crucially important role in the description of the Stokes matrices.

\subsection{Normalization of the WKB solution}
From the view point of the WKB method, it is convenient to 
transform  \eqref{eq:equiv-P1} to the following Schr{\"o}dinger-type equation:
\begin{equation} \label{eq:equiv-P1-Sch}
\left( \hbar^2 \frac{d^2}{dx^2} - Q \right) \varphi = 0, \quad
Q = Q_0(x) + \hbar^2 Q_2(x) =
\frac{4x+(w_0-w_1)^2}{4x^2} - \hbar^2 \frac{1}{4x^2}
\end{equation}
through the gauge transform 
\begin{equation} \label{eq:gauge-transform}
\psi = \exp\left( \frac{w_0 + w_1 - \hbar}{2\hbar} \, \log x \right) \varphi.
\end{equation}
The equation \eqref{eq:equiv-P1-Sch} has a unique turning point 
(i.e., the zero of the leading term of $Q$) at 
\[
v=-\frac{(w_0-w_1)^2}{4}.
\] 
In what follows we assume 
\begin{equation}
w_0 - w_1 \ne 0
\end{equation}
to avoid the case that the turning point coalesces with the pole of $Q$ .

Although the construction of the WKB solution via the topological recursion
has given in Section \ref{sec-q_curve_rec}, here we reformulate the construction 
and introduce a ``normalized WKB solution at a turning point"
to use the so-called \textit{Voros' formula} 
(see Theorem \ref{thm:Voros-formula} below). 

Two independent WKB solutions of  \eqref{eq:equiv-P1-Sch} can be written as
\begin{equation}
\varphi_{\pm}(x,\hbar) = \exp\left( \int^{x} P^{(\pm)}(x', \hbar) dx' \right),
\end{equation}
where 
$
P^{(\pm)}(x,\hbar) = \sum_{n = 0}^{\infty} \hbar^{n-1} P^{(\pm)}_{n}(x),
$
are two formal solutions of the Riccati equation
$\hbar^2 \left( P^2 + \frac{dP}{dx} \right) = Q(x,\hbar)$.
That is, $P^{(\pm)}_{n}$ are determined by 
solving the recursion relation
\begin{eqnarray}
P^{(\pm)}_{0}(x) & = & \pm\sqrt{Q_0(x)} ~=~  
\pm \sqrt{\frac{4x+(w_0-w_1)^2}{4x^2}},
\label{eq:P0} \\
P^{(\pm)}_{n+1}(x) & = & \frac{1}{2 P^{(\pm)}_{0}} 
\biggl( \delta_{n,1} Q_2(x) - 
\sum_{\substack{n_1 + n_2 = n+1 \\ n_1, n_2 \ge 1}} 
P^{(\pm)}_{n_1} P^{(\pm)}_{n_2} - \frac{dP^{(\pm)}_{n}}{dx} \biggr) 
\quad (n \ge 0). \label{eq:riccati-recursion}
\end{eqnarray}
The functions $P^{(\pm)}_n(x)$ are defined on the Riemann surface of
$\sqrt{Q_0(x)}$, which can be identified with the spectral curve
$\Sigma_{{\mathbb C}{\bf P}^1_{\bm w}}$ (see  \eqref{eq:y-and-Q0} below).
After fixing a branch cut between the turning point $v$ and $\infty$, 
we regard them as meromorphic functions on the cut plane 
${\mathbb C}{\bf P}^{1} \setminus \{\rm cut \}$. 
In what follows, we choose the branch which behaves as 
\begin{equation} \label{eq:branch-on-first-sheet}
\sqrt{Q_0(x)} = \frac{w_0 - w_1}{2x} \left( 1 + O(x) \right), 
\quad x \to 0
\end{equation}
as the branch on the first sheet. 
We will also regard the coordinate $x$ of $\mathbb{C}\textbf{P}^1$ 
(restricted to the cut plane) as that of the first sheet of
$\Sigma_{{\mathbb C}{\bf P}^1_{\bm w}}$, and use the covering involution
$\sigma : \Sigma_{{\mathbb C}{\bf P}^1_{\bm w}} \to 
\Sigma_{{\mathbb C}{\bf P}^1_{\bm w}}$ to describe a point on the second sheet.

The following statements are consequence of 
\eqref{eq:P0}, \eqref{eq:riccati-recursion} and 
\cite[Remark 2.2]{KT05}.

\begin{lemm} \label{lem:asymptotics}~
\begin{itemize}
\item[(i)] 
The asymptotic behavior of $P^{(\pm)}_n (x)$ 
when $x$ tends to $0$ are given as follows:
\begin{equation}
P^{(\pm)}_0 (x) = \pm \, \frac{w_0 - w_1}{2x} \, (1 + O(x)), \qquad
P^{(\pm)}_{1}(x) = \frac{1}{2x} \,  (1 + O(x)), 
\end{equation}
and $P^{(\pm)}_n(x)$ for $n \ge 2$ are holomorphic at $0$.

\item[(ii)]
The asymptotic behavior of $P^{(\pm)}_n (x)$ 
when $x$ tends to $\infty$ are given as follows:
\begin{equation} \label{eq:pn-infinity}
P^{(\pm)}_n (x) = O(x^{-\frac{n}{2}- \frac{1}{2}}) 
% (c^{(\pm)}_n + O(x^{-1})) 
\quad (n \ge 0).
% P^{(\pm)}_0 (x) =  \pm x^{-1/2} \, (1 + O(x^{-1})), \qquad
% P^{(\pm)}_1(x) = - \frac{1}{2x} \,  (1 + O(x^{-1})), 
\end{equation}
% and $P^{(\pm)}_n(x) = O(x^{-3/2})$ for $n \ge 2$.
% where $c^{(\pm)}_n$ is a constant.

\item[(iii)]
If we define 
\begin{equation}
P_{\rm odd}(x,\hbar) = \frac{P^{(+)}(x,\hbar) - P^{(-)}(x,\hbar)}{2}, \quad 
P_{\rm even}(x,\hbar) = \frac{P^{(+)}(x,\hbar) + P^{(-)}(x,\hbar)}{2},
\end{equation} 
(i.e. $P^{(\pm)} = \pm P_{\rm odd} + P_{\rm even}$), then we have
\begin{equation}
P_{\rm even} (x,\hbar) = - \frac{1}{2 P_{\rm odd}(x,\hbar)} 
\frac{dP_{\rm odd}(x,\hbar)}{dx}.
\end{equation}

\end{itemize}
\end{lemm}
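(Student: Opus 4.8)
The plan is to read off all three statements directly from the Riccati recursion \eqref{eq:P0}--\eqref{eq:riccati-recursion}; each is a special case of a general principle in exact WKB theory (cf. \cite[Remark 2.2]{KT05}), so the substance lies in checking the precise pole and order structure of $Q_0$ and $Q_2$ here. Part (iii) is purely algebraic: since $P^{(+)}$ and $P^{(-)}$ are two formal solutions of the \emph{same} Riccati equation $\hbar^2\bigl(P^2 + dP/dx\bigr) = Q(x,\hbar)$, subtracting the two equations gives $\bigl(P^{(+)} - P^{(-)}\bigr)\bigl(P^{(+)} + P^{(-)}\bigr) + \frac{d}{dx}\bigl(P^{(+)} - P^{(-)}\bigr) = 0$; writing $P^{(+)} - P^{(-)} = 2 P_{\rm odd}$ and $P^{(+)} + P^{(-)} = 2 P_{\rm even}$ turns this into $4 P_{\rm odd} P_{\rm even} + 2 P_{\rm odd}' = 0$, which is the claimed identity (note $P^{(+)}_0 = - P^{(-)}_0 = \sqrt{Q_0}$ by \eqref{eq:P0}, so $P_{\rm odd} \ne 0$ and the division is legitimate).

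For (i), I would analyze the recursion locally at $x=0$. Since $Q_0(x) = \tfrac{(w_0-w_1)^2}{4x^2}\bigl(1 + O(x)\bigr)$, the branch convention \eqref{eq:branch-on-first-sheet} gives $P^{(\pm)}_0(x) = \pm\tfrac{w_0-w_1}{2x}\bigl(1+O(x)\bigr)$, and \eqref{eq:riccati-recursion} with $n=0$ gives $P^{(\pm)}_1 = -\tfrac12\bigl(\log P^{(\pm)}_0\bigr)'$; since $\log P^{(\pm)}_0 = \log x + (\text{holomorphic at }0)$, this equals $\tfrac{1}{2x}\bigl(1+O(x)\bigr)$. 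The key step is $n=1$, where $P^{(\pm)}_2 = \tfrac{1}{2P^{(\pm)}_0}\bigl(Q_2 - (P^{(\pm)}_1)^2 - (P^{(\pm)}_1)'\bigr)$: the coefficient of $x^{-2}$ in the bracket is $-\tfrac14$ (from $Q_2$), $-\tfrac14$ (from $-(P^{(\pm)}_1)^2$) and $+\tfrac12$ (from $-(P^{(\pm)}_1)'$), which sums to $0$, so the bracket has at most a simple pole at $0$ while $\tfrac{1}{2P^{(\pm)}_0}=O(x)$ vanishes there, making $P^{(\pm)}_2$ holomorphic at $0$. An induction on $n$ then finishes: for $n\ge 2$ one has $\delta_{n,1}=0$, the only possibly singular term in the bracket of \eqref{eq:riccati-recursion} is $2P^{(\pm)}_1 P^{(\pm)}_n$ (at most a simple pole by the inductive hypothesis), and $\tfrac{1}{2P^{(\pm)}_0}$ absorbs it, so $P^{(\pm)}_{n+1}$ is again holomorphic at $0$.

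For (ii), I would use that each $P^{(\pm)}_n$ is algebraic in $x$ (rational in $x$ and $\sqrt{Q_0(x)}$), hence has a convergent Puiseux expansion in $x^{-1/2}$ at $x=\infty$, so term-by-term differentiation lowers the order by at least one. From $Q_0(x) = \tfrac1x\bigl(1 + O(x^{-1})\bigr)$ one gets $P^{(\pm)}_0 = O(x^{-1/2})$ and $P^{(\pm)}_1 = -\tfrac12\bigl(\log P^{(\pm)}_0\bigr)' = O(x^{-1})$, matching \eqref{eq:pn-infinity} for $n=0,1$. Assuming $P^{(\pm)}_k = O(x^{-k/2-1/2})$ for all $k\le n$, in \eqref{eq:riccati-recursion} we have $\tfrac{1}{2P^{(\pm)}_0}=O(x^{1/2})$, each product $P^{(\pm)}_{n_1}P^{(\pm)}_{n_2}$ with $n_1+n_2=n+1$ is $O(x^{-(n+1)/2-1})$, $(P^{(\pm)}_n)' = O(x^{-n/2-3/2})$, and (only when $n=1$) $Q_2 = O(x^{-2})$; multiplying each by $\tfrac{1}{2P^{(\pm)}_0}$ gives $O(x^{-(n+1)/2-1/2})$, which is \eqref{eq:pn-infinity} with $n$ replaced by $n+1$.

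The hard part is the double-pole cancellation at $n=1$ in (i): one must check that the $x^{-2}$-coefficients of $Q_2$, $-(P^{(\pm)}_1)^2$ and $-(P^{(\pm)}_1)'$ cancel, which hinges on the coefficient of $x^{-2}$ in $Q_2$ being exactly $-\tfrac14$, i.e. on $x=0$ being a regular singular point of \eqref{eq:equiv-P1}. This is exactly \cite[Remark 2.2]{KT05} specialized to our $Q$; granting it, the remaining inductions are routine bookkeeping.
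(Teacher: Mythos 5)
Your proof is correct and follows the route the paper intends (the paper itself gives no details, only citing the Riccati recursion \eqref{eq:P0}--\eqref{eq:riccati-recursion} and \cite[Remark 2.2]{KT05}): part (iii) by subtracting the two Riccati equations, part (i) by the double-pole cancellation at the regular singular point $x=0$ followed by induction, and part (ii) by induction using the Puiseux expansion in $x^{-1/2}$ at $x=\infty$ to control $\tfrac{d}{dx}P^{(\pm)}_n$. One cosmetic slip: since $P^{(\pm)}_0$ has a simple pole at $0$, you should write $\log P^{(\pm)}_0 = -\log x + (\text{holomorphic at }0)$ rather than $+\log x$; the conclusion $P^{(\pm)}_1 = \tfrac{1}{2x}\,(1+O(x))$ is unaffected.
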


% If we take the change of unknown function $P = \frac{d}{dx} \log \varphi$, 
% then $P$ satisfies the Riccati equation
% \begin{equation}
% \hbar^2 \left( P^2 + \frac{dP}{dx} \right) = Q(x,\hbar).
% \end{equation}
% We can construct two formal series solutions 
% \begin{equation}
% P^{(\pm)}(x,\hbar) = \sum_{n \ge 0} \hbar^{n-1} P^{(\pm)}_{n}(x)
% \end{equation}
% by solving the recursion relation {\color{black} (appeared in previous sections ?)}
% \begin{eqnarray}
% P^{(\pm)}_{0}(x) & = & \pm \sqrt{\frac{4x+(w_0-w_1)}{4x^2}} 
% \label{eq:P0} \\
% P^{(\pm)}_{n+1}(x) & = & \frac{1}{2 P^{(\pm)}_{0}} 
% \biggl( -\frac{\delta_{n,1} }{4x^2} - 
% \sum_{\substack{n_1 + n_2 = n+1 \\ n_1, n_2 \ge 1}} 
% P^{(\pm)}_{n_1} P^{(\pm)}_{n_2} - \frac{dP^{(\pm)}_{n}}{dx} \biggr) 
% \quad (n \ge 0). \label{eq:riccati-recursion}
% \end{eqnarray}
Here we note that the holomorphicity of $P^{(\pm)}_n(x)$ in (i) and (ii)
is a consequence of the topological recursion
(correlation functions must be holomorphic except for the 
ramification point $v$). 

We will use a special normalization of the WKB solution 
to compute Stokes matrices, following \cite[Section 2]{KT05}.
Thanks to (iii) of Lemma \ref{lem:asymptotics} implies that 
the WKB solutions can be written in the following form 
(up to some factor which is independent of $x$):
\begin{equation} 
\varphi_{\pm} (x,\hbar) = \frac{1}{\sqrt{P_{\rm odd}(x,\hbar)}} 
\exp \left( \pm \int^{x}_{v} P_{\rm odd}(x', \hbar) dx' \right).
\end{equation}
Here the lower end-point $v$ in  \eqref{eq:WKB-alt} 
is the unique turning point of  \eqref{eq:equiv-P1-Sch}, 
and the integral is defined in terms of contour integral
\begin{equation} \label{eq:normalization-at-tp}
\int^{x}_{v} P_{\rm odd}(x', \hbar) dx' 
= \frac{1}{2} \int_{\gamma_x} P_{\rm odd}(x',\hbar) dx'
\end{equation}
along the path depicted in Figure \ref{fig:contour} 
(see Remark \ref{remark:path-sp-curve}). 
Through the relation \eqref{eq:gauge-transform}, 
we also have an expression of 
the WKB solution of  \eqref{eq:equiv-P1}:
\begin{equation}  \label{eq:WKB-alt}
\psi_{\pm}(x,\hbar) = 
\frac{\exp\left( \frac{w_0 + w_1 - \hbar}{2\hbar} \, \log x \right)}
{\sqrt{P_{\rm odd}(x,\hbar)}}
\exp \left( \pm \int^{x}_{v} P_{\rm odd}(x', \hbar) dx' \right).
\end{equation}

\begin{figure}[t]
  % \begin{center}   
  \scalebox{0.55}{ \vspace{-7.em} \includegraphics{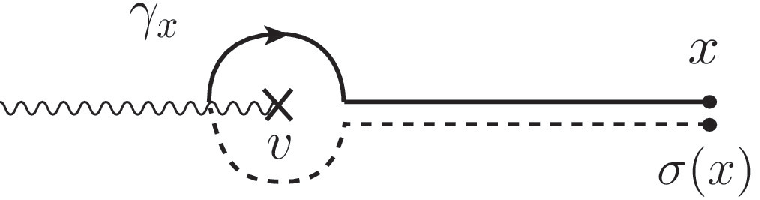}}
 % \end{center} 
 \caption{For a given $x$, the path $\gamma_{x}$ starts from 
 the point $\sigma(x)$ and ends at $x$ after encircling the turning point $v$.  
 The wiggly lines designate a branch cut, and the solid (resp. dotted) 
 part represents a part of path on the first (resp. the second) 
 sheet of the spectral curve.} 
 \label{fig:contour}
 \end{figure}

\begin{remark} \label{remark:path-sp-curve}
When we integrate $P_{\rm odd} \, dx$, the path of integration in \eqref{eq:WKB-alt} should be taken on the spectral curve $\Sigma_{{\mathbb C}{\bf P}^1_{\bm w}}$. (Although the coefficients of WKB solution are defined on $\Sigma_{{\mathbb C}{\bf P}^1_{\bm w}}$, the Borel sum of WKB solution (defined in Section \ref{subsection:Borel-summation}) is single-valued around turning points (i.e., well-defined on the $x$-plane).
\end{remark}

\begin{remark} \label{rem:ambiguity}
Since $\sqrt{Q_0(x)}$ has a simple pole at $x=0$, 
the path $\gamma_x$ must avoid the point. 
If we choose different path from $v$ to $x$, then the corresponding 
WKB solutions are modified by diagonal matrix. For example, 
for the WKB solutions $\psi_{\pm}$ (resp. $\tilde{\psi}_{\pm}$)
normalized along $\gamma_x$ (resp. $\tilde{\gamma}_x$) 
depicted in Figure \ref{fig:2-contours}, we have 
\begin{equation} \label{eq:change-of-normalization}
\psi_{\pm} = \exp\left( \pm \frac{1}{2} V_{\gamma_0} \right) \,
\tilde{\psi}_{\pm},
\qquad 
V_{\gamma_0} = \oint_{\gamma_0 - \sigma_{\ast}\gamma_0} 
P_{\rm odd}(x,\hbar) \, dx = \frac{2\pi \mathrm{i} \, (w_0-w_1)}{\hbar}.
\end{equation}
(Cf.  \eqref{eq:branch-on-first-sheet}.)
Here $\gamma_0$ is a positively oriented cycle around $x=0$ 
on the first sheet of $\Sigma_{{\mathbb C}{\bf P}^1_{\bm w}}$, 
and $\sigma_{\ast}\gamma_0$ is the image of $\gamma_0$ by $\sigma$.
The integral $V_{\gamma_0}$ is called {\em Voros coefficient} 
for the closed cycle $\gamma_0$ (see \cite{DDP93,IN14}), 
which is important in the exact WKB analysis since it appears 
in the expression of monodromy or connection matrices 
of (Borel resumed) WKB solutions (\cite[Section 3]{KT05}). 
Note also that, although $V_{\gamma_0}$ is a priori a formal power series, 
$V_{\gamma_0}$ only consists of one term in our example
thanks to the holomorphicity of $P_n(x)$ for $n \ge 2$ in 
Lemma \ref{lem:asymptotics} 
(again recall that it is a consequence of topological recursion). 
\end{remark}

\begin{figure}[h]
  \scalebox{0.50}{ \vspace{-7.em} \includegraphics{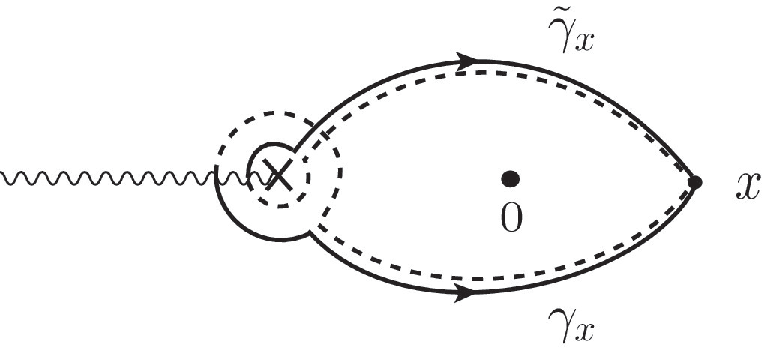}}
 \caption{The paths $\gamma_x$ and $\tilde{\gamma_x}$} 
 \label{fig:2-contours}
 \end{figure}

We also note that the formal series expression of 
 \eqref{eq:WKB-alt} can be arranged to 
\begin{equation} \label{eq:WKB-series}
\psi_{\pm}(x,\hbar) = \exp\left( \frac{1}{\hbar} S^{(\pm)}_0(x) \right) 
\,
\sum_{n = 0}^{\infty} \hbar^{n + \frac{1}{2}} \psi^{(\pm)}_{n}(x),
\end{equation}
where 
\begin{equation} \label{eq:y-and-Q0}
S^{(\pm)}_0(x) = \frac{w_0 + w_1}{2} \log x  \pm \int^{x}_v \sqrt{Q_0(x')} \, dx'
\end{equation}
(which coincides with the one computed in Table \ref{tab:eqv_CP1} 
in Appendix \ref{section:computational-results} up to an additive constant)
and $y_{\pm}(x) = x (dS^{(\pm)}_0/dx)$ 
satisfies the equation $A_{{\mathbb C}{\bf P}^1_{\bm w}}(x, y_{\pm}(x)) = 0$ 
for the GKZ curve $\Sigma_{{\mathbb C}{\bf P}^1_{\bm w}}$. 

\begin{lemm} \label{lem:limit-WKB-normalization}
The coefficients in the expansion \eqref{eq:WKB-series} satisfy 
$\lim_{x \to \infty} \psi^{(\pm)}_{n}(x) = 0$ 
for $n \ge 0$.
\end{lemm}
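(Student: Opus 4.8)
The plan is to rewrite the normalized WKB solution \eqref{eq:WKB-alt} so that the coefficients $\psi^{(\pm)}_n(x)$ of the expansion \eqref{eq:WKB-series} can be read off in closed form, and then to apply the decay estimates of Lemma \ref{lem:asymptotics} (ii). First I would split the odd part of the Riccati solution as
\[
P_{\rm odd}(x,\hbar) = \frac{1}{\hbar}\sqrt{Q_0(x)} + R(x,\hbar), \qquad
R(x,\hbar) = \sum_{m \ge 2} \hbar^{m-1}\,\frac{P^{(+)}_m(x) - P^{(-)}_m(x)}{2},
\]
where the $\hbar^{0}$-term is absent because $P^{(+)}_1 = P^{(-)}_1$ (both equal $-Q'_0/(4Q_0)$) by the recursion \eqref{eq:riccati-recursion} at $n=0$; more generally, it is standard (from the Riccati equation together with Lemma \ref{lem:asymptotics} (iii)) that $P_{\rm odd}$ involves only the powers $\hbar^{-1},\hbar^{1},\hbar^{3},\dots$. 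Using \eqref{eq:gauge-transform}, \eqref{eq:WKB-alt} and the definition \eqref{eq:y-and-Q0} of $S^{(\pm)}_0$, one finds
\[
e^{-S^{(\pm)}_0(x)/\hbar}\,\psi_{\pm}(x,\hbar)
= \bigl(x\,P_{\rm odd}(x,\hbar)\bigr)^{-1/2}
\exp\!\left( \pm \int^{x}_{v} R(x',\hbar)\,dx' \right),
\]
and, comparing with \eqref{eq:WKB-series} and factoring $\hbar^{1/2}$ out of $(xP_{\rm odd})^{-1/2} = \hbar^{1/2}\bigl(x\sqrt{Q_0}\bigr)^{-1/2}\bigl(1 + \hbar R/\sqrt{Q_0}\bigr)^{-1/2}$, this yields
\[
\sum_{n\ge 0}\hbar^{n}\,\psi^{(\pm)}_{n}(x)
= \frac{1}{\sqrt{x\,\sqrt{Q_0(x)}}}\;
\bigl(1 + \widetilde R(x,\hbar)\bigr)^{-1/2}\;
\exp\!\left( \pm \int^{x}_{v} R(x',\hbar)\,dx' \right),
\qquad \widetilde R := \frac{\hbar\,R}{\sqrt{Q_0}}.
\]

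Next I would estimate the three factors on the right as $x\to\infty$, working coefficient-wise in $\hbar$. For the prefactor, $Q_0(x) = \bigl(4x + (w_0-w_1)^2\bigr)/(4x^2)$ gives $x\sqrt{Q_0(x)} = \tfrac12\sqrt{4x+(w_0-w_1)^2}\to\infty$, so $(x\sqrt{Q_0})^{-1/2} = O(x^{-1/4})\to 0$, and this factor is independent of $\hbar$. For $\widetilde R$, its $\hbar^{m}$-coefficient ($m\ge 2$) is $\tfrac12(P^{(+)}_m - P^{(-)}_m)/\sqrt{Q_0} = O(x^{-m/2-1/2})/O(x^{-1/2}) = O(x^{-m/2})\to 0$ by Lemma \ref{lem:asymptotics} (ii); hence every coefficient of $\widetilde R$, and therefore of $(1+\widetilde R)^{-1/2}$, tends to a finite limit, in particular stays bounded near $\infty$. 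Finally, the $\hbar^{m-1}$-coefficient of $R$ ($m\ge 2$) is $O(x^{-m/2-1/2})$ with exponent $\le -3/2$, so $\int^{x}_{v} R(x',\hbar)\,dx'$ converges coefficient-wise as $x\to\infty$: fixing the path from $v$ to $x$ (taken on $\Sigma_{{\mathbb C}{\bf P}^1_{\bm w}}$ as in \eqref{eq:normalization-at-tp}), one splits it into a bounded contribution on a fixed compact piece near $v$ plus a tail $\int^{x}_{x_0}$ whose integrand is $O(|x'|^{-3/2})$; consequently $\exp\bigl(\pm\int^{x}_{v}R\,dx'\bigr)$ also has every coefficient bounded near $\infty$.

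Combining these three estimates, each $\hbar$-coefficient of the right-hand side above is a product of $(x\sqrt{Q_0})^{-1/2}\to 0$ with a quantity that remains bounded as $x\to\infty$; therefore $\psi^{(\pm)}_n(x)\to 0$ for every $n\ge 0$, which is the assertion. (For $n=0$ the identity degenerates to $\psi^{(\pm)}_0(x) = \bigl(x\sqrt{Q_0(x)}\bigr)^{-1/2} = x^{-1/2}Q_0(x)^{-1/4}\to 0$.)

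The computation is essentially mechanical once Lemma \ref{lem:asymptotics} is in hand, and the only genuinely delicate point will be the contour integral $\int^{x}_{v}R\,dx'$ near the turning point $v$: since the $P^{(\pm)}_m$ with $m\ge 2$ are only holomorphic away from $v$ and may be singular there, one must invoke that $R\,dx'$ has a well-defined integral along the path $\gamma_x$ of \eqref{eq:normalization-at-tp} — this is precisely the standard ``normalization at a turning point'' of the exact WKB method — after which the $x\to\infty$ behavior reduces to the elementary tail estimate above. The crucial input throughout is the decay rate $P^{(\pm)}_n(x)=O(x^{-n/2-1/2})$, which itself is a consequence of the holomorphicity (away from $v$) of the topological recursion correlators.
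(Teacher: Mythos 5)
Your proof is correct and follows essentially the same route as the paper's: both reduce the claim to the decay estimates $P^{(\pm)}_n(x) = O(x^{-n/2-1/2})$ of Lemma \ref{lem:asymptotics} (ii) together with the finiteness of the turning-point-normalized integrals $\int_v^x P^{(\pm)}_{2m}(x')\,dx'$. The only (harmless) difference is that the paper shows these integrals actually tend to zero via the identity $\int_v^x P^{(\pm)}_{2m} = \int_\infty^x P^{(\pm)}_{2m} = O(x^{-1/2})$ (using that $v$ is the only branch point and that $P^{(\pm)}_{2m}\,dx$ has no residue at $x=0$ for $m \ge 1$), whereas you only establish convergence to a finite limit, which suffices because the prefactor $(x\sqrt{Q_0(x)})^{-1/2} = O(x^{-1/4})$ already supplies the vanishing.
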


\begin{proof}
The term $\psi_{0}^{(\pm)}$ 
(which coincides with $\exp(S_1)$ in Table \ref{tab:eqv_CP1}
in Appendix \ref{section:computational-results}) is given by 
\begin{equation}
\psi_{0}^{(\pm)}(x) = \frac{\exp(-\frac{1}{2}\log x)}{Q_0(x)^{1/4}} 
= \frac{\sqrt{2}}{(4x+(w_0-w_1)^2)^{1/4}} = O(x^{-1/4})
\end{equation}
(which is independent of $\pm$). The behavior of subsequent terms
can be derived from the estimate  \eqref{eq:pn-infinity} 
and the equality: 
\[
\int^{x}_{v} P^{(\pm)}_{2m}(x') dx' = 
\int^{x}_{\infty} P^{(\pm)}_{2m}(x') dx' 
= O(x^{-\frac{1}{2}}) \quad (m \ge 1).
\]
The first equality holds since there is only one branch point $v$ 
on the spectral curve $\Sigma_{{\mathbb C}{\bf P}^1_{\bm w}}$, 
and $P^{(\pm)}_{2m}(x) dx$ has no residue at $x=0$ 
for $m \ge 1$ (see (i) in Lemma \ref{lem:asymptotics}). 
\end{proof}

The above lemma and Proposition \ref{prop:quantum-curve-equivariant-CP1} imply the relation between the WKB solution constructed here and the wave function constructed through the topological recursion at the level of formal power series.

\begin{prop} \label{prop:WKB-solution-and-TR-wave-function-equivr-CP1}
The WKB solution \eqref{eq:WKB-series} agrees with the wave function \eqref{phase_ms} (up to the overall factor $\hbar^{1/2}$) constructed through the topological recursion for the GKZ curve $\Sigma_{{\mathbb C}{\bf P}^1_{\bm w}}$ with the integration divisor $D = [z] - [\infty]$ (i.e., the reference point is chosen as $z_\ast=\infty$). 
\end{prop}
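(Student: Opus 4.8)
The plan is to recognize that both objects in the statement are \emph{formal WKB solutions of one and the same second–order linear ODE}, namely the GKZ equation \eqref{GKZ_eqv_CP1}, and to exploit the fact that such a solution is determined by its leading branch up to a multiplicative factor which is a formal power series in $\hbar$ with \emph{$x$-independent} coefficients. Concretely, writing the normalized WKB solution \eqref{eq:WKB-series} in the form $\psi_{\pm}(x,\hbar)=\hbar^{1/2}\exp\bigl(\sum_{m\ge0}\hbar^{m-1}S^{(\pm)}_m(x)\bigr)$ and recalling $\psi_X(x)=\exp\bigl(\sum_{m\ge0}\hbar^{m-1}F_m(x)\bigr)$ from \eqref{phase_ms}, it suffices to prove $S^{(\pm)}_m=F_m$ for all $m$, where for $m=0,1$ equality is understood modulo an $x$-independent additive constant (coming from the different base points and the regularization discussed after \eqref{free_en_ms_us}). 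By construction $\psi_{\pm}$ solves \eqref{GKZ_eqv_CP1} (it is obtained from a solution of the Schr\"odinger form \eqref{eq:equiv-P1-Sch} through the gauge factor \eqref{eq:gauge-transform}), while $\psi_X$ solves \eqref{GKZ_eqv_CP1} by Proposition \ref{prop:quantum-curve-equivariant-CP1}.

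First I would match the leading branches. By \eqref{eq:y-and-Q0}, $y_{\pm}(x)=x\,dS^{(\pm)}_0/dx$ satisfies $A_{{\mathbb C}{\bf P}^1_{\bm w}}(x,y_{\pm})=0$, and $y(x)=x\,dF_0/dx$ satisfies the same equation in the parametrization \eqref{local_coord_CP1}; choosing the sheet $z=z(x)$ of $\Sigma_{{\mathbb C}{\bf P}^1_{\bm w}}$ compatible with the sign $\pm$ identifies the two branches, so $S^{(\pm)}_0=F_0$ up to a constant. Since both $\{S^{(\pm)}_m\}$ and $\{F_m\}$ then satisfy the \emph{same} hierarchy associated with \eqref{GKZ_eqv_CP1} — for the $F_m$'s this is because $\psi_X$ solves the equation, and the explicit coincidence of the recursions \eqref{rec_x1_result} and \eqref{top_rec_F_m} is exactly Proposition \ref{prop:quantum-curve-equivariant-CP1} — and only derivatives of the coefficients enter these recursions, an immediate induction gives $S^{(\pm)}_m=F_m+c_m$ with $c_m$ independent of $x$ for all $m$; in particular $S^{(\pm)}_1=F_1$ up to a constant, which is already the equality $dS_1/dx=dF_1/dx$ checked in the proof of Proposition \ref{prop:quantum-curve-equivariant-CP1} under the choice $z_\ast=\infty$.

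It remains to show $c_m=0$ for $m\ge2$, which I would obtain by letting $x\to\infty$. On the WKB side this is precisely Lemma \ref{lem:limit-WKB-normalization} (equivalently the normalization \eqref{eq:normalization-condition-for-S}), which gives $\lim_{x\to\infty}S^{(\pm)}_m(x)=0$. On the topological recursion side, for $m\ge2$ one has $F_m(x(z))=\sum_{2g-1+n=m}\tfrac1{n!}\int_{\infty}^{z}\!\cdots\!\int_{\infty}^{z}\omega^{(g)}_n$, and every correlator occurring has $2g-2+n\ge1$, hence is a meromorphic $1$-form on $C={\mathbb C}{\bf P}^1$ whose only poles are at the ramification point $z=0$; in particular it is holomorphic at the reference point $z=\infty$, so the integrals converge and tend to $0$ as $z\to\infty$, i.e. $\lim_{x\to\infty}F_m(x)=0$ (recall $z\to\infty$ iff $x\to\infty$ in \eqref{local_coord_CP1}). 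Therefore $c_m=0$ for $m\ge2$ and $\psi_{\pm}(x,\hbar)=\hbar^{1/2}\,e^{c_0/\hbar+c_1}\,\psi_X(x)$ with $c_0,c_1$ independent of $x$, which is the asserted agreement up to the factor $\hbar^{1/2}$ and the $x$-independent normalization inherent in both constructions.

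The main obstacle, I expect, is not any single step but the careful bookkeeping of these $x$-independent (possibly $\hbar$-dependent or divergent) constants: the WKB solution \eqref{eq:WKB-series} is normalized at the turning point $v$, whereas \eqref{phase_ms} with $D=[z]-[\infty]$ is ``normalized at $\infty$'' through integrals that diverge at the endpoint $z=\infty$ for $\widehat\omega^{(0)}_1$ and $\widehat\omega^{(0)}_2$ and must be regularized, so only agreement modulo such a constant can possibly hold — and this is exactly the normalization freedom present on both sides. The one genuinely new input, that the remaining integrals defining $F_m$ for $m\ge2$ converge and vanish at $z=\infty$, follows at once from the holomorphicity of the correlators $\omega^{(g)}_n$ ($2g-2+n\ge1$) away from the ramification point — a property repeatedly used, and for this example verified through Lemma \ref{lem:asymptotics}, in Section \ref{sec-q_curve_rec}.
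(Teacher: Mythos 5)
Your proposal is correct and follows essentially the same route as the paper, whose own justification is precisely that Lemma \ref{lem:limit-WKB-normalization} (vanishing of the WKB coefficients as $x\to\infty$) together with Proposition \ref{prop:quantum-curve-equivariant-CP1} (coincidence of the recursions \eqref{rec_x1_result} and \eqref{top_rec_F_m} for $z_\ast=\infty$) pins down the two formal solutions up to the normalization freedom; your observation that the stable correlators $\omega^{(g)}_n$ ($2g-2+n\ge 1$) are holomorphic at $z=\infty$, so that $F_m\to 0$ for $m\ge 2$, is exactly the input the paper uses implicitly and verifies computationally in Appendix \ref{section:computational-results}. Your explicit bookkeeping of the $x$-independent constants in $S_0,S_1$ versus the regularized $F_0,F_1$ is, if anything, more careful than the paper's own statement.
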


\subsection{Borel summation and the Stokes graph} 
\label{subsection:Borel-summation}

The expansion \eqref{eq:WKB-series} is a divergent series of $\hbar$. 
To give an analytic interpretation for  \eqref{eq:WKB-series}, we employ the 
\textit{Borel summation method} (for a formal series of $\hbar$). 
For the convenience of the readers, here we briefly recall
the Borel summation method (see \cite{Cos08} for details.)

For fixed $\theta \in {\mathbb R}$ and $x_{0} \in {\mathbb C}$ 
satisfying $x_0 \ne 0, v$, the WKB solution $\psi_{\pm}$ 
is said to be {\it Borel summable in the direction $\theta$} 
near $x_{0}$ if the following conditions are satisfied 
(see \cite[Definition 1.3]{KT05}): 
\begin{itemize}
\item The {\em Borel transform} 
\begin{equation} \label{eq:Borel-trans}
{\mathcal B}\psi_{\pm}(x,y) = 
\psi_{\pm, B}(x,y) = \sum_{n=0}^{\infty} \frac{\psi_{\pm, n}(x)}
{\Gamma(n+\frac{1}{2})} \bigl( y - a_{\pm}(x) \bigr)^{n-\frac{1}{2}}
\end{equation}
of $\psi_{\pm}$ is holomorphic on a domain 
\[
D=\{(x,y) \in U \times {\mathbb C}~|~ 
- \epsilon < {\rm Im} \bigl( {\rm e}^{-\mathrm{i} \theta}(y - a_{\pm}(x)) \bigr) 
< + \epsilon \}
\] 
with a sufficiently small $\epsilon > 0$.
Here $U$ is a neighborhood of $x_0$ and
$a_{\pm}(x) = - S^{(\pm)}_0(x)$. 
(The convergence of the Borel transform $\psi_{\pm,B}(x,y)$  
near $y=a_{\pm}(x)$ is always true; see \cite[Lemma 2.5]{KT05}.)

\item 
$|\psi_{\pm, B}(x,y)| < C_1 \mathrm{e}^{C_2|y|}$ holds with some $C_1, C_2 > 0$
on the above domain $D$.
\end{itemize}
If $\psi_{\pm}$ is Borel summable in the direction $\theta$, 
then the following Laplace integral defines 
a holomorphic function of both $x$ and $\hbar$ on 
$\{(x, \hbar) \in {\mathbb C}^2 ~|~ x \in U, \, 
|\arg \hbar - \theta|<\pi/2, \, |\hbar| \ll 1  \}$: 
\begin{equation} \label{eq:Borel-sum}
\Psi^{(\theta)}_{\pm} = 
\int_{\ell_{\theta}} {\rm e}^{- y/ \hbar} \, \psi_{\pm, B}(x,y) dy,
\end{equation}
where 
$\ell_{\theta} = \{ y = a_{\pm}(x) + r {\rm e}^{\mathrm{i} \theta} 
\in {\mathbb C} ~|~ r  \ge 0 \}$.
The function \eqref{eq:Borel-sum} is called 
the {\it Borel sum} of $\psi_{\pm}$ in the direction $\theta$.
If $\psi_{\pm}$ is Borel summable in the direction $\theta$, then
the Borel sum recovers the WKB solution as its asymptotic expansion 
(for any fixed $x \in U$):
\[
\Psi^{(\theta)}_{\pm} \sim \psi_{\pm} \quad 
\text{when $\hbar \rightarrow 0$ with $|\arg \hbar - \theta| < \pi/2$}.
\]
Moreover, for any fixed $\hbar$ satisfying $|\arg \hbar - \theta| < \pi/2$
and $|\hbar| \ll 1$, the Borel sum $\Psi^{(\theta)}_{\pm}$ is a holomorphic solution of 
the equation \eqref{eq:equiv-P1} on $U$. 
Singularities of $\psi_{\pm,B}$ on $y$-plane (Borel-plane) spoils 
the Borel summability of the WKB solutions, and hence causes 
the Stokes phenomenon. The Stokes multipliers of the WKB solutions
are discussed in \cite{Voros83,KT05} for example 
(see also Section \ref{section:Stokes-mat}).

To discuss the Borel summability and the Stokes phenomenon 
for the WKB solutions, let us recall the notion of the Stokes graph\footnote{
Stokes graph is also known as an example of spectral networks \cite{GMN}.}
for a fixed phase $\theta \in {\mathbb R}$
(see {\cite[Definition 2.6]{KT05}}).

\begin{itemize}
\item 
A {\em Stokes curve} of  \eqref{eq:equiv-P1} of phase $\theta$
is a real one-dimensional integral curve of the direction field 
\[
{\rm Im}\left( {\rm e}^{-\mathrm{i} \theta} \int^{x} \sqrt{Q_{0}(x)} \, 
dx \right) = {\rm const.}
\] 
emanating from a turning point.

\item
A {\em saddle connection} of phase $\theta$ is a Stokes curve of phase $\theta$ which connects turning points. 

\item
The {\em Stokes graph} of \eqref{eq:equiv-P1} of phase $\theta$ 
is defined as a graph on $x$-plane whose vertices are zeros and 
poles of $Q_{0}(x)dx^{2}$, and whose edges are Stokes curves 
emanating from turning points.
\end{itemize}

We will use the notation $G_\theta$ for the Stokes graph of \eqref{eq:equiv-P1} 
of phase $\theta$. Figure \ref{fig:Stokes-graphs} 
depicts $G_{\theta}$ for several $\theta$
between $0$ and $\pi$ where the equivariant parameters are chosen as 
$(w_0, w_1) = (1,0)$.

A sufficient condition for the Borel summability is given as follows. 
\begin{thm}[\cite{Koike-Schafke}] \label{thm:Koike-Schafke}
Fix $\theta \in {\mathbb R}$. 
The WKB solution \eqref{eq:WKB-alt} is Borel summable in the direction $\theta$
near any point on each face of the Stokes graph $G_{\theta}$
when the following conditions are satisfied: 
%%%
\begin{itemize}
\item[(i)] 
The upper end-point $x$ of the integral in  \eqref{eq:WKB-alt} 
does not lie on $G_\theta$. 
\item[(ii)] 
The path $\gamma_x$ of integration in  \eqref{eq:normalization-at-tp}
can be deformed in the spectral curve $\Sigma_{{\mathbb C}{\bf P}^1_{\bm w}}$ 
so that its projection by 
$\pi : \Sigma_{{\mathbb C}{\bf P}^1_{\bm w}} \to {\mathbb C}{\bf P}^1$
never intersects with saddle connections in $G_\theta$.
\end{itemize}
%%%
\end{thm}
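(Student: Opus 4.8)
Since the statement is a sufficient condition for Borel summability, the plan is to understand the Borel transform $\psi_{\pm,B}(x,y)$ of the WKB solution \eqref{eq:WKB-alt} as a function of the Borel variable $y$, and to show that under (i)--(ii) it continues holomorphically, with at most exponential growth along the ray, to a neighbourhood of the whole half-line $\ell_\theta=\{a_\pm(x)+r\mathrm{e}^{\mathrm{i}\theta}\mid r\ge0\}$ appearing in \eqref{eq:Borel-sum}; then the Laplace integral \eqref{eq:Borel-sum} converges and is the Borel sum. The first step is routine: the Riccati recursion \eqref{eq:riccati-recursion} yields, by induction on $n$, a Gevrey-$1$ bound $|\psi_{\pm,n}(x)|\le A\,B^{n}\,n!$ locally in $x$, so the series \eqref{eq:Borel-trans} converges near $y=a_\pm(x)$ and defines a holomorphic germ there (this is \cite[Lemma 2.5]{KT05}). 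Since the gauge factor in \eqref{eq:gauge-transform} is $x^{(w_0+w_1)/2\hbar}$ times an $\hbar$-independent function, it only translates the Borel plane by $-\tfrac{w_0+w_1}{2}\log x$ and preserves Borel summability; hence it suffices to work with $\varphi$ solving \eqref{eq:equiv-P1-Sch}. The real content is the analytic continuation of the germ along $\ell_\theta$ together with the exponential estimate.

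For that I would leave the first-order Riccati picture and use \eqref{eq:equiv-P1-Sch} directly. Writing the would-be Borel sum as $\varphi=\int_{\ell_\theta}\mathrm{e}^{-y/\hbar}\varphi_B(x,y)\,dy$ and using the correspondence ``multiplication by $\hbar$ $\leftrightarrow$ $\partial_y^{-1}$'', the equation \eqref{eq:equiv-P1-Sch} becomes the linear partial differential equation $\partial_x^{2}\varphi_B-Q_0(x)\,\partial_y^{2}\varphi_B-Q_2(x)\,\varphi_B=0$. In the WKB coordinate $s=\int_v^x\sqrt{Q_0(x')}\,dx'$ its principal part is the wave operator $Q_0(\partial_s^{2}-\partial_y^{2})$, whose characteristics $y\mp s=\mathrm{const}$ project onto the $x$-plane to the level curves $\mathrm{Im}(\mathrm{e}^{-\mathrm{i}\theta}s)=\mathrm{const}$, i.e.\ exactly the Stokes curves of phase $\theta$ that constitute $G_\theta$; continuing $y$ along $\ell_\theta$ corresponds to moving $x$ along such a curve. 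Solving the PDE with the data of the leading WKB coefficient and treating $Q_2$ and the subprincipal term as a perturbation gives an integral equation solved by successive approximation, and the region to which $\varphi_B(x,\cdot)$ continues along $\ell_\theta$ without singularity is thereby controlled by this characteristic (hence Stokes) geometry. Condition (i) places $x$ in the interior of a face of $G_\theta$, so the characteristic strip through $x$ is non-degenerate there; condition (ii) guarantees that the singular locus of $\varphi_B$ produced by the branch point $v$ and its $\sigma$-image --- the part that enters precisely through the normalization path $\gamma_x$ in \eqref{eq:normalization-at-tp} --- is never swept onto the open ray $\ell_\theta$, which is the analytic translation of ``$\gamma_x$ can be pulled off the saddle connections''. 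A Gronwall/majorant estimate on the iteration then produces $|\varphi_B(x,y)|\le C_1\mathrm{e}^{C_2|y|}$ on the required strip, and undoing \eqref{eq:gauge-transform} transfers the conclusion to $\psi_\pm$.

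The main obstacle is the bookkeeping of Borel-plane singularities and its match with the combinatorics of $G_\theta$: the singularities of $\varphi_B(x,\cdot)$ sit at the ``action values'' $a_\pm(x)+\int_{\gamma}\sqrt{Q_0}\,dx$ over paths $\gamma$ joining $v$ (or, along a saddle connection, the two turning points) to $x$, shifted further by the period $V_{\gamma_0}$ of \eqref{eq:change-of-normalization} coming from the pole of $Q_0$ at $x=0$, and one must show that under (i)--(ii) none of these lands on $\ell_\theta$. Equally delicate is making the successive-approximation scheme converge uniformly up to the turning point $v$, where $\sqrt{Q_0}$ vanishes and $s$ degenerates; this is handled by working in the $s$-variable, in which $v$ becomes a regular point of the wave operator and the perturbation coefficient $Q_2/Q_0$ stays bounded because $v$ is a \emph{simple} turning point (here $w_0-w_1\ne0$ is used). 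I would not reproduce these estimates but cite the argument of \cite{Koike-Schafke}; I would only stress that the holomorphicity of the correlators $\omega^{(g)}_n$ away from ramification points --- equivalently the holomorphicity of $P^{(\pm)}_n$ for $n\ge2$ in Lemma \ref{lem:asymptotics} --- is what keeps the singularity bookkeeping finite, in particular making $V_{\gamma_0}$ a single monomial as in Remark \ref{rem:ambiguity}.
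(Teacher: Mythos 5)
The paper does not actually prove this theorem: immediately after the statement it says only that ``a proof of Theorem \ref{thm:Koike-Schafke} will be given in forthcoming paper \cite{Koike-Schafke}'', with a pointer to \cite[Section 3.1]{Takei}. So there is no in-paper argument to compare against, and your proposal should be judged as a free-standing sketch of the Koike--Sch\"afke strategy.

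As such a sketch, what you write is the correct and standard roadmap: convergence of the Borel transform near $y=a_{\pm}(x)$ via \cite[Lemma 2.5]{KT05}; reduction through the gauge factor \eqref{eq:gauge-transform} (your factorization $x^{(w_0+w_1)/(2\hbar)}\cdot x^{-1/2}$ and the resulting translation of the Borel plane are right); the PDE $\partial_x^2\varphi_B-Q_0\partial_y^2\varphi_B-Q_2\varphi_B=0$ for the Borel transform; the identification of its characteristics in the $s=\int_v^x\sqrt{Q_0}$ coordinate with the Stokes curves of phase $\theta$; and the reading of conditions (i) and (ii) as the statement that no Borel singularity (located at differences of action values, including the residue-period shifts coming from the pole at $x=0$) is swept onto the ray $\ell_\theta$. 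One small notational slip: the Borel-plane shift produced by the pole is $\delta=2\pi\mathrm{i}(w_0-w_1)$ as in \eqref{eq:period-delta}, i.e.\ $\hbar V_{\gamma_0}$, not $V_{\gamma_0}$ itself, which carries a factor $1/\hbar$. The substantive caveat is that everything you defer to \cite{Koike-Schafke} --- the uniform convergence of the successive approximations up to the simple turning point, the exponential bound $|\varphi_B|\le C_1\mathrm{e}^{C_2|y|}$ along the entire ray, and the proof that the singular support of $\varphi_B(x,\cdot)$ consists \emph{only} of the listed action values --- is precisely the content of the theorem; without those estimates the argument is a heuristic, not a proof. Since the paper itself offers nothing beyond the citation, your proposal is at least as informative as the paper's treatment, but neither constitutes an independent verification of the statement.
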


A proof of Theorem \ref{thm:Koike-Schafke} 
will be given in forthcoming paper \cite{Koike-Schafke}. 
See also \cite[Section 3.1]{Takei}.

%%%%%%%%%%%%%%%%%%
  \begin{figure}[t]
  \begin{minipage}{0.25\hsize}
  \begin{center}
  \includegraphics[width=27mm]{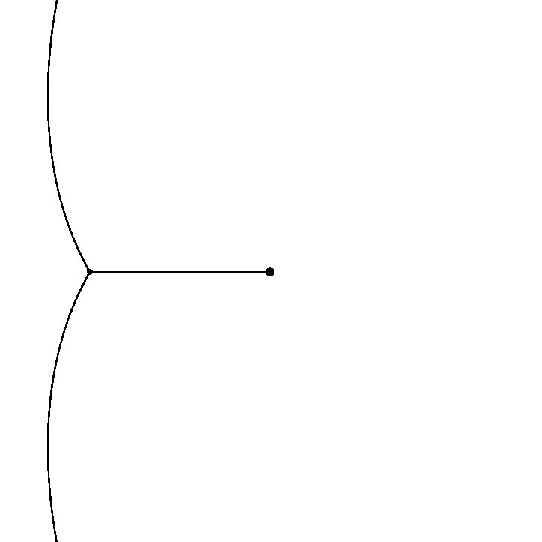} \\
  {\small{$\theta = 0$.}} 
  \end{center}
  \end{minipage}  \hspace{-1.em}
    \begin{minipage}{0.25\hsize}
  \begin{center}
  \includegraphics[width=27mm]{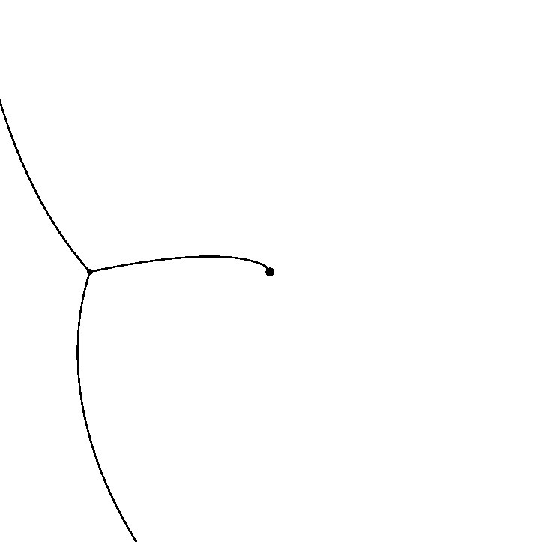} \\
  {\small{$\theta = \frac{2}{20}\pi$.}} 
  \end{center}
  \end{minipage}  \hspace{-1.em}
    \begin{minipage}{0.25\hsize}
  \begin{center}
  \includegraphics[width=27mm]{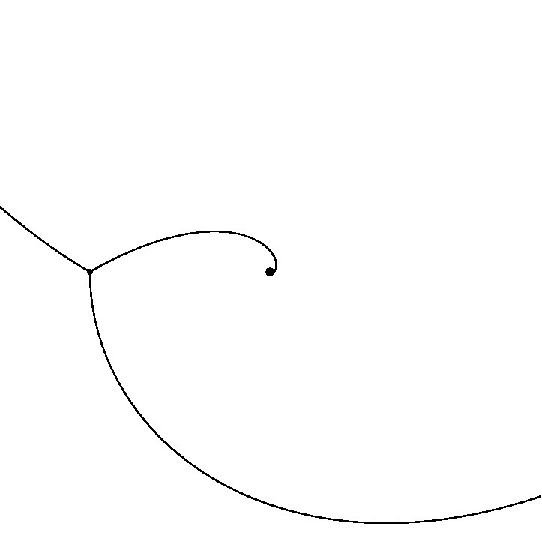} \\
  {\small{$\theta = \frac{5}{20}\pi$.}} 
  \end{center}
  \end{minipage}  \hspace{-1.em}
      \begin{minipage}{0.25\hsize}
  \begin{center}
  \includegraphics[width=27mm]{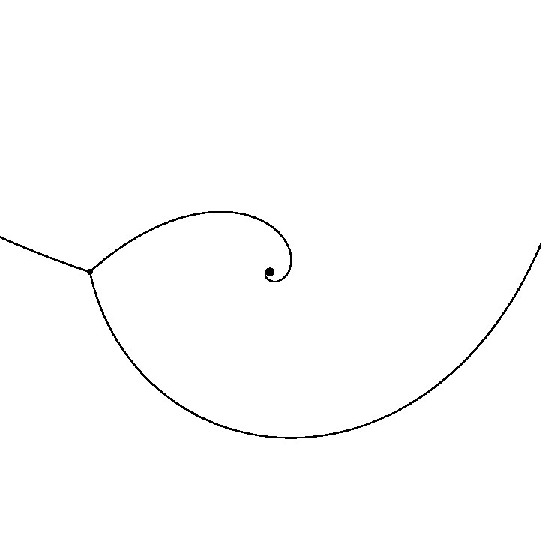} \\
  {\small{$\theta = \frac{7}{20}\pi$.}}
  \end{center}
  \end{minipage}  \\[+1.5em] %
  \begin{minipage}{0.25\hsize}
  \begin{center}
  \includegraphics[width=30mm]{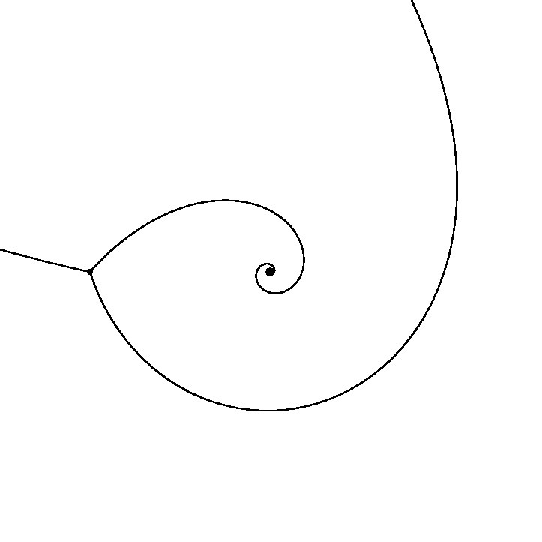} \\[-1.2em]
  {\small{$\theta = \frac{8}{20}\pi$.}} 
  \end{center} 
  \end{minipage}  \hspace{-1.em}
    \begin{minipage}{0.26\hsize}
  \begin{center}
  \includegraphics[width=29mm]{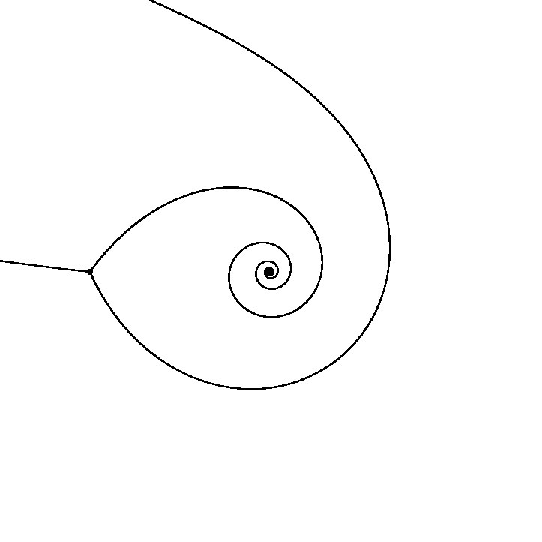} \\[-1.2em]
  {\small{$\theta = \frac{9}{20}\pi$.}} 
  \end{center}
  \end{minipage}   \hspace{-1.em}
    \begin{minipage}{0.26\hsize}
  \begin{center}
  \includegraphics[width=30mm]{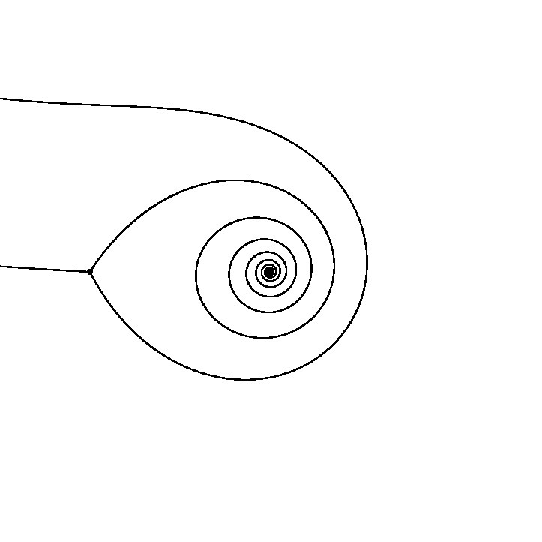} \\[-1.2em]
  {\small{$\theta = \frac{9.5}{20}\pi$.}}  
  \end{center}
  \end{minipage}  \hspace{-1.em}
    \begin{minipage}{0.24\hsize}
  \begin{center}
  \includegraphics[width=30mm]{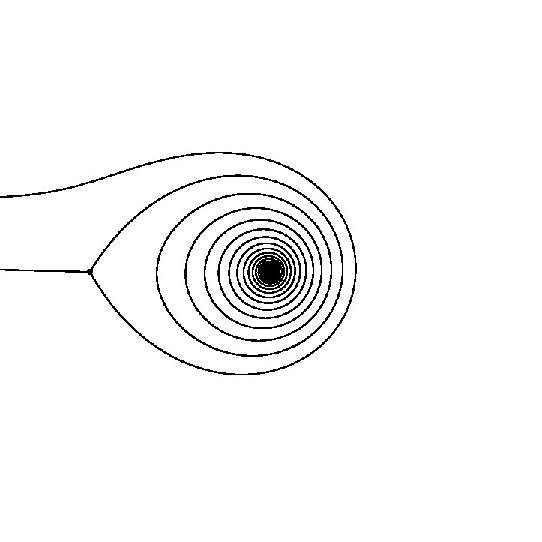} \\[-1.2em]
  {\small{$\theta = \frac{9.8}{20}\pi$.}}  
  \end{center}
  \end{minipage}  \\[-.0em] %
  %%% 
  \begin{minipage}{0.29\hsize}
  \begin{center}
  \includegraphics[width=36mm]{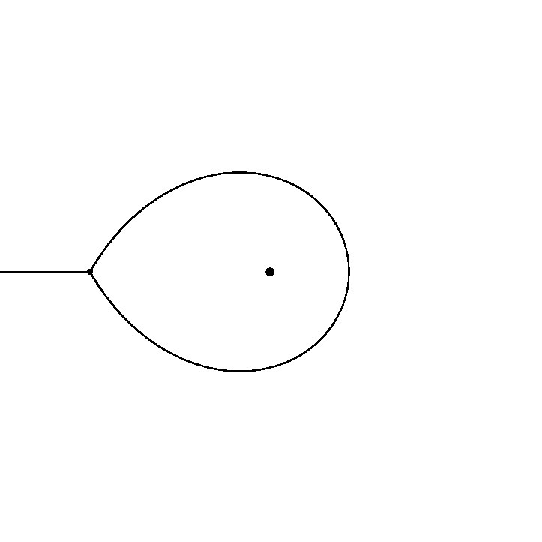} \\[-2.5em]
  {\small{$\theta = \frac{10}{20}\pi$.}}  
  \end{center}
  \end{minipage}  \\[-.3em] %
  %%%
  \begin{minipage}{0.25\hsize}
  \begin{center}
  \includegraphics[width=30mm]{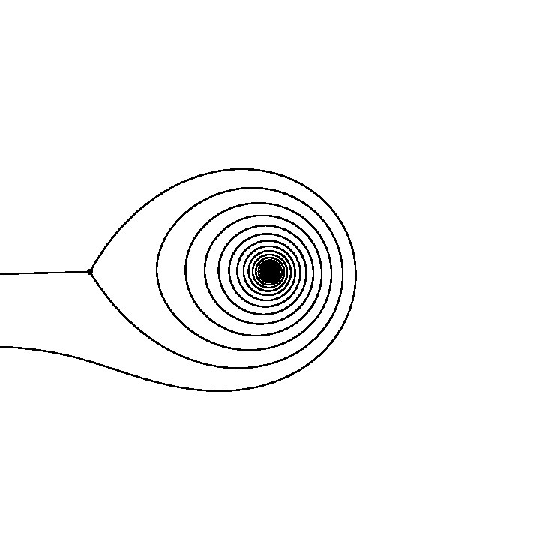} \\
  {\small{$\theta = \frac{10.2}{20}\pi$.}} % {\small{$\nu = +0.05 + i$.}}
  \end{center} 
  \end{minipage} \hspace{-1.em}
    \begin{minipage}{0.25\hsize}
  \begin{center}
  \includegraphics[width=30mm]{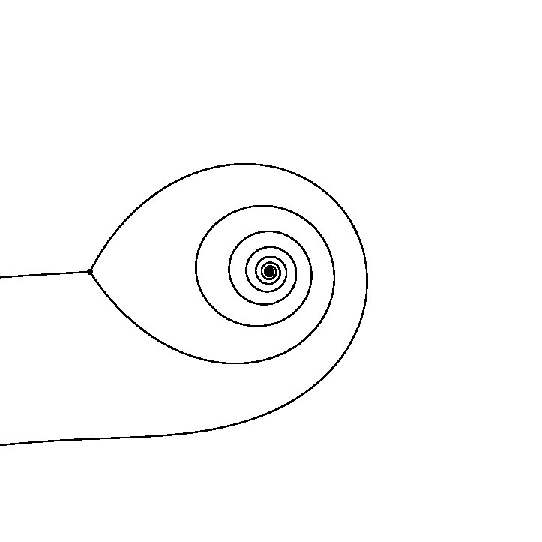} \\
  {\small{$\theta = \frac{10.5}{20}\pi$.}} % {\small{$\nu = +0.05 + i$.}}
  \end{center}
  \end{minipage} \hspace{-1.em}
    \begin{minipage}{0.25\hsize}
  \begin{center}
  \includegraphics[width=30mm]{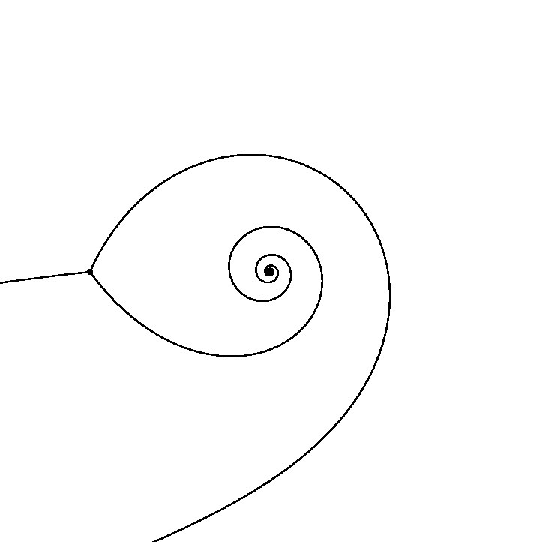} \\
  {\small{$\theta = \frac{11}{20}\pi$.}} % {\small{$\nu = +0.05 + i$.}}
  \end{center}
  \end{minipage}  \hspace{-1.em}
    \begin{minipage}{0.25\hsize}
  \begin{center}
  \includegraphics[width=30mm]{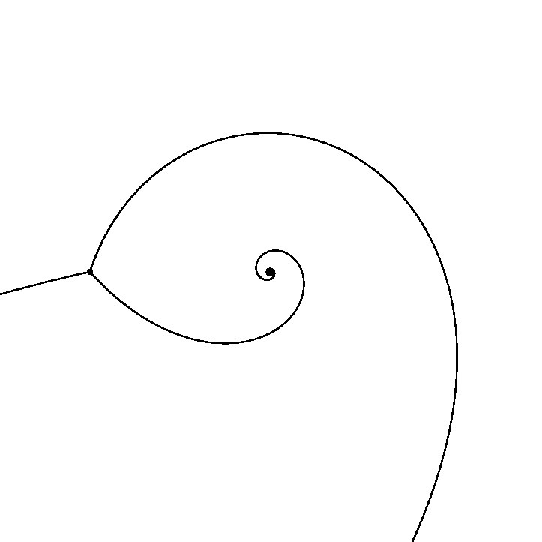} \\
  {\small{$\theta = \frac{12}{20}\pi$.}} % {\small{$\nu = +0.05 + i$.}}
  \end{center}
  \end{minipage} \\[+1.5em] %
    \begin{minipage}{0.25\hsize}
  \begin{center}
  \includegraphics[width=33mm]{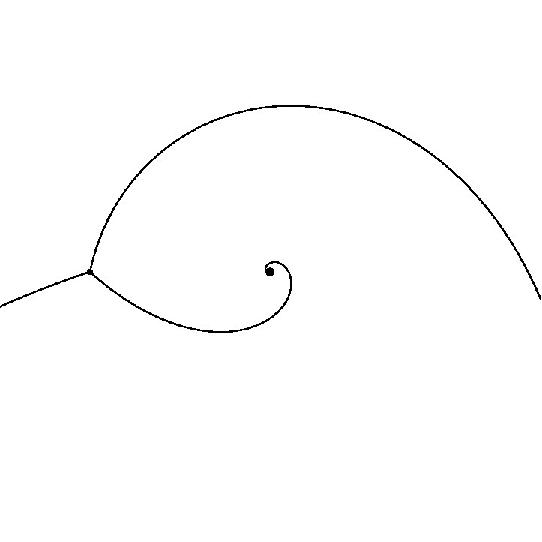} \\
  {\small{$\theta = \frac{13}{20}\pi$.}} % {\small{$\nu = +0.05 + i$.}}
  \end{center}
  \end{minipage}  \hspace{-1.em}
    \begin{minipage}{0.25\hsize}
  \begin{center}
  \includegraphics[width=33mm]{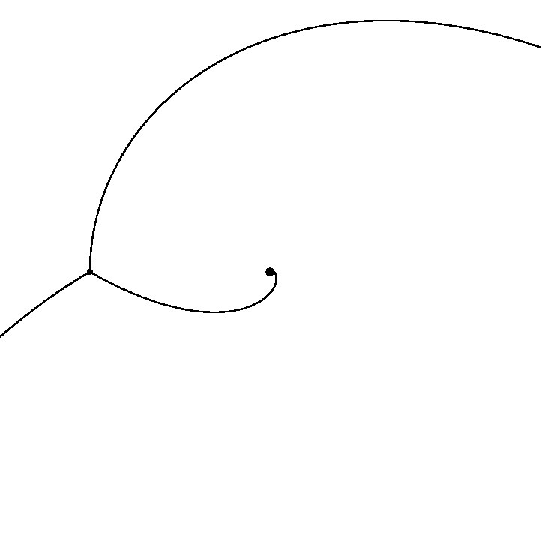} \\
  {\small{$\theta = \frac{15}{20}\pi$.}} % {\small{$\nu = +0.05 + i$.}}
  \end{center}
  \end{minipage}  \hspace{-1.em}
    \begin{minipage}{0.25\hsize}
  \begin{center}
  \includegraphics[width=33mm]{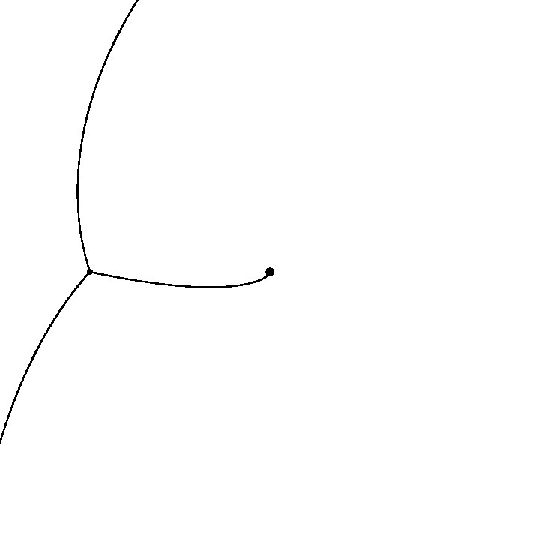} \\
  {\small{$\theta = \frac{18}{20}\pi$.}} % {\small{$\nu = +0.05 + i$.}}
  \end{center}
  \end{minipage}  \hspace{-1.em}
      \begin{minipage}{0.25\hsize}
  \begin{center}
  \includegraphics[width=33mm]{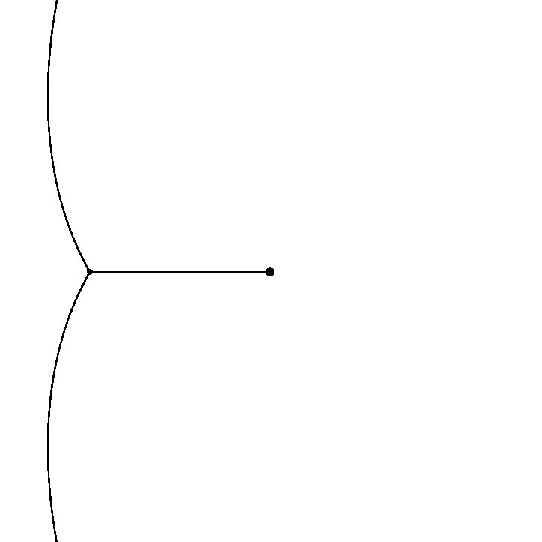} \\
  {\small{$\theta = \frac{20}{20}\pi$.}} % {\small{$\nu = +0.05 + i$.}}
  \end{center}
  \end{minipage}
  \caption{Stokes graphs of the equation \eqref{eq:equiv-P1} 
  for $w_0 - w_1 = 1$. 
  A loop-type saddle connection appears when $\theta = \pi/2$.}
  \label{fig:Stokes-graphs}
  \end{figure}
%%%%%%%%%%%%%%%%%

\subsection{Oscillatory integral and the Borel resummed WKB solution}
As is mentioned in Proposition \ref{prop:IGKZ}, the GKZ equation 
\eqref{eq:equiv-P1} is satisfied by the oscillatory integral
\begin{equation}
{\mathcal I}^{(\theta)}_{\pm}(x,\hbar) = \int_{\Gamma^{(\theta)}_{\pm}} 
\exp \left(\frac{W(u;x)}{\hbar} \right) 
\frac{du}{u},\quad 
W(u;x) = W_{{\mathbb C}{\bf P}^1_{\bm w}}(u;x) 
= u + \frac{x}{u} + w_0 \log u + w_1 \log \left(\frac{x}{u} \right).
\end{equation}
Here, for a fixed $\theta \in {\mathbb R}$, 
$\Gamma^{(\theta)}_{\pm}$ is the Lefschetz thimble of the phase $\theta$
(i.e. the steepest descent path for the function 
${\rm Re}({\rm e}^{- \mathrm{i} \theta} W$)) 
associated with the critical point 
\[
u_{\pm} = \frac{w_1 - w_0}{2} \pm \frac{\sqrt{4x+(w_0-w_1)^2}}{2} 
\]
of $W$. Precisely speaking, since the function $W$ contains the logarithm, we regard $u_{\pm}$ as a point on the universal cover $\widetilde{{\mathbb C}^\ast}$ of ${\mathbb C}^\ast$. Thus, any lift ${\rm e}^{2 k \pi {\rm i}}u_{\pm}$ ($k \in {\mathbb Z}$) of $u_{\pm}$ onto $\widetilde{{\mathbb C}^\ast}$ is a critical point of $W$, and the corresponding critical values satisfy $W({\rm e}^{2 k \pi {\rm i}}u_{\pm};x) = W(u_{\pm};x) + 2k \pi {\rm i} (w_0 - w_1)$. Note also that these critical points are non-degenerate as long as $x \ne v$.

Using Corollary \ref{cor:relation-oscillatory-integral-and-wave-function}, let us compare the oscillatory integral ${\mathcal I}^{(\theta)}_{\pm}$ and the Borel resummed WKB solution $\Psi^{(\theta)}_{\pm}$. 
For the purpose, we should know the well-definedness of the Lefschetz thimbles; that is, a sufficient condition which guarantees that the image $W(\Gamma_{\pm}^{(\theta)};x)$ of the Lefschetz thimble never hits other critical values of $W$. 
\begin{itemize}
\item 
If ${\rm e}^{- {\rm i} \theta} \, 2\pi {\rm i} (w_0 -w_1) \in {\mathbb R}$,  then the image $W(\Gamma_{\pm}^{(\theta)};x)$ of Lefschetz thimbles hits the critical value $W({\rm e}^{2 \pi {\rm i}} u_{\pm};x)$. Therefore, we assume that the phase $\theta$ satisfies
\begin{equation} \label{eq:tentative-condition}
{\rm e}^{- {\rm i} \theta} \, 2\pi {\rm i} (w_0 -w_1) \notin {\mathbb R}.
\end{equation}
Note that the above condition is satisfied if there is no loop-type saddle connection in the Stokes graph $G_{\theta}$ (see \eqref{eq:branch-on-first-sheet}). 

\item
Since $u_+ = u_-$ at $x=v$, we can verify $W(u_{+};x) - W(u_{-};x) = 2 \int^{x}_{v} \sqrt{Q_0(x')} dx' + 2 k \pi {\rm i} (w_0 - w_1)$, where $k$ is an integer specified by the branch of logarithm at the critical points.  In view of  \eqref{eq:branch-on-first-sheet}, we can take an appropriate path $\gamma_k$ from $v$ to $x$ (which turns around $x=0$ several times depending on $k$) satisfying $W(u_{+};x) - W(u_{-};x) = 2 \int_{\gamma_k} \sqrt{Q_0(x')} dx'$. 
Therefore, if we assume 
\begin{equation} \label{eq:tentative-condition-2}
{\rm e}^{- {\rm i} \theta} \int_{\gamma_k} \sqrt{Q_0(x')} dx' \notin {\mathbb R} \quad \text{for any $k \in {\mathbb Z}$},
\end{equation}
then the image $W(\Gamma_{\pm}^{(\theta)};x)$ of Lefschetz thimbles never intersect with each other.
Note that the condition is satisfied if $x$ does not lie on  Stokes curves of the phase $\theta$.
\end{itemize}
In summary, we can show that the Lefschetz thimbles are well defined if the conditions \eqref{eq:tentative-condition} and \eqref{eq:tentative-condition-2} are satisfied\footnote{These conditions are sufficient conditions because, even if the image of Lefschetz thimbles intersects, the oscillatory integral is well-defined and the saddle point approximation is valid if the corresponding vanishing cycles never intersect.}. 
The discussion given here also implies that the WKB solutions are Borel summable under these conditions, and hence, the Borel resummed WKB solutions $\Psi_{\pm}^{(\theta)}$ are well-defined.

Let us compare the asymptotic expansions (saddle point approximation). 
As is mentioned in Section \ref{subsection:GKZ-curves-critical-set}, the asymptotic expansion of
${\mathcal I}^{(\theta)}_{\pm}$ when $\hbar \to 0$, 
$|\arg \hbar - \theta| < \pi/2$ is given by
\begin{equation}
{\mathcal I}^{(\theta)}_{\pm}(x,\hbar) \sim 
\exp\left( \frac{1}{\hbar} W(u_{\pm};x) \right)
\frac{(-2\pi \hbar)^{1/2}}
{u_{\pm} \sqrt{{\rm Hess}(u_{\pm})}} 
\left( 1 + \sum_{n = 1}^{\infty} {\mathcal I}_n^{(\pm)}(x) \hbar^{n} \right),
\end{equation}
where ${\rm Hess}(u_{\pm}) = W^{\prime \prime}(u_{\pm};x)$ is the Hessian of $W$ at the critical point $u_{\pm}$. We can verify that $S_0^{(\pm)} = W(u_{\pm};x)$ (we fix the ambiguity in the branch of logarithm in $S_0^{(\pm)}$ so that this equality holds), and
\begin{equation}
\frac{(-2\pi)^{1/2}}{u_{\pm} \sqrt{{\rm Hess}(u_{\pm})}} 
= \frac{(\pm1)^{-1/2} (-2 \pi)^{1/2}}{(4x+(w_0-w_1)^2)^{1/4}}
= (\pm 1)^{-1/2} (- \pi)^{1/2} \psi_0^{(\pm)}(x).
\end{equation}
This equality together with Corollary \ref{cor:relation-oscillatory-integral-and-wave-function} and Proposition \ref{prop:WKB-solution-and-TR-wave-function-equivr-CP1} show that 
\begin{equation}
{\mathcal I}^{(\theta)}_{\pm} \sim 
(\pm 1)^{-1/2} (-\pi)^{1/2} \psi_{\pm}
\end{equation}
holds when $\hbar \to 0$, $|\arg \hbar - \theta| < \pi/2$.
Comparing the asymptotic expansion, we obtain the relationship between exact solutions of the GKZ equation \eqref{eq:equiv-P1}
(which is a refinement of Proposition \ref{prop:quantum-curve-equivariant-CP1}):
\begin{prop} \label{prop:relation-oscillatory-and-WKB}
${\mathcal I}^{(\theta)}_{\pm}(x,\hbar) = 
(\pm 1)^{-1/2} (-\pi)^{1/2} \Psi^{(\theta)}_{\pm}(x,\hbar)$ 
holds if the conditions \eqref{eq:tentative-condition} and \eqref{eq:tentative-condition-2} are satisfied.
\end{prop}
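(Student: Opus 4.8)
The plan is to identify both sides of the claimed identity as the Borel sum, in the direction $\theta$, of one and the same formal series, and then invoke uniqueness of Borel summation. By construction $\Psi^{(\theta)}_{\pm}$ is the Borel sum in the direction $\theta$ of the WKB series $\psi_{\pm}$ of \eqref{eq:WKB-series}, and the discussion preceding the proposition shows that $\psi_{\pm}$ is indeed Borel summable in the direction $\theta$ whenever \eqref{eq:tentative-condition} and \eqref{eq:tentative-condition-2} hold (these conditions are precisely what is needed to verify the hypotheses of Theorem \ref{thm:Koike-Schafke}). So the only thing left is to show that the oscillatory integral $\mathcal{I}^{(\theta)}_{\pm}$, which is an exact (analytic) solution of the GKZ equation \eqref{eq:equiv-P1} by Proposition \ref{prop:IGKZ}, is itself the Borel sum in the direction $\theta$ of its own saddle point expansion, which is $(\pm1)^{-1/2}(-\pi)^{1/2}\psi_{\pm}$ by Corollary \ref{cor:relation-oscillatory-integral-and-wave-function} and Proposition \ref{prop:WKB-solution-and-TR-wave-function-equivr-CP1}.

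To see this I would make the change of integration variable $u \mapsto s := W(u;x) - W(u_{\pm};x)$ along the Lefschetz thimble $\Gamma^{(\theta)}_{\pm}$. Since $x \ne v$ the critical point $u_{\pm}$ is non-degenerate, and conditions \eqref{eq:tentative-condition}, \eqref{eq:tentative-condition-2} guarantee that the image $W(\Gamma^{(\theta)}_{\pm};x)$ meets no other critical value of the multivalued potential $W$; hence $W$ folds the thimble two-to-one onto the ray $\{\, s = -r\,\mathrm{e}^{\mathrm{i}\theta} \mid r \ge 0 \,\}$, and summing the two local branches of $u = u(s)$ one gets
\[
\mathcal{I}^{(\theta)}_{\pm}(x,\hbar) = \mathrm{e}^{W(u_{\pm};x)/\hbar} \int_{0}^{\infty} \mathrm{e}^{-r\mathrm{e}^{\mathrm{i}\theta}/\hbar}\, \rho_{\pm}(x, r\mathrm{e}^{\mathrm{i}\theta})\, d(r\mathrm{e}^{\mathrm{i}\theta}),
\]
where $\rho_{\pm}(x,\cdot)$ is the push-forward of $\tfrac{du}{u}$ under this folding map. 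Writing $y = -s = a_{\pm}(x) + r\mathrm{e}^{\mathrm{i}\theta}$ with $a_{\pm}(x) = -S^{(\pm)}_0(x) = -W(u_{\pm};x)$ (the branch of $\log$ being fixed so that $S_0^{(\pm)} = W(u_{\pm};x)$, as just before the proposition), this is exactly a Laplace integral of the form \eqref{eq:Borel-sum} along $\ell_{\theta}$. Comparing the Taylor expansion of $\rho_{\pm}$ at $s=0$, obtained from the classical Laplace/steepest-descent lemma for a single non-degenerate critical point, with \eqref{eq:Borel-trans}, and using the already established matching of the asymptotic expansions $\mathcal{I}^{(\theta)}_{\pm} \sim (\pm1)^{-1/2}(-\pi)^{1/2}\psi_{\pm}$, one concludes that $\rho_{\pm}(x,\cdot) = (\pm1)^{-1/2}(-\pi)^{1/2}\,(\mathcal{B}\psi_{\pm})(x,\cdot)$.

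Then $\mathcal{I}^{(\theta)}_{\pm} = (\pm1)^{-1/2}(-\pi)^{1/2}\int_{\ell_{\theta}} \mathrm{e}^{-y/\hbar}(\mathcal{B}\psi_{\pm})(x,y)\,dy = (\pm1)^{-1/2}(-\pi)^{1/2}\,\Psi^{(\theta)}_{\pm}(x,\hbar)$, since the Borel transform is determined term-by-term by $\psi_{\pm}$ and the Laplace integral along $\ell_{\theta}$ is unambiguous, so the two Borel sums agree. (Alternatively, one may finish by linear ODE theory: as \eqref{eq:tentative-condition}, \eqref{eq:tentative-condition-2} are open in $\theta$, both functions extend to a sector of opening $>\pi$ on which they carry the same Gevrey-$1$ asymptotic expansion, so their difference—an exact solution of \eqref{eq:equiv-P1}—has vanishing expansion on that sector and hence is identically zero.) This would prove the proposition. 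The main obstacle is the middle step: justifying rigorously that the Lefschetz-thimble oscillatory integral is literally the Borel sum of its saddle point expansion, i.e. that the push-forward $\rho_{\pm}$ is holomorphic with at most exponential growth on the relevant domain and that its coefficients at the critical value reproduce $\mathcal{B}\psi_{\pm}$; this is a standard but somewhat technical fact about steepest-descent integrals with one non-degenerate critical point, and the two hypotheses on $\theta$ are exactly what keeps the folding map onto a single ray and avoids collisions with the other critical values of the multivalued $W$.
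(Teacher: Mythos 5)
Your proposal is correct and follows essentially the same route as the paper: both arguments rest on matching the saddle-point expansion of ${\mathcal I}^{(\theta)}_{\pm}$ with $(\pm 1)^{-1/2}(-\pi)^{1/2}\psi_{\pm}$ via Corollary \ref{cor:relation-oscillatory-integral-and-wave-function} and Proposition \ref{prop:WKB-solution-and-TR-wave-function-equivr-CP1}, with the conditions \eqref{eq:tentative-condition} and \eqref{eq:tentative-condition-2} guaranteeing well-defined Lefschetz thimbles and Borel summability. The only difference is that the paper compresses the passage from matching formal expansions to equality of exact solutions into the phrase ``comparing the asymptotic expansion,'' whereas you spell that step out (the thimble integral as the Laplace transform of the push-forward density, identified with the Borel transform near the critical value and then by analytic continuation along the ray; or, alternatively, Watson's lemma on a sector of opening greater than $\pi$) --- a welcome elaboration rather than a genuinely different proof.
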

Hence, the computation of Stokes matrices in the subsequent sections
can be translated to results for the oscillatory integrals.

\subsection{Stokes matrices for WKB solutions normalized at the turning point}
\label{section:Stokes-mat}
Suppose that, in a direction $\theta_{0}$, 
the one of the following conditions is satisfied:
\begin{itemize}
\item[(i)] 
A Stokes curve of the phase $\theta_0$ hits the point $x$.
\item[(ii)] 
A saddle connection appears in $G_{\theta_0}$, 
and it intersects with the path $\gamma_x$.
\end{itemize}
The WKB solution is not Borel summable in these cases 
because certain singularities appear on the ray 
$\{ y = a_{\pm}(x) + r {\rm e}^{\mathrm{i} \theta_0} \in {\mathbb C}~|~ r \ge 0 \}$ 
in the Borel-plane.
Let us describe the Stokes matrices for both cases.

\subsubsection{The case {\rm (i)}}
First, we recall the Voros' connection formula which 
describes the Stokes phenomenon of the type (i)
for the Borel resummed WKB solutions.

Let us specify the situation to state the connection formula. 
Take any point $x$, and suppose that there exists 
a direction $\theta_0$ and a sufficiently small 
number $\varepsilon > 0$ satisfying the following conditions:
\begin{itemize}
\item The Stokes graphs $G_{\theta}$ have no saddle connection 
for any $\theta$ satisfying $\theta_0-\varepsilon \le \theta \le 
\theta_0+\varepsilon$.
\item 
The point $x$ lies on a Stokes curve $C$ of the phase $\theta_0$ 
emanating from the turning point $v$. Note that this assumption implies 
${\rm e}^{- \mathrm{i} \theta_0} \int^{x}_v \sqrt{Q_0(x')} \, dx' \in {\mathbb R}_{\ne 0}$ 
(where the integral is taken along $C$) by the definition of Stokes curves. 
\item 
The point $x$ does not lie on 
$G_{\theta}$ for any $\theta$ satisfying 
$\theta_0-\varepsilon \le \theta < \theta_0$ or
$\theta_0 < \theta \le \theta_0+\varepsilon$.
\end{itemize}
Let $\psi_{\pm}$ be the WKB solution \eqref{eq:WKB-alt} normalized at $v$
along the Stokes curve $C$ of phase $\theta_0$.
Denote by $\Psi^{(\theta_0-\varepsilon)}_{\pm}$
(resp. $\Psi^{(\theta_0+\varepsilon)}_{\pm}$) 
the Borel sum of $\psi_{\pm}$ in the direction 
$\theta_0-\varepsilon$ (resp. $\theta_0+\varepsilon$). 
Then, we have the following statement.
\begin{thm}[{\cite[Section 6]{Voros83}; see also \cite[Theorem 2.23]{KT05}}]
\label{thm:Voros-formula}
In the situation above, the Borel transformed WKB solution $\psi_{+,B}$ 
(resp. $\psi_{-,B}$) has the singular point at $y = a_{-}(x)$ 
(resp. at $y = a_{+}(x)$). 
Moreover, the following equalities hold:
\begin{itemize}
\item[(i)] %%
When ${\rm e}^{-\mathrm{i} \theta_0} \int^{x}_{v} \sqrt{Q_0(x)}dx > 0$ on $C$, then
\begin{equation} \label{eq:Voros-formula-Stokes-1}
(\Psi^{(\theta_0-\varepsilon)}_{+}, \Psi^{(\theta_0-\varepsilon)}_{-})
= (\Psi^{(\theta_0+\varepsilon)}_{+}, \Psi^{(\theta_0+\varepsilon)}_{-})
\begin{pmatrix} 
1 & 0 \\ -\mathrm{i}  & 1
\end{pmatrix}.
\end{equation}

\item[(ii)] %%
When ${\rm e}^{-\mathrm{i} \theta_0} \int^{x}_{v} \sqrt{Q_0(x)}dx < 0$ on $C$, then
\begin{equation}\label{eq:Voros-formula-Stokes-2}
(\Psi^{(\theta_0-\varepsilon)}_{+}, \Psi^{(\theta_0-\varepsilon)}_{-})
= (\Psi^{(\theta_0+\varepsilon)}_{+}, \Psi^{(\theta_0+\varepsilon)}_{-})
\begin{pmatrix} 
1 & -\mathrm{i} \\ 0  & 1
\end{pmatrix}.
\end{equation}

\end{itemize}
\end{thm}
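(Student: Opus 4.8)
The plan is to follow the standard strategy of the exact WKB analysis of \cite{Voros83,KT05}: for a simple turning point the Stokes phenomenon of the WKB solutions normalized at that point is generated entirely at the turning point and is controlled by the local reduction of \eqref{eq:equiv-P1-Sch} to the Airy equation near $v$. First I would record that, by Theorem \ref{thm:Koike-Schafke}, under the stated hypotheses the WKB solutions $\psi_{\pm}$ of \eqref{eq:equiv-P1-Sch} are Borel summable for every $\theta$ with $\theta_0-\varepsilon \le \theta \le \theta_0+\varepsilon$ except possibly $\theta=\theta_0$, so that $\Psi^{(\theta_0\mp\varepsilon)}_{\pm}$ are well-defined holomorphic solutions of \eqref{eq:equiv-P1-Sch} near $x$; since the space of solutions is two-dimensional, the transition between them is a constant matrix, and it only remains to identify it.

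Next I would analyze the singular structure of the Borel transforms $\psi_{\pm,B}(x,y)$ in the Borel plane. Writing $a_{\pm}(x)=-S^{(\pm)}_0(x)$, the convolution structure inherited from the Riccati recursion \eqref{eq:riccati-recursion}, together with the holomorphicity of $P^{(\pm)}_n(x)$ away from $v$ (Lemma \ref{lem:asymptotics}, itself a consequence of the topological recursion), shows that $\psi_{+,B}(x,\cdot)$ is singular only at $y=a_{-}(x)$, that $\psi_{-,B}(x,\cdot)$ is singular only at $y=a_{+}(x)$, and that these singularities are of square-root type. The precise form of the singular part --- in particular the proportionality of the discontinuity of $\psi_{+,B}$ across $y=a_{-}(x)$ to $\psi_{-,B}$, with multiplier $-\mathrm{i}$ --- is obtained by transporting the known Borel-plane singularity structure of the Airy integral through the analytic change of variable that normalizes $Q_0\,dx^2$ near $v$; the numerical constant $-\mathrm{i}$ is exactly the Airy Stokes multiplier in this normalization, as computed in \cite[Section 2]{KT05}.

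Having located the singularities, the jump of the Laplace integral \eqref{eq:Borel-sum} as $\theta$ sweeps past $\theta_0$ is computed by contour deformation: the ray $\ell_\theta$ meets the singular point of the relevant Borel transform exactly at $\theta=\theta_0$. Indeed the vector from $a_{+}(x)$ to the singular point $a_{-}(x)$ of $\psi_{+,B}$ equals $2\int^{x}_{v}\sqrt{Q_0}\,dx$ (and the vector from $a_{-}(x)$ to the singular point $a_{+}(x)$ of $\psi_{-,B}$ is its negative), so the hypothesis that $x$ lies on a Stokes curve of phase $\theta_0$ --- equivalently ${\rm e}^{-\mathrm{i}\theta_0}\int^{x}_{v}\sqrt{Q_0}\,dx\in{\mathbb R}_{\ne0}$ --- says precisely that one of these rays passes through the relevant singularity, and the sign of ${\rm e}^{-\mathrm{i}\theta_0}\int^{x}_{v}\sqrt{Q_0}\,dx$ decides whether it is the singularity of $\psi_{+,B}$ or of $\psi_{-,B}$ that is crossed, hence whether $\Psi_{+}$ or $\Psi_{-}$ picks up a contribution proportional to the other. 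This is what distinguishes case (i) from case (ii) and makes the Stokes matrix lower- or upper-triangular; reading off the proportionality constant from the Airy model gives \eqref{eq:Voros-formula-Stokes-1} and \eqref{eq:Voros-formula-Stokes-2}.

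The main obstacle is the rigorous control of the analytic continuation of $\psi_{\pm,B}(x,y)$ past its first singularity, and the verification that the local Airy reduction faithfully reproduces the Borel-plane singular structure (including the exponential bound needed for the deformed Laplace integral to converge). This is exactly the content of the connection and Borel-summability results of \cite{Voros83,KT05,Koike-Schafke}, so in practice I would invoke those theorems rather than redo the microlocal analysis; the remaining work is the bookkeeping of signs and of the branch of the logarithm in $a_{\pm}(x)$, which is routine once the direction of $a_{-}(x)-a_{+}(x)$ relative to ${\rm e}^{\mathrm{i}\theta_0}$ is pinned down using \eqref{eq:branch-on-first-sheet}.
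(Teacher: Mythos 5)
Your proposal is correct and follows exactly the standard route that the paper itself defers to by citation: Borel summability off the critical direction via Theorem \ref{thm:Koike-Schafke}, localization of the Borel-plane singularity of $\psi_{\pm,B}$ at $y=a_{\mp}(x)$ with $a_-(x)-a_+(x)=2\int_v^x\sqrt{Q_0}\,dx$, the Airy local model at the simple turning point supplying the multiplier $-\mathrm{i}$, and contour deformation of the Laplace integral to read off the triangular Stokes matrices, with the sign of ${\rm e}^{-\mathrm{i}\theta_0}\int_v^x\sqrt{Q_0}\,dx$ correctly distinguishing cases (i) and (ii). The paper gives no independent proof but cites \cite[Section 6]{Voros83} and \cite[Theorem 2.23]{KT05}, which is precisely the argument you outline, so there is nothing to add.
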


The lower/upper triangular matrices in \eqref{eq:Voros-formula-Stokes-1} 
and \eqref{eq:Voros-formula-Stokes-2}
are called the {\em Stokes matrices} associated with the direction $\theta_0$.

\subsubsection{The case {\rm (ii)}}
Next let us show the formula for the Stokes phenomenon of the type (ii)
caused by the loop-type saddle connection. 

Suppose that a direction $\theta_0$ and a point $x$ satisfy the following conditions:
\begin{itemize}
\item The Stokes graph $G_{\theta_0}$ has 
a loop-type saddle connection around $0$.
\item The point $x$ does not lie on the Stokes graph $G_{\theta_0}$.
\end{itemize}
Set 
\begin{equation} \label{eq:period-delta}
\delta = \oint_{\gamma_0 - \sigma_{\ast}\gamma_0} \sqrt{Q_0(x)} dx 
= 2 \pi \mathrm{i} (w_0 - w_1).
\end{equation}
The first assumption implies that 
${\rm e}^{-\mathrm{i} \theta_0} \delta \in {\mathbb R}_{\ne 0}$.
To specify the situation, we further assume:
\begin{itemize}
\item The real part of ${\rm e}^{-\mathrm{i} \theta_0} \delta$ is positive. 
\end{itemize}
The second assumption implies $x$ lies on one of connected components 
$D_0$ and $D_{\infty}$ of ${\mathbb C}{\bf P}^1 \setminus \{ \rm loop \}$, 
where $D_0$ (resp. $D_{\infty}$) contains $x=0$ (resp. $x=\infty$).

Let $\psi_{\pm}$ be the WKB solution \eqref{eq:WKB-alt} normalized at $v$. 
Note that there is an ambiguity in the choice of the path from $v$ to $x$ 
(see Remark \ref{rem:ambiguity}), 
but the following formula holds for arbitrary choice.

\begin{thm}[{\cite{AIT}}]  \label{thm:AIT}
In the situation above, the following statements hold:
\begin{itemize}
\item[(i)] 
If $x \in D_{\infty}$, then the WKB solutions $\psi_{\pm}$
are Borel summable in the direction $\theta_0$. 
In particular, the Borel sum of the WKB solutions satisfy
\begin{equation}
(\Psi^{(\theta_0-\varepsilon)}_{+}, \Psi^{(\theta_0-\varepsilon)}_{-})
= (\Psi^{(\theta_0+\varepsilon)}_{+}, \Psi^{(\theta_0+\varepsilon)}_{-}),
\end{equation} 
where $\Psi^{(\theta_0-\varepsilon)}_{\pm}$
(resp. $\Psi^{(\theta_0+\varepsilon)}_{\pm}$) 
are the Borel sum of $\psi_{\pm}$ in the direction $\theta_0-\varepsilon$ 
(resp. $\theta_0+\varepsilon$) for sufficiently small $\varepsilon > 0$.

\item[(ii)] 
If $x \in D_{0}$, then the Borel transformed WKB solution $\psi_{\pm,B}$
has singular points at $y = a_{\pm}(x) + m \delta$ 
with $m \in {\mathbb Z}_{\ne 0}$ 
(and hence $\psi_{\pm}$ is not Borel summable in the direction $\theta_0$). 
The Borel sum of the WKB solutions satisfy
\begin{equation} \label{eq:4d-BPS}
(\Psi^{(\theta_0, R)}_{+}, \Psi^{(\theta_0, R)}_{-})
= (\Psi^{(\theta_0, L)}_{+}, \Psi^{(\theta_0, L)}_{-}) 
\begin{pmatrix}
1 - {\rm e}^{-V_{\gamma_0}} & 0 \\ 0 & (1 - {\rm e}^{-V_{\gamma_0}})^{-1}
\end{pmatrix},
\end{equation} 
where $\Psi^{(\theta_0, R)}_{\pm}$
(resp. $\Psi^{(\theta_0, L)}_{\pm}$) 
is the Borel sum of $\psi_{\pm}$ in the direction $\theta_0$ 
defined as the Laplace integral as \eqref{eq:Borel-sum} along 
the path $\ell_{\theta_0,R}$ (resp. $\ell_{\theta_0,L}$)
depicted in Figure \ref{fig:right-left-path}.
\end{itemize}
\end{thm}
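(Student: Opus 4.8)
The plan is to prove Theorem \ref{thm:AIT} by reducing the WKB-theoretic statement to a careful analysis of the singularity structure of the Borel transform $\psi_{\pm,B}(x,y)$ on the Borel plane, following the standard machinery of exact WKB analysis (cf. \cite{DDP93,KT05,IN14}) but adapted to the presence of a loop-type saddle connection. First I would record that, since the only turning point is $v$ and all $P^{(\pm)}_n(x)\,dx$ for $n\ge 2$ are holomorphic except at $v$ (Lemma \ref{lem:asymptotics}, a consequence of the topological recursion), the Voros coefficient $V_{\gamma_0}$ truncates to the single term in \eqref{eq:change-of-normalization}, namely $V_{\gamma_0}=2\pi\mathrm{i}(w_0-w_1)/\hbar=\delta/\hbar$. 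Consequently the only possible singularities of $\psi_{\pm,B}(x,\cdot)$ lie at the ``Borel-plane images'' of the relevant homotopy classes of paths on $\Sigma_{{\mathbb C}{\bf P}^1_{\bm w}}$: the fixed endpoint singularity at $y=a_{\mp}(x)$ coming from the turning point (as in Theorem \ref{thm:Voros-formula}), together with its translates by integer multiples of $\delta$ produced by winding around the loop $\gamma_0$. The location of these translated singularities relative to the ray $\ell_{\theta_0}$ is governed by whether $x$ lies in $D_0$ or $D_\infty$: in $D_\infty$ the branch-point singularity and its $\delta$-translates all lie off the ray $\arg(y-a_\pm(x))=\theta_0$, whereas in $D_0$ the translates $a_\pm(x)+m\delta$ ($m\in{\mathbb Z}_{\ne0}$) line up precisely along the direction $\theta_0$ because $\mathrm{e}^{-\mathrm{i}\theta_0}\delta\in{\mathbb R}_{>0}$.

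For part (i), with $x\in D_\infty$, I would argue that the path $\gamma_x$ of normalization can be deformed on the spectral curve so that its projection avoids the loop-type saddle connection; then Theorem \ref{thm:Koike-Schafke}(ii) applies and gives Borel summability in the direction $\theta_0$, so the Borel sums from $\theta_0-\varepsilon$ and $\theta_0+\varepsilon$ coincide. The content here is purely the combinatorial/topological fact that the complement of the loop containing $\infty$ admits such a deformation; this is a direct inspection of the Stokes graph $G_{\theta_0}$ depicted (for $\theta_0=\pi/2$) in Figure \ref{fig:Stokes-graphs}. For part (ii), with $x\in D_0$, the geometric picture is that the normalization path $\gamma_x$ must cross the loop, so the singularities at $a_\pm(x)+m\delta$ genuinely obstruct summability along $\theta_0$. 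I would then compute the jump by a standard contour-deformation (alien-derivative / Laplace-integral splitting) argument: writing the difference $\Psi^{(\theta_0,R)}_\pm-\Psi^{(\theta_0,L)}_\pm$ as a sum of Laplace integrals of the residues (more precisely, of the discontinuities of $\psi_{\pm,B}$) at the singular points $a_\pm(x)+m\delta$, and using that each such discontinuity is, by the winding structure, a constant multiple of $\psi_{\pm,B}$ shifted by $m\delta$. Summing the resulting geometric series in $\mathrm{e}^{-\delta/\hbar}=\mathrm{e}^{-V_{\gamma_0}}$ produces exactly the diagonal factor $\mathrm{diag}(1-\mathrm{e}^{-V_{\gamma_0}},(1-\mathrm{e}^{-V_{\gamma_0}})^{-1})$ in \eqref{eq:4d-BPS}; the reciprocal entry for $\psi_-$ follows from the fact that $\psi_+\psi_-$ has trivial monodromy (equivalently $P_{\rm even}=-\tfrac12 P_{\rm odd}^{-1}dP_{\rm odd}/dx$, Lemma \ref{lem:asymptotics}(iii)), so the connection matrix has determinant one.

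The main obstacle I anticipate is making rigorous the claim that the \emph{only} singularities of $\psi_{\pm,B}(x,\cdot)$ relevant to the ray $\ell_{\theta_0}$ are the arithmetic progression $\{a_\pm(x)+m\delta\}$ and, crucially, that the discontinuity of $\psi_{\pm,B}$ across each of these is \emph{exactly} governed by the truncated Voros coefficient with no extra contributions. Equivalently, one must control the convolution structure of the Borel transform near these singularities and show that iterated windings contribute precisely the powers $(\mathrm{e}^{-V_{\gamma_0}})^m$ with combinatorial coefficient one — this is where the holomorphicity of the higher $\omega^{(g)}_n$ (hence of $P^{(\pm)}_n$ for $n\ge2$) is doing the essential work, ruling out ``wild'' accumulation of singularities. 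A secondary technical point is uniformity: the Borel summability and the jump formula must hold uniformly for $x$ in compact subsets of $D_0$ (resp. $D_\infty$), which requires the estimates in the definition of Borel summability to be locally uniform. Since the full justification of the relevant Borel summability statement is deferred to \cite{AIT} (and \cite{Koike-Schafke} for Theorem \ref{thm:Koike-Schafke}), in this paper I would present the above as the strategy and carry out explicitly only the contour-deformation computation of the geometric series, citing \cite{AIT} for the analytic estimates.
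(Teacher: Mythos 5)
The paper itself offers no proof of this statement: Theorem \ref{thm:AIT} is imported verbatim from the external reference \cite{AIT}, which is precisely the paper devoted to the exact WKB analysis of equations with a loop-type Stokes curve. So there is no internal argument to compare against, and your proposal has to stand on its own. As a sketch of the expected resurgence mechanism it is in the right spirit, but two of your steps have concrete gaps.

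First, your proof of part (i) reduces to Theorem \ref{thm:Koike-Schafke}(ii) via the claim that for $x\in D_\infty$ the normalization path $\gamma_x$ can be deformed so that its projection avoids the saddle connection. This cannot work as stated: the projection of any path from $\sigma(x)$ to $x$ on $\Sigma_{{\mathbb C}{\bf P}^1_{\bm w}}$ is a closed loop with odd winding around the unique turning point $v$, and $v$ lies \emph{on} the loop-type saddle connection, so the projection necessarily crosses the saddle connection (twice near $v$, for each arc leaving $v$). Hence the sufficient condition of Theorem \ref{thm:Koike-Schafke} fails identically in the direction $\theta_0$, for every $x$; if it applied, part (i) would be a corollary and \cite{AIT} would be unnecessary. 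What actually distinguishes $D_\infty$ from $D_0$ is a homological condition (the relevant intersection pairing of $\gamma_x$ with $\gamma_0-\sigma_\ast\gamma_0$ vanishes for $x\in D_\infty$ and not for $x\in D_0$), and proving that a vanishing \emph{net} intersection number — as opposed to literal non-intersection — suffices for Borel summability is exactly the new analytic content of \cite{AIT}. Second, in part (ii) the phrase ``summing the resulting geometric series in ${\rm e}^{-\delta/\hbar}$ produces exactly the diagonal factor'' does not produce $(1-{\rm e}^{-V_{\gamma_0}})^{\pm 1}$: a geometric series $\sum_{m\ge 1}c^m$ gives $c/(1-c)$, whereas the claimed entries arise from exponentiating $\mp\sum_{m\ge 1}\tfrac{1}{m}\,{\rm e}^{-mV_{\gamma_0}} = \pm\log(1-{\rm e}^{-V_{\gamma_0}})$. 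The coefficient $1/m$ at the $m$-th singularity $a_\pm(x)+m\delta$ is the content of the Delabaere--Dillinger--Pham-type formula and must be extracted from the precise (logarithmic/simple-pole) local structure of $\psi_{\pm,B}$ there; your sketch does not establish it. Your determinant-one observation correctly ties the two diagonal entries together, but one of them still has to be computed, so this does not close the gap.
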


\begin{figure}[h]
  \scalebox{0.65}{ \vspace{-7.em} \includegraphics{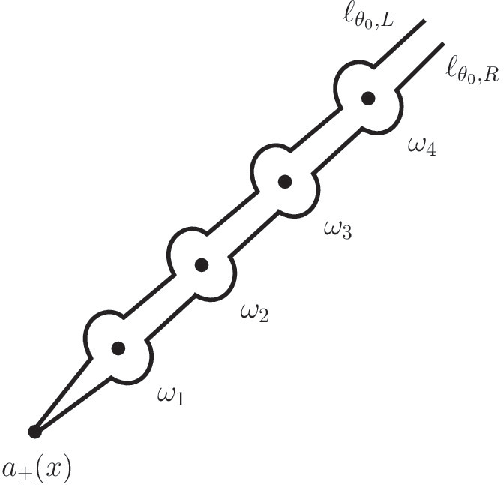}}
 \caption{The paths $\ell_{\theta_0,R}$ and $\ell_{\theta_0,L}$. 
 These paths are obtained by deforming the path $\ell_{\theta_0}$; 
 $\ell_{\theta_0,R}$ and $\ell_{\theta_0,L}$ avoid the singular point 
 $\omega_m = a^{(\pm)}(x) + m \delta$ 
 from the right and the left, respectively.} 
 \label{fig:right-left-path}
 \end{figure}

We also call the diagonal matrix in \eqref{eq:4d-BPS} 
the {\em Stokes matrix} for the direction $\theta_0$. 
This kind of Stokes phenomenon is never observed in the non-equivariant case
(i.e. $w_0 - w_1 = 0$). 
The condition for $\delta$ implies that the quantity 
${\rm e}^{- V_{\gamma_0}} = {\rm e}^{- \delta/\hbar}$ in the Stokes matrix is 
exponentially small when $\hbar \to 0$, $\arg \hbar = \theta_0$.

\begin{remark}
In the situation $x \in D_{0}$, for any sufficiently small $\varepsilon > 0$, 
there are infinitely many directions between $\theta_0 - \varepsilon$ and 
$\theta_0 + \varepsilon$ where the WKB solution is not Borel summable. 
Hence we employ the slightly different version of the Borel sum 
in the relation \eqref{eq:4d-BPS}. 
The effect of infinitely many Stokes phenomenon will be discussed in the next subsection.

\end{remark}

\subsection{Computation of the total Stokes matrix} 
In this subsection, for a fixed $x$, 
we compute the {\em ``total" Stokes matrix} defined as 
\begin{equation}
S_{\rm tot} = \prod_{0 \le \theta < \pi}^{\leftarrow} S_{\theta}.
\end{equation}
Here $S_\theta$ is the Stokes matrix for the WKB solution 
in the direction $\theta$, and they are multiplied from the left 
as $\theta$ increases. We regard $S_{\theta} = {\rm Id}$ when 
$\psi_{\pm}$ is Borel summable in the direction $\theta$. 
Therefore, $S_{\rm tot}$ relates the Borel sum of WKB solutions 
in opposite directions:
\begin{equation}
(\Psi_+^{(0)}, \Psi_-^{(0)}) = (\Psi_+^{(\pi)}, \Psi_-^{(\pi)}) S_{\rm tot}.
\end{equation}

In what follows, we consider the following situation:
\begin{itemize}
\item $w_0 - w_1 \in {\mathbb R}_{>0}$; that is, 
$\delta$ defined in \eqref{eq:period-delta} satisfies
${\rm e}^{- \pi \mathrm{i}/2}\delta \in {\mathbb R}_{>0}$. 

\item $x$ is fixed at one of $x_1$, $x_2$ or $x_3$ 
satisfying the following conditions 
(see Figure \ref{fig:base-point-and-path} (a) and (b)
which depicts the Stokes graph $G_0$ for $\theta = 0$
and $G_{\pi/2}$ for $\theta = \pi/2$, respectively): 
\begin{itemize}
\item
$x_1$ and $x_2$ are points on the same Stokes region, 
which has the point $x=0$ on its boundary in $G_0$.
We choose $x_1 \in D_{\infty}$ and $x_2 \in D_0$.

\item 
$x_3$ is on the other Stokes region in $G_0$, 
which does not contain $x_1$ and $x_2$.
\end{itemize}

\item 
We use the WKB solution normalized at the turning point $v$, 
whose normalization path from $v$ to $x_i$ ($i=1,2,3$) 
is chosen as indicated in Figure \ref{fig:base-point-and-path} (a).
\end{itemize}

In this situation, we can verify that 
$\int^{x}_{v} \sqrt{Q_0(x)} dx < 0$ 
on the Stokes curve of the phase $\theta = 0$ 
which flows into the origin. We will employ the technique 
developed in \cite[Section 3]{KT05}
to compute the Stokes matrix for the WKB solutions.

\begin{figure}[h]
  \begin{minipage}{0.45\hsize}
  \begin{center}
  \includegraphics[width=60mm]{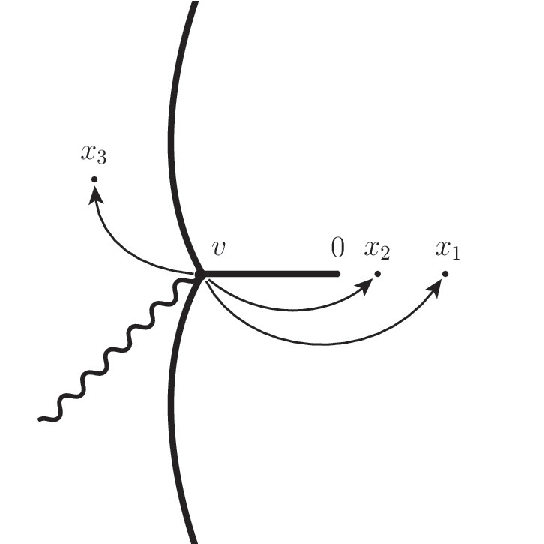} \\
  {\small{(a) : Stokes graph for $\theta = 0$.}} 
  \end{center}
  \end{minipage}  \hspace{-1.5em}
    \begin{minipage}{0.45\hsize}
  \begin{center}
  \includegraphics[width=60mm]{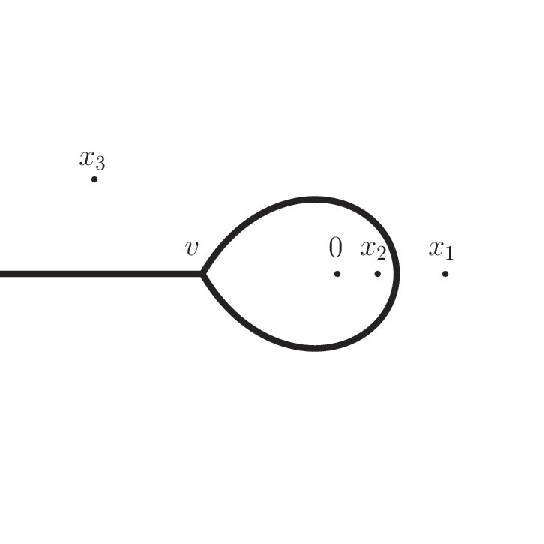} \\
  {\small{(b) : Stokes graph for $\theta = \pi/2$.}} 
  \end{center}
  \end{minipage} 
 \caption{The points $x_1$, $x_2$ and $x_3$. 
 Note that this figure depicts the specific situation where 
 $w_0 - w_1 = 1$, $x_1 = 0.8$, $x_2 = 0.3$ and $x_3 = -1.8 + 0.7 i$. 
 However, the formulas obtained in this subsection 
 holds in general (under the assumption above).} 
 \label{fig:base-point-and-path}
 \end{figure}

\subsubsection{Stokes matrix at $x_1$}
Since the point $x_1$ is contained in $D_{\infty}$, 
Figure \ref{fig:Stokes-graphs} shows that $x_1$ is hit by 
Stokes curves twice when we vary $\theta$ from $0$ to $\pi$.
The situation for the first hit (resp. the second hit) 
is depicted in Figure \ref{fig:critical-direction} (a) 
(resp. Figure \ref{fig:critical-direction} (b)), 
and it happens at some $\theta_1$ satisfying 
$0 < \theta_1 < \pi/2$ (resp. at some $\theta_2$ 
satisfying $\pi/2 < \theta_2 < \pi$).
These two Stokes curves causes Stokes phenomena for the WKB solutions,
and each of contribution to $S_{\rm tot}$ is given as follows:

\begin{itemize}
\item 
For the first hit depicted in Figure \ref{fig:critical-direction} (a), 
since ${\rm e}^{- \mathrm{i} \theta_1} \int_{\gamma_{x_1}} \sqrt{Q_0(x)}dx > 0$
holds in this situation, Theorem \ref{thm:Voros-formula} (i) implies that 
the corresponding Stokes matrix is given by 
\begin{equation}
(\Psi^{(\theta_1-\varepsilon)}_{+}, \Psi^{(\theta_1-\varepsilon)}_{-}) 
= 
(\Psi^{(\theta_1+\varepsilon)}_{+}, \Psi^{(\theta_1+\varepsilon)}_{-}) 
S_{\theta_1}, \quad 
S_{\theta_1} = \begin{pmatrix} 1 & 0 \\ -\mathrm{i} & 1 \end{pmatrix}.
\end{equation}

\item 
For the second hit depicted in Figure \ref{fig:critical-direction} (b), 
we need to care about the normalization of the WKB solutions. 
Since $\psi_{\pm}$ is not normalized along the Stokes curve which hits $x_1$
in Figure \ref{fig:critical-direction} (b), we cannot apply 
Theorem \ref{thm:Voros-formula} directly. However, 
Theorem \ref{thm:Voros-formula} can be applied to the WKB solution 
$\tilde{\psi}_{\pm}$ which is normalized along the path $\tilde{\gamma}_x$ 
in Figure \ref{fig:2-contours}. Since the WKB solutions $\psi_{\pm}$ 
and $\tilde{\psi}_{\pm}$ are related as \eqref{eq:change-of-normalization}, 
we can compute the Stokes matrix for $\psi_{\pm}$ and obtain
the Stokes matrix (note that 
${\rm e}^{- \mathrm{i} \theta_2} \int_{\tilde{\gamma}_{x_1}} \sqrt{Q_0(x)}dx < 0$
in this case): 
\begin{equation}
(\Psi^{(\theta_2-\varepsilon)}_{+}, \Psi^{(\theta_2-\varepsilon)}_{-}) 
= 
(\Psi^{(\theta_2+\varepsilon)}_{+}, \Psi^{(\theta_2+\varepsilon)}_{-}) S_{\theta_2}, \quad 
S_{\theta_2} = 
\begin{pmatrix} 1 & -\mathrm{i} \, {\rm e}^{-V_{\gamma_0}} \\ 0 & 1 \end{pmatrix}.
\end{equation}
\end{itemize}

\begin{figure}[t]
  \begin{minipage}{0.45\hsize}
  \begin{center}
  \includegraphics[width=50mm]{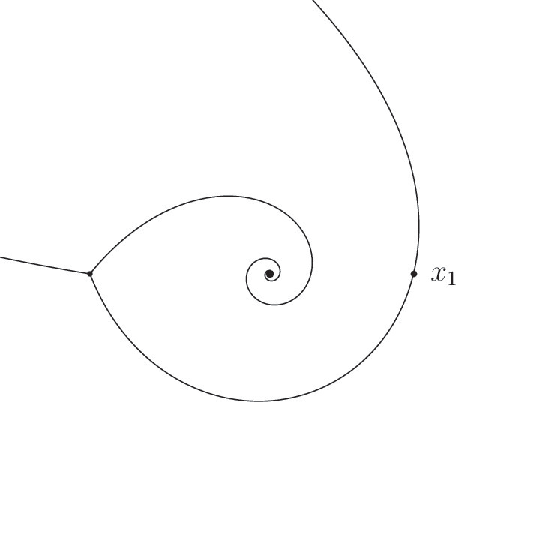} \\
  {\small{(a)}} 
  \end{center}
  \end{minipage}  \hspace{-1.em}
    \begin{minipage}{0.45\hsize}
  \begin{center}
  \includegraphics[width=50mm]{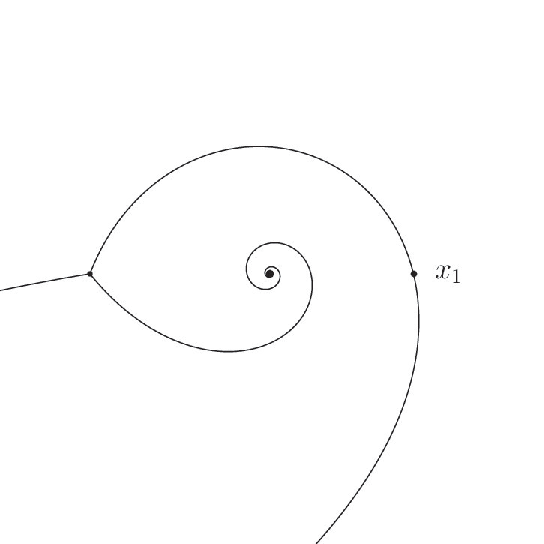} \\
  {\small{(b)}} 
  \end{center}
  \end{minipage}  \hspace{-1.em}
 \caption{The point $x_1$ is hit by Stokes curves. The situation (a) 
 (resp. (b)) occurs when 
 $\theta_1 = \arg \bigl( \int_{\gamma_{x_1}} \sqrt{Q_0(x)} dx \bigr)$ 
 (resp.  $\theta_2 = \arg \bigl( \int_{\tilde{\gamma}_{x_1}} 
 \sqrt{Q_0(x)} dx \bigr)$)
 where $\gamma_{x_1}$ and $\tilde{\gamma}_{x_1}$ are depicted in 
 Figure \ref{fig:2-contours}.} 
 \label{fig:critical-direction}
 \end{figure}

Although the loop-type saddle connection appears in $\theta = \pi/2$, 
WKB solution is Borel summable since $x_1 \in D_{\infty}$ 
(see Theorem \ref{thm:AIT} (i)), and hence 
no Stokes phenomenon occurs in $\theta = \pi/2$.
Therefore, we have

\begin{thm}
The total Stokes matrix at $x_1$ is given by 
\begin{eqnarray}
S_{\rm tot} =  S_{\theta_2} S_{\theta_1} = 
\begin{pmatrix}
1 - {\rm e}^{-V_{\gamma_0}} & - \mathrm{i} \, {\rm e}^{-V_{\gamma_0}} \\ -\mathrm{i} & 1 
\end{pmatrix}.
\end{eqnarray}
\end{thm}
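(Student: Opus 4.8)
The plan is to track how the Borel sums $\Psi^{(\theta)}_{\pm}$ transform as the direction $\theta$ sweeps from $0$ to $\pi$ while $x$ is fixed at $x_1 \in D_\infty$, and to multiply the elementary Stokes matrices in the correct order. First I would recall from Figure \ref{fig:Stokes-graphs} and the discussion preceding the theorem that, as $\theta$ increases from $0$ to $\pi$, the point $x_1$ is crossed by a Stokes curve exactly twice: once at a phase $\theta_1 \in (0,\pi/2)$ and once at a phase $\theta_2 \in (\pi/2,\pi)$, while at $\theta = \pi/2$ the loop-type saddle connection appears but, since $x_1 \in D_\infty$, Theorem \ref{thm:AIT} (i) guarantees Borel summability there, so $S_{\pi/2} = \mathrm{Id}$. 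Hence $S_{\rm tot} = S_{\theta_2} S_{\theta_1}$, with the factor for the larger phase $\theta_2$ on the left by the ordering convention $\prod^{\leftarrow}$.

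Next I would assemble the two elementary factors. For the first crossing at $\theta_1$, the WKB solution is already normalized along the relevant Stokes curve (the path $\gamma_{x_1}$), and one checks the sign condition ${\rm e}^{-\mathrm{i}\theta_1}\int_{\gamma_{x_1}}\sqrt{Q_0}\,dx > 0$; Theorem \ref{thm:Voros-formula} (i) then gives the lower-triangular factor $S_{\theta_1} = \left(\begin{smallmatrix} 1 & 0 \\ -\mathrm{i} & 1\end{smallmatrix}\right)$. For the second crossing at $\theta_2$, the Stokes curve hitting $x_1$ is \emph{not} the normalization curve of $\psi_\pm$, so I would pass to the solution $\tilde\psi_\pm$ normalized along $\tilde\gamma_{x_1}$, apply Theorem \ref{thm:Voros-formula} (ii) (here the sign is ${\rm e}^{-\mathrm{i}\theta_2}\int_{\tilde\gamma_{x_1}}\sqrt{Q_0}\,dx < 0$, giving an upper-triangular matrix for $\tilde\psi_\pm$), and then conjugate back using the change-of-normalization relation \eqref{eq:change-of-normalization}, which introduces the diagonal factor $\exp(\pm \tfrac12 V_{\gamma_0})$ and turns the off-diagonal entry $-\mathrm{i}$ into $-\mathrm{i}\,{\rm e}^{-V_{\gamma_0}}$. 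This yields $S_{\theta_2} = \left(\begin{smallmatrix} 1 & -\mathrm{i}\,{\rm e}^{-V_{\gamma_0}} \\ 0 & 1\end{smallmatrix}\right)$.

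Finally I would just multiply:
\[
S_{\rm tot} = S_{\theta_2} S_{\theta_1}
= \begin{pmatrix} 1 & -\mathrm{i}\,{\rm e}^{-V_{\gamma_0}} \\ 0 & 1 \end{pmatrix}
\begin{pmatrix} 1 & 0 \\ -\mathrm{i} & 1 \end{pmatrix}
= \begin{pmatrix} 1 - {\rm e}^{-V_{\gamma_0}} & -\mathrm{i}\,{\rm e}^{-V_{\gamma_0}} \\ -\mathrm{i} & 1 \end{pmatrix},
\]
which is the claimed formula. The genuinely delicate point is not this arithmetic but the bookkeeping in the second paragraph: one must verify carefully that exactly two Stokes-curve crossings occur in $0 \le \theta < \pi$ (and none extra near $\theta = \pi/2$ from the saddle connection, which requires invoking Theorem \ref{thm:AIT} (i) precisely because $x_1 \in D_\infty$), and one must get the direction of the change-of-normalization conjugation right so that the off-diagonal entry of $S_{\theta_2}$ acquires the factor ${\rm e}^{-V_{\gamma_0}}$ rather than ${\rm e}^{+V_{\gamma_0}}$ — a sign error there would propagate into the wrong diagonal entry of $S_{\rm tot}$. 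All the analytic input (Borel summability on the faces of $G_\theta$, the Voros connection formula, the saddle-connection formula) is already available from Theorems \ref{thm:Koike-Schafke}, \ref{thm:Voros-formula} and \ref{thm:AIT}, so the proof is essentially a careful assembly rather than a new estimate.
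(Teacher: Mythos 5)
Your proposal is correct and follows essentially the same route as the paper: two Stokes-curve crossings at $\theta_1$ and $\theta_2$, the lower-triangular factor from Theorem \ref{thm:Voros-formula} (i) at the first crossing, the conjugation by ${\rm diag}({\rm e}^{V_{\gamma_0}/2},{\rm e}^{-V_{\gamma_0}/2})$ via \eqref{eq:change-of-normalization} at the second (yielding the off-diagonal entry $-\mathrm{i}\,{\rm e}^{-V_{\gamma_0}}$), and triviality at $\theta=\pi/2$ from Theorem \ref{thm:AIT} (i) since $x_1\in D_\infty$. The final multiplication and all sign conditions check out, so nothing is missing.
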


\subsubsection{Stokes matrix at $x_2$}
From Figure \ref{fig:Stokes-graphs}, we can observe that, 
for $0 \le \theta < \pi/2$ and for $\pi/2 < \theta < \pi$, 
Stokes curves hit $x_2$ infinitely many times. 
Stokes directions (i.e. directions with non-trivial $S_{\theta}$) 
accumulates to $\theta = \pi/2$ due to the spiral behavior of the Stokes curve.
In fact, such situation has been analyzed in \cite[Appendix B]{IN14} 
and the total Stokes matrix which relates
$\Psi_{\pm}^{(0)}$ and $\Psi_{\pm}^{(\pi)}$ 
is given by a convergent infinite product of matrices as follows.

\begin{itemize}
\item 
Let us describe the Stokes matrix between $\Psi_{\pm}^{(0)}$
and $\Psi_{\pm}^{(\frac{\pi}{2},R)}$ which includes all contribution
from the Stokes curve which hits $x_2$ infinitely many times in 
$0 < \theta < \pi/2$.
For each hit, the computation of the Stokes matrix can be done in 
a similar manner as above. Since 
${\rm e}^{- \mathrm{i} \theta} \int^{x}_{v} \sqrt{Q_0(x)} dx < 0$ 
along the Stokes curve which hits $x_2$, each of Stokes matrix 
is upper triangular due to Theorem \ref{thm:Voros-formula}.
The result of the computation in \cite[Appendix B]{IN14} shows 
\begin{eqnarray}
(\Psi^{(0)}_{+}, \Psi^{(0)}_{-}) & = & 
(\Psi^{(\frac{\pi}{2}, R)}_{+}, \Psi^{(\frac{\pi}{2}, R)}_{-}) 
\prod_{n \ge 1} 
\begin{pmatrix} 1 & - \mathrm{i} \, {\rm e}^{-n V_{\gamma_0}} \\ 0 & 1 \end{pmatrix}. 
\label{eq:St-x2-1}
\end{eqnarray}
The infinite product converges as long as $\hbar$ 
lies on the upper half plane.

\item For $\theta = \pi/2$, we can use Theorem \ref{thm:AIT} (ii):
\begin{equation}
(\Psi^{(\frac{\pi}{2}, R)}_{+}, \Psi^{(\frac{\pi}{2}, R)}_{-})  = 
(\Psi^{(\frac{\pi}{2}, L)}_{+}, \Psi^{(\frac{\pi}{2}, L)}_{-}) 
\begin{pmatrix} \label{eq:St-x2-2}
1 - {\rm e}^{-V_{\gamma_0}} & 0 \\ 0 & (1-{\rm e}^{-V_{\gamma_0}})^{-1}
\end{pmatrix}.
\end{equation}

\item 
For $\pi/2 < \theta < \pi$, a Stokes curve hits $x_2$ infinitely many times again. 
Similarly to the case $0 < \theta < \pi/2$, we have
\begin{eqnarray}
(\Psi^{(\frac{\pi}{2}, L)}_{+}, \Psi^{(\frac{\pi}{2}, L)}_{-})  & = & 
(\Psi^{(\pi)}_{+}, \Psi^{(\pi)}_{-})  
\prod_{n \ge 0} \begin{pmatrix} 1 & 0 \\ - \mathrm{i} \, {\rm e}^{-n V_{\gamma_0}} & 1 
\end{pmatrix}. % \nonumber \\ 
% & = & 
% (\Psi^{(\pi)}_{+}, \Psi^{(\pi)}_{-}) \cdot 
% \begin{pmatrix} 1 & 0 \\  -i \, (1 - e^{- V_{\gamma_0}})^{-1} & 1 \end{pmatrix}. 
\label{eq:St-x2-3}
\end{eqnarray}
\end{itemize}

The total Stokes matrix is the product of the matrices computed in 
\eqref{eq:St-x2-1}, \eqref{eq:St-x2-2} and \eqref{eq:St-x2-3}:

\begin{thm}
The total Stokes matrix at $x=x_2$ is given by 
\begin{eqnarray}
S_{\rm tot} 
& = & 
\left(\prod_{n \ge 0} 
\begin{pmatrix} 1 & 0 \\ - \mathrm{i} \, {\rm e}^{-n V_{\gamma_0}} & 1 
\end{pmatrix} \right) 
\begin{pmatrix} 1 - {\rm e}^{-V_{\gamma_0}} & 0 \\ 
0 & (1-{\rm e}^{-V_{\gamma_0}} )^{-1} 
\end{pmatrix} 
\left( \prod_{n \ge 1} 
\begin{pmatrix} 1 & - \mathrm{i} \, {\rm e}^{-n V_{\gamma_0}} \\ 
0 & 1 \end{pmatrix} \right)
\nonumber \\
& = & 
\begin{pmatrix}
1 - {\rm e}^{-V_{\gamma_0}} & - \mathrm{i} \, {\rm e}^{-V_{\gamma_0}} \\ -\mathrm{i} & 1 
\end{pmatrix}.
\end{eqnarray}
\end{thm}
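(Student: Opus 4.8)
The plan is to take the three relations \eqref{eq:St-x2-1}, \eqref{eq:St-x2-2} and \eqref{eq:St-x2-3} as the input — these already package \emph{all} Stokes phenomena of the WKB solution at $x=x_2$ for $0\le\theta<\pi$, with the infinitely many jumps accumulating from below (resp.\ above) at $\theta=\pi/2$ resummed via \cite[Appendix B]{IN14} and the single saddle-connection jump at $\theta=\pi/2$ supplied by Theorem \ref{thm:AIT}(ii) — and simply to compose them. Concatenating the three relations gives $(\Psi^{(0)}_+,\Psi^{(0)}_-)=(\Psi^{(\pi)}_+,\Psi^{(\pi)}_-)\,S_{\rm tot}$ with $S_{\rm tot}$ the displayed product of matrices, so the entire content of the theorem is the evaluation of that product.

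The first step is to collapse the two infinite products. Each factor $\begin{pmatrix}1&0\\-\mathrm{i}\,{\rm e}^{-nV_{\gamma_0}}&1\end{pmatrix}$ is lower unitriangular, and a product of lower unitriangular $2\times2$ matrices is again lower unitriangular with lower-left entry the sum of the individual lower-left entries; since ${\rm Re}\,V_{\gamma_0}>0$ for $\hbar$ in the open upper half-plane (recall $V_{\gamma_0}=2\pi\mathrm{i}(w_0-w_1)/\hbar$ with $w_0-w_1\in{\mathbb R}_{>0}$), the geometric series $\sum_{n\ge0}{\rm e}^{-nV_{\gamma_0}}=(1-{\rm e}^{-V_{\gamma_0}})^{-1}$ converges and
\[
\prod_{n\ge0}\begin{pmatrix}1&0\\-\mathrm{i}\,{\rm e}^{-nV_{\gamma_0}}&1\end{pmatrix}
=\begin{pmatrix}1&0\\-\mathrm{i}\,(1-{\rm e}^{-V_{\gamma_0}})^{-1}&1\end{pmatrix}.
\]
The same argument, transposed, yields $\prod_{n\ge1}\begin{pmatrix}1&-\mathrm{i}\,{\rm e}^{-nV_{\gamma_0}}\\0&1\end{pmatrix}=\begin{pmatrix}1&-\mathrm{i}\,{\rm e}^{-V_{\gamma_0}}(1-{\rm e}^{-V_{\gamma_0}})^{-1}\\0&1\end{pmatrix}$.

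The second step is the bare matrix multiplication: the collapsed lower unitriangular factor, times the diagonal factor ${\rm diag}(1-{\rm e}^{-V_{\gamma_0}},\,(1-{\rm e}^{-V_{\gamma_0}})^{-1})$ from \eqref{eq:St-x2-2}, times the collapsed upper unitriangular factor. The product of the first two is immediate, and multiplying on the right by the third produces $(1,1)$-entry $1-{\rm e}^{-V_{\gamma_0}}$, $(1,2)$-entry $-\mathrm{i}\,{\rm e}^{-V_{\gamma_0}}$, $(2,1)$-entry $-\mathrm{i}$, and a $(2,2)$-entry $-{\rm e}^{-V_{\gamma_0}}(1-{\rm e}^{-V_{\gamma_0}})^{-1}+(1-{\rm e}^{-V_{\gamma_0}})^{-1}=1$ after the cancellation; this is exactly the asserted form. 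I would then remark that the result coincides verbatim with the total Stokes matrix computed at $x_1$, which is the advertised locally-constant-in-$x$ (i.e.\ $2$d--$4$d wall-crossing) statement. The only points requiring care — none of them a genuine obstacle here, since the analytic work is done in the preceding subsections and the cited references — are the convergence of the infinite products, which forces $\hbar$ into the upper half-plane, and the bookkeeping that each accumulating Stokes direction in $0<\theta<\pi/2$ (resp.\ $\pi/2<\theta<\pi$) contributes an upper (resp.\ lower) unitriangular factor of the stated shape; this last point is where the normalization of the WKB solution at the turning point $v$ and the sign of ${\rm e}^{-\mathrm{i}\theta}\int^x_v\sqrt{Q_0}\,dx$ along the spiralling Stokes curve enter through Theorem \ref{thm:Voros-formula}.
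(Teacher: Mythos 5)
Your proposal is correct and follows essentially the same route as the paper: the three relations \eqref{eq:St-x2-1}, \eqref{eq:St-x2-2}, \eqref{eq:St-x2-3} are established in the preceding discussion (via \cite[Appendix B]{IN14} and Theorem \ref{thm:AIT}(ii)), and the theorem is just their composition. Your explicit collapse of the two infinite unitriangular products via the geometric series $\sum_{n}{\rm e}^{-nV_{\gamma_0}}$ (convergent since ${\rm Re}\,V_{\gamma_0}>0$ for $\hbar$ in the upper half-plane) and the subsequent $2\times 2$ multiplication, including the cancellation in the $(2,2)$-entry, reproduce exactly the matrix the paper states.
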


\begin{remark} \label{remark:wall-crossing-formula}
We have observed that the total Stokes matrix at $x_1$ and at $x_2$ are the same matrix.
This is because $x_1$ and $x_2$ lie on the same Stokes region in $G_0$ and 
the Borel resummed WKB solutions at $x_1$ and $x_2$ give the same basis of 
the space of the solution of the equation \eqref{eq:equiv-P1}. 
However, this is a consequence of a non-trivial identity of an infinite product 
of matrices:
\begin{multline} \label{eq:2d-4d-WCF}
\begin{pmatrix} 1 & -\mathrm{i} \, {\rm e}^{V_{\gamma_0}} \\ 0 & 1 \end{pmatrix} 
\begin{pmatrix} 1 & 0 \\ -\mathrm{i} & 1 \end{pmatrix}  \\ 
= \left(\prod_{n \ge 0} 
\begin{pmatrix} 1 & 0 \\ - \mathrm{i} \, {\rm e}^{-n V_{\gamma_0}} & 1 \end{pmatrix} 
\right) 
\begin{pmatrix} 1 - {\rm e}^{-V_{\gamma_0}} & 0 \\ 
0 & (1-{\rm e}^{-V_{\gamma_0}} )^{-1} \end{pmatrix}
\left( \prod_{n \ge 1} 
\begin{pmatrix} 1 & - \mathrm{i} \, {\rm e}^{-n V_{\gamma_0}} \\ 0 & 1 \end{pmatrix} 
\right).
\end{multline}
Note that, the identity \eqref{eq:2d-4d-WCF} is an example of the 
{\em 2d/4d wall-crossing formula} in the sense of 
\cite{GMN} by Gaiotto-Moore-Neitzke. 
Each upper and lower triangular matrix 
(contribution from the situation where $x$ lies on a Stokes curve) 
captures 2d-BPS states, while the diagonal matrix 
(contribution from the loop-type saddle connection) 
captures a 4d-BPS state. 
Here the loop-type Stokes curve plays the role of wall, 
and the above identity describes how BPS indices ``jump" when $x$ crosses the wall.
The 2d/4d wall-crossing formula in ${\mathbb C}{\bf P}^1$ sigma model\footnote{
In this context, the GKZ curve appears as the chiral ring studied by N.~Dorey
\cite{Dorey:1998yh}.
} 
has already been mentioned in \cite[Section 8.2]{GMN}, and we have given 
an exact WKB theoretic interpretation of the wall-crossing formula in this special case.
\end{remark}

\subsubsection{Stokes matrix at $x = x_3$ and Dubrovin's conjecture}
As well as the case of $x = x_1$, Figure \ref{fig:Stokes-graphs} shows that 
there are only two Stokes directions in this case. 
We can verify that the two Stokes matrices are both lower triangular, 
and the result is given as follows:
\begin{thm} \label{thm:Stokes-at-x3}
The total Stokes matrix at $x = x_3$ is given by 
\[
S_{\rm tot} = 
\begin{pmatrix} 1 & 0 \\ -\mathrm{i} \, {\rm e}^{V_{\gamma_0}} & 1 \end{pmatrix}  
\begin{pmatrix} 1 & 0 \\ -\mathrm{i} & 1 \end{pmatrix}  = 
\begin{pmatrix} 1 & 0 \\ -\mathrm{i} \, (1 + {\rm e}^{V_{\gamma_0}}) & 1 \end{pmatrix}. 
\]
\end{thm}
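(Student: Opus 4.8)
The strategy is parallel to the computation of $S_{\rm tot}$ at $x_1$; the essential difference is that $x_3$ lies on the Stokes region of $G_0$ that does \emph{not} have $x=0$ on its boundary, so the geometry of the normalization change at the second crossing is different, and this is what turns the factor $\mathrm{e}^{-V_{\gamma_0}}$ appearing at $x_1$ into $\mathrm{e}^{V_{\gamma_0}}$. First I would read off from Figure~\ref{fig:Stokes-graphs} the critical directions: as $\theta$ runs over $[0,\pi)$, Stokes curves issuing from the unique turning point $v$ sweep across $x_3$ exactly twice, at some $\theta_1 \in (0,\pi/2)$ and $\theta_2 \in (\pi/2,\pi)$. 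Between consecutive critical directions $x_3$ lies on a face of $G_\theta$, so Theorem~\ref{thm:Koike-Schafke} applies and the WKB solution \eqref{eq:WKB-alt} normalized at $v$ (along the path of Figure~\ref{fig:base-point-and-path}(a)) is Borel summable there; hence the Borel sums $\Psi_{\pm}^{(\theta)}$ are well defined on the intermediate arcs. Since $x_3 \in D_\infty$, Theorem~\ref{thm:AIT}(i) shows that the loop-type saddle connection occurring at $\theta=\pi/2$ contributes $S_{\pi/2}=\mathrm{Id}$, so only $\theta_1$ and $\theta_2$ are relevant.

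Next I would evaluate $S_{\theta_1}$ and $S_{\theta_2}$ separately. At the first crossing the normalization path from $v$ to $x_3$ can be deformed onto the Stokes curve hitting $x_3$, and on that curve ${\rm e}^{-\mathrm{i}\theta_1}\int_v^{x_3}\sqrt{Q_0}\,dx>0$ (this sign is where the precise shape of $G_{\theta_1}$ enters), so Theorem~\ref{thm:Voros-formula}(i) gives the lower triangular matrix $S_{\theta_1}=\left(\begin{smallmatrix}1&0\\-\mathrm{i}&1\end{smallmatrix}\right)$. At the second crossing the Stokes curve reaching $x_3$ is no longer homotopic to the normalization path in ${\mathbb C}{\bf P}^1\setminus\{v\}$ once one avoids the pole $x=0$ of $\sqrt{Q_0}$, so, exactly as in the $x_1$ analysis, I would apply Theorem~\ref{thm:Voros-formula} to the WKB solution $\tilde{\psi}_\pm$ normalized along the alternative path $\tilde{\gamma}_{x_3}$ of Figure~\ref{fig:2-contours} and then pass back to $\psi_\pm$ using \eqref{eq:change-of-normalization}. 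The Voros matrix for $\tilde{\psi}_\pm$ is again lower triangular (because ${\rm e}^{-\mathrm{i}\theta_2}\int\sqrt{Q_0}\,dx>0$ along $\tilde{\gamma}_{x_3}$ in this configuration), and conjugation by the diagonal factor $\exp(\pm\tfrac12 V_{\gamma_0})$ coming from \eqref{eq:change-of-normalization} multiplies its $(2,1)$-entry by ${\rm e}^{V_{\gamma_0}}$, yielding $S_{\theta_2}=\left(\begin{smallmatrix}1&0\\-\mathrm{i}\,{\rm e}^{V_{\gamma_0}}&1\end{smallmatrix}\right)$.

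Multiplying the contributions in order of increasing $\theta$ (so $S_{\theta_2}$ stands to the left) gives
\[
S_{\rm tot}=S_{\theta_2}\,S_{\pi/2}\,S_{\theta_1}
=\begin{pmatrix}1&0\\-\mathrm{i}\,{\rm e}^{V_{\gamma_0}}&1\end{pmatrix}
\begin{pmatrix}1&0\\-\mathrm{i}&1\end{pmatrix}
=\begin{pmatrix}1&0\\-\mathrm{i}(1+{\rm e}^{V_{\gamma_0}})&1\end{pmatrix},
\]
which is the assertion. The main obstacle is the bookkeeping in the second step: one must show that both crossings give a positive period integral (hence lower triangular matrices), and one must pin down the direction of the normalization change at $\theta_2$, since that fixes whether the extra factor is ${\rm e}^{+V_{\gamma_0}}$ (the case here) or ${\rm e}^{-V_{\gamma_0}}$ (the case at $x_1$). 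I would resolve this by following the deformation of $G_\theta$ through the panels of Figure~\ref{fig:Stokes-graphs} as $\theta$ passes through $\pi/2$—where the loop-type saddle connection forms and breaks and the spiralling Stokes curve reconnects—and comparing with the position of the fixed normalization path in Figure~\ref{fig:base-point-and-path}(a). As a consistency check, $x_3$ lies on a different Stokes region of $G_0$ from $x_1$ and $x_2$, so $S_{\rm tot}$ is permitted to, and indeed does, differ from the common value obtained there; this is exactly the wall-crossing behaviour across the loop-type Stokes curve (cf.\ Remark~\ref{remark:wall-crossing-formula}).
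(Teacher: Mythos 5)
Your argument is correct and takes essentially the same route as the paper, whose own proof merely reads off from the figures that $x_3$ is hit by Stokes curves exactly twice and that both Stokes matrices are lower triangular, then multiplies them; you have made the implicit steps explicit ($x_3\in D_\infty$ so $S_{\pi/2}=\mathrm{Id}$ by Theorem \ref{thm:AIT}(i), Voros' formula at each crossing, and the normalization change \eqref{eq:change-of-normalization} at the second). One caveat on your opening framing: with the \emph{same} conjugation $\mathrm{diag}({\rm e}^{V_{\gamma_0}/2},{\rm e}^{-V_{\gamma_0}/2})$ used at $x_1$, the passage from ${\rm e}^{-V_{\gamma_0}}$ to ${\rm e}^{+V_{\gamma_0}}$ is already forced by the Voros matrix being lower rather than upper triangular, so it is the triangularity, not a differently oriented normalization change, that produces the new exponent.
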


Using the relation proved in Proposition \ref{prop:relation-oscillatory-and-WKB}, 
we obtain the Stokes matrix for the oscillatory integral solutions: 
\begin{equation}
(I^{(0)}_{+}, I^{(0)}_{-}) = (I^{(\pi)}_{+}, I^{(\pi)}_{-}) 
\begin{pmatrix} 1 & 0 \\ 1+{\rm e}^{V_{\gamma_0}} & 1 \end{pmatrix}.
\end{equation}
In the non-equivariant limit $w_0 - w_1 \to 0$ 
the non-trivial Stokes multiplier $1 + {\rm e}^{V_{\gamma_0}}$ 
tends to $2$, and this coincides with the Euler pairing 
$\chi({\mathcal O}, {\mathcal O}(1)) = 2$ 
on the derived category $D^{b}{\rm Coh}({\mathbb C}{\bf P}^1)$.
This is consistent with the Dubrovin's conjecture \cite{dubrovin-conj} 
for the Stokes matrix of the quantum cohomology of ${\mathbb C}{\bf P}^1$.

Here we propose a statement which suggests that there is an 
``equivariant-version" of the Dubrovin's conjecture. 
We will verify that the above Stokes multiplier $1 + {\rm e}^{V_{\gamma_0}}$
for the quantum differential equation \eqref{eq:equiv-P1} 
of the equivariant ${\mathbb C}{\bf P}^1$
can be identified with the equivariant Euler pairing of equivariant coherent sheaves on 
${\mathbb C}{\bf P}^1$, following an idea of Iritani.

Let $T =  ({\mathbb C}^{*})^2$ be the algebraic torus. 
We regard ${\mathbb C}{\bf P}^{1}$ as a $T$-space by the action 
\[
T \times {\mathbb C}{\bf P}^1 \to {\mathbb C}{\bf P}^1, ~~~~\quad~~~~
(t_0, t_1) \cdot [x_0 : x_1] = [t_0^{-1} x_0: t_1^{-1} x_1],
\]
where $[x_0:x_1]$ is the homogeneous coordinate of ${\mathbb C}{\bf P}^1$. 
Let us also regard ${\mathcal O}$ and ${\mathcal O}(1)$ as 
$T$-equivariant coherent sheaves (see \cite[Section 5]{CG} for example) 
on ${\mathbb C}{\bf P}^1$ as follows:
\begin{itemize}
\item
${\mathcal O}$ is equipped with the trivial $T$-action,
\item 
To equip a  $T$-action on ${\mathcal O}(1)$, we use the expression
\[
{\mathcal O}(1) = \bigl(({\mathbb C}^2 \setminus \{0 \}) 
\times {\mathbb C} \bigr)/{\mathbb C}^{\ast}, 
\quad 
(x_0,x_1,s) ~\sim~(\lambda x_0, \lambda x_1, \lambda s) 
~\qquad~ (\lambda \in {\mathbb C}^{\ast}),
\]
and denote by $[x_0:x_1:s]$ the homogeneous coordinate of ${\mathcal O}(1)$.
We introduce a $T$-action on ${\mathcal O}(1)$ by 
\[
T \times {\mathcal O}(1) \to {\mathcal O}(1), ~~~~\quad~~~~
(t_0, t_1) \cdot [x_0:x_1:s] = [t_0^{-1} x_0: t_1^{-1} x_1 : t_0 s]
\]
which gives a $T$-equivariant structure on ${\mathcal O}(1)$. 
\end{itemize}

Our goal is to compute the $\hbar$-modified 
$T$-equivariant Euler pairing of ${\mathcal O}$ and ${\mathcal O}(1)$:
\[
{\chi}_T^{\hbar}({\mathcal O}, {\mathcal O}(1)) 
= \sum_{i} (-1)^i {\rm ch}_{T}^{{\hbar}}([H^i({\mathbb C}{\bf P}^1, {\mathcal O}(1))]), 
\]
where the right hand-side takes value in   
${\mathbb Z}[T] = {\mathbb Z}[{\rm e}^{\pm 2\pi \mathrm{i} w_0/\hbar}, 
{\rm e}^{\pm 2\pi \mathrm{i} w_1/\hbar}]$, 
and ${\rm ch}_T^{\hbar}$ is the $\hbar$-modified Chern character map 
(introduced in \cite[Section 3.1]{CIJ14}) from the $T$-representation ring 
$R[T]$ to ${\mathbb Z}[T]$. 
Note that $R[T] = {\mathbb Z}[{\rm e}^{\pm \mu_0}, {\rm e}^{\pm \mu_1}]$
(which can be identified with $T$-equivariant $K$-group 
$K_T^0({\rm pt})$ of a point) consists of the class of 
irreducible representations of $T$; the symbol 
${\rm e}^{m \mu_0 + n \mu_1}$ for $(m,n) \in {\mathbb Z}^2$ 
is the class of the representation spanned by a weight vector $v$ 
satisfying $(t_0,t_1) \cdot v = t_0^{m} \, t_1^n \, v$ for $(t_0, t_1) \in T$. 
The map ${\rm ch}_T^{\hbar}$ sends 
${\rm e}^{m \mu_0 + n \mu_1} \in R[T]$ to 
${\rm e}^{2\pi \mathrm{i} (m w_0+n w_1)/\hbar} \in {\mathbb Z}[T]$.

There are two independent global sections 
\[
s_i : {\mathbb C}{\bf P}^1 \to {\mathcal O}(1), \quad 
[x_0:x_1] \mapsto [x_0:x_1:x_i] \quad (i=0,1).
\] 
of ${\mathcal O}(1)$.
The $T$-actions on these global sections are given by
\[
(t_0, t_1) \cdot s_{0} = s_0, \quad 
(t_0, t_1) \cdot s_{1} = t_0 \, t_1^{-1} \, s_1 
\qquad (t_0, t_1) \in T.
\]
Therefore, the global section $s_0$ (resp. $s_1$) 
gives a weight vector of the weight $(0,0)$ (resp. weight $(1,-1)$), 
and hence
\begin{equation} \label{eq:weight-decomp}
[H^0({\mathbb C}{\bf P}^1, {\mathcal O}(1))] = 
[\Gamma({\mathbb C}{\bf P}^1, {\mathcal O}(1))] = 
1 + {\rm e}^{\mu_0 - \mu_1}
\end{equation}
holds in $R[T]$.  
The weight decomposition \eqref{eq:weight-decomp} 
(together with the fact that 
$H^i({\mathbb C}{\bf P}^1, {\mathcal O}(1)) = 0$ for $i \ne 0$)
implies the following:

\begin{prop}
The $\hbar$-modified $T$-equivariant Euler pairing of ${\mathcal O}$ and ${\mathcal O}(1)$ 
(regarded as $T$-equivariant coherent sheaves as above) is given by 
\begin{equation}
\chi^{\hbar}_{T}({\mathcal O}, {\mathcal O}(1)) = 
1 + {\rm e}^{2\pi \mathrm{i} (w_0 - w_1)/ \hbar}, 
\end{equation}
and this coincides with the Stokes multiplier $1 + {\rm e}^{V_{\gamma_0}}$
in the total Stokes matrix at $x_3$. 
\end{prop}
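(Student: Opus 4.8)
The statement is a bookkeeping computation that assembles pieces already established in this section: the weight decomposition \eqref{eq:weight-decomp} of $H^0({\mathbb C}{\bf P}^1,{\mathcal O}(1))$, the definition of the $\hbar$-modified $T$-equivariant Chern character ${\rm ch}_T^{\hbar}$, and the explicit value of the Voros coefficient $V_{\gamma_0}$ recorded in Remark \ref{rem:ambiguity}. First I would note that on ${\mathbb C}{\bf P}^1$ one has $H^i({\mathbb C}{\bf P}^1,{\mathcal O}(1))=0$ for all $i\ge 1$ (for instance by Serre duality, $H^1({\mathbb C}{\bf P}^1,{\mathcal O}(1))\cong H^0({\mathbb C}{\bf P}^1,{\mathcal O}(-3))^{\vee}=0$), so that only the $i=0$ term survives in the definition of $\chi^{\hbar}_T$ and hence $\chi^{\hbar}_T({\mathcal O},{\mathcal O}(1))={\rm ch}_T^{\hbar}\bigl([H^0({\mathbb C}{\bf P}^1,{\mathcal O}(1))]\bigr)$.

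Next, using the $T$-equivariant structure on ${\mathcal O}(1)$ fixed above and the weight computation for the sections $s_0,s_1$ (of weights $(0,0)$ and $(1,-1)$), one has $[H^0({\mathbb C}{\bf P}^1,{\mathcal O}(1))]=1+{\rm e}^{\mu_0-\mu_1}$ in $R[T]$, which is exactly \eqref{eq:weight-decomp}. Since ${\rm ch}_T^{\hbar}$ is a ring homomorphism sending $1\mapsto 1$ and ${\rm e}^{m\mu_0+n\mu_1}\mapsto {\rm e}^{2\pi\mathrm{i}(mw_0+nw_1)/\hbar}$, applying it termwise yields
\begin{equation}
\chi^{\hbar}_T({\mathcal O},{\mathcal O}(1))
={\rm ch}_T^{\hbar}(1)+{\rm ch}_T^{\hbar}({\rm e}^{\mu_0-\mu_1})
=1+{\rm e}^{2\pi\mathrm{i}(w_0-w_1)/\hbar}. \nonumber
\end{equation}

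Finally, to identify this with the Stokes data, I would recall from Remark \ref{rem:ambiguity} (see \eqref{eq:change-of-normalization}, equivalently \eqref{eq:period-delta} together with $V_{\gamma_0}=\delta/\hbar$) that the Voros coefficient for the loop cycle is $V_{\gamma_0}=2\pi\mathrm{i}(w_0-w_1)/\hbar$, whence $1+{\rm e}^{V_{\gamma_0}}=1+{\rm e}^{2\pi\mathrm{i}(w_0-w_1)/\hbar}$. Combining this with the explicit off-diagonal entry $-\mathrm{i}\,(1+{\rm e}^{V_{\gamma_0}})$ of the total Stokes matrix at $x_3$ computed in Theorem \ref{thm:Stokes-at-x3} gives the asserted coincidence. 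There is no hard analytic input in this argument; the only point requiring genuine care is to check that the normalization conventions are consistent throughout — in particular that ${\rm ch}_T^{\hbar}$ carries the factor $2\pi\mathrm{i}/\hbar$ (rather than $1/\hbar$) in the exponent, and that the sign of $w_0-w_1$ entering $V_{\gamma_0}$ via the branch choice \eqref{eq:branch-on-first-sheet} matches the one entering the $T$-weights, since this is what pins down the precise form of the identification and is where a stray sign or $2\pi\mathrm{i}$ could slip in.
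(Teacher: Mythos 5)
Your proposal is correct and follows essentially the same route as the paper: the weight computation for the sections $s_0,s_1$ giving $[H^0({\mathbb C}{\bf P}^1,{\mathcal O}(1))]=1+{\rm e}^{\mu_0-\mu_1}$, the vanishing of higher cohomology, termwise application of ${\rm ch}_T^{\hbar}$, and the identification $V_{\gamma_0}=2\pi\mathrm{i}(w_0-w_1)/\hbar$ from Remark \ref{rem:ambiguity}. The only addition is your explicit Serre-duality check of $H^1=0$, which the paper simply asserts; both computations agree.
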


Thus, we conclude that the Stokes multiplier of 
the quantum differential equation  \eqref{eq:equiv-P1}  
for the equivariant ${\mathbb C}{\bf P}^1$ model can be identified 
with the equivariant Euler pairing of equivariant coherent sheaves 
on ${\mathbb C}{\bf P}^1$. 
This observation was pointed to the authors by Iritani.
We expect that similar coincidence 
(between Stokes multiplies and equivariant Euler pairings)
hold for a wider class of quantum differential equations for 
equivariant target spaces. We also expect that the coincidence 
follows from the categorical equivalence discussed in \cite{Fang,FLTZ} etc.

% \newpage
\appendix
%%%%%%%%%%%%%%%
%J-function in A-model
%%%%%%%%%%%%%%%
\section{GKZ curve from the $J$-function}\label{app:J_function}
In this appendix we will discuss a heuristic derivation of the GKZ curves from the $J$-functions for the projective space and complete intersections. For this purpose, we will introduce the \textit{on-shell equivariant $J$-function}.\footnote{
The name \textit{on-shell} is inherited from
the (``on-shell'') vortex partition function (\ref{vortex_glsm}).
Indeed for the choice $p=w_0$, $\mathcal{J}_X(x)$ agrees with $Z_{\textrm{vortex}}^{X}(x)$.
}
Let $X$ be the projective space $\mathbb{C}\textbf{P}_{\bm{w}}^{N-1}$ 
or the complete intersection $X=X_{\bm{l};\bm{w},\bm{\lambda}}$ in $\mathbb{C}\textbf{P}^{N-1}$. 
Let $\ell:H^*_T(\mathbb{C}\textbf{P}^{N-1}) \to \mathbb{C}$ be a $\mathbb{C}$-linear map. 
For the equivariant $J$-function $J_{X}(x)$, the composite map 
$(\ell \circ J_{X})(x)$ is a 
$\mathbb{C}$-valued function satisfying GKZ equation. 
We assume that $\ell$ is a $\mathbb{C}$-algebra homomorphism. 
Then by 
\begin{align}
H^*_T(\mathbb{C}\textbf{P}^{N-1}) \cong \mathbb{C}[p]/(\prod_{i=0}^{N-1}(p-w_i)), 
\end{align}
$\ell(p)$ must be one of equivariant parameters $w_i$ $(i=0,\ldots,N-1)$. 
\begin{defi}\label{def:calJ}
The function $\mathcal{J}_X(x)$ is named \textit{on-shell equivariant $J$-function}, 
if the second equivariant cohomology element $p\in H_T^{2}(X)$ in 
Proposition \ref{prop:J_function} is replaced with one of equivariant parameters $w_i$ 
$(i=0,\ldots,N-1)$:
\begin{align}
\mathcal{J}_X(x)=J_{X}(x)|_{p=w_i}. 
\end{align}
\end{defi}
By the construction, we have the following lemma: 
\begin{lemm}\label{lemm:GKZ_calJ}
The on-shell equivariant $J$-function obeys the GKZ equation.
\begin{align}
\widehat{A}_X(\widehat{x},\widehat{y})\mathcal{J}_X(x)=0.
\end{align}
\end{lemm}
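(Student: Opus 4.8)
The plan is to deduce Lemma \ref{lemm:GKZ_calJ} directly from the differential equations already established for the full equivariant $J$-function, together with the algebraic structure of the equivariant cohomology ring. First I would recall from the Proposition preceding the definition of the GKZ equation that the $H^*_T(\mathbb{C}\textbf{P}^{N-1})$-valued function $J_X(x)$ satisfies $\widehat{A}_X(\widehat{x},\widehat{y})J_X(x)=0$, where $\widehat{A}_X$ is the polynomial operator in $\widehat{x}=x$ and $\widehat{y}=\hbar x\,d/dx$ whose coefficients are constants (the $w_i$, $\lambda_a$, possibly $\hbar$). The key point is that this identity is an identity of $H^*_T(\mathbb{C}\textbf{P}^{N-1})$-valued formal series in $x$, i.e. it holds coefficient-by-coefficient in $H^*_T(\mathbb{C}\textbf{P}^{N-1})$, \emph{after} one interprets the element $p\in H^2_T$ appearing inside $J_X$ as the class satisfying $\prod_{i=0}^{N-1}(p-w_i)=0$.

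Next I would apply the $\mathbb{C}$-algebra homomorphism $\ell:H^*_T(\mathbb{C}\textbf{P}^{N-1})\to\mathbb{C}$. Since $\ell$ is $\mathbb{C}$-linear and the operator $\widehat{A}_X(\widehat{x},\widehat{y})$ acts only through multiplication by $x$ and $\hbar x\,d/dx$ (operations which act on the $x$-coefficients, not on the cohomology values) and has scalar coefficients, $\ell$ commutes with $\widehat{A}_X(\widehat{x},\widehat{y})$ in the sense that $\ell\bigl(\widehat{A}_X(\widehat{x},\widehat{y})J_X(x)\bigr)=\widehat{A}_X(\widehat{x},\widehat{y})\bigl((\ell\circ J_X)(x)\bigr)$; this is just the statement that one may push a $\mathbb{C}$-linear map with scalar-linear-differential-operator structure through the equation term by term. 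Because $\ell$ is a ring homomorphism, $\ell(p)$ is a root of $\prod_{i=0}^{N-1}(X-w_i)$, hence $\ell(p)=w_i$ for some $i$, and $(\ell\circ J_X)(x)$ is precisely $\mathcal{J}_X(x)=J_X(x)|_{p=w_i}$ of Definition \ref{def:calJ}. Therefore $\widehat{A}_X(\widehat{x},\widehat{y})\mathcal{J}_X(x)=\ell(0)=0$, which is the assertion of the lemma.

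The only genuinely delicate point — and the step I would be most careful about — is the bookkeeping of the cohomology-valued evaluation: one must check that setting $p=w_i$ is a well-defined operation on the series $J_X(x)$, i.e. that it factors through the relation $\prod_i(p-w_i)=0$, and that the operator identity $\widehat{A}_X J_X=0$ genuinely holds in $H^*_T(\mathbb{C}\textbf{P}^{N-1})\otimes\mathbb{C}[[x]][[\hbar^{-1}]]$ rather than only after pairing with a cohomology class. Once that is granted, applying a ring homomorphism is formal. An alternative, more hands-on route would bypass $\ell$ altogether: substitute the explicit closed form \eqref{J_comp} with $p=w_i$ into $\widehat{A}_X$ and verify the recursion on the coefficients of $x^d$ directly, using the standard telescoping identity for hypergeometric-type series (the same kind of computation that underlies the proof of the Proposition giving \eqref{GKZ_comp}); I would mention this as a remark but carry out the homomorphism argument as the main proof since it is shorter and conceptually transparent.
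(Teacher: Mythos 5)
Your proposal is correct and is essentially the paper's own argument: the paper defines $\mathcal{J}_X$ precisely as $\ell\circ J_X$ for a $\mathbb{C}$-algebra homomorphism $\ell$ (which forces $\ell(p)=w_i$ by the presentation $H^*_T(\mathbb{C}\textbf{P}^{N-1})\cong\mathbb{C}[p]/(\prod_i(p-w_i))$), and deduces the lemma ``by the construction,'' i.e.\ by pushing the scalar-coefficient operator $\widehat{A}_X(\widehat{x},\widehat{y})$ through $\ell$ exactly as you describe. Your extra care about well-definedness and the alternative telescoping verification go beyond what the paper writes but do not change the route.
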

We remark that if $w_i \neq w_j$ for $i\neq j$, 
then $\mathbb{C}[p]/(\prod_{i=0}^{N-1}(p-w_i)) \cong
(\mathbb{C}[p]/(p-w_0))
\times \cdots \times 
(\mathbb{C}[p]/(p-w_{N-1}))$ 
is a product of $N$-copies of the $\mathbb{C}$-algebra $\mathbb{C}$. 
Then the $\mathbb{C}$-algebra homomorphisms 
$H^*_T(\mathbb{C}\textbf{P}^{N-1}) \to \mathbb{C}, p \mapsto w_i$ 
give a $\mathbb{C}$-basis of the space of $\mathbb{C}$-linear maps: 
$H^*_T(\mathbb{C}\textbf{P}^{N-1}) \to \mathbb{C}$. 
Thus the on-shell $J$-functions give 
basis of the solution of GKZ equation.

Now we will consider the asymptotic expansion of the on-shell equivariant $J$-function.
\begin{align}
\mathcal{J}_X(x)\sim\mathrm{exp}\left(\sum_{m=0}^{\infty}\hbar^{m-1}S_m(x)\right).
\label{WKB_J}
\end{align}
Using  (\ref{classical_A}) and this asymptotic expansion we find the defining equation of the GKZ equation $A_X(x,y)=0$ in $(x,y)\in\mathbb{C}^*\times\mathbb{C}$ from the relation:
\begin{align}
y=x\frac{dS_0(x)}{dx}\in\mathbb{C}.
\label{sol_A}
\end{align}
In the following we will evaluate the saddle point value $S_0(x)$ of the on-shell equivariant $J$-function $\mathcal{J}_X(x)$ for the projective space $X=\mathbb{C}\textbf{P}_{\bm{w}}^{N-1}$ and the smooth Fano complete intersection $X_{\bm{l};\bm{w},\bm{\lambda}}$ in a heuristic way. 
As a consequence we will show that the defining equation $A_X(x,y)=0$ of the GKZ curve is obtained for these two cases of $X$.

\vspace{0.3cm}
\noindent{(1)} Projective space $\mathbb{C}\textbf{P}_{\bm{w}}^{N-1}$:\\
Let $p$ denote one of equivariant parameters $w_i$ $(i=0,\ldots,N-1)$.
We focus on the factor $\prod_{m=1}^{d}(p-w+m\hbar)$ to find the saddle point value of the on-shell equivariant $J$-function $\mathcal{J}_{\mathbb{C}\textbf{P}^{N-1}_{\bm{w}}}(x)$ in  (\ref{J_CPN}).
In the $\hbar\to 0$ limit while keeping $d\hbar=z$ finite, we use the Riemann integral as follows:
\begin{align}
%\sum_{d=0}^{\infty}\sim \int_0^{\infty}dz,\quad 
&\prod_{m=1}^{d}(p-w+m\hbar)=
\mathrm{exp}\left[\sum_{m=1}^d\log(p-w+m\hbar)\right]
%=\mathrm{exp}\left[\frac{1}{\hbar}\sum_{m=1}^{z/\hbar}\hbar\log(p-w+m\hbar)\right]
%\nonumber \\
%&
\underset{\substack{\hbar\to 0\\ d\hbar=u:\mathrm{finite}}}{\sim} \mathrm{exp}\left[\frac{1}{\hbar}\int^u_0 du'\,\log(p-w+u')\right]
\nonumber \\
&=\mathrm{exp}\left[\frac{1}{\hbar}
\left((u+p-w)\log(u+p-w)-u-(p-w)\log(p-w)\right)\right].
\label{factor0}
\end{align}
Adopting this factor we can approximate
$\mathcal{J}_{\mathbb{C}\textbf{P}^{N-1}_{\bm{w}}}(x)$ by the integral on $z$ as
\begin{align}
&\mathcal{J}_{\mathbb{C}\textbf{P}^{N-1}_{\bm{w}}}(x)
\underset{\substack{\hbar\to 0}}{\sim} 
\int_{\gamma} du \,\exp\left[
\frac{1}{\hbar}\mathcal{W}_{\mathbb{C}\textbf{P}^{N-1}_{\bm{w}}}(u;x)
\right],
\label{J_int_CPN}\\
&\mathcal{W}_{\mathbb{C}\textbf{P}^{N-1}_{\bm{w}}}(u;x)
=(u+p)\log x-\sum_{i=0}^{N-1}\bigl((u+p-w_i)\log(u+p-w_i)-u-(p-w_i)\log(p-w_i)\bigr),
\nonumber
\end{align}
where we interpret the term $(p-w_i)\log(p-w_i)$ as $0$ when $p=w_i$. 
Here we call $\mathcal{W}_{\mathbb{C}\textbf{P}^{N-1}_{\bm{w}}}(u;x)$ \textit{effective superpotential}, and an analytical continuation can be performed by deforming the integration path $\gamma$ on the complex $u$ plane.\footnote{In \cite{Gukov:2003na,Hikami:2006cv} the similar analysis is discussed for the colored Jones polynomial of the knot in $\mathbb{S}^3$.}
Assuming such analytical continuation, we can approximate the integral (\ref{J_int_CPN})  by the saddle point value in $\hbar\to0$ limit:
\begin{align}
S_0(x)=\mathcal{W}_{\mathbb{C}\textbf{P}^{N-1}_{\bm{w}}}(u_c;x),\qquad  
\frac{\partial\mathcal{W}_{\mathbb{C}\textbf{P}^{N-1}_{\bm{w}}}(u;x)}
{\partial u}\Bigg|_{u=u_c}
=0.
\nonumber
\end{align}
The saddle point condition is then given by
\begin{align}
\frac{\partial \mathcal{W}_{\mathbb{C}\textbf{P}^{N-1}_{\bm{w}}}(u;x)}
{\partial u}\Bigg|_{u=u_c}
=\log x-\sum_{i=0}^{N-1}\log(u_c+p-w_i)
=0,
\label{saddle_c_cpn}
\end{align}
and by  (\ref{sol_A}) one has
\begin{align}
y=x\frac{\partial \mathcal{W}_{\mathbb{C}\textbf{P}^{N-1}_{\bm{w}}}(u;x)}{\partial x}\Bigg|_{u=u_c}
=u_c+p.
\label{y_def_cpn}
\end{align}
By eliminating the variable $u_c$ from these relations (\ref{saddle_c_cpn}) and (\ref{y_def_cpn}), we find a constraint equation
on $(x,y)\in\mathbb{C}^*\times\mathbb{C}$:
\begin{align}
\prod_{i=0}^{N-1}(y-w_i)-x=0,
\end{align}
which agrees with the defining equation $A_{\mathbb{C}\textbf{P}^{N-1}_{\bm{w}}}(x,y) = 0$ 
of the GKZ curve (\ref{A_CPN}).

\noindent{(2)} Complete intersection  $X_{\bm{l};\bm{w},\bm{\lambda}}$ in $\mathbb{C}\textbf{P}_{\bm{w}}^{N-1}$:\\
Adopting the similar approximation for the factor (\ref{factor0}) in  $\mathcal{J}_{\mathbb{C}\textbf{P}^{N-1}_{\bm{w}}}(x)$,
we obtain the effective superpotential $\mathcal{W}_{X_{\bm{l};\bm{w},\bm{\lambda}}}(u;x)$
for the on-shell equivariant $J$-function $\mathcal{J}_{X_{\bm{l};\bm{w},\bm{\lambda}}}(x)$ in (\ref{J_comp}) as
\begin{align}
&\mathcal{J}_{X_{\bm{l};\bm{w},\bm{\lambda}}}(x)\underset{\substack{\hbar\to 0}}{\sim} 
\int_{\gamma}du\,\exp\left[\mathcal{W}_{X_{\bm{l};\bm{w},\bm{\lambda}}}(u;x)\right],
\\
&\mathcal{W}_{X_{\bm{l};\bm{w},\bm{\lambda}}}(u;x)=(u+p)\log x
-\sum_{i=0}^{N-1}\bigl((u+p-w_i)\log(u+p-w_i)-u-(p-w_i)\log(p-w_i)\bigr)
\nonumber \\
&\quad\quad\quad\quad\quad\quad\quad
+\sum_{a=1}^n\bigl((l_a u+l_a p-\lambda_a)
\log(l_a u+l_a p-\lambda_a)-l_a u - 
(l_ap-\lambda_a)\log(l_ap-\lambda_a)\bigr),
\nonumber
\end{align}
where $p$ denotes one of equivariant parameters $w_i$ $(i=0,\ldots,N-1)$ 
and we interpret the term $(p-w_i)\log(p-w_i)$ as $0$ when $p=w_i$. 
The saddle point condition is then given by
\begin{align}
\frac{\partial\mathcal{W}_{X_{\bm{l};\bm{w},\bm{\lambda}}}(u;x)}
{\partial u}\Bigg|_{u=u_c}
=\log x-\sum_{i=0}^{N-1}\log(u_c+p-w_i) 
- \sum_{a=1}^{n}\log(l_a u_c+l_ap-\lambda_a)^{-l_a}
=0,
\label{saddle_c_cicf}
\end{align}
and by  (\ref{sol_A}) one has
\begin{align}
y=x\frac{\partial \mathcal{W}_{X_{\bm{l};\bm{w},\bm{\lambda}}}(u;x)}{\partial x}\Bigg|_{u=u_c}
=u_c+p.
\label{y_def_cicf}
\end{align}
%and one finds a polynomial constraint on $(x,y)\in\mathbb{C}^*\times \mathbb{C}$:
%\begin{align}
%\frac{1}{x}\prod_{i=0}^{N-1}(y-w_i)\prod_{a=1}^n(l_ay-\lambda_a)^{-l_a}=1.
%\end{align}
As a result of the elimination of the variable $u_c$ from these relations (\ref{saddle_c_cicf}) and (\ref{y_def_cicf}), we find a constraint equation on $(x,y)\in\mathbb{C}^*\times\mathbb{C}$:
\begin{align}
\prod_{i=0}^{N-1}(y-w_i)-x\prod_{a=1}^n(l_ay-\lambda_a)^{l_a}=0,
\end{align}
which agrees with the defining equation $A_{X_{\bm{l};\bm{w},\bm{\lambda}}}(x,y) = 0$ 
of the GKZ curve (\ref{A_comp}).
%and one can also find this curve from the GKZ equation (\ref{GKZ_comp}) by replacements $\widehat{x}\to x$, $\widehat{y}\to y$:
%\begin{align}
%\widehat{A}_{X_{\bm{l};\bm{w},\bm{\lambda}}}(\widehat{x},\widehat{y})
%\quad
%\mathop{\longrightarrow}^{\hbar \to 0}
%\quad
%A_{X_{\bm{l};\bm{w},\bm{\lambda}}}(x,y).
%\nonumber
%\end{align}

%%%%%%%%%%%%%%%%%%
%I-function and GKZ equation
%%%%%%%%%%%%%%%%%%
\section{GKZ equations for oscillatory integrals}

In this appendix we will give a proof of Propositions \ref{prop:IGKZ} and 
\ref{prop:behavior-of-saddle-point-approximation}.

\subsection{Proof of Propositions \ref{prop:IGKZ}} \label{app:IGKZ}
We will show that the oscillatory integral $\mathcal{I}_X(x)$ satisfies the GKZ equation for the projective space $X=\mathbb{C}\textbf{P}^{N-1}_{\bm{w}}$ and the Fano complete intersection $X_{\bm{l};\bm{w},\bm{\lambda}}$ separately.

\begin{prop}\label{prop:ICKZ_CPN}
The oscillatory integral $\mathcal{I}_{\mathbb{C}\textbf{P}^{N-1}_{\bm{w}}}(x)$ in  (\ref{J_CPN_osc}) satisfies the GKZ equation (\ref{GKZ_CPN}).
\end{prop}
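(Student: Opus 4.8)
\textbf{Proof proposal for Proposition \ref{prop:ICKZ_CPN}.}

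The plan is to show directly that the differential operator $\widehat{A}_{\mathbb{C}\textbf{P}^{N-1}_{\bm{w}}}(\widehat{x},\widehat{y})=\prod_{i=0}^{N-1}(\widehat{y}-w_i)-\widehat{x}$ annihilates the integral $\mathcal{I}_{\mathbb{C}\textbf{P}^{N-1}_{\bm{w}}}(x)=\int_{\Gamma}\prod_{i=1}^{N-1}\frac{du_i}{u_i}\,\mathrm{e}^{\frac1\hbar W}$ by exhibiting the result of applying $\widehat{A}$ as the integral of an exact form over the Lefschetz thimble $\Gamma$ (which has no boundary, so the integral vanishes). First I would record the basic action of $\widehat{y}=\hbar x\frac{d}{dx}$ on the integrand: since $\frac{\partial W}{\partial x}=\frac{1}{x}\bigl(\frac{x}{u_1\cdots u_{N-1}}+w_0\bigr)$ by \eqref{LG_CPN}, differentiating under the integral sign gives
\begin{align}
\widehat{y}\,\mathrm{e}^{\frac1\hbar W}=\Bigl(\frac{x}{u_1\cdots u_{N-1}}+w_0\Bigr)\mathrm{e}^{\frac1\hbar W},
\nonumber
\end{align}
so that on the level of integrands $\widehat{y}$ acts as multiplication by $Y:=\frac{x}{u_1\cdots u_{N-1}}+w_0$ plus correction terms of order $\hbar$ coming from the chain rule when $\widehat{y}$ is applied repeatedly. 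The key algebraic input is the family of critical-point-type relations $\partial W/\partial u_i=0$, i.e. $1+\frac{w_i-w_0}{u_i}-\frac{1}{u_i}\frac{x}{u_1\cdots u_{N-1}}=0$, which rearranges to $u_i+w_i=Y$ for each $i=1,\dots,N-1$ (and $Y=\frac{x}{u_1\cdots u_{N-1}}+w_0$ is the ``$i=0$'' instance). These are exactly the relations used in Section \ref{subsection:GKZ-curves-critical-set} to eliminate the $u_i$ and obtain $\prod_{i=0}^{N-1}(y-w_i)=x$; the point here is to upgrade them to operator identities by integration by parts.

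Concretely, the main step is: for each $i$, the vanishing of $\partial_{u_i}$ of an appropriate density means that
\begin{align}
\int_{\Gamma}\prod_{j=1}^{N-1}\frac{du_j}{u_j}\;\frac{\partial}{\partial u_i}\Bigl(u_i\,f(\bm u;x)\,\mathrm{e}^{\frac1\hbar W}\Bigr)=0
\nonumber
\end{align}
for any polynomial (or Laurent-polynomial) density $f$, because $\Gamma$ is a Lefschetz thimble on which $\mathrm{e}^{W/\hbar}$ decays and which has no boundary in $(\widetilde{\mathbb{C}^\ast})^{N-1}$. Expanding the derivative produces $u_i\partial_{u_i}f\cdot\mathrm{e}^{W/\hbar}+f\cdot\mathrm{e}^{W/\hbar}+\frac{1}{\hbar}u_i(\partial_{u_i}W)f\cdot\mathrm{e}^{W/\hbar}$, and since $u_i\partial_{u_i}W=u_i+w_i-w_0-\frac{x}{u_1\cdots u_{N-1}}=-(Y-w_i)$ as an integrand, this yields an operator relation expressing multiplication by $(Y-w_i)$ acting on the integrand in terms of $\hbar\,u_i\partial_{u_i}$-type operators that themselves act nicely under the integral. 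Iterating this to build up $\prod_{i=0}^{N-1}(\widehat y-w_i)$ and carefully tracking the $\hbar$-shifts (the operators $\widehat y-w_i$ do not commute exactly with multiplication by $x$, and each application of $\widehat y$ to an integrand already containing factors of $Y$ generates the extra $+\hbar$ shifts visible in the statement of the GKZ operator for complete intersections) should collapse the whole expression to $\widehat{x}$ acting on $\mathcal{I}$, giving $\bigl(\prod_i(\widehat y-w_i)-\widehat x\bigr)\mathcal{I}=0$. An alternative, cleaner route I would keep in reserve: expand $\mathcal{I}_{\mathbb{C}\textbf{P}^{N-1}_{\bm{w}}}(x)$ directly as the localization series \eqref{J_CPN} (this is the content of \cite{Givental_eqv}, and is essentially how Proposition \ref{prop:J_function} is stated), and then verify the recursion on the coefficients term-by-term — checking that the ratio of consecutive coefficients $\frac{1}{\prod_{i}(p-w_i+(d+1)\hbar)}$ matches exactly the shift induced by $\widehat x$ against $\prod_i(\widehat y - w_i)$ with $\widehat y$ acting as $p+d\hbar$ on the degree-$d$ term.

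The main obstacle I anticipate is bookkeeping of the non-commutative $\hbar$-corrections rather than anything conceptually deep: one must be careful that $\widehat y\,(x^d\cdot c_d)=x^d(p+d\hbar)c_d$ and that the integration-by-parts identities above are applied in the correct order so that the spurious boundary/commutator terms genuinely cancel, and one must justify differentiating under the integral sign and the decay of the integrand on $\Gamma$ (standard for Lefschetz thimbles under the non-degeneracy assumption on the critical points already imposed in Section \ref{sec:osc_gkz}). A secondary subtlety is the treatment of the $w_0\log(\cdot)$ term in \eqref{LG_CPN}: it contributes the ``$+w_0$'' in $Y$ and, because of the multivaluedness, one works on the universal cover $(\widetilde{\mathbb{C}^\ast})^{N-1}$, but this does not affect the integration-by-parts argument since $\Gamma$ is already a relative cycle there. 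Once the $\mathbb{C}\textbf{P}^{N-1}_{\bm w}$ case is done, Proposition \ref{prop:IGKZ} for $X_{\bm l;\bm w,\bm\lambda}$ follows by applying the Laplace transform in the $v_a$ variables as in \eqref{J_comp_osc}: each Laplace integration $\int_0^\infty dv_a\,\mathrm{e}^{-(v_a+\lambda_a\log v_a)/\hbar}(\cdots)$ converts multiplication by a power of $v_a^{l_a}$ inside the $\mathbb{C}\textbf{P}^{N-1}_{\bm w}$-integral into the shifted product $\prod_{m=1}^{l_a}(l_a\widehat y-\lambda_a+m\hbar)$ acting on $\mathcal{I}_{X_{\bm l;\bm w,\bm\lambda}}$, which is exactly the extra factor in $\widehat A_{X_{\bm l;\bm w,\bm\lambda}}$; this reduction is the content of the remainder of Appendix \ref{app:IGKZ}.
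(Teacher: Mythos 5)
Your proposal follows essentially the same route as the paper's proof in Appendix \ref{app:IGKZ}: differentiate under the integral sign so that $\hbar x\,d/dx$ acts on the integrand as multiplication by $\tfrac{x}{u_1\cdots u_{N-1}}+w_0$, then use the vanishing of total $u_i$-derivatives over the Lefschetz thimble to trade multiplication by $(Y-w_i)$ (with the $\hbar$-shifts from the chain rule) for multiplication by $u_i$, and iterate until the accumulated factor $u_1\cdots u_{N-1}$ cancels to leave $x\,\mathcal{I}$; the Laplace-transform reduction you sketch for $X_{\bm{l};\bm{w},\bm{\lambda}}$ is likewise exactly the paper's Proposition \ref{prop:ICKZ_comp}. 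One small correction to your displayed identity: $u_i\partial_{u_i}W = u_i+w_i-w_0-\tfrac{x}{u_1\cdots u_{N-1}} = u_i-(Y-w_i)$, not $-(Y-w_i)$ — the extra $u_i$ term is precisely what makes the integration-by-parts identity produce the factor $u_i$ on the other side of the equation.
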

\begin{proof}
Act a differential operator $\left(\hbar x\frac{d}{dx}-w_i\right)\left(\hbar x\frac{d}{dx}-w_0\right)$ 
on the oscillatory integral $\mathcal{I}_{\mathbb{C}\textbf{P}^{N-1}_{\bm{w}}}(x)$,
\begin{align}
&\left(\hbar x\frac{d}{dx}-w_i\right)\left(\hbar x\frac{d}{dx}-w_0\right)\mathcal{I}_{\mathbb{C}\textbf{P}^{N-1}_{\bm{w}}}(x)
\nonumber \\
&=\left(\hbar x\frac{d}{dx}-w_i\right)\int_{\Gamma}\prod_{i=1}^{N-1}du_i\,\frac{x}{(u_1\cdots u_{N-1})^2}
\,\mathrm{e}^{\frac{1}{\hbar}W_{\mathbb{C}\textbf{P}^{N-1}_{\bm{w}}}(u_1,\ldots,u_{N-1};x)}
\nonumber \\
&=\int_{\Gamma}\prod_{i=1}^{N-1}du_i\,\frac{x}{(u_1\cdots u_{N-1})^2}
\left(\frac{x}{u_1\cdots u_{N-1}}+w_0-w_i+\hbar\right)
\mathrm{e}^{\frac{1}{\hbar}W_{\mathbb{C}\textbf{P}^{N-1}_{\bm{w}}}(u_1,\ldots,u_{N-1};x)},
\nonumber
\end{align}
where $W_{\mathbb{C}\textbf{P}^{N-1}_{\bm{w}}}$ is the Landau-Ginzburg potential given in  (\ref{LG_CPN}).
To manipulate further we will use the following integration by parts:
\begin{align}
0&=\int_{\Gamma}\prod_{i=1}^{N-1}du_i\,
\hbar\frac{d}{du_i}\left(\frac{1}{u_i}
\,\mathrm{e}^{\frac{1}{\hbar}W_{\mathbb{C}\textbf{P}^{N-1}_{\bm{w}}}(u_1,\ldots,u_{N-1};x)}\right)
\nonumber \\
&=\int_{\Gamma}\prod_{i=1}^{N-1}du_i\,\frac{1}{u_i^2}\Biggl[
\left(u_i+w_i-w_0-\hbar-\frac{x}{u_1\cdots u_{N-1}}\right)\Biggr]
\mathrm{e}^{\frac{1}{\hbar}W_{\mathbb{C}\textbf{P}^{N-1}_{\bm{w}}}(u_1,\ldots,u_{N-1};x)}.
\nonumber
\end{align}
In this  computation, any boundary contributions do not appear,
because the image of the Lefschetz thimble $\Gamma$ is a relative cycle starting from a non-degenerate critical point $p_{\mathrm{crit}}$ of the Landau-Ginzburg potential $W_X$ to the infinity. Then one finds that
\begin{align}
\left(\hbar x\frac{d}{dx}-w_i\right)\left(\hbar x\frac{d}{dx}-w_0\right)\mathcal{I}_{\mathbb{C}\textbf{P}^{N-1}_{\bm{w}}}(x)
=\int_{\Gamma}\prod_{i=1}^{N-1}du_i\,\frac{xu_i}{(u_1\cdots u_{N-1})^2}
\,\mathrm{e}^{\frac{1}{\hbar}W_{\mathbb{C}\textbf{P}^{N-1}_{\bm{w}}}(u_1,\ldots,u_{N-1};x)}.
\nonumber
\end{align}
Repeating the above manipulations for $\left[\prod_{i=1}^{N-1}\left(\hbar x\frac{d}{dx}-w_i\right)\right]\left(\hbar x\frac{d}{dx}-w_0\right)$, the following relation is obtained:
\begin{align}
\begin{split}
&\left[\prod_{i=1}^{N-1}\left(\hbar x\frac{d}{dx}-w_i\right)\right]\left(\hbar x\frac{d}{dx}-w_0\right)\mathcal{I}_{\mathbb{C}\textbf{P}^{N-1}_{\bm{w}}}(x)
\\
&=\int_{\Gamma}\prod_{i=1}^{N-1}du_i\,\frac{xu_1\cdots u_{N-1}}{(u_1\cdots u_{N-1})^2}
\,\mathrm{e}^{\frac{1}{\hbar}W_{\mathbb{C}\textbf{P}^{N-1}_{\bm{w}}}(u_1,\ldots,u_{N-1};x)}
=x\mathcal{I}_{\mathbb{C}\textbf{P}^{N-1}_{\bm{w}}}(x).
\label{IGKZW_CPN}
\end{split}
\end{align}
This differential equation  is the same as the GKZ equation (\ref{GKZ_CPN}) for the $J$-function $J_{\mathbb{C}\textbf{P}^{N-1}_{\bm{w}}}$.
\end{proof}

\begin{prop}\label{prop:ICKZ_comp}
The oscillatory integral 
$\mathcal{I}_{X_{\bm{l};\bm{w},\bm{\lambda}}}(x)$  in  (\ref{J_comp_osc})  satisfies the GKZ equation (\ref{GKZ_CPN}).
\end{prop}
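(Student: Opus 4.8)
The plan is to mimic the argument of Proposition \ref{prop:ICKZ_CPN} in the Laplace-transformed setting, exploiting the definition \eqref{J_comp_osc} which writes $\mathcal{I}_{X_{\bm{l};\bm{w},\bm{\lambda}}}(x)$ as a Laplace transform in the variables $v_1,\dots,v_n$ of the $\mathbb{C}\textbf{P}^{N-1}_{\bm{w}}$-oscillatory integral evaluated at argument $v_1^{l_1}\cdots v_n^{l_n}x$. First I would record the key commutation rule: if $g(x)$ is any function, then $\left(\hbar x\frac{d}{dx}\right)$ acting on $g(v_1^{l_1}\cdots v_n^{l_n}x)$ equals $\left(\hbar x\frac{d}{dx}g\right)(v_1^{l_1}\cdots v_n^{l_n}x)$, since $v_1^{l_1}\cdots v_n^{l_n}$ is a constant relative to $x$; in other words the operator $\hbar x\,d/dx$ is unchanged under the rescaling $x\mapsto v_1^{l_1}\cdots v_n^{l_n}x$. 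Hence applying $\prod_{i=1}^{N-1}(\hbar x\frac{d}{dx}-w_i)\cdot(\hbar x\frac{d}{dx}-w_0)$ to $\mathcal{I}_{X_{\bm{l};\bm{w},\bm{\lambda}}}(x)$ and using \eqref{IGKZW_CPN} gives $\int_0^\infty dv_1\cdots\int_0^\infty dv_n\, \mathrm{e}^{-\frac{1}{\hbar}\sum_a (v_a+\lambda_a\log v_a)}\,\bigl(v_1^{l_1}\cdots v_n^{l_n}x\bigr)\,\mathcal{I}_{\mathbb{C}\textbf{P}^{N-1}_{\bm{w}}}(v_1^{l_1}\cdots v_n^{l_n}x)$, i.e. the extra prefactor $v_1^{l_1}\cdots v_n^{l_n}x$ appears.

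The second step is to absorb the prefactor $v_1^{l_1}\cdots v_n^{l_n}$ into derivatives in $x$ hitting the $\mathbb{C}\textbf{P}^{N-1}_{\bm{w}}$-integral, and then to integrate by parts in each $v_a$. Concretely, for each $a$ I would use integration by parts in $v_a$ on $\int_0^\infty dv_a\, \frac{d}{dv_a}\bigl(v_a^{l_a}\,\mathrm{e}^{-\frac1\hbar(v_a+\lambda_a\log v_a)}\,(\cdots)\bigr)$; the boundary terms at $v_a=0$ and $v_a=\infty$ vanish (for suitable choice of the thimble/contour in $v_a$, and because of the exponential decay — this is the place where one must be slightly careful with convergence and with the branch of $\log v_a$, paralleling the boundary-term discussion in the proof of Proposition \ref{prop:ICKZ_CPN}). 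Differentiating the product yields $l_a v_a^{l_a-1}-\frac{1}{\hbar}(1+\lambda_a/v_a)v_a^{l_a}$ times the rest, plus the term where the derivative hits $\mathcal{I}_{\mathbb{C}\textbf{P}^{N-1}_{\bm{w}}}(v_1^{l_1}\cdots v_n^{l_n}x)$, which produces $\frac{l_a}{v_a}\cdot (v_1^{l_1}\cdots v_n^{l_n}x)\frac{d}{d(\cdots)}$, i.e. $\frac{l_a}{v_a}\cdot\hbar^{-1}(\hbar x\frac{d}{dx})\mathcal{I}_{\mathbb{C}\textbf{P}^{N-1}_{\bm{w}}}$ after using the rescaling invariance again. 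Rearranging, this shows that multiplication by $v_a^{l_a}$ under the $v_a$-integral is equivalent to the action of the operator $\bigl(\hbar x\frac{d}{dx}-\lambda_a+\hbar\bigr)$ (up to the factor $l_a$ and keeping track of the $l_a v_a^{l_a-1}$ term, which recombines correctly); iterating over $m=1,\dots,l_a$ produces the product $\prod_{m=1}^{l_a}\bigl(l_a\hbar x\frac{d}{dx}-\lambda_a+m\hbar\bigr)$ acting on $\mathcal{I}_{X_{\bm{l};\bm{w},\bm{\lambda}}}(x)$, which after a trivial rescaling $y\mapsto l_a y$ matches the factor $\prod_{m=1}^{l_a}(l_a\widehat y-\lambda_a+m\hbar)$ in $\widehat{A}_{X_{\bm{l};\bm{w},\bm{\lambda}}}$.

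Putting the two steps together gives exactly
$$
\left[\prod_{i=0}^{N-1}\Bigl(\hbar x\tfrac{d}{dx}-w_i\Bigr)-x\prod_{a=1}^{n}\prod_{m=1}^{l_a}\Bigl(l_a\hbar x\tfrac{d}{dx}-\lambda_a+m\hbar\Bigr)\right]\mathcal{I}_{X_{\bm{l};\bm{w},\bm{\lambda}}}(x)=0,
$$
which is the GKZ equation \eqref{GKZ_comp}. The main obstacle I anticipate is not the algebra of the operator manipulations — that is bookkeeping — but justifying rigorously that the integration by parts in the $v_a$ produces no boundary contributions: one must check that the Laplace-type integral over $(\mathbb{R}_{\ge0})^n$ (or an appropriate deformation of it) converges, that the integrand decays at $v_a\to\infty$, and that the apparent singularity from $\lambda_a\log v_a$ at $v_a\to 0$ does not obstruct the argument. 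The cleanest route is probably to interpret the $v_a$-integrals as contour integrals over Lefschetz thimbles for the full potential $W_{X_{\bm{l};\bm{w},\bm{\lambda}}}$ in \eqref{LG_comp}, exactly as in the $\mathbb{C}\textbf{P}^{N-1}$ case, so that the vanishing of boundary terms follows from the same reasoning (the thimble is a relative cycle ending at infinity where $\mathrm{Re}(W/\hbar)\to-\infty$), and then the interchange of differentiation and integration is legitimate.
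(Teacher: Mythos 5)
Your proposal is correct and follows essentially the same route as the paper's own proof: apply the already-established GKZ equation \eqref{IGKZW_CPN} for $\mathcal{I}_{\mathbb{C}\textbf{P}^{N-1}_{\bm{w}}}$ under the Laplace transform (using that $\hbar x\,d/dx$ is invariant under $x\mapsto v_1^{l_1}\cdots v_n^{l_n}x$), then trade each factor of $v_a^{l_a}$ for the operator product $\prod_{m=1}^{l_a}(l_a\hbar x\tfrac{d}{dx}-\lambda_a+m\hbar)$ by iterated integration by parts in $v_a$ with vanishing boundary terms. The only cosmetic difference is your remark about a ``rescaling $y\mapsto l_a y$,'' which is unnecessary since the factors already match $\prod_{m=1}^{l_a}(l_a\widehat{y}-\lambda_a+m\hbar)$ with $\widehat{y}=\hbar x\,d/dx$.
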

\begin{proof}
%In \cite{Givental_eqv}, it is shown that the $I$-function for the equivariant $\mathbb{C}\textbf{P}^{N-1}$
%model obeys the GKZ equation (\ref{GKZ_CPN}),
%\begin{align}
%0&=\widehat{A}_{\mathbb{C}\textbf{P}^{N-1}_{\bm{w}}}\left(\widehat{x},\widehat{y}\right)I_{\mathbb{C}\textbf{P}^{N-1}_{\bm{w}}}(x)
%=\prod_{a=0}^{N-1}\left(\hbar x\frac{d}{dx}-w_a\right)I_{\mathbb{C}\textbf{P}^{N-1}_{\bm{w}}}(x)-xI_{\mathbb{C}\textbf{P}^{N-1}_{\bm{w}}}(x).
%\label{IGKZ_CPN}
%\end{align}
%In the following, we will show that $I_{X}(x)$ for $X=X_{\bm{l};\bm{w},\bm{\lambda}}$ obeys the GKZ equation (\ref{GKZ_comp}) under the assumption of eq. (\ref{IGKZ_CPN}).

Consider the GKZ equation (\ref{IGKZW_CPN}) for the oscillatory integral $\mathcal{I}_{\mathbb{C}\textbf{P}^{N-1}_{\bm{w}}}(x)$ for the mirror Landau-Ginzburg model of  the projective space $X=\mathbb{C}\textbf{P}^{N-1}_{\bm{w}}$ denoted by
\begin{align}
0&=\widehat{A}_{\mathbb{C}\textbf{P}^{N-1}_{\bm{w}}}\left(\widehat{x},\widehat{y}\right)\mathcal{I}_{\mathbb{C}\textbf{P}^{N-1}_{\bm{w}}}(x)
=\prod_{i=0}^{N-1}\left(\widehat{y}-w_i\right)\mathcal{I}_{\mathbb{C}\textbf{P}^{N-1}_{\bm{w}}}(x)-\widehat{x}\mathcal{I}_{\mathbb{C}\textbf{P}^{N-1}_{\bm{w}}}(x),
\nonumber
%\label{IGKZ_CPN}
\end{align}
where $\widehat{x}$ (resp. $\widehat{y}$) acts on $\mathcal{I}_{\mathbb{C}\textbf{P}^{N-1}_{\bm{w}}}(x)$ as $x$ (resp. $\hbar x d/dx$). 
Perform the Laplace transformation of this differential equation:
\begin{align}
0&=\int_0^{\infty}dv_1\cdots\int_0^{\infty}dv_n\, \mathrm{e}^{-\frac{\sum_{i=1}^n(v_i+\lambda_i\log v_i)}{\hbar}}
\widehat{A}_{\mathbb{C}\textbf{P}^{N-1}_{\bm{w}}}\left(v_1^{l_1}\cdots v_n^{l_n}\widehat{x},\widehat{y}\right)\mathcal{I}_{\mathbb{C}\textbf{P}^{N-1}_{\bm{w}}}(v_1^{l_1}\cdots v_n^{l_n}x)
\nonumber \\
&=\int_0^{\infty}dv_1\cdots\int_0^{\infty}dv_n\, \mathrm{e}^{-\frac{\sum_{a=1}^n(v_a+\lambda_a\log v_a)}{\hbar}}
\prod_{i=0}^{N-1}\left(\hbar x\frac{d}{dx}-w_i\right)
\mathcal{I}_{\mathbb{C}\textbf{P}^{N-1}_{\bm{w}}}(v_1^{l_1}\cdots v_n^{l_n}x)
\nonumber \\
&\quad
-\int_0^{\infty}dv_1\cdots\int_0^{\infty}dv_n\, \mathrm{e}^{-\frac{\sum_{a=1}^n(v_a+\lambda_a\log v_a)}{\hbar}}
v_1^{l_1}\cdots v_n^{l_n}x\mathcal{I}_{\mathbb{C}\textbf{P}^{N-1}_{\bm{w}}}(v_1^{l_1}\cdots v_n^{l_n}x)
\nonumber \\
&=\prod_{i=0}^{N-1}\left(\hbar x\frac{d}{dx}-w_i\right)
\mathcal{I}_{X_{\bm{l};\bm{w},\bm{\lambda}}}(x)
\nonumber \\
&\quad
-x\int_0^{\infty}dv_1\cdots\int_0^{\infty}dv_n\, \mathrm{e}^{-\frac{\sum_{a=1}^n(v_a+\lambda_a\log v_a)}{\hbar}}
v_1^{l_1}\cdots v_n^{l_n}
\mathcal{I}_{\mathbb{C}\textbf{P}^{N-1}_{\bm{w}}}(v_1^{l_1}\cdots v_n^{l_n}x).
\nonumber
\end{align}
To manipulate further we will use the following integration by parts repeatedly for each $v_i$'s:
\begin{align}
0&=\int_0^{\infty}dv_a\,\hbar\frac{d}{dv_a}\left(v_a^{m}\mathrm{e}^{-\frac{v_a+\lambda_a\log v_a}{\hbar}}
\mathcal{I}_{\mathbb{C}\textbf{P}^{N-1}_{\bm{w}}}(v_1^{l_1}\cdots v_n^{l_n}x)
\right)
\nonumber \\
&=\int_0^{\infty}dv_a\left(-v_a-\lambda_a+m\hbar\right)v_a^{m-1}\mathrm{e}^{-\frac{v_a+\lambda_a\log v_a}{\hbar}}
\mathcal{I}_{\mathbb{C}\textbf{P}^{N-1}_{\bm{w}}}(v_1^{l_1}\cdots v_n^{l_n}x)
\nonumber \\
&\quad+\int_0^{\infty}dv_av_a^{m}\mathrm{e}^{-\frac{v_a+\lambda_a\log v_a}{\hbar}}
\hbar l_a\frac{x}{v_a}\frac{d}{dx}\mathcal{I}_{\mathbb{C}\textbf{P}^{N-1}_{\bm{w}}}(v_1^{l_1}\cdots v_n^{l_n}x)
\nonumber \\
&=-\int_0^{\infty}dv_a\,v_a^{m}\mathrm{e}^{-\frac{v_a+\lambda_a\log v_a}{\hbar}}
\mathcal{I}_{\mathbb{C}\textbf{P}^{N-1}_{\bm{w}}}(v_1^{l_1}\cdots v_n^{l_n}x)
\nonumber \\
&\quad+\left(l_a\hbar x\frac{d}{dx}-\lambda_a+m\hbar\right)\int_0^{\infty}dv_a\,v_a^{m-1}\mathrm{e}^{-\frac{v_a+\lambda_a\log v_a}{\hbar}}
\mathcal{I}_{\mathbb{C}\textbf{P}^{N-1}_{\bm{w}}}(v_1^{l_1}\cdots v_n^{l_n}x).
\nonumber
\end{align}
Then one finds the GKZ equation (\ref{GKZ_comp}) for the $J$-function $J_{X_{\bm{l};\bm{w},\bm{\lambda}}}(x)$:
\begin{align}
%0&=\widehat{A}_{X_{\bm{l};\bm{w},\bm{\lambda}}}(\widehat{x},\widehat{y})I_{X_{\bm{l};\bm{w},\bm{\lambda}}}(x)
%\nonumber \\
\left[
\prod_{i=0}^{N-1}\left(\hbar x\frac{d}{dx}-w_i\right)-x\prod_{a=1}^{n}\prod_{m=1}^{l_a}\left(l_a\hbar x\frac{d}{dx}-\lambda_a+m\hbar \right)\right]\mathcal{I}_{X_{\bm{l};\bm{w},\bm{\lambda}}}(x)=0.
\end{align}
\end{proof}

\subsection{Proof of Proposition \ref{prop:behavior-of-saddle-point-approximation}} \label{appendix:asymptotic-behavior-of -coefficients}
Here we investigate the behavior of coefficients when $x \to \infty$ in the saddle point approximation \eqref{eq:saddle-expansion-oscillatory-integral} of the oscillatory integral for 
\[
W_X = \sum_{i=1}^{N-1} (u_i + w_i \log u_i) - 
\sum_{a=1}^{n} (v_a + \lambda_a \log v_a) + 
\frac{v_1 \cdots v_n}{u_1 \cdots u_{N-1}} x 
+ w_0 \log \left( \frac{v_1 \cdots v_n}{u_1 \cdots u_{N-1}} x \right),
\]
which is mirror to $X = X_{\bm{w},\bm{\lambda}}$. 
%$X = X_{\bm{l}={\bm 1};\bm{w},\bm{\lambda}}$. 
In this subsection we write $W = W_X$ for simplicity.

\smallskip
To prove (i) in Proposition \ref{prop:behavior-of-saddle-point-approximation}, it is enough to find an asymptotic behavior of second derivatives of $W$:
\begin{eqnarray*}
\frac{\partial^{2} W}{\partial u_{i}^2} 
& = &  
- \frac{w_i - w_0}{u_i^2} + \frac{2}{u_i^2} \, \frac{v_1 \cdots v_n}{u_1 \cdots u_{N-1}} x \\
\frac{\partial^{2} W}{\partial u_{i} \partial u_j} 
& = &  
\frac{1}{u_i u_j} \, \frac{v_1 \cdots v_n}{u_1 \cdots u_{N-1}} x \quad (i \ne j)\\
\frac{\partial^{2} W}{\partial u_{i} \partial v_a} 
& = &  
- \frac{1}{u_i v_a} \, \frac{v_1 \cdots v_n}{u_1 \cdots u_{N-1}} x  \\
\frac{\partial^{2} W}{\partial v_{a} \partial v_b} 
& = &  
\frac{1}{v_a v_b} \, \frac{v_1 \cdots v_n}{u_1 \cdots u_{N-1}} x \quad (a \ne b)\\
\frac{\partial^{2} W}{\partial v_{a}^2} 
& = &  
\frac{\lambda_a - w_0}{v_a^2}. 
\end{eqnarray*}
At a critical point $({\bm u}^{\rm (c)},{\bm v}^{\rm (c)})$, we can use
\[
\left( \frac{v_1 \cdots v_n}{u_1 \cdots u_{N-1}} x \right) 
\biggl|_{({\bm u},{\bm v}) = ({\bm u}^{\rm (c)},{\bm v}^{\rm (c)})} 
= u_i^{\rm (c)} + w_i - w_0
= v_a^{\rm (c)} + \lambda_a - w_0.
\]
This behaves as $O(x^{\frac{1}{N-n}})$ in the case of \eqref{eq:critical-bahavior-1}, and as $\lambda_b - w_0 + O(x^{-1})$ in the case of \eqref{eq:critical-bahavior-2}.
Therefore, we can find the behavior 
\begin{eqnarray*}
{\rm Hess}({\bm u}^{\rm (c)},{\bm v}^{\rm (c)}) = 
\begin{cases} 
O(x^{-\frac{N+n-1}{N-n}}) & \text{in the case of \eqref{eq:critical-bahavior-1}}, \\[+.5em]
O(x^{2}) & \text{in the case of \eqref{eq:critical-bahavior-2}}
\end{cases}
\end{eqnarray*}
of the Hessian when $x \to \infty$. 
The claim of (i) in Proposition \ref{prop:behavior-of-saddle-point-approximation} follows immediately from this computation.

\smallskip
Let us prove (ii) in Proposition \ref{prop:behavior-of-saddle-point-approximation}. We take a coordinate at a critical point $({\bm u}^{\rm (c)}, {\bm v}^{\rm (c)})$ of $W$:
\[
{\bm \xi} = (\xi_1, \dots, \xi_{N+n-1}) 
% = ({\bm u}, {\bm v}) - ({\bm u}^{\rm (c)}, {\bm v}^{\rm (c)}) 
= (u_1 - u_1^{\rm (c)},\dots,u_{N-1} - u_{N-1}^{\rm (c)}, v_1 - v_1^{\rm (c)}, \dots, v_n - v_n^{\rm (c)}),
\]
and consider the Taylor expansion of $W$ at the critical point ${\bm \xi}^{\rm (c)} = {\bm 0}$:
\begin{multline}
W({\bm \xi};x) = W({\bm \xi}^{\rm (c)};x) 
+ \frac{1}{2!}\sum_{i,j} 
% \frac{\partial^2 W}{\partial \xi_{i} \partial \xi_{j}}({\bm \xi}^{\rm (c)};x)
W_{ij}({\bm \xi}^{\rm (c)};x) \, \xi_i  \xi_j  
+ \sum_{m \ge 3} \frac{1}{m!} \sum_{i_1, \dots, i_m} 
% \frac{\partial^m W}{\partial \xi_{i_1} \, \partial \xi_{i_m}}({\bm \xi}^{\rm (c)};x) \, 
W_{i_1 \dots i_m}({\bm \xi}^{\rm (c)};x) \,
\xi_{i_1}  \cdots \xi_{i_m},
\end{multline}
where $W_{i_1 \dots i_m} = (\partial^m W)/(\partial \xi_{i_1} \cdots \partial \xi_{i_m})$. Since we have chosen generic $w_i$ and $\lambda_a$ so that the critical points are non-degenerate, the Hesse matrix $H$ has non-zero determinant at ${\bm \xi}^{\rm (c)}$. We further take a linear transformation 
% \[
% (\xi_1, \dots, \xi_{N+n-1}) = (-\hbar)^{1/2} (s_1, \dots, s_{N+n-1}) C, \quad 
% C=(C_{ij}(x))_{i,j=1,\dots,N+n-1}
% \]
$\xi_i = (- \hbar)^{1/2}\sum_j C_{i}^j(x) s_j$ of the coordinates which transforms the quadratic part of $W$ as 
\begin{equation} \label{eq:holomorphic-Morse-lemma}
\frac{1}{2 \hbar}\sum_{i,j} 
% \frac{\partial^2 W}{\partial \xi_{i} \partial \xi_{j}}({\bm \xi}^{\rm (c)};x) 
W_{ij}({\bm \xi}^{\rm (c)};x)\, \xi_i \xi_j = - \frac{1}{2} \sum_{i=1}^{N+n-1} s_i^2.
\end{equation}
We can find these coefficients $C_{i}^{j}(x)$ by applying the simultaneous completing the square to the quadratic form in the left hand side of \eqref{eq:holomorphic-Morse-lemma}. Eventually we can find the behavior of the coefficients $C_{i}^{j}(x)$ for large $x$ as follows:
\begin{itemize}
\item 
Let us consider the case when the critical point $({\bm u}^{\rm (c)},{\bm v}^{\rm (c)})$ behaves as \eqref{eq:critical-bahavior-1} in Lemma \ref{lemm:asymptotic-critical-pt}. Since $W_{ij}({\bm \xi}^{\rm (c)};x)$ behaves as $O(x^{- \frac{1}{N-n}})$ in this case, we can show that
\begin{equation} \label{eq:behavior-C-type1}
C_{i}^{j}(x) = O(x^{\frac{1}{2(N-n)}})
\end{equation}
holds for all $i,j=1,\dots, N+n-1$, when $x \to \infty$. 
\item
Let us consider the case when the critical point $({\bm u}^{\rm (c)},{\bm v}^{\rm (c)})$ behaves as \eqref{eq:critical-bahavior-2} in Lemma \ref{lemm:asymptotic-critical-pt}. Let $i_b \in \{N,\dots,N+n-1 \}$ be the label of $v_b$; that is, $\xi_{i_b} = v_b - v_b^{\rm (c)} $. Since we can arrange the quadratic part of $W$ as
\begin{multline*}
% \frac{\partial^2 W}{\partial \xi_{i_b}^2}
W_{i_b i_b}({\bm \xi}^{\rm (c)};x) \xi_{i_b}^2
+ 2\sum_{i \ne i_b} 
% \frac{\partial^2 W}{\partial \xi_{i_b} \partial \xi_i}
W_{i \, i_b}({\bm \xi}^{\rm (c)};x) \xi_{i_b} \xi_i 
+ \sum_{i,j \ne i_b} 
% \frac{\partial^2 W}{\partial \xi_{i} \partial \xi_j}
W_{ij}({\bm \xi}^{\rm (c)};x) \xi_{i} \xi_j  \\
= 
W_{i_b i_b}({\bm \xi}^{\rm (c)};x) \left( \xi_{i_b} + \sum_{i \ne i_b} \frac{W_{i\, i_b}({\bm \xi}^{\rm (c)};x)}{W_{i_b i_b}({\bm \xi}^{\rm (c)};x)} \xi_i  \right)^2 \\
- \frac{1}{W_{i_b i_b}({\bm \xi}^{\rm (c)};x)} \left(\sum_{i \ne i_b} W_{i\, i_b}({\bm \xi}^{\rm (c)};x) \xi_i\right)^2  + \sum_{i,j \ne i_b} W_{ij}({\bm \xi}^{\rm (c)};x) \xi_{i} \xi_j.
\end{multline*}
Thus we can choose 
\[
s_{i_b} = W_{i_b i_b}^{1/2}({\bm \xi}^{\rm (c)};x) \left( \xi_{i_b} + \sum_{i \ne i_b} \frac{W_{i\, i_b}({\bm \xi}^{\rm (c)};x)}{W_{i_b i_b}({\bm \xi}^{\rm (c)};x)} \xi_i \right)
\]
as one of new coordinates, and other $s_i$'s are written in terms of $\xi_i$'s except for $\xi_{i_b}$. Since $W_{i_b i_b}({\bm \xi}^{\rm (c)};x) = O(x^2)$, $ W_{i\, i_b}({\bm \xi}^{\rm (c)};x) = O(x)$ if $i \ne i_b$ and $W_{ij}({\bm \xi}^{\rm (c)};x)=O(1)$ for $i,j \ne i_b$, we can conclude 
\begin{eqnarray} \label{eq:behavior-C-type2}
C_{i}^{j}(x) = \begin{cases}
 O(x^{-1})  & \text{if $i = i_b$} \\
 O(1) & \text{otherwise}
\end{cases}
\end{eqnarray}
hold when $x \to \infty$.
\end{itemize}

Let us proceed the computation of saddle point expansion. The above change of the coordinate yields 
\[
d\xi_1 \cdots d\xi_{N+n-1} = \frac{(-\hbar)^{\frac{N+n-1}{2}}}{\sqrt{{\rm Hess}({\bm u}^{\rm (c)},{\bm v}^{\rm (c)})}} \, ds_1\cdots ds_{N+n-1}.
\]
Then, by the standard argument of the saddle point method, the asymptotic expansion of the oscillatory integral is computed by term-wise integration:
\begin{multline} \label{eq:saddle-point-formula-pre}
{\mathcal I}_{X}(x) \sim \exp\left( \frac{1}{\hbar} W({\bm \xi}^{\rm (c)};x) \right) \, \frac{(-\hbar)^{\frac{N+n-1}{2}}}{u^{\rm (c)}_1 \cdots u^{\rm (c)}_{N-1} \,\sqrt{{\rm Hess}({\bm \xi}^{\rm (c)})}}  \\ \times 
\left(1 + \sum_{m = 1}^{\infty} \hbar^{\frac{m}{2}} \sum_{i_1, \dots, i_{m}} f_{i_1 \dots i_m} \int_{{\mathbb R}^{N+n-1}} ds_1 \cdots ds_{N+n-1} \exp\left( - \frac{1}{2} \sum_{i=1}^{N+n-1} s_i^2 \right)  s_{i_1} \cdots s_{i_m} \right), 
\end{multline}
where $f_{i_1 \dots i_m}$ is the Taylor coefficient given by  
\begin{multline*}
\frac{u^{\rm (c)}_1 \cdots u^{\rm (c)}_{N-1}}{u_1 \cdots u_{N-1}}\exp\left( \frac{1}{\hbar} \sum_{m \ge 3} \frac{1}{m!} \sum_{i_1, \dots, i_m} W_{i_1 \dots i_m}({\bm \xi}^{\rm (c)};x) \xi_{i_1}  \cdots \xi_{i_m} \right) \Biggl|_{\xi_i = (- \hbar)^{1/2}\sum_j C_{i}^j(x) s_j} \\ 
= 
1 + \sum_{m = 1}^{\infty} \hbar^{\frac{m}{2}} \sum_{i_1, \dots, i_{m}} f_{i_1 \dots i_m} s_{i_1} \cdots s_{i_m}.
\end{multline*}

\begin{itemize}
\item 
If $({\bm u}^{\rm (c)},{\bm v}^{\rm (c)})$ behaves as \eqref{eq:critical-bahavior-1}, then $W_{i_1 \dots i_m}({\bm \xi}^{\rm (c)};x) = O(x^{- \frac{m-1}{N-n}})$, and hence, 
\[
W_{i_1 \dots i_m}({\bm \xi}^{\rm (c)};x) \, C_{i_1}^{j_1} \cdots C_{i_m}^{j_m} = 
O(x^{- \frac{m-2}{2(N-n)}}) \quad
\text{for any $j_1, \dots, j_m$}.
\]
For $m \ge 3$, this tends to $0$ when $x \to \infty$.

\item 
If $({\bm u}^{\rm (c)},{\bm v}^{\rm (c)})$ behaves as \eqref{eq:critical-bahavior-2}, then $W_{i_1 \dots i_m}({\bm \xi}^{\rm (c)};x) = O(x^{\ell_b})$, where $\ell_b$ is the number of $i_b$ in the indices $i_1, \dots, i_m$. Therefore, 
\[
W_{i_1 \dots i_m}({\bm \xi}^{\rm (c)};x) \, C_{i_1}^{j_1} \cdots C_{i_m}^{j_m} = 
O(1) \quad \text{for any $j_1, \dots, j_m$}.
\]
\end{itemize}
We can also verify that
\[
\frac{u_i^{\rm (c)}}{u_i} = \left( 1 + (-\hbar)^{1/2} \sum_{i} \frac{C_i^{j}}{u_i^{\rm (c)}} s_j \right)^{-1},
\]
and the coefficient satisfies 
\begin{eqnarray*}
\frac{C_i^{j}}{u_i^{\rm (c)}} = 
\begin{cases}
O(x^{- \frac{1}{2(N-n)}}) &  \text{for the case of \eqref{eq:critical-bahavior-1}} \\
O(1) &  \text{for the case of \eqref{eq:critical-bahavior-2}} 
\end{cases}
\end{eqnarray*}
when $x \to \infty$. Therefore, we can conclude that 
\begin{eqnarray} \label{eq:limit-bahavior-saddle-expansion-pre}
f_{i_1 \dots i_m} = 
\begin{cases} 
O(x^{- \frac{1}{2(N-n)}}) & \text{for the case of \eqref{eq:critical-bahavior-1}} \\
O(1) & \text{for the case of \eqref{eq:critical-bahavior-2}}
\end{cases}
\end{eqnarray}
for $m \ge 1$. After evaluating the Gaussian integrals in \eqref{eq:saddle-point-formula-pre} by using 
\begin{eqnarray*}
\int_{\mathbb R} ds_i\,  e^{- \frac{1}{2} s_i^2} \, s_i^{k}  = 
\begin{cases} 
0 & \text{if $k$ is odd}, \\[+.2em]
\displaystyle 
\sqrt{2\pi} \, {(k-1)!!}
& \text{if $k$ is even},
\end{cases}
\end{eqnarray*}
we obtain the saddle point approximation \eqref{eq:saddle-expansion-oscillatory-integral}. In particular, \eqref{eq:limit-bahavior-saddle-expansion-pre} proves  the claim (ii) in Proposition \ref{prop:behavior-of-saddle-point-approximation}.

%%%%%%%%%%%%%%%%%%
%Appendix: Computational results of the topological recursion
%%%%%%%%%%%%%%%%%%
\section{Computational results by iteration and topological recursion}
\label{section:computational-results}

In this appendix we will firstly give some explicit computational results of the WKB solutions to the GKZ equations. In Appendix \ref{app_sub:top_rec_p1} we will explicitly perform the WKB reconstruction (\ref{wave_function}) for the equivariant ${\IC}\textbf{P}^1$ model, and see agreements with the results in Appendix \ref{app_sub:wkb_J}.

\subsection{Some iterative computations for the GKZ equation}\label{app_sub:wkb_J}

Assume the saddle point approximation of the oscillatory integral
\begin{align}
\mathcal{I}_X(x)\sim\exp\left(\sum_{m=0}^{\infty}\hbar^{m-1}S_m(x)\right),
\nonumber
\end{align}
one finds a set of the first order differential equations for $S_m$'s by expanding the GKZ equation around $\hbar=0$. 

\vspace{0.2cm}
\noindent{\underline{$\mathbb{C}\textbf{P}^{N-1}$ model}}\\
\vspace{0.1cm}\noindent
The GKZ equation for the (non-equivariant) $\mathbb{C}\textbf{P}^{N-1}$ model is
\begin{align}
\left[\left(\hbar x\frac{d}{dx}\right)^{N}-x\right]\mathcal{I}_{\mathbb{C}\textbf{P}^{N-1}}(x)=0.
\label{app_ex1}%(\ref{app_ex1})
\end{align}
Some computational results of $S_m$'s for $N=2, 3, 4$ are listed in table \ref{tab:CPN}.
\begin{table}[t]
\begin{tabular}{|c|c|c|c|}
\hline
& $N=2$ & $N=3$ & $N=4$\\
\hline
$S_0(x)$ & $2x^{1/2}$ & $3x^{1/3}$ & $4x^{1/4}$ \\
\hline
$S_1(x)$ & $\log x^{-1/4}$ & $\log x^{-1/3}$ & $\log x^{-3/8}$ \\
\hline
$S_2(x)$ & $x^{-1/2}/16$ & $x^{-1/3}/9$ & $5x^{-1/4}/32$ \\
\hline 
$S_3(x)$ & $x^{-1}/64$ & $x^{-2/3}/54$ & $5x^{-1/2}/256$ \\
\hline
$S_4(x)$ & $25x^{-3/2}/3072$ & $x^{-1}/243$ & $17x^{-3/4}/24576$ \\
\hline
\end{tabular}
\caption{\label{tab:CPN}$S_m(x)$ for the $\mathbb{C}\textbf{P}^{N-1}$ models ($N=2,3,4$) obtained from the GKZ equation (\ref{app_ex1}).}
\end{table}

\noindent{\underline{Equivariant $\mathbb{C}\textbf{P}^{1}$ model}}\\
\vspace{0.1cm}\noindent
The GKZ equation for the equivariant $\mathbb{C}\textbf{P}^{1}$ model is
\begin{align}
\left[\left(\hbar x\frac{d}{dx}-w_0\right)\left(\hbar x\frac{d}{dx}-w_1\right)-x\right]\mathcal{I}_{\mathbb{C}\textbf{P}^{1}_{\bm{w}}}(x)=0.
\label{app_ex2}%(\ref{app_ex2})
\end{align}
For this model there are two solutions which have the formal power series expansion:
\begin{align}
\mathcal{I}^{(\pm)}_{\mathbb{C}\textbf{P}^{1}_{\bm{w}}}(x)\sim
\exp\left(
\sum_{m=0}^{\infty}\hbar^{m-1}S_m^{(\pm)}(x)
\right).
\nonumber
\end{align}
Computational results of $S^{(\pm)}_m$ for $m=0,1,2,3,4$ are listed in Table \ref{tab:eqv_CP1} modulo constant shifts.
\begin{table}[t]
\begin{tabular}{|c|l|}
\hline
$m$ & $S^{(\pm)}_m(x)$ \\
\hline
$0$ & $\pm\sqrt{4x+(w_0-w_1)^2}$ \\
& 
$+w_0\log\left(-w_0+w_1\pm\sqrt{4x+(w_0-w_1)^2}\right)+w_1\log\left(w_0-w_1\pm\sqrt{4x+(w_0-w_1)^2}\right)$
\\
\hline
$1$ & $-\frac{1}{4}\log\left(\frac{4x+(w_0-w_1)^2}{4}\right)$ \\
\hline
$2$ & $\pm\left(6x-(w_0-w_1)^2\right)/\left(12\left(4x+(w_0-w_1)^2\right)^{3/2}\right)$ \\
\hline
$3$ & $x\left(x-(w_0-w_1)^2\right)/\left(4x+(w_0-w_1)^2\right)^{3}$ \\
\hline
$4$ & $\pm\left(1500 x^3 - 3654 x^2 (w_0 - w_1)^2 + 378 x (w_0 - w_1)^4 + (w_0 - w_1)^6\right)/\left(360\left(4x+(w_0-w_1)^2\right)^{9/2}\right)$ \\
\hline
\end{tabular}
\caption{\label{tab:eqv_CP1}$S_m^{(\pm)}(x)$ for the equivariant $\mathbb{C}\textbf{P}^{1}$ model obtained from the GKZ equation (\ref{app_ex2}).}
\end{table}

\noindent{\underline{Degree 1 hypersurface in $\mathbb{C}\textbf{P}^{1}$}}\\
\vspace{0.1cm}\noindent
The GKZ equation for the degree 1 hypersurface 
$X_{\bm{w},\lambda}=X_{l=1;\bm{w},\lambda}$ in $\mathbb{C}\textbf{P}^{1}$ is
\begin{align}
\left[\left(\hbar x\frac{d}{dx}-w_0\right)\left(\hbar x\frac{d}{dx}-w_1\right)-x\left(\hbar x\frac{d}{dx}-\lambda+\hbar\right)\right]\mathcal{I}_{X_{\bm{w},\lambda}}(x)=0.
\label{app_ex3}%(\ref{app_ex3})
\end{align}
For this model we also find two solutions which have the formal power series expansion:
\begin{align}
\mathcal{I}^{(\pm)}_{X_{\bm{w},\lambda}}(x)
\sim
\exp\left(
\sum_{m=0}^{\infty}\hbar^{m-1}S_m^{(\pm)}(x)
\right).
\nonumber
\end{align}
Computational results of $S^{(\pm)}_m$ for $m=0,1,2,3$ are listed in Table \ref{tab:eqv_deg1} modulo constant shifts.
\begin{table}[t]
\begin{tabular}{|c|l|}
\hline
$m$ & $S^{(\pm)}_m(x)$ \\
\hline
$0$ & $\Biggl(x\pm\sqrt{x^2+2(w_0+w_1-2\lambda)x+(w_0-w_1)^2}\pm (w_0-w_1)\log x+(w_0+w_1)\log x$ \\
& $\pm(w_0+w_1-2\lambda)\log\left[(x+w_0+w_1-2\lambda+\sqrt{x^2+2(w_0+w_1-2\lambda)x+(w_0-w_1)^2})\right]$ \\
& $\mp(w_0-w_1)\log\Bigl[(w_0+w_1-2\lambda)x+(w_0-w_1)^2$ \\
& $\hspace{2.5cm}+(w_0-w_1)\sqrt{x^2+2(w_0+w_1-2\lambda)x+(w_0-w_1)^2}\Bigr]\Biggr)/2$ \\
\hline
$1$ & $-\frac{1}{4}\log\left(x^2+2(w_0+w_1-2\lambda)x+(w_0-w_1)^2\right)$ \\
&
$\pm\frac{1}{2}\log\left(x+w_0+w_1-2\lambda+\sqrt{x^2+2(w_0+w_1-2\lambda)x+(w_0-w_1)^2}\right]\mp \log\sqrt{2}$ \\
\hline
$2$ & $-x/(2(x^2+2(w_0+w_1-2\lambda)x+(w_0-w_1)^2))$\\
&
$\mp 5((w_0+w_1-2\lambda)x+(w_0-w_1)^2))/(12(x^2+2(w_0+w_1-2\lambda)x+(w_0-w_1)^2)^{3/2})$
\\
&$\mp((w_0  + w_1 -2\lambda)x+w_0^2 - 10 w_0 w_1 + w_1^2 + 8 (w_0+w_1)\lambda-8\lambda^2)$
\\
&
$\quad /\Bigl(24(w_0-\lambda)(w_1-\lambda)\sqrt{x^2+2(w_0+w_1-2\lambda)x+(w_0-w_1)^2}\Bigr)$ \\
\hline
$3$ &
$x\Bigl(3x^2+(w_0+w_1-2\lambda)x-2(w_0-w_1)^2\Bigr)$
\\
&
$\times\Bigl(x+w_0+w_1-2\lambda \mp\sqrt{x^2+2(w_0+w_1-2\lambda)x+(w_0-w_1)^2}\Bigr)$ 
\\
& $\quad /\Bigl(4(x^2+2(w_0+w_1-2\lambda)x+(w_0-w_1)^2)^3\Bigr)$ \\
\hline
\end{tabular}  
\caption{\label{tab:eqv_deg1}$S_m^{(\pm)}(x)$ for degree 1 hypersurface in $\mathbb{C}\textbf{P}^{1}$ obtained from the GKZ equation (\ref{app_ex3}).}
\end{table}
Especially, focusing on the $S^{(\pm)}_1(x)$ we find the following expansion around $x=\infty$:
\begin{align}
\begin{split}
\mathrm{e}^{S^{(+)}_1(x)}=1+\frac{(w_0-\lambda)(w_1-\lambda)}{2x^2}+O(x^{-3}),
\\
\mathrm{e}^{S^{(-)}_1(x)}=\frac{1}{x}-\frac{w_0+w_1-2\lambda}{x^2}+O(x^{-3}),
\nonumber
\end{split}
\end{align}
and these asymptotic expansions are consistent with Proposition \ref{prop:behavior-of-saddle-point-approximation} (i).

%\begin{align}
%&S^{(+)}_1(x)=\frac{(w_0-\lambda_1)(w_0-\lambda_1)}{2x^2}+O(x^{-3}),
%\\
%&S^{(-)}_1(x)=\log x^{-1}+\frac{2\lambda_1-w_0-w_1}{x}+O(x^{-2}).
%\end{align}

\subsection{Topological recursion for the equivariant $\mathbb{C}\textbf{P}^{1}$ model}\label{app_sub:top_rec_p1}

In the following, for the equivariant $\mathbb{C}\textbf{P}^{1}$ model, we will explicitly recover the computational result in Table \ref{tab:eqv_CP1} by applying the topological recursion (\ref{top_recursion}) in \cite{Eynard:2007kz}. The GKZ curve  
\begin{align}
\Sigma_{\mathbb{C}\textbf{P}^{1}_{\bm{w}}}=\Big\{\; (x,y)\in \mathbb{C}^*\times \mathbb{C}\; \Big|\; (y-w_0)(y-w_1)-x=0\; \Big\}
\nonumber
\end{align}
is parametrized by a local coordinate $z$ as follows:
\begin{align}
x(z)=z^2-\Lambda,\qquad
y(z)=z+\frac{1}{2}(w_0+w_1),\qquad
\Lambda=\frac{1}{4}(w_0-w_1)^2.
\nonumber
\end{align}
The spectral curve $\Sigma_{\mathbb{C}\textbf{P}^{1}_{\bm{w}}}$ has only one simple ramification point at $z=0$ in this local coordinate. Starting from
\begin{align}
\omega_{1}^{(0)}(z)=0,\qquad
\omega_{2}^{(0)}(z_1,z_2)=B(z_1,z_2)=\frac{dz_1dz_2}{(z_1-z_2)^2},
\nonumber
\end{align}
the differentials $\omega_{n}^{(g)}$ for $(g,n) \ne (0,1), (0,2)$ are defined by 
the topological recursion (\ref{top_recursion})% \cite{Eynard:2007kz}
\begin{align}
\begin{split}
&
\omega_{n+1}^{(g)}(z,\bm{z}_N)=
\mathop{\mathrm{Res}}_{w=0}\,
\frac{\int_{-w}^{w}B(\cdot, z)}{2\left(y(w)-y(-w)\right)dx(w)/x(w)}\bigg(\omega_{n+2}^{(g-1)}(w,-w,\bm{z}_N)  \\
&\hspace{9em}
+\sum_{\ell=0}^{g}\sum_{\emptyset=J\subseteq N}\omega_{|J|+1}^{(g-\ell)}(w,\bm{z}_J)\omega_{|N|-|J|+1}^{(\ell)}(-w,\bm{z}_{N \backslash J}) \bigg),
\nonumber
\end{split}
\end{align}
where $N=\{1,2,\ldots,n\}\supset J=\{i_1,i_2,\ldots,i_j\}$, and $N\backslash J=\{i_{j+1},i_{j+2},\ldots,i_n\}$. 
Integrating these multi-differentials, one finds the free energies
\begin{align}
\begin{split}
&
F_{1}^{(0)}(x)=\int^z_{z_*} y(z')\frac{dx(z')}{x(z')},\qquad 
F_{2}^{(0)}(x)=\int^z_{z_*}\int^z_{z_*}\left(B(z_1',z_2')-\frac{dx(z_1')dx(z_2')}{(x(z_1')-x(z_2'))^2}\right),\\
&F_{n}^{(g)}(x)=\int^z_{z_*} \cdots \int^z_{z_*} \omega^{(g)}_{n}(z_1',\ldots,z_n'),\quad
(g,n) \ne (0,1), (0,2),
\end{split}
\label{open_free_energy}
\end{align}
where $z_*$ denotes a reference point.

The WKB reconstruction (\ref{wave_function}) of wave function is defined by $F_{n}^{(g)}(x)$'s as
\begin{align}
\psi_{\mathbb{C}\textbf{P}^{1}_{\bm{w}}}(x)=\exp\left(\sum_{g=0,n=1}^{\infty}\frac{\hbar^{2g-2+n}}{n!}F^{(g)}_{n}(x)\right).
\label{wkb_p1_ex}
\end{align}
We fix the reference point by $z_*=\infty$ which corresponds to $x(z_*)=\infty$. Here note that $F_{1}^{(0)}$ and $F_{2}^{(0)}$ need to be regularized by certain constant shifts so as to depend on $z_*$. Some explicit computational results of the free energies $F_n^{(g)}(x)$ are listed in Table \ref{tab:recursion_CP1}.
\begin{table}[t]
\begin{tabular}{|l|}
\hline
$F_n^{(g)}(x)$ \\
\hline \hline
$F_2^{(0)}(x)=-\frac12 \log\left[(x+\Lambda)/(x(z_*)+\Lambda)\right]$ \\
\hline
$F_3^{(0)}(x)=\mp\Lambda/(2(x+\Lambda)^{3/2})$ \\
$F_1^{(1)}(x)=\pm(3x+2\Lambda)/(48(x+\Lambda)^{3/2})$\\
\hline
$F_4^{(0)}(x)=\Lambda(-3x+\Lambda)/(4(x+\Lambda)^{3})$ \\
$F_2^{(1)}(x)=(3x^2-2\Lambda^2-6x\Lambda)/(96((x+\Lambda)^3)$ \\
\hline
$F_5^{(0)}(x)=\mp\Lambda(2\Lambda^2-21x\Lambda+12 x^2)/(8(x+\Lambda)^{9/2})$\\
$F_3^{(1)}(x)=\pm(4\Lambda^3+18x\Lambda^2-63\Lambda x^2+6x^3)/(192(x+\Lambda)^{9/2})$\\
$F_1^{(2)}(x)=\mp(186\Lambda x^2+72 x\Lambda^2+16\Lambda^3-45x^3)/(15360(x+\Lambda)^{9/2})$ \\
\hline
\end{tabular}
\caption{\label{tab:recursion_CP1}Free energies for the equivariant $\mathbb{C}\textbf{P}^{1}$ model. Here we have two types free energies corresponding to the branches $z=\pm \sqrt{x+\Lambda}$.}
\end{table}

The wave function (\ref{wkb_p1_ex}) is reorganized by
\begin{align}
\psi_{\mathbb{C}\textbf{P}^{1}_{\bm{w}}}(x)=\exp\left(\sum_{m=0}^{\infty}\hbar^{m-1}F_m(x)\right),\qquad
F_m(x)=\sum_{\substack{g\ge 0,\;n\ge 1,\\2g+n-1=m}}\frac{1}{n!}F_n^{(g)}(x),
\nonumber
\end{align}
where corresponding to two branches $z=\pm \sqrt{x+\Lambda}$ we find two types of free energies $F_m(x)=F_m^{(\pm)}(x)$. Since the leading term $S_0(x)$ of the asymptotic expansion  of the $J$-function obeys
\begin{align}
x\frac{dS_0(x)}{dx}=y(x),
\nonumber
\end{align}
the free energy $F_1^{(0)}(x)$ in (\ref{open_free_energy}) agrees with  $S_0^{(\pm)}(x)$ up to a constant shift. Using the computational results in Table \ref{tab:recursion_CP1}, $F_m^{(\pm)}(x)$'s ($m\ge 1$) are computed immediately and summarized in Table \ref{tab:Fm_eqv_CP1}. 
\begin{table}[t]
\begin{tabular}{|c|l|}
\hline
$m$ & $F_m^{(\pm)}(x)$ \\
\hline
$1$ & $-\frac{1}{4} \log\left[(x+\Lambda)/(x(z_*)+\Lambda)\right]$ \\
\hline
$2$ & $\pm\left(3x-2\Lambda\right)/\left(48\left(x+\Lambda\right)^{3/2}\right)$ \\
\hline
$3$ &$x(x-4\Lambda)/\left(64\left(x+\Lambda\right)^{3}\right)$\\
\hline
$4$ & $\pm\left(375x^3 - 3654 x^2 \Lambda + 1512 x \Lambda^2 +16\Lambda^3\right)/\left(46080\left(x+\Lambda\right)^{9/2}\right)$ \\
\hline
\end{tabular}
\caption{\label{tab:Fm_eqv_CP1}$F_m(x)=F_m^{(\pm)}(x)$ for the equivariant $\mathbb{C}\textbf{P}^{1}$ model.}
\end{table}
Comparing the computational results in Tables \ref{tab:eqv_CP1} and \ref{tab:Fm_eqv_CP1},
one finds the agreement 
\begin{align}
F_m^{(\pm)}(x)=S_m^{(\pm)}(x),
\nonumber
\end{align}
for $m=1,2,3,4$ up to a constant shift of $F_1^{(\pm)}(x)$.

\newpage

\end{document}